\renewcommand{\norm}[1]{\left\|#1\right\|}
\newcommand{\pin}{\leftarrow}
\newcommand{\W}[1]{\+W_1\left(#1\right)}
\title{Decay of correlation for edge colorings when $q>3\Delta$}
\author[1]{Zejia Chen}
\author[1]{Yulin Wang}
\author[1]{Chihao Zhang}
\author[2]{Zihan Zhang}
\affil[1]{Shanghai Jiao Tong University}
\affil[2]{Graduate Institute for Advanced Studies, SOKENDAI}
\begin{document}
\maketitle

\begin{abstract}
    We examine various perspectives on the decay of correlation for the uniform distribution over proper $q$-edge colorings of graphs with maximum degree $\Delta$.

    First, we establish the coupling independence property when $q\ge 3\Delta$ for general graphs. Together with the recent work of Chen, Feng, Guo, Zhang and Zou (2024), this result implies a fully polynomial-time approximation scheme (\textbf{FPTAS}) for counting the number of proper $q$-edge colorings. 

    Next, we prove the strong spatial mixing property on \emph{trees}, provided that $q> (3+o(1))\Delta$. The strong spatial mixing property is derived from the spectral independence property of a version of the weighted edge coloring distribution, which is established using the matrix trickle-down method developed in Abdolazimi, Liu and Oveis Gharan (FOCS, 2021) and Wang, Zhang and Zhang (STOC, 2024).

    Finally, we show that the weak spatial mixing property holds on trees with maximum degree $\Delta$ if and only if $q\ge 2\Delta-1$.
\end{abstract}

\setcounter{tocdepth}{1}
\tableofcontents
\newpage

\section{Introduction}

Sampling from the uniform distribution of proper edge colorings received lots of attention recently, with the advent of new tools in analyzing high-dimensional distributions~\cite{DHP20,ALOG21,WZZ24,CCFV25}. A proper edge coloring of an undirected graph with maximum degree $\Delta$ is an assignment of each edge with one of $q$ colors so each adjacent edges receive different color. Clearly a proper \emph{edge} coloring can be viewed as a proper \emph{vertex} coloring on its line graph.The sampling problem is to draw a proper edge coloring uniformly. Most of previous work focuses on the mixing time of Glauber dynamics. The work of~\cite{WZZ24} established the spectral independence property, a notion to measure the correlation in high-dimensional distributions~\cite{ALO21}, whenever $q>(2+o(1))\Delta$ for general graphs and the work of~\cite{CCFV25} establish approximate tensorization of variance on trees when $q\ge \Delta+1$. Note that Glauber dynamics is known to be reducible when $q<2\Delta$ on general graphs~\cite{MJNP19} and therefore the $q>(2+o(1))\Delta$ condition for Glauber dynamics is asymptotically tight. However, it is still open whether $q\approx 2\Delta$ is the threshold for efficient sampling proper edge colorings and there are some recent attempts to design other sampling algorithm under better conditions~\cite{DKLP25}. A (almost) uniformly sampling algorithm can be turned into fully polynomial-time randomized scheme (\textbf{FPRAS}) for counting the number of proper colorings using standard reduction~\cite{JVV86}. 

In this work we examine some other aspects for the correlation of the uniform distribution on edge colorings. We first established the \emph{coupling independence} property on general graphs when $q\ge 3 \Delta$. Coupling independence~\cite{CZ23} is a notion stronger than spectral independence, and thanks to recent work of~\cite{CFGZZ24}, building on the machinery of Moitra~\cite{Moitra19}, it (together with some other properties) implies a fully polynomial-time approximate scheme (\textbf{FPTAS}) for counting proper edge colorings. This is the first \emph{determinstic} approximate counting algorithm in this range of parameters. Moreover, our proof of the coupling independence property differs from those directly derived from contractive couplings for establishing rapid mixing results.

\begin{theorem}[Informal] \label{thm:FPTAS-informal}
    If $q\geq 3\Delta$, then there exists an \textbf{FPTAS} for counting the number of proper $q$-edge colorings on any graph $G$ with maximum degree $\Delta$.
\end{theorem}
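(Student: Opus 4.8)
The plan is to derive Theorem~\ref{thm:FPTAS-informal} by combining the coupling independence property for $q \ge 3\Delta$ with the counting-to-decay machinery of~\cite{CFGZZ24}. The first step is to state and prove the coupling independence bound cleanly: I would show that for any graph $G$ with maximum degree $\Delta$ and $q \ge 3\Delta$, the uniform distribution over proper $q$-edge colorings of $G$ is $C$-coupling independent for some absolute constant $C$ (indeed one expects $C = O(1)$, plausibly $C \le 2$ or so). Concretely, fixing an edge $e$, I want to couple the conditional distributions $\mu(\cdot \mid X_e = c)$ and $\mu(\cdot \mid X_e = c')$ for two colors $c, c'$ so that the expected Hamming distance of the coupled colorings is $O(1)$. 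The natural approach is to build this coupling edge-by-edge along a BFS exploration from $e$, using the fact that each edge has at most $2(\Delta-1)$ neighbors, while there are $q \ge 3\Delta$ colors available; the slack $q - 2\Delta \ge \Delta$ lets one set up a one-step optimal coupling at each edge whose discrepancy propagates to only a bounded (subcritical branching) number of descendants in expectation.

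The second step is to verify the remaining hypotheses required by the reduction in~\cite{CFGZZ24} (the zero-free/Moitra-style framework building on~\cite{Moitra19}): typically one needs, besides coupling independence, that the distribution is defined by a local constraint structure (edge colorings are a constraint satisfaction/list-coloring-type instance on the line graph), some marginal lower bound or ``spread''-type condition ensuring every feasible color at an edge has probability bounded below (which follows again from $q \ge 3\Delta > 2(\Delta-1) + 1$, so the list-coloring instance on the line graph is always in the ``$q \ge $ degree $+$ something'' regime guaranteeing positive marginals), and polynomial computability of the relevant local quantities. Each of these is routine to check for edge colorings and I would simply cite the corresponding statement from~\cite{CFGZZ24} and plug in the parameters. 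The conclusion is then an \textbf{FPTAS} for $Z_G = $ the number of proper $q$-edge colorings, obtained via their algorithm, whose running time is polynomial in $|V(G)|$ (with the exponent depending on $C$ and $\Delta$, or on $\Delta$ only, as in the cited work).

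The main obstacle is the coupling independence bound itself, i.e.\ controlling the expected size of the ``disagreement cluster'' of the edge-by-edge coupling. The subtlety is that after revealing a disagreement at an edge $f$, the set of colors available at the edges adjacent to $f$ can shift, and one must argue that the optimal coupling of these conditional marginals only creates new disagreements with small probability, uniformly over the (unknown) colors already revealed elsewhere. The clean way to do this is a worst-case/adversarial analysis: at each newly explored edge, bound the total variation distance between the two conditional marginals by a quantity like (number of previously-disagreeing neighbors)$/(q - 2(\Delta-1))$, and then dominate the growth of the disagreement set by a Galton--Watson process with mean offspring $\le 2(\Delta-1)/(q - 2(\Delta-1))$, which is $< 1$ precisely when $q > 4(\Delta-1) \approx 4\Delta$; to get all the way down to $q \ge 3\Delta$ one needs a sharper accounting (e.g.\ using that a disagreement at $f$ can propagate to a given neighbor only through a single color slot, or amortizing over the exploration), and making that refined bound airtight is where I expect the real work to lie. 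Once the subcritical branching bound is in hand, $\mathbb{E}[\text{Hamming distance}] = O(1)$ follows by summing the expected generation sizes, and the rest of the theorem is assembly of known results.
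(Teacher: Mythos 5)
Your overall plan --- establish coupling independence for $q\ge 3\Delta$, then invoke the counting machinery of~\cite{CFGZZ24} together with a marginal lower bound --- is the paper's high-level strategy, and the auxiliary hypotheses you list (permissivity, marginal boundedness via the extra-color slack, local computability) are indeed routine checks with the correct sources.

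The gap is in the coupling independence bound itself, and you flag it honestly: your BFS-exploration coupling with Galton--Watson domination at mean offspring $2(\Delta-1)/(q-2(\Delta-1))$ becomes subcritical only for $q > 4(\Delta-1)$, and the ``sharper accounting'' you gesture at to reach $q\ge 3\Delta$ is not supplied. Since that refined argument \emph{is} the content of the theorem, the proposal does not establish it. There is also a smaller inaccuracy: at the threshold $q=3\Delta$ the coupling constant one actually obtains is $1+2\Delta=\Theta(\Delta)$, not the absolute $O(1)$ (``plausibly $C\le 2$'') you conjecture; an absolute constant appears only with extra slack $q\ge (3+\Omega(1))\Delta$. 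This does not break the reduction, since the exponent of the FPTAS is allowed to depend on $\Delta$, but your intuition about the constant is off precisely where the theorem is tight.

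The paper's coupling is structurally different from a branching-process argument, and the difference is what makes $3\Delta$ reachable. Define $\kappa_{s,\Delta,\lambda}$ as the supremum of $\W{\mu^{i\pin a}_{E-i},\mu^{i\pin b}_{E-i}}$ over all $(\lambda\Delta+1)$-extra instances with at most $s$ edges and $i$ a \emph{pendant} edge. A single coupling step on a neighbor $j$ decomposes $\mu^{i\pin a}_{E-i}-\mu^{i\pin b}_{E-i}$ into three kinds of signed differences: a ``vanish'' term ($a,b\notin c(E(v))$, Wasserstein distance $0$); a ``swap'' term ($\mab-\mba$, bounded by $1+\kappa_{s-1}$ after relocating the pendant discrepancy to $j$); and an ``increase'' term ($\mab-\mb$ or $\ma-\mba$), which looks like two discrepancies but is immediately split by the triangle inequality into two pendant single-discrepancy subproblems on $s-1$ edges, giving $1+2\kappa_{s-1}$. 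The coefficients of the bad cases are controlled by the marginal ratio bound $\gamma_j,\delta_j\le 1/(\lambda\Delta)$ from \Cref{cor:marginal-bound-gamma-delta}, yielding $\kappa_s\le \frac{2}{2+\eps}(\kappa_{s-1}+\tfrac12)$ for $\lambda=1+\eps$, which has a finite fixed point whenever $\eps>0$; translating $(\lambda\Delta+1)$-extra to $q$-edge coloring with $\deg(e)\le 2\Delta-2$ gives exactly $q\ge 3\Delta$ at $\eps=1/\Delta$. The crucial point your sketch misses is that the recursion never tracks a growing disagreement cluster: every ``increase'' is collapsed back to the one-discrepancy pendant form on a strictly smaller instance, so the state is a single scalar $\kappa_s$ rather than a distribution over cluster shapes. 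To complete your proof you would either need to make the Galton--Watson amortization precise enough to beat $4\Delta$, which is nontrivial, or adopt this pendant-edge recursion.
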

Before our work, there is no similar results tailored for counting edge colorings. The best \textbf{FPTAS} for counting edge coloring is the same as the one for general vertex coloring, which requires $q>3.618\Delta$~\cite{CV25,CFGZZ24} for sufficiently large $\Delta$.

We then study the strong spatial mixing (SSM) property for edge colorings on \emph{trees}, an important notion to measure the correlation between sites in Gibbs distributions whose definition is in \Cref{sec:prelim-decay}.


\begin{theorem}[Informal]\label{thm:SSM-informal}
    Let $T$ be a tree with maximum degree $\Delta$. If $q\geq (3+o(1))\Delta$, then the uniform distribution over $q$-colorings on $T$ exhibits strong spatial mixing with exponential decay rate.
\end{theorem}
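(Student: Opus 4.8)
The plan is to obtain strong spatial mixing as a consequence of a \emph{pinning-robust} spectral independence bound for the uniform distribution $\mu$ over proper $q$-edge-colorings of $T$, and to establish that bound by the matrix trickle-down method of \cite{ALOG21,WZZ24}. Throughout, I identify a proper $q$-edge-coloring of $T$ with a proper $q$-vertex-coloring of the line graph $L(T)$; the structural fact that makes this work is that $L(T)$ is a \emph{tree of cliques}: the edges of $T$ incident to a vertex $v$ of degree $d_v$ form a clique of size $d_v\le\Delta$, any two of these cliques meet in at most one vertex of $L(T)$ (the common edge of $T$), and the cliques are glued along a tree. Rooting $T$ at an endpoint of the edge $e$ whose marginal we wish to track makes this a self-similar recursive structure: the ``downward'' neighborhood of an edge $e'=(u,v)$, with $v$ the child endpoint, is precisely the set of children edges at $v$, and these together with $e'$ form a single clique of $L(T)$.

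The heart of the argument is a level-wise inductive invariant propagated down this tree. Following the standard practice for trickle-down on colorings, the bare uniform measure is replaced by a suitably \emph{weighted} edge-coloring distribution — the ``version of the weighted edge coloring distribution'' of the abstract — chosen so that the local conditional laws remain in a fixed parametric family and the recursion closes; the weighting is what I expect to account for the gap between the $q>(2+o(1))\Delta$ sufficient for plain spectral independence \cite{WZZ24} and the stronger $q\ge(3+o(1))\Delta$ needed here. For each edge $e'$ and each admissible pinning $\tau$ of the edges lying deeper in the tree (not touching the children edges at the child endpoint $v$ of $e'$), let $\Sigma_{e'}^{\tau}$ be the color-indexed covariance matrix of the colors of the children edges at $v$, conditioned on the color of $e'$ and on $\tau$. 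The invariant asserts $\Sigma_{e'}^{\tau}\preceq\Psi$ for one fixed PSD ``potential'' $\Psi$, morally a rescaled $c\bigl(I-\tfrac1{q'}J\bigr)$ with $q'$ the number of locally available colors. The base case, at edges incident to a leaf of $T$, is vacuous.

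The inductive step asks that if $\Sigma_{e''}^{\tau''}\preceq\Psi$ for every children edge $e''$ of $e'$ and every deeper pinning, then $\Sigma_{e'}^{\tau}\preceq\Psi$, and on unwinding it reduces to a spectral inequality among explicit rank-one perturbations of scaled identity matrices. Here conditioning on $e'$'s color and the deeper pinning removes at most $\Delta$ colors from each child edge's list, so $q'\ge q-\Delta$, while the clique of edges at $v$ contributes $d_v-1\le\Delta-1$ ``free'' coordinates; the matrix inequality can be made to hold with slack to spare precisely when the number of available colors is at least about twice the clique size, i.e.\ essentially $q-\Delta\ge 2\Delta$, which is the source of the $q\ge(3+o_\Delta(1))\Delta$ hypothesis. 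Verifying this spectral inequality with the sharp constant — and, just as importantly, choosing the right weighting and the right $\Psi$ so that the recursion self-improves rather than merely self-sustains — is what I expect to be the main obstacle.

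Granting the invariant, evaluating it at the root gives an $O(1)$ bound on the spectral-independence constant of $\mu$; since pinning any set of edges only restricts to a sub-instance with shorter color lists, the same bound holds automatically for every conditioning of $\mu$, which is the pinning-robust spectral independence we want. It remains to convert this to SSM on the tree, for which the invariant is in fact tailor-made: the marginal law of the color of the top edge $e$ is the output of the tree recursion driven by the boundary colors, and the trickle-down invariant is equivalent to a one-step contraction, in the $\Psi$-weighted norm, for that recursion. Using the marginal lower bound — when $q\ge 3\Delta$ every edge keeps at least $q-2\Delta+2\ge\Delta+2$ available colors under any pinning, so each color has marginal $\Omega(1/q)$, which keeps the recursion and all norms nondegenerate — and composing these contractions along the length-$\ell$ path from $e$ to the perturbed boundary set $\Lambda$ (pinning-robustness guaranteeing validity past every intermediate fixed edge) yields $\norm{\mu_e^{\sigma_\Lambda}-\mu_e^{\sigma'_\Lambda}}_{\DTV}\le\lambda^{\ell}$ for a fixed $\lambda<1$, i.e.\ exponential-rate strong spatial mixing. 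Two bookkeeping points need care but are routine given the uniform-in-pinning bound: converting between the weighted matrix norm of the contraction and the total-variation distance in the SSM conclusion, and carrying the contraction across intermediate pinned edges without accumulating constants.
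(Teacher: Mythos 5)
Your high-level strategy matches the paper's in outline — spectral independence for a weighted edge-coloring distribution established by matrix trickle-down, fed into a contraction analysis of the tree recursion — but the step you declare an ``equivalence'' is precisely where the paper's main technical work lives, and your proposal leaves it unproved. There is no general implication from a covariance bound $\Sigma\preceq\Psi$ to one-step contraction of the marginal recursion. The paper's recursion is on \emph{brooms} (all child edges at a vertex), so its Jacobian is a $q^{\Theta(\Delta)}\times\Delta q^{\Theta(\Delta)}$ matrix; the paper must first observe this matrix has low rank and apply a trace trick ($\lambda_{\max}(\*A)=\lim_k\Tr(\*A^k)^{1/k}$) to reduce to a $\Delta q\times\Delta q$ matrix $\*B$, and only then does $\*B$ factor as $\!{Cov}(\*p_r)D_T^{-1}\*R D_T^{-1}$, letting the spectral-independence bound enter. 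The quantitative shape of this reduction is also what produces the $3\Delta$ threshold: the $2$-norm of the difference of broom marginals grows like $\Delta^{\ell/2}$ with depth, so each level must contract by $(1-\delta)/\sqrt\Delta$, forcing $\norm{\*B}_2\lesssim 1/\Delta$ and hence $\beta\gtrsim\Delta$, i.e.\ $q\gtrsim 3\Delta$ — not, as you suggest, the choice of weighting.

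The second gap is your level-by-level composition ``with pinning-robustness guaranteeing validity past every intermediate fixed edge.'' When you apply the mean-value theorem or fundamental theorem of calculus at each level, the Jacobian must be evaluated at interpolation points $t\phi(\*p)+(1-t)\phi(\*p')$, which under the potential $\phi(x)=2\sqrt x$ are \emph{not} conditional distributions of any pinned sub-instance; so neither your pinning-robust spectral independence of $\mu$ nor the marginal bounds apply to them, and the induction breaks at every internal level. This is exactly why the paper refuses to iterate bottom-to-top: it takes midpoints only at the leaves, which \emph{defines} the weighted edge-coloring instance (the weights are the leaf interpolants, not a device to close the recursion), pushes that single instance through the unweighted recursion, and proves spectral independence of the resulting broom marginals by trickle-down (\Cref{lem:SI-broom}), handling the two levels nearest the boundary by crude norm bounds. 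Your proposal would need to either rediscover this top-down composition or find another way to certify the covariance bound at non-probabilistic intermediate points; as written, the argument does not go through.
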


Similar strong spatial mixing bounds on trees have been thoroughly studied for vertex colorings (e.g.~\cite{EGHSV19,CLMM23}). It is conjectured that SSM holds whenever $q\ge \Delta+1$ and in~\cite{CLMM23}, a $q\ge \Delta+3$ condition was established, almost resolving the conjecture. However, much less is known for edge colorings. 
The $q>(3+o(1))\Delta$ bound established in this work is by no means tight. We also discuss the limit of our approach and possible further improvement. On the other hand, we show that one cannot expect the strong spatial mixing property to hold when $q<2\Delta$, as we prove that $2\Delta-1$ is the threshold for \emph{weak spatial mixing}. Therefore, we conjecture that SSM holds for edge colorings on graphs with maximum degree $\Delta$ whenever $q\ge 2\Delta+\gamma$ for certain constant $\gamma$. We also show that the best bound one can expect using the analysis in this paper cannot be better than $q\approx 2.618\Delta$.

\begin{theorem}[Informal]\label{thm:WSM-informal}
    If $q\geq 2\Delta-1$, then the uniform distribution over $q$-colorings on any tree with maximum degree $\Delta$ exhibits weak spatial mixing with exponential decay rate. Otherwise, there exists a tree with maximum degree $\Delta$ such that the uniform distribution over $q$-coloring on it does not satisfy the weak spatial mixing property.
\end{theorem}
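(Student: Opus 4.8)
The plan is to treat the two halves of the statement separately: for $q\ge 2\Delta-1$ I will prove weak spatial mixing by exhibiting a contraction of the tree recursion that computes the law of the colour of a single edge, and for $q\le 2\Delta-2$ I will build an explicit family of ``rigid'' trees on which two boundary conditions at distance $\ell$ keep the root edge perfectly correlated for every $\ell$.

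For the positive direction, root the tree at an edge $e_0$. For an edge $e$ with child endpoint $v$ and child edges $f_1,\dots,f_d$ (so $d\le\Delta-1$), let $T_e$ be the subtree below $e$ and let $\nu_i$ be the law of the colour of $f_i$ in $T_{f_i}$ under a fixed boundary condition. The basic recursion is
\[
  \mu_{T_e}\bigl(c(e)=x\bigr)\;\propto\;\sum_{\substack{y_1,\dots,y_d\ \text{distinct}\\ y_i\ne x\ \forall i}}\ \prod_{i=1}^{d}\nu_i(y_i),
\]
equivalently $\mu_{T_e}(c(e)=x)=\tfrac{1}{q-d}\bigl(1-\Pr_{\mathbf y\sim\nu_1\times\cdots\times\nu_d}[\,x\in\{y_1,\dots,y_d\}\mid y_1,\dots,y_d\text{ distinct}\,]\bigr)$. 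Two features drive the argument: every message $\nu_i$ is ``flat'', putting mass at most $\tfrac1{q-\Delta+1}$ on any colour (hence spread over at least $q-\Delta+1$ colours); and, given the children's colours, $c(e)$ is uniform over the $q-d$ colours they avoid. Writing $\delta_e$ for the total-variation distance between the two root-edge marginals produced by boundary conditions $\sigma,\tau$, I couple the children's colours under the two conditions and push the coupling through the last observation: if $j$ children disagree, the set of avoided colours moves by at most $2j$ elements, so the two uniform laws of $c(e)$ differ by at most $j/(q-d)$; taking expectations yields $\delta_e\le\tfrac{1}{q-d}\sum_i\Pr[\text{child }i\text{ disagrees}]$, and if the children can be coupled with disagreement probabilities $\le\delta_{f_i}$ this is $\le\tfrac{\Delta-1}{q-\Delta+1}\max_i\delta_{f_i}$, a genuine contraction precisely when $q\ge 2\Delta-1$. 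Iterating down to the boundary gives $\delta_{e_0}\le\bigl(\tfrac{\Delta-1}{q-\Delta+1}\bigr)^{\Omega(\ell)}$, i.e.\ exponential weak spatial mixing.

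For the negative direction, fix $q\le 2\Delta-2$ and partition $[q]=B\sqcup R$ with $|B|,|R|\le\Delta-1$ (possible exactly because $q\le 2\Delta-2$). The gadget fact is: if an edge $e$ has child endpoint $v$ with exactly $|R|$ child edges, each of which is forced by its subtree to have colour uniform on $R$, then at $v$ these colours must form a permutation of $R$, so $c(e)$ is forced to be uniform on $[q]\setminus R=B$ — and symmetrically with $B$ and $R$ interchanged. Chaining this for $\ell$ levels (alternating the roles of $B$ and $R$) and grounding the recursion at depth $\ell$ by hanging pendant edges that, when pinned, cover $B$ (respectively $R$), produces a single tree $T$ of depth $\Theta(\ell)$ and maximum degree $\le\Delta$ together with two boundary conditions $\sigma,\tau$ supported on the depth-$\ell$ pendant edges (the two groundings, shifted by one level) such that $\mu^\sigma_{e_0}$ is uniform on $B$ while $\mu^\tau_{e_0}$ is uniform on $R$. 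Since $B\cap R=\emptyset$ this forces $\|\mu^\sigma_{e_0}-\mu^\tau_{e_0}\|_{\mathrm{TV}}=1$ for all $\ell$, so weak spatial mixing fails. The transparent case is $q=2\Delta-2$, where $|B|=|R|=\Delta-1$ and the forcing gadget is phase-symmetric (so both boundary conditions live on the same complete $(\Delta-1)$-ary tree with pendants at depth $\ell$); for the remaining $q<2\Delta-2$ a conceptually similar but more involved construction, which we do not detail here, realises two opposite boundary conditions on a common tree.

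The step I expect to be the real obstacle is in the positive direction: making the coupling of the children's colours yield disagreement probabilities bounded by the $\delta_{f_i}$, and hence the contraction factor $\tfrac{\Delta-1}{q-\Delta+1}$ that pins the threshold to $q\ge 2\Delta-1$ rather than something weaker. The difficulty is that the proper-colouring constraint at the branching vertex $v$ correlates the sibling subtrees $T_{f_1},\dots,T_{f_d}$, so the messages $\nu_i$ cannot be treated as an independent product; controlling the correction coming from conditioning on ``all $y_i$ distinct'' — and verifying that the flatness of the messages is preserved up the recursion so that this correction stays lower order — is the crux. The remaining ingredients (setting up the recursion, translating recursion depth into line-graph distance, and the gadget bookkeeping on the negative side) should be comparatively routine.
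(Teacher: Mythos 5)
Your negative-direction construction is, up to presentation, the paper's own Example~\ref{eg:hardness_WSM}: a $(\Delta-1)$-regular tree with two leaf pinnings that force the colours to alternate rigidly between two disjoint size-$(\Delta-1)$ palettes, giving TV distance $1$ at the root edge for all depths. You and the paper both handle $q=2\Delta-2$ cleanly and both wave at the strictly smaller $q$; that part of the proposal is fine.

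The positive direction is where there is a genuine gap, and it is exactly the one you flag. Your plan is to couple the two boundary conditions directly, bounding $\delta_e$ by $\frac{1}{q-d}\sum_i \Pr[\text{child $i$ disagrees}]$ and then asserting $\Pr[\text{child $i$ disagrees}]\le\delta_{f_i}$. But the disagreement event lives under the joint law of $(y_1,\dots,y_d)$ \emph{conditioned on being a proper colouring at $v$}, and the quantity $\delta_{f_i}$ controls only the TV distance between the two \emph{unconditional} subtree marginals $\nu_i^\sigma,\nu_i^\tau$. Conditioning on distinctness (and on avoiding $x$) tilts and entangles the $y_i$'s; there is no reason a coupling achieving per-child disagreement $\le\delta_{f_i}$ exists for the tilted joint. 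This is precisely the step you describe as ``the real obstacle'', and you do not resolve it. In fact, such a coupling bound in the conditioned joint is essentially a coupling-independence-type statement, and the paper only proves coupling independence at $q\ge 3\Delta$ (\Cref{thm:coupling-independence}), not down to $q\ge 2\Delta-1$. The paper avoids the issue entirely by using a different decomposition: \Cref{lem:WSM_contraction} recurses on the scalar deviation $\bigl|\Pr_{T}[c(e)=c\mid\tau]-\tfrac1q\bigr|$ and shows it contracts under the explicit rational recursion in $P_{i,c}$, with the base case supplied by the marginal lower bound of \Cref{lem:marginal-bound-gkm}. Both boundary conditions are then shown to converge to the \emph{same} fixed point $\tfrac1q$, rather than being coupled to each other; the sibling correlations caused by the distinctness constraint are handled algebraically inside the recursion, not probabilistically via a coupling. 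To rescue your version you would need to replace ``couple the children and count disagreements'' with either (i) the paper's convergence-to-$\tfrac1q$ argument, or (ii) a proof that TV-closeness of the $\nu_i$'s implies closeness of the conditioned joints with a factor no worse than $d$, which is a nontrivial lemma you currently assume.
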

In the following, we give an overview with our technique, with an emphasis on the novelty.

\subsection{Technical contribution}

\paragraph{A new coupling strategy}

The coupling independence property is established via a new local coupling for edge colorings. Our coupling can somehow be viewed as a multi-spin version of Chen and Gu's coupling~\cite{CG24} for Holant problems with boolean domain. Their coupling, using the problem of $b$-matching as an example, begins with two instances differing at one pendant edge, or equivalently, two instances with a single constraint discrepancy. During the coupling process, the number of discrepancies can never increase but has a nonzero probability of decreasing to zero. Therefore, the coupling process terminates in expected constant number of steps. 

In the problem of edge coloring, we can design a local coupling starting from a single discrepancy so that the number of discrepancies can either increase to two, decrease to zero, or remain unchanged. We then use marginal probability bounds to control the probability of discrepancy increasing, while ensuring that the number of discrepancies decreases in expectation. 

\paragraph{Dimension reduction}

We establish the strong spatial mixing property on trees by analyzing marginal recursions, which is similar to~\cite{CLMM23}. However, unlike previous work for spin systems where the marginal on a single site is considered, we study the recursion for marginals on a ``broom'', namely all edges incident to the root. For each partial coloring on the broom, we can represent its marginal as a function of marginal probabilities of partial colorings in subtrees. However, the Jacobian matrix of this system can be as large as $q^\Delta \times \Delta q^\Delta$, and is technically very hard to analyze. Our key observation is that the Jacobian matrix is of low rank, and therefore we can apply a trace trick to bound its $2$-norm by the norm of a much smaller matrix. We call this step \emph{dimension reduction}.

\paragraph{From spectral independence to strong spatial mixing}

It is still challenging to directly bound the $2$-norm of the small matrix. We then observe that it can be written as the product of certain covariance matrices of marginal distributions on brooms. Therefore, ideally we can apply the known bounds for these covariance matrices, or equivalently the spectral independence bound for these marginals. However, these marginal probabilities are from the distribution of certain ``weighted edge colorings'' and one cannot directly apply previous spectral independence result for edge colorings. As a result, we apply the machinery of matrix trickle-down developed in~\cite{ALOG21} in the way of~\cite{WZZ24} to establish the desired spectral independence result. 

\paragraph{Top-down analysis of recursion}

There is another subtle technical issue in the above approach. When analyzing the contraction of marginal recursion, one needs to analyze the gradient / Jacobian at certain ``midpoint'' between two boundary conditions due to the application of fundamental theorem of calculus or mean-value theorem in the analysis. These midpoints, however, are not necessarily probabilities because the recursion may involve a potential function\footnote{Theses quantities are referred to as ``subdistributions'' in~\cite{CLMM23}}. In previous work, only certain marginal bounds are used to prove the contraction, and these bounds are also satisfied by the midpoints. However, in our case, we require these midpoints to satisfy refined properties, such as spectral independence, which does not hold in general. Therefore, we cannot apply the recursion in the traditional bottom-to-top manner, where one fixes the boundary value at level $L$, analyze the contraction at level $L-1$, then fixes boundary value at level $L-1$ and analyze the contraction at level $L-2$, and so on. Instead, we only fix boundary value at the leaves and analyze the composition of the recursion at each level as a whole. Therefore, we need to take ``midpoints'' only at the leaves, which defines our ``weighted edge coloring'' instance. Its spectral independence property can be established by the matrix trickle-down method, as discussed earlier.

\subsection{Organization of the paper}

After introducing the necessary preliminaries in \Cref{sec:prelim}, we give the marginal recursions and prove useful marginal bounds in \Cref{sec:marginals}. Then we present our proof of coupling independence, which implies the \textbf{FPTAS}, in \Cref{sec:FPTAS}. Strong spatial mixing on trees is proved in \Cref{sec:ssm} and the results of weak spatial mixing are proved in \Cref{sec:wsm}. In \Cref{sec:covariance}, we prove the spectral independence property for weighted edge coloring that will be used in the proof of strong spatial mixing.

\newcommand{\dist}{\-{dist}}

\section{Preliminaries}\label{sec:prelim}
We use the following notations. For any $a, b\in \mathbb R$, let $
    a\wedge b\defeq \min\set{a, b}$, $
    a\vee b\defeq \max\set{a, b}$. For any two non-negative integers $a\ge b$, let $a^{\underline{b}}$ be the falling factorial, i.e. $a^{\underline{b}}=\prod_{i=a-b+1}^a i$.
Let $\!{Id}$ denote the identity matrix. For a function $f\colon \Omega\to\bb R$ defined on a finite domain $\Omega$, we use $\Big[f(x)\Big]_{x\in\Omega}$ to denote the corresponding (column) vector in $\bb R^{\Omega}$. For any set $S$ and an element $x\in S$, we write $S-x$ for $S\setminus \set{x}$.

For two probability measures $\mu, \nu$ on the same probability space $\Omega$, we define $\|\mu-\nu\|_{\-{TV}}=\frac{1}{2} \sum_{\omega \in \Omega}|\mu(\omega)-\nu(\omega)|=\sup _{A \subseteq \Omega}|\mu(A)-\nu(A)|$ for the total variation distance between $\mu, \nu$. If $\mu, \nu$ are two probability distributions on finite state spaces $\Omega_1, \Omega_2$ respectively, then we say $\omega$ is a coupling of $\mu, \nu$ when it is a joint distribution on $\Omega_1 \times \Omega_2$ with $\mu, \nu$ as its marginals, i.e. $\mu(x)=\sum_{y \in \Omega_2} \omega(x, y)$ and $\nu(y)=\sum_{x \in \Omega_1} \omega(x, y)$ for every $x \in \Omega_1, y \in \Omega_2$.

Given a graph $G=(V,E)$, for any vertex $v\in V$, let $E(v)= \set{e\in E \mid  v\in e}$ and $\deg(v)=\abs{E(v)}$ be the degree of $v$; for any edge $e \in E$, let $\deg(e)$ be the degree of $e$, which is the number of edges adjacent to $e$. Moreover, we write $\deg(G)$ for the maximum (vertex) degree in $G$.
For  two edges $e,e'\in E$, we write $\dist_G(e, e')$ for the length of the shortest path between them in $G$ (not containing $e,e'$) and $\dist_G(e, e') = \infty$ if $e$ and $e'$ is disconnected. 
Similarly, for vertices $v,v'\in V$, $\dist_G(v,v')$ is the shortest path between them in $G$ and $\dist_G(v, v') = \infty$ if $v$ and $v'$ is disconnected.

\subsection{List edge coloring}
Fix a color set $[q]=\set{1, 2, \dots, q}$ where $q\in\bb N$. Let $G=(V,E)$ be an undirected graph and $\+L=\set{\+{L}(e)\subseteq[q]\cmid e\in E}$ be a collection of color lists associated with each edge in $E$. The pair $(G, \+L)$  is an instance of list edge coloring. 

If $\+L(e)=[q]$ for any $e\in E$, we say $(G,\+L)$ is a $q$-edge coloring instance.  If $\abs{\+L(e)}\geq \deg(e)+\beta$ for any $e\in E$, we say $(G,\+L)$ is a $\beta$-extra edge coloring instance. We say $\sigma:E\rightarrow [q]$ is a proper edge coloring if $\sigma(e) \in \+L(e)$ for any $e\in E$ and $\sigma(e_1)\neq \sigma(e_2)$ for any $e_1\cap e_2 \neq \emptyset$. Let $\Omega$ denote the set of all proper edge colorings and $\mu$ be the uniform distribution on $\Omega$. 

Let $\Lambda\subseteq E$ and $\tau\in [q]^\Lambda$. We say $\tau$ is a proper partial edge coloring on $\Lambda$ if it is a proper coloring on $(G[\Lambda],\+L|_\Lambda)$ where $G[\Lambda]$ is the subgraph of $G$ induced by $\Lambda$ and $\+L|_{\Lambda}=\set{\+L(e)\in \+L\cmid E\in \Lambda}$. Let $\Omega^\tau$ be the set of all proper edge colorings on $E$ that is compatible with $\tau$, i.e. $\Omega^\tau = \set{\sigma \in \Omega\mid \tau \subset \sigma}$ . We also define $\mu^\tau$ on $\Omega$ which is supported on $\Omega^\tau$ as $\mu^\tau(\cdot) = \Pr[\sigma \sim \mu]{\sigma=\cdot \mid \tau \subset \sigma}$.
For a subset $S\subseteq E\setminus \Lambda$ and a partial coloring $\omega$ on $S$, define $\Omega^\tau_S$ as the set of all proper partial edge colorings on $S$ that is compatible with
$\tau$ and $\mu^\tau_S(\omega)=\Pr[\sigma\sim \mu]{\omega \subset \sigma \mid \tau \subset \sigma}$. Especially, when $\Lambda=\set{i}$ and $\tau(i)=c$, we write the conditional distribution and the conditional marginal distribution by $\mu^{i\leftarrow c}$ and $\mu^{i\leftarrow c}_S$. Besides, we define the color lists after pinning  $\tau$ by $\+L^\tau$ such that for any $e\in E\setminus\Lambda$, $\+L^\tau(e)=\set{c\in \+L(e)\mid \mu^\tau_e(c)>0}$ and the degree after pinning by $\deg^\tau(e) = \abs{\set{e\cup f \neq \emptyset\mid f \in E\setminus \Lambda}}$.

For a given list edge coloring instance $(G,\+L)$, let $Z_{G,\+L}(M)$ denote the number of proper colorings with the condition $M$ satisfied, (or event $M$ happens) and $\Pr[G,\+L]{M}$ denote the probability that the condition $M$ is satisfied when a proper coloring is drawn uniformly at random. For an edge set $F\subseteq E$, we usually use $c(F)$ to denote the partial coloring on $F$. With a little abuse of notation, $c(F)$ is sometimes referred to as the set of colors used on $F$.
For a color $a$, we write $a\in F, a\notin F$ as shorthands for $a\in c(F), a\notin c(F)$ respectively.

\subsection{The Wasserstein distance}
In this work, we restrict our discussions and terminologies to finite probability spaces without invoking general measure theory. 

\begin{definition}[Wasserstein distance]
    Let $\mu, \nu$ be two distributions defined on the same finite set $\Omega$ equipped with a metric $d(\cdot, \cdot)$. We define $\Gamma(\mu, \nu)$ as the set of couplings of $\mu$ and $\nu$. Then the \emph{Wasserstein ($1$-)distance} is defined by
    \[
    \W{\mu, \nu} \defeq
    \inf_{\tau\in \Gamma(\mu, \nu)} \E[(x, y)\sim \tau]{d(x, y)}.
    \]
\end{definition}

In this paper, our metric $d$ is always the Hamming distance. For two configurations $\sigma, \tau$ on $[q]^V$, their Hamming distance is defined as $d(\sigma, \tau)=\abs{\set{v \in V \mid \sigma(v)\neq \tau(v)}}$. We define the notion of coupling independence for Gibbs distribution here.
\begin{definition}[Coupling independence]
    We say a Gibbs distribution $\mu$ over $[q]^V$ satisfies $C$-coupling independence if for any two partial configurations $\sigma, \tau \in [q]^\Lambda$ on $\Lambda \subseteq V$ such that $d(\sigma, \tau)=1$,
    $$
\W{\mu^\sigma, \mu^\tau} \leq C
$$
where $\mu^\sigma$ and $\mu^\tau$ denote the Gibbs distribution conditional on $\sigma$ and $\tau$, respectively.
\end{definition}

We will use the following inequality w.r.t Wasserstein distance in later proof.
\begin{proposition}\label{prop:coupling-convex-decomposition}
    Let $\mu, \nu$ be arbitrary distributions on a common finite metric space $(\Omega, d)$.
    If there exists non-negative constants $\lambda_i, 1\le i\le k$
    and distributions $\set{\mu_i}_{1\le i\le k}$, $\set{\nu_i}_{1\le i\le k}$ on $\Omega$ such that
    \[
    \mu - \nu = \sum_{i=1}^k\lambda_i\cdot(\mu_i - \nu_i),
    \]
    where we regard both sides as functions on $\Omega$.
    Then
    \[
    \W{\mu, \nu} \le \sum_{i=1}^k \lambda_i{\W{\mu_i, \nu_i}}.
    \]
\end{proposition}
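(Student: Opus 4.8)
The plan is to prove the inequality via the Kantorovich--Rubinstein dual characterization of the $1$-Wasserstein distance. Since $\Omega$ is finite, this duality is just finite-dimensional linear programming duality applied to the transportation problem: the primal minimizes $\sum_{x,y}\tau(x,y)d(x,y)$ over couplings $\tau\in\Gamma(\mu,\nu)$, and its dual reads
\[
\W{\mu,\nu} \;=\; \max\set{\,\sum_{\omega\in\Omega} f(\omega)\bigl(\mu(\omega)-\nu(\omega)\bigr) \;:\; f\colon\Omega\to\bb R,\ \abs{f(x)-f(y)}\le d(x,y)\ \forall x,y\in\Omega\,}.
\]
The only facts I will use are: (i) there is an optimal $1$-Lipschitz function $f^\star$ attaining the value $\W{\mu,\nu}$; and (ii) \emph{weak duality}, namely that for \emph{every} $1$-Lipschitz $f$ and every pair of distributions $\alpha,\beta$ on $\Omega$ one has $\sum_{\omega} f(\omega)\bigl(\alpha(\omega)-\beta(\omega)\bigr) \le \W{\alpha,\beta}$. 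Fact (ii) is immediate because for any coupling $\tau\in\Gamma(\alpha,\beta)$ we have $\E[(x,y)\sim\tau]{d(x,y)} \ge \E[(x,y)\sim\tau]{f(x)-f(y)} = \sum_\omega f(\omega)\bigl(\alpha(\omega)-\beta(\omega)\bigr)$, and then one takes the infimum over $\tau$.

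With these in hand the proof is a one-line computation. Fix an optimal $1$-Lipschitz $f^\star$ for the pair $(\mu,\nu)$. Using the hypothesis $\mu-\nu=\sum_{i=1}^k\lambda^i(\mu_i-\nu_i)$ and linearity of the sum,
\[
\W{\mu,\nu} \;=\; \sum_{\omega} f^\star(\omega)\bigl(\mu(\omega)-\nu(\omega)\bigr) \;=\; \sum_{i=1}^{k}\lambda^i \sum_{\omega} f^\star(\omega)\bigl(\mu_i(\omega)-\nu_i(\omega)\bigr) \;\le\; \sum_{i=1}^{k}\lambda^i\,\W{\mu_i,\nu_i},
\]
where the last step applies weak duality (ii) to each pair $(\mu_i,\nu_i)$ \emph{with the same} function $f^\star$, together with $\lambda^i\ge 0$.

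The computation itself is routine; the only point that needs care is justifying the dual formulation, so that is where I would be most careful in the write-up (or one simply cites Kantorovich--Rubinstein and notes it is elementary in the finite setting). A duality-free alternative would be to build an explicit coupling of $\mu$ and $\nu$ from optimal couplings $\tau_i$ of the $(\mu_i,\nu_i)$: writing $A=(\mu-\nu)^+$ and $B=(\mu-\nu)^-$, one has $A\le\sum_i\lambda^i(\mu_i-\nu_i)^+$ and $B\le\sum_i\lambda^i(\mu_i-\nu_i)^-$, so one places $\mu\wedge\nu$ on the diagonal and couples $A$ with $B$ using the off-diagonal parts of the $\tau_i$'s. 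For $k=1$ this goes through verbatim, but for general $k$ the combined off-diagonal mass carries an extra ``circulating'' part $R=\sum_i\lambda^i(\mu_i-\nu_i)^+-A\ge 0$ that must be cancelled by a cycle-removal argument (invoking the triangle inequality for $d$); I would only fall back on this route if one wished to avoid invoking Kantorovich--Rubinstein at all.
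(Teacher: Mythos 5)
Your proof is correct and follows essentially the same route as the paper: both invoke the Kantorovich--Rubinstein dual formulation of $\W{\cdot,\cdot}$ and then exploit linearity of $\inner{f}{\cdot}$ together with the hypothesized decomposition $\mu-\nu=\sum_i\lambda^i(\mu_i-\nu_i)$. The only cosmetic difference is that you fix an optimal $1$-Lipschitz $f^\star$ and then apply weak duality term by term, whereas the paper keeps the supremum and pushes it inside the sum; these are the same argument.
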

\begin{proof}
   The Kantotrovich-Rubinstein duality theorem (see Theorem 1.14 in \cite{villani2021topics}
   for the proof) states a equivalent form of Wasserstein distance:
   \begin{align*}
    \W{\mu, \nu} = \sup_{f\in L^1(\Omega)}\inner{f}{\mu - \nu},
   \end{align*}
   where $\inner{f}{\mu-\nu}\defeq\sum_{x\in \Omega} f(x)(\mu(x)-\nu(x))$ and
   $L^1(\Omega)\defeq\set{f:\Omega\to \bb R\mid \forall x, y\in \Omega : f(x)-f(y)\le d(x, y)}$
   is the space of $1$-Lipschiz functions.
   Then
   \begin{align*}
    \W{\mu, \nu} 
    =\sup_{f\in L^1(\Omega)}\inner{f}{\mu - \nu}
    =\sup_{f\in L^1(\Omega)}\inner{f}{\sum_{i=1}^k\lambda_i(\mu_i - \nu_i)}
    \le \sum_{i=1}^k\lambda_i\sup_{f\in L^1(\Omega)}\inner{f}{\mu_i - \nu_i}
    =\sum_{i=1}^k\lambda_i\W{\mu_i, \nu_i}.
   \end{align*}
\end{proof}

\subsection{Correlation Decay}\label{sec:prelim-decay}
Correlation decay refers to the phenomenon that the correlation between the color assignments of edges diminishes as their distance in the graph increases. Specifically, there are two primary notions of correlation decay: strong spatial mixing and weak spatial mixing. These two notions differ in how they measure the ``distance'' over which the correlation should decay.
\begin{definition}[Strong spatial mixing]\label{def:SSM}
    The Gibbs distribution $\mu$ of the list edge coloring instance $(G = (V,E), \+L)$ satisfies \textit{strong spatial mixing (SSM)} with exponential decay rate $1 - \delta$ and constant $C = C(q,\Delta)$ if for any $e \in E$, every subset $\Lambda \subseteq E\setminus \{e\}$ and every pair of feasible pinning $\tau_1,\tau_2$ on $\Lambda$ which differ on $\partial_{\tau_1,\tau_2} = \set{e\in \Lambda\mid \tau_1(e)\neq \tau_2(e)}$, we have that
    $$
        \|\mu_{e}^\sigma - \mu_e^\tau\|_{\-{TV}} \leq C(1 - \delta)^K
    $$
    where $K = \min_{e'\in \partial_{\tau_1,\tau_2}}\-{dist}_{G}(e,e')$.
\end{definition}
\begin{definition}[Weak spatial mixing]\label{def:WSM}
    The Gibbs distribution $\mu$ of the list edge coloring instance $(G = (V,E), \+L)$ satisfies \textit{weak spatial mixing (WSM)} with exponential decay rate $1 - \delta$ and constant $C = C(q,\Delta)$ if for any $e \in E$, every subset $\Lambda \subseteq E\setminus \{e\}$ and every pair of feasible pinning $\sigma,\tau$ on $\Lambda$, we have that
    $$
        \|\mu_{e}^\sigma - \mu_e^\tau\|_{\-{TV}} \leq C(1 - \delta)^K
    $$
    where $K = \min_{e'\in \Lambda}\-{dist}_{G}(e,e')$.
\end{definition}
\begin{remark}
    When defining Strong Spatial Mixing (SSM) and Weak Spatial Mixing (WSM) for individual problem instances, it is technically possible to choose a constant $C$ large enough to satisfy the decay inequality since the instance is finite. For SSM/WSM to meaningfully imply algorithmic tractability, these constants must hold universally for an entire class of instances, independent of the size of graph. Therefore, when we informally state that \emph{we are establishing weak / strong spatial mixing property} in this context, we refer to the existence of uniform constants $C$ and $\delta$ for a certain class of edge coloring instances even as the graph grows arbitrarily large.
\end{remark}
\section{Recursion and marginals}\label{sec:marginals}
In this section, we analyze the marginal bounds on the set of edges adjacent to a given vertex in a $\beta$-extra list-edge-coloring instance under various conditions.
\subsection{Marginal bounds for general edge coloring}
We begin by examining the marginal bounds on general graphs.
\begin{lemma}\label{lem:claw-marginal-generalized}
     Given a list-edge-coloring instance $(G, \+L)$ where $G=(V, E)$,
     and a vertex $v\in V$ such that $\forall e\in E_{v}: \abs{\+L(e)}-\deg(e) \ge \beta\ge 2$.
    Then for any $v\in V$, $F\subseteq E(v)$ and color $a$:
    \[\Pr[G,\+L]{a\in c(F)} \le \frac{|F|}{\beta - 1 + |F|}\quad\mbox{and}\quad \frac{\Pr[G,\+L]{a\in c(F)}}{\Pr[G,\+L]{a \notin c(E(v))}} \le \frac{|F|}{\beta - 1}.\]
\end{lemma}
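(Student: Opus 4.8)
The plan is to establish the second (ratio) inequality first and then deduce the first one from it by a one-line algebraic manipulation. The structural fact that drives everything is that all edges of $E(v)$ pairwise intersect at $v$, so in any proper coloring they receive pairwise distinct colors; hence for $F\subseteq E(v)$ the color $a$ appears on at most one edge of $F$, and therefore, with $\sigma\sim\mu$,
\[
\Pr[G,\+L]{a\in c(F)} \;=\; \sum_{e_0\in F}\Pr[G,\+L]{a\in c(\set{e_0})}.
\]
Thus both inequalities follow once we prove the single-edge estimate
\[
\Pr[G,\+L]{a\in c(\set{e_0})}\;\le\;\frac{1}{\beta-1}\,\Pr[G,\+L]{a\notin c(E(v))}\qquad\text{for every }e_0\in E(v),
\]
since summing it over $e_0\in F$ gives exactly the ratio inequality.

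To prove the single-edge estimate I would use a double-counting argument between the two subsets of $\Omega$
\[
\+A\;\defeq\;\set{\sigma\in\Omega\mid \sigma(e_0)=a}\qquad\text{and}\qquad \+B\;\defeq\;\set{\sigma\in\Omega\mid a\notin c(E(v))},
\]
linked by the operation of recoloring the single edge $e_0$. Given $\sigma\in\+A$, recolor $e_0$ with any color of $\+L(e_0)$ that differs from $a$ and from every color used on an edge adjacent to $e_0$; since there are $\deg(e_0)$ such edges and $\sigma(e_0)=a$ forces each of them to avoid $a$, the set of forbidden colors has at most $1+\deg(e_0)$ elements, leaving at least $\abs{\+L(e_0)}-1-\deg(e_0)\ge\beta-1$ admissible choices. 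Each resulting configuration is again a proper coloring (the new color of $e_0$ conflicts with nothing, by construction), and it lies in $\+B$: the only edge of $E(v)$ that carried color $a$ under $\sigma$ was $e_0$ --- again because $E(v)$ spans a clique in the line graph --- and it no longer does. Conversely, a given $\sigma'\in\+B$ arises in this way from at most one $\sigma\in\+A$, namely from the configuration agreeing with $\sigma'$ off $e_0$ and assigning $a$ to $e_0$, which is counted only when that assignment happens to be proper. Comparing the two fiber sizes yields $(\beta-1)\abs{\+A}\le\abs{\+B}$, and dividing by $\abs{\Omega}$ gives the single-edge estimate.

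With the ratio inequality $\Pr[G,\+L]{a\in c(F)}\le\frac{\abs{F}}{\beta-1}\Pr[G,\+L]{a\notin c(E(v))}$ in hand, the marginal inequality follows because $F\subseteq E(v)$ implies $\set{a\notin c(E(v))}\subseteq\set{a\notin c(F)}$, hence $\Pr[G,\+L]{a\notin c(E(v))}\le 1-\Pr[G,\+L]{a\in c(F)}$; substituting this and solving the resulting linear inequality for $\Pr[G,\+L]{a\in c(F)}$ produces the bound $\frac{\abs{F}}{\beta-1+\abs{F}}$. The hypothesis $\beta\ge 2$ is used only to guarantee $\beta-1\ge 1>0$, so that the divisions above make sense, and the assumption $\abs{\+L(e)}\ge\deg(e)+\beta$ is precisely what supplies the $\beta-1$ admissible recolorings.

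The step requiring genuine care --- the main obstacle --- is the verification behind the single-edge estimate: one must check simultaneously that the recoloring operation is well-defined (it outputs a proper list coloring) and that it lands in $\+B$ (color $a$ disappears from \emph{all} of $E(v)$, not merely from $e_0$), the latter being exactly where the clique structure of $E(v)$ is indispensable. The fiber bounds ($\ge\beta-1$ on one side, $\le1$ on the other) and the closing algebra are then routine.
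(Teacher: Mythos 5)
Your proof is correct and takes essentially the same approach as the paper's: both are local double-counting arguments built on recoloring the unique edge of $F$ carrying color $a$ (unique because $E(v)$ is a clique in the line graph), giving at least $\beta-1$ recolorings forward and at most one preimage backward. Your explicit decomposition of $\Pr[G,\+L]{a\in c(F)}$ into single-edge events and your derivation of the marginal bound from the ratio bound are clean presentational variants of the paper's $\iota$-flow formulation, which packs the same counting into one bipartite comparison.
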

\begin{proof}
We denote the set of all proper colorings on $(G, \+L)$ by $\Omega$, and define
\[A \defeq \set{\omega\in \Omega \mid \exists e\in F: \omega(e) = a}.\]
Then define a function $\iota: (\Omega\setminus A)\times A\to \mathbb R_{\ge 0} $ such that for any $\omega'\in A$:
\begin{equation}\label{eq:marginal-sum-is-1}
\sum_{\omega\in \Omega\setminus A} \iota(\omega, \omega') = 1.
\end{equation}
To continue with the construction of $\iota$, for any $\omega' \in A$:
\[
    B(\omega')\defeq \set{\omega\in \Omega\setminus A : d(\omega, \omega') = 1}.
\]
where $d$ is the Hamming distance, i.e., the number of edges colored differently between the two colorings.
Then
\begin{align*}
\iota(\omega, \omega') \defeq
\begin{cases}
\frac{1}{|B(\omega')|},& \omega\in B(\omega')\\
0, &\text{otherwise}
\end{cases}.
\end{align*}
It is easy to verify \cref{eq:marginal-sum-is-1} holds.
Notice that if $\omega\in B(\omega')$ such that $\omega'(e) = a$ for some $e\in F$,
then $\omega$ is obtained from $\omega'$ by recoloring $e$ to another color other than $a$. 
There are at least $\beta-1$ choices according to the assumption, so $|B(\omega')|\ge \beta-1$ for any $\omega'\in A$.

Similarly, if $\omega'\in B(\omega)$, then $\omega'$ is obtained from $\omega$ by recoloring an edge $e\in F$ to $a$,
there are at most $|F|$ choices,
so we have $\sum_{\omega'} \iota(\omega, \omega') \le \frac{|F|}{\beta-1}$ for all $\omega\in \Omega\setminus A$.

Then
\begin{align*}
\Pr[G,\+L]{a\in c(F)}
&= \frac{\sum_{\omega'\in A}1}{\sum_{\omega\in \Omega\setminus A} 1 + \sum_{\omega'\in A}1}\\
&= \frac{\sum_{\omega'\in A}\sum_{\omega\in \Omega\setminus A} \iota(\omega, \omega')}
        {\sum_{\omega\in \Omega\setminus A} 1 + \sum_{\omega'\in A}\sum_{\omega\in \Omega\setminus A} \iota(\omega, \omega')}\\
&= \frac{\sum_{\omega\in \Omega\setminus A} \sum_{\omega'\in A} \iota(\omega, \omega')}
        {\sum_{\omega\in \Omega\setminus A} \left(1 + \sum_{\omega'\in A}\iota(\omega, \omega')\right)}\\
&\le \frac{|F|/(\beta-1)}{1 + |F|/(\beta-1)} = \frac{|F|}{\beta-1+|F|},
\end{align*}
proving the first part of the lemma.

Similarly, we define $\Omega_0\defeq\set{\omega\in \Omega \mid \not\exists e\in E(v) : \omega(e) = a}$.
Note that by definition, $\iota(\cdot, \cdot)$ is supported on $\Omega_0\times A$ and for any $\omega'$, $B(\omega')\subseteq \Omega_0$, so we have
\begin{align*}
\frac{\Pr[G,\+L]{a\in c(F)}}{\Pr[G,\+L]{a\notin c(E(v))}}
= \frac{\sum_{\omega'\in A}1}{\sum_{\omega\in \Omega_0} 1}
= \frac{\sum_{\omega'\in A}\sum_{\omega\in \Omega\setminus A} \iota(\omega, \omega')}
        {\sum_{\omega\in \Omega_0} 1 }.
\end{align*}
This can be rewritten as
\begin{align*}
\frac{\sum_{\omega\in \Omega_0} \sum_{\omega'\in A} \iota(\omega, \omega')}
        {\sum_{\omega\in \Omega_0} 1}
\le \frac{|F|}{\beta-1}.
\end{align*}
\qedhere
\end{proof}

Especially, when $\abs{F}=1$, we have the following corollary.
\begin{corollary}\label{cor:marginal-bound-gamma-delta}
    Let $(G=(V, E), \+L)$ be a $\beta$-extra list-edge-coloring instance. Then for any $e\in E, v\in V, a\in \+L(e)$, 
    \[
    \frac{\Pr[G,\+L]{c(e)=a}}{\Pr[G,\+L]{a\notin c(E_v)}} \le \frac{1}{\beta-1}.
    \]
\end{corollary}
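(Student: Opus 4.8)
The plan is to obtain this as the $\abs{F}=1$ instance of \Cref{lem:claw-marginal-generalized}. Fix an endpoint $v$ of $e$, so that $\set{e}\subseteq E(v)$, and take $F\defeq\set{e}$. Since $(G,\+L)$ is $\beta$-extra, every edge $f\in E(v)$ satisfies $\abs{\+L(f)}-\deg(f)\ge\beta$; hence, in the regime $\beta\ge 2$ (which is all that concerns us, and without which the bound $1/(\beta-1)$ is not meaningful), the hypothesis of \Cref{lem:claw-marginal-generalized} at $v$ holds. The event ``$a\in c(F)$'' is exactly the event ``$c(e)=a$'', so the second inequality of \Cref{lem:claw-marginal-generalized} with $\abs{F}=1$ reads
\[
\frac{\Pr[G,\+L]{c(e)=a}}{\Pr[G,\+L]{a\notin c(E_v)}}\;\le\;\frac{\abs{F}}{\beta-1}\;=\;\frac{1}{\beta-1},
\]
which is precisely the asserted inequality.

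There is no real obstacle: the statement is a direct specialization. The only two points deserving a word are that the $\beta$-extra condition does supply the hypothesis of \Cref{lem:claw-marginal-generalized} (immediate from the definition of a $\beta$-extra instance), and that one should read the corollary with $v$ an endpoint of $e$ — this is the case in every application, and it is also what is forced by \Cref{lem:claw-marginal-generalized}, whose second bound requires $F\subseteq E(v)$. I will not pursue a variant for vertices $v$ not incident to $e$: it is not needed below, and the mass-transport argument in the proof of \Cref{lem:claw-marginal-generalized} does not carry over unchanged there, since after recoloring $e$ the transported mass need no longer land inside $\set{a\notin c(E_v)}$.
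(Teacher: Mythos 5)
Your proposal is correct and matches the paper's own (implicit) derivation: the corollary is stated immediately after the sentence ``Especially, when $\abs{F}=1$, we have the following corollary,'' which is exactly the specialization you carry out, and your two side remarks — that the $\beta$-extra hypothesis supplies the lemma's condition, and that $v$ must be read as an endpoint of $e$ so that $F=\set{e}\subseteq E(v)$ — are accurate clarifications of what the paper leaves unsaid.
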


\subsection{Tree recursion for edge coloring}\label{sec:recursion}
We now turn our attention to a more specialized structure: trees. Throughout this section, we fix a $\beta$-extra edge-coloring instance $(T=(V,E),\+{L})$ with root $r$. For any vertex $v \in V$, let $T_v$ be the sub-tree of $T$ rooted at $v$ and $E_{T_v}$ be the edge set of $T_v$. Let $C_v$ be the set of proper partial colorings on $E_{T_v}(v)$ and $\+D(C_v)$ be the set of all distributions on $C_v$. 

Assume $r$ has $d$ children $v_1, v_2, \dots, v_d$. For any $i\in [d]$, let $e_i = (r, v_i)$ and $T_i = T_{v_i}$. For brevity, since the color lists are fixed, we omit the color lists $\+L$ in the subscript in $\Pr[T,\+L]{\cdot}$ and $Z_{T,\+L}(\cdot)$. For any $i\in [d]$, we also write $\Pr[T_i]{\cdot}$ for $\Pr[T_i, \+L|_{T_i}]{\cdot}$ and $Z_{T_i}(\cdot)$ for $Z_{T_i,\+L|_{T_i}}(\cdot)$.

We introduce a tree recursion on the marginal distributions of partial colorings on ``brooms'', where a ``broom'' is referred to as the edge set $E_{T_v}(v)$ for a vertex $v\in V$. This recursion demonstrates how the marginal distributions on brooms propagate through the tree structure as in \Cref{fig:broom-recursion}.
\usetikzlibrary{trees}
    \tikzset{
        arn/.style = {circle, draw=black, fill=white, thick},
}
\begin{figure}[H]
\centering
\begin{tikzpicture}
    [sibling distance = 1.9cm,level distance = 1.3cm] \node [arn] {$r$}
    child[edge from parent/.style={very thick,draw}] {
        [sibling distance = 0.5cm,level distance = 0.9cm]node [arn] {$v_1$}
        child {
            [red] node [arn] {}
            edge from parent node[above left] {\small $E_{T_{v_1}}(v_1)$}
        }
        child {[white] node [red] {\tiny$\ $\Large$\dots$} }
        child {[red] node [arn] {} }
    }
    child[edge from parent/.style={very thick,draw}] {
        [sibling distance = 0.5cm,level distance = 0.9cm]node [arn] {$v_2$}
        child {[orange] node [arn] {}}
        child {[white] node [orange] {\tiny$\ $\Large$\dots$} }
        child {
            [orange] node [arn] {}
            edge from parent node[right] {\small $E_{T_{v_2}}(v_2)$}
        }
    }
    child[edge from parent/.style={very thick,draw}] {
        [white] node [black] {\Huge$\dots$}
    }
    child[edge from parent/.style={very thick,draw}] {
        [sibling distance = 0.5cm,level distance = 0.9cm]node [arn] {$v_d$}
        child {[blue] node [arn] {}}
        child {[white] node [blue] {\tiny$\ $\Large$\dots$} }
        child {
            [blue] node [arn] {}
            edge from parent node[above right] {\small $E_{T_{v_d}}(v_d)$}
        }
        edge from parent node[above right] {\small $E(r)$}
    };
\end{tikzpicture}
\caption{Brooms on a tree}
\label{fig:broom-recursion}
\end{figure}
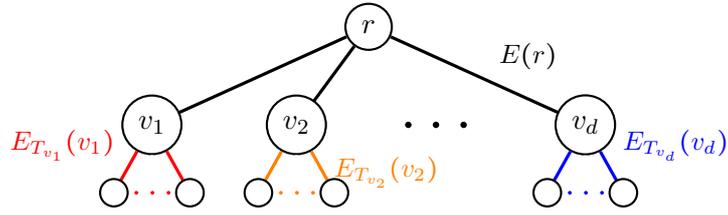

\begin{lemma}\label{lem:tree-recursion}
    Given distributions $\*{p}_i=\tp{\Pr[T_i]{c(E_{T_i}(v_i))=\tau}}_{\tau\in C_{v_i}}$ on $C_{v_i}$ for each $v_i$, we can compute the marginal distribution $\*{p}_r=\tp{\Pr[T]{c(E(r))=\pi}}_{\pi \in C_r}$ on $C_r$:
\begin{align*}
    \*p_r(\pi)=f_\pi(\set{\*{p}_i}_{i\in [d]})=\frac{\prod_i \sum_{\tau
    \in C_{v_i}: \pi(e_i)\notin \tau} \*p_{i}(\tau)}{\sum_{\rho \in C_r} \prod_i \sum_{\tau
    \in C_{v_i}: \rho(e_i)\notin \tau} \*p_{i}(\tau)}.
\end{align*}
\end{lemma}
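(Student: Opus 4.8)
The plan is to derive the formula directly from the definition of the uniform distribution on proper edge colorings of the tree $T$, exploiting the fact that once the colors on the broom $E(r)$ are fixed, the subtrees $T_1,\dots,T_d$ become conditionally independent. First I would fix a target partial coloring $\pi\in C_r$ on $E(r)$ and write $\*p_r(\pi)=Z_T(c(E(r))=\pi)/Z_T$, where $Z_T$ is the total number of proper edge colorings of $T$. The key structural observation is that $T$ decomposes as the root $r$ together with the disjoint subtrees $T_1,\dots,T_d$ glued only at the edges $e_1,\dots,e_d$; hence a proper coloring of $T$ with $c(E(r))=\pi$ is exactly a choice, for each $i$, of a proper coloring of $T_i$ whose value on $e_i$ equals $\pi(e_i)$ — and the constraint that the $e_i$ receive distinct colors is already built into $\pi\in C_r$. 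This gives the product factorization
\[
Z_T\bigl(c(E(r))=\pi\bigr)=\prod_{i\in[d]} Z_{T_i}\bigl(c(e_i)=\pi(e_i)\bigr).
\]

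Next I would rewrite each factor $Z_{T_i}(c(e_i)=\pi(e_i))$ in terms of the given distribution $\*p_i$ on $C_{v_i}$. The edge $e_i=(r,v_i)$ is incident in $T_i$ to the broom $E_{T_i}(v_i)$ of the child $v_i$; a proper coloring of $T_i$ with $c(e_i)=\pi(e_i)$ corresponds to a proper coloring of $T_i\setminus e_i$ (equivalently, a proper coloring of the forest hanging below $v_i$) whose broom-coloring $\tau\in C_{v_i}$ avoids the color $\pi(e_i)$. Dividing numerator and denominator by $Z_{T_i\setminus e_i}$ (the number of proper colorings of $T_i$ after deleting $e_i$, which factors further but we do not need this), one gets
\[
\frac{Z_{T_i}\bigl(c(e_i)=\pi(e_i)\bigr)}{Z_{T_i\setminus e_i}}
=\sum_{\tau\in C_{v_i}:\ \pi(e_i)\notin\tau}\Pr[T_i]{c(E_{T_i}(v_i))=\tau}
=\sum_{\tau\in C_{v_i}:\ \pi(e_i)\notin\tau}\*p_i(\tau),
\]
since $\Pr[T_i]{c(E_{T_i}(v_i))=\tau}=Z_{T_i\setminus e_i}(\text{broom}=\tau\text{-compatible})/Z_{T_i\setminus e_i}$ and summing over $\tau$ avoiding $\pi(e_i)$ exactly counts the colorings of $T_i$ extendable by coloring $e_i$ with $\pi(e_i)$. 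Finally, taking the ratio and letting the common normalizing constant $\prod_i Z_{T_i\setminus e_i}$ cancel between numerator and denominator, together with $Z_T=\sum_{\rho\in C_r}Z_T(c(E(r))=\rho)$, yields
\[
\*p_r(\pi)=\frac{\prod_i\sum_{\tau\in C_{v_i}:\ \pi(e_i)\notin\tau}\*p_i(\tau)}{\sum_{\rho\in C_r}\prod_i\sum_{\tau\in C_{v_i}:\ \rho(e_i)\notin\tau}\*p_i(\tau)},
\]
which is the claimed identity $f_\pi(\set{\*p_i}_{i\in[d]})$.

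The only genuinely delicate point — the one I would be most careful about — is the conditional independence / factorization step: one must check that for a tree, fixing the colors on all edges incident to $r$ really does decouple the subtrees with no residual interaction, i.e.\ that the only edges shared between the ``region'' of $r$ and $T_i$ are the single edge $e_i$, and that no coloring constraint couples $T_i$ and $T_j$ for $i\neq j$ beyond what $\pi\in C_r$ already encodes. This is immediate from the absence of cycles in $T$, but it is exactly the place where the argument would fail on a general graph, so I would state it explicitly. Everything else is bookkeeping with the falling-factorial-free counting identities above and the definition of $C_r$ as the set of \emph{proper} partial colorings on $E(r)$ (so that $\pi(e_i)\neq\pi(e_j)$ is automatic and does not need to be re-imposed).
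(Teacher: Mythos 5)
Your proposal is correct and follows essentially the same route as the paper: write $\*p_r(\pi)$ as a ratio of restricted partition functions, factor $Z_T(c(E(r))=\pi)$ over the subtrees using the tree structure, identify each factor (up to the common normalizer $Z_{T_i}$) with $\sum_{\tau:\pi(e_i)\notin\tau}\*p_i(\tau)$, and cancel. The only differences are cosmetic — you temporarily treat $e_i$ as part of $T_i$ and then pass to $T_i\setminus e_i$, whereas the paper keeps $T_i=T_{v_i}$ (excluding $e_i$) throughout and directly writes $Z_{T_i}(\pi(e_i)\notin c(E_{T_i}(v_i)))$ — and you spell out the conditional-independence step more explicitly, which the paper leaves implicit.
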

\begin{proof}
    By the definition of marginal probabilities, we have
\begin{align*}
    \*p_r(\pi)&=\Pr[T]{c(E(r))=\pi}\\
    &=\frac{Z_T(c(E(r))=\pi)}{\sum_{\rho \in C_r} Z_T(c(E(r))=\rho)}.
\end{align*}
Since $T$ is a tree, we can further write 
\begin{align*}
    \*p_r(\pi)&=\frac{\prod_i Z_{T_i}(\pi(e_i)\not\in c(E_{T_i}(v_i)))}{\sum_{\rho \in C_r} \prod_i Z_{T_i}(\rho(e_i)\not\in c(E(v_i)))}\\
    &=\frac{\prod_i \sum_{\tau
    \in C_{v_i} : \pi(e_i)\notin \tau} Z_{T_i}(c(E_{T_i}(v_i))=\tau)}{\sum_{\rho \in C_r} \prod_i \sum_{\tau
    \in C_{v_i}: \rho(e_i)\notin \tau} Z_{T_i}(c(E_{T_i}(v_i))=\tau)}\\
    &=\frac{\prod_i \sum_{\tau
    \in C_{v_i}: \pi(e_i)\notin \tau} \Pr[T_i]{c(E_{T_i}(v_i))=\tau}}{\sum_{\rho \in C_r} \prod_i \sum_{\tau
    \in C_{v_i}: \rho(e_i)\notin \tau} \Pr[T_i]{c(E_{T_i}(v_i))=\tau}}.
\end{align*}
\end{proof}
We will regard $f=(f_\pi)_{\pi \in C_r}: \bb{R}_{\geq 0}^{C_{v_1}}\times \bb{R}_{\geq 0}^{C_{v_2}}\times \cdots \times \bb{R}_{\geq 0}^{C_{v_d}} \rightarrow \+D(C_{r})$ as a function taking inputs $\*p = (\*p_1, \*p_2, \dots, \*p_d)$ where $\*p_i\in \bb R_{\ge 0}^{C_{v_i}}$ for $i\in [d]$. Note that in some cases, $\*p$ might not encode a distribution.

\subsection{Marginal bounds propagated by recursion}\label{sec:marginal_bounds_on_trees}
Now we can give marginal bounds of probabilities that propagated by the recursion. For brevity, we define some notations of marginals for any non-zero vector $\*p_v\in \bb{R}_{\geq 0}^{C_v}$ with $v\in V$ and $c\in [q]$: Let $\*p_v(c)=\sum_{\tau \in C_v:c \in \tau} \*p_v(\tau)$, $\*p_v(\bar{c})=\sum_{\tau \in C_v:c \notin \tau} \*p_v(\tau)$, and $\*p_v(\bar{c_1},\bar{c_2})=\sum_{\tau \in C_v:c_1,c_2 \notin \tau} \*p_v(\tau)$.

Especially, for $\*p_r \in \+D(C_r)$, we define $\*p_r(i,c)=\sum_{\pi \in C_r: \pi(e_i)=c} \*p_r(\pi)$, and $\*p_r(i,c_1,j,c_2)=\sum_{\pi \in C_r: \pi(e_i)=c_1, \pi(e_j)=c_2} \*p_r(\pi)$.

By \Cref{lem:tree-recursion}, we have
\begin{equation}\label{eq:tree-recursion-short}
    \*p_r(\pi)=f_\pi(\set{\*{p}_i}_{i\in [d]})=
    \frac{\prod_i \*p_{v_i}(\ol{\pi(e_i)})}
    {\sum_{\rho \in C_r} \prod_i \*p_{v_i}(\ol{\rho(e_i)})}.
\end{equation}

We have the following lemma similar to \Cref{lem:claw-marginal-generalized}
on trees. Note that we do not require $\*p_i$'s to be distributions in \Cref{lem:marginal_bound_1}.
\begin{lemma}\label{lem:marginal_bound_1}
     Given non-zero vector $\*p_i\in \bb{R}_{\geq 0}^{C_{v_i}}$ for each $i\in [d]$ and $\*p_r = f(\{\*p_i\}_{i\in[d]})$, we have that for any color $a \in [q]$,
    \[
        \*p_r(a) \leq \frac{d}{\beta - 1 + d} \quad\mbox{and}\quad  \frac{\*p_r(a)}{\*p_r(\overline{a})} \leq \frac{d}{\beta - 1}.
    \]
    \end{lemma}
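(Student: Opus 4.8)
The plan is to mimic the combinatorial argument of \Cref{lem:claw-marginal-generalized}, but now working at the level of the recursion~\eqref{eq:tree-recursion-short} rather than directly on proper colorings of a graph. Fix a color $a\in[q]$. Since $\*p_r$ is a genuine distribution on $C_r$, we may write
\[
\*p_r(a)=\sum_{i\in[d]}\*p_r(i,a),
\]
because at most one child-edge of $r$ can receive color $a$ in a proper coloring, so the events $\{\pi(e_i)=a\}$ are disjoint over $i$. Thus it suffices to compare, for each $i$, the ``numerator mass'' of colorings $\pi$ with $\pi(e_i)=a$ against the mass of colorings avoiding $a$ on all of $E(r)$, and to control the ratio. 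Concretely, set
\[
N(\pi)\defeq\prod_j \*p_{v_j}\bigl(\ol{\pi(e_j)}\bigr),
\]
so $\*p_r(\pi)=N(\pi)/\sum_{\rho}N(\rho)$, and it is enough to bound the ratio $\sum_{\pi:\pi(e_i)=a}N(\pi)\big/\sum_{\rho:a\notin c(E(r))}N(\rho)$.

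The key step is a transfer/injection argument on the index set $C_r$ of partial colorings of the broom $E(r)$. Given $\pi$ with $\pi(e_i)=a$, I would recolor the edge $e_i$ to obtain colorings $\rho$ that avoid $a$ entirely; the number of valid target colors for $e_i$ is at least $q-\deg(e_i)\ge\beta$ minus the at most one extra forbidden color coming from $a$ itself being excluded — but more carefully, since in $C_r$ a partial coloring of the broom only needs the $d$ edges of $E(r)$ to be pairwise distinct and in their lists, the number of colors available to re-color $e_i$ (other than $a$ and other than the $\le d-1$ colors already used on $E(r)\setminus e_i$) is at least $\abs{\+L(e_i)}-(d-1)-1\ge \beta + \deg(e_i)-d\ge \beta$, hence at least $\beta-1$ after discarding $a$. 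This gives the analogue of $\abs{B(\omega')}\ge\beta-1$. Conversely, each $\rho$ avoiding $a$ on $E(r)$ can be hit, via re-coloring one edge to $a$, from at most $d$ preimages $\pi$ (one per edge of the broom). Crucially, recoloring only $e_i$ does not change the factors $\*p_{v_j}(\ol{\pi(e_j)})$ for $j\ne i$, and multiplies the weight by a bounded factor; so I need to track how $N(\pi)$ relates to $N(\rho)$ under this single-edge move. The cleanest route is to argue color by color: fix the restriction of $\pi$ to $E(r)\setminus e_i$, and compare $\sum_{c:\text{valid}}\*p_{v_i}(\bar c)$ (summing over $c\ne a$, $c$ distinct from the fixed colors) with $\*p_{v_i}(\bar a)$, using only that $\*p_{v_i}(\bar c)\le \*p_{v_i}(\emptyset)=$ total mass for every $c$ and that there are at least $\beta-1$ such valid $c$; this yields exactly the $\frac{d}{\beta-1}$-type bound after summing the $d$ disjoint contributions, and the first inequality follows by the same normalization trick $\frac{x}{1+x}$ as in \Cref{lem:claw-marginal-generalized} with $x=d/(\beta-1)$.

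The main obstacle I anticipate is bookkeeping the weights $\*p_{v_i}(\ol{c})$ correctly under the recoloring move: unlike the uniform-coloring setting of \Cref{lem:claw-marginal-generalized} where each re-colored configuration contributes weight exactly $1$, here each partial coloring $\pi$ of the broom carries the product weight $N(\pi)$, and the ratio $\*p_{v_i}(\bar c)/\*p_{v_i}(\bar a)$ is not obviously bounded by $1$ for arbitrary non-distribution inputs $\*p_i$. The resolution is to not compare individual terms but to sum over all $\ge\beta-1$ admissible target colors $c$ at once and use $\sum_{c}\*p_{v_i}(\bar c)\ge(\beta-1)\min_c\*p_{v_i}(\bar c)$ together with $\*p_{v_i}(\bar a)\le \max_c\*p_{v_i}(\bar c)\le$ total mass $\le$ any $\*p_{v_i}(\bar c)$ shifted appropriately — more precisely, one should instead phrase the injection so that the relevant inequality becomes $\*p_{v_i}(\bar a)\le \*p_{v_i}(\emptyset)$ and $\*p_{v_i}(\bar c)\ge \*p_{v_i}(\emptyset) - \*p_{v_i}(c)$, and then the two displayed bounds drop out. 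I would double-check this step against the role of the ``extra'' slack $\beta$ in the hypothesis $\abs{\+L(e)}\ge\deg(e)+\beta$, which is exactly what guarantees the $\ge\beta-1$ count of admissible recolorings of $e_i$ within the broom even after excluding $a$ and the $\le d-1$ colors on the other broom-edges.
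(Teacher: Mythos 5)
Your approach is essentially the paper's: both proofs run a weighted injection/transfer on the broom $C_r$, exactly mimicking \Cref{lem:claw-marginal-generalized} but with each configuration $\pi$ carrying the product weight $w(\pi)=\prod_j\*p_{v_j}(\overline{\pi(e_j)})$ in place of unit weight. The decomposition $\*p_r(a)=\sum_i\*p_r(i,a)$, the recoloring move on $e_i$, the $\ge\beta-1$ admissible target colors, and the final $\frac{x}{1+x}$ normalization all coincide with the paper.

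The one step that is genuinely incomplete in your write-up is the core inequality
\[
\sum_{c\;\text{valid}}\*p_{v_i}(\bar c)\;\ge\;(\beta-1)\,\*p_{v_i}(\bar a),
\]
and the first "resolution" you float is a dead end: $\sum_c\*p_{v_i}(\bar c)\ge(\beta-1)\min_c\*p_{v_i}(\bar c)$ is of no use because $\min_c\*p_{v_i}(\bar c)$ has no a priori relation to $\*p_{v_i}(\bar a)$ (it can even vanish for a degenerate $\*p_{v_i}$). Your second attempt, using the identity $\*p_{v_i}(\bar c)=\*p_{v_i}(\emptyset)-\*p_{v_i}(c)$, is the right idea but the bounds do not simply "drop out": you must also bound $\sum_{c}\*p_{v_i}(c)\le \abs{E_{T_i}(v_i)}\cdot\*p_{v_i}(\emptyset)$ (each $\tau$ is counted once per color it uses) and then count admissible targets, getting $m\ge\abs{\+L(e_i)}-d$ with $\abs{\+L(e_i)}\ge\deg(e_i)+\beta=d+\deg(v_i)-2+\beta$, so $m-\abs{E_{T_i}(v_i)}\ge\beta-1$; only after this bookkeeping does the claim follow. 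The paper's route is slightly different but equivalent and cleaner: rather than going through $\*p_{v_i}(\emptyset)$, swap the summation over colors with the summation over $\tau\in C_{v_i}$, so that the inequality reduces to the pointwise fact that each $\tau$ (with $a\notin\tau$) avoids at least $\beta-1$ colors in $\+L(e_i)\setminus\{a\}$. You should finish one of these two calculations before claiming the lemma.
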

    \begin{proof}[Proof of \Cref{lem:marginal_bound_1}] The proof is similar to that of \Cref{lem:claw-marginal-generalized}.
    Define
\[A \defeq \set{\omega\in \Omega \mid \exists e\in E(r): \omega(e) = a}.\]
For any $\omega' \in A$, define
\[
    B(\omega')\defeq \set{\omega\in C_r\setminus A : d(\omega, \omega') = 1}.
\]
where $d$ is the Hamming distance, i.e., the number of edges colored differently between the two colorings.
        We define
        $$
            \iota(\omega,\omega') := \begin{cases}
                \frac{w(\omega)}{\sum_{\omega\in B(\omega')}w(\omega)}w(\omega'),&\omega\in B(\omega')
                \\ 0,& \text{otherwise}
            \end{cases}
        $$
        where $w(\omega) \defeq \prod_{i=1}^{d} \*p_i(\overline{\omega(e_i)})$. Note that $\sum_{\omega \in C_r\setminus A} \iota(\omega,\omega')=w(\omega')$.
        For any $\omega' \in A$, assuming $\omega'(e_i) = a$, by~\eqref{eq:tree-recursion-short},
        \begin{align*}
            \frac{w(\omega')}{\sum_{\omega\in B(\omega')} w(\omega)}
              &= \frac{\*p_i(\overline a)}{\sum_{a'\neq a} \*p_i(\overline{a'})}
            \\&= \frac{\sum_{\tau\in C_{v_i},a\notin \tau}\*p_i(\tau)}
              {\sum_{\tau\in C_{v_i}} \tp{\sum_{a'\neq a}\1{a\notin \tau}} \*p_i(\tau)}
            \\&\leq \frac{\sum_{\tau\in C_{v_i},a\notin \tau}\*p_i(\tau)}
              {\sum_{\tau\in C_{v_i}, a\notin\tau} \tp{\sum_{a'\neq a}\1{a\notin \tau}} \*p_i(\tau)}
            \\&\le \frac{1}{\beta-1}.
        \end{align*}
    The last inequality is because $\sum_{a'\neq a}\1{a\notin \tau}\ge \beta-1$ for all $\tau\in C_{v_i}$.
    Then we have $\sum_{\omega' \in A} \iota(\omega, \omega')\leq \frac{d}{\beta-1} w(\omega)$.    
    Therefore by~\eqref{eq:tree-recursion-short},
    \begin{align*}
\*p_r(a)
&= \frac{\sum_{\omega'\in A} w(\omega')}{\sum_{\omega\in C_r\setminus A} w(\omega) + \sum_{\omega'\in A} w(\omega')}\\
&= \frac{\sum_{\omega'\in A}\sum_{\omega\in C_r\setminus A} \iota(\omega, \omega')}
        {\sum_{\omega\in \Omega\setminus A} w(\omega) + \sum_{\omega'\in A}\sum_{\omega\in C_r\setminus A} \iota(\omega, \omega')}\\
&= \frac{\sum_{\omega\in C_r\setminus A} \sum_{\omega'\in A} \iota(\omega, \omega')}
        {\sum_{\omega\in C_r\setminus A} \left(w(\omega) + \sum_{\omega'\in A}\iota(\omega, \omega')\right)}\\
&\le \frac{d/(\beta-1)}{1 + d/(\beta-1)} = \frac{d}{\beta-1+d}.
\end{align*}
Similarly,
\begin{align*}
\frac{\*p_r(a)}{\*p_r(\bar{a})}
&= \frac{\sum_{\omega'\in A} w(\omega')}{\sum_{\omega\in C_r\setminus A} w(\omega)}
= \frac{\sum_{\omega'\in A}\sum_{\omega\in C_r\setminus A} \iota(\omega, \omega')}
        {\sum_{\omega\in \Omega\setminus A} w(\omega)}
\le \frac{d}{\beta-1}.
\end{align*}
    \end{proof}
    
\begin{lemma}\label{lem:marginal_bound_2}
    Given non-zero vector $\*p_i\in \bb{R}_{\geq 0}^{C_{v_i}}$ for each $i\in [d]$ and $\*p_r = f(\{\*p_i\}_{i\in[d]})$, we have that for any color $a\in [q]$ and $j\in [d]$,
    $$
        \*p_r(j,a) \leq \frac{\*p_j(\overline a)}{(\beta - 1)\sum_{\tau\in C_{v_j}}\*p_j(\tau)}.
    $$
\end{lemma}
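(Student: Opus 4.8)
The plan is to isolate the contribution of the single edge $e_j$ in the recursion \eqref{eq:tree-recursion-short} by conditioning, under the distribution $\*p_r$, on the colors of all the \emph{other} edges incident to $r$. Write $Z\defeq\sum_{\rho\in C_r}\prod_i\*p_i(\overline{\rho(e_i)})$ for the normalizing constant of \eqref{eq:tree-recursion-short} (positive, since $\*p_r$ is well defined), and for a proper partial coloring $\pi'$ of the $d-1$ edges in $E(r)\setminus\{e_j\}$ set $w_j(\pi')\defeq\prod_{i\ne j}\*p_i(\overline{\pi'(e_i)})$. The first step is to read off from \eqref{eq:tree-recursion-short} that a coloring $\pi\in C_r$ extending $\pi'$ with $\pi(e_j)=c$ exists precisely when $c\in\+L(e_j)\setminus c(\pi')$, in which case $\*p_r(\pi)=w_j(\pi')\,\*p_j(\overline c)/Z$; summing over such $c$ gives that the $\*p_r$-probability of the event ``$E(r)\setminus\{e_j\}$ receives $\pi'$'' equals $\frac{w_j(\pi')}{Z}\sum_{c\in\+L(e_j)\setminus c(\pi')}\*p_j(\overline c)$. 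Dividing, the conditional probability that $e_j$ gets color $a$ given this event equals $\*p_j(\overline a)\big/\sum_{c\in\+L(e_j)\setminus c(\pi')}\*p_j(\overline c)$ whenever $a\in\+L(e_j)\setminus c(\pi')$, and otherwise contributes nothing to $\*p_r(j,a)$.

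The heart of the argument is then a lower bound, uniform over $\pi'$, on the denominator $\sum_{c\in\+L(e_j)\setminus c(\pi')}\*p_j(\overline c)$. I would expand $\*p_j(\overline c)=\sum_{\tau\in C_{v_j}:c\notin\tau}\*p_j(\tau)$, exchange the two summations, and bound, for each fixed $\tau\in C_{v_j}$, the number of colors $c\in\+L(e_j)$ with $c\notin c(\pi')\cup c(\tau)$ from below by $|\+L(e_j)|-|c(\pi')\cup c(\tau)|$. Since $c(\pi')\cup c(\tau)$ is contained in the set of colors used on the edges adjacent to $e_j$ (the $d-1$ edges $E(r)\setminus\{e_j\}$ at $r$ together with the $\deg(v_j)-1$ edges $E_{T_{v_j}}(v_j)$ at $v_j$), which number $(d-1)+(\deg(v_j)-1)=\deg(e_j)$ in total, we have $|c(\pi')\cup c(\tau)|\le\deg(e_j)$, and the $\beta$-extra hypothesis $|\+L(e_j)|\ge\deg(e_j)+\beta$ leaves at least $\beta\ge\beta-1$ admissible colors. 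Hence the denominator is at least $(\beta-1)\sum_{\tau\in C_{v_j}}\*p_j(\tau)$. Averaging the conditional probabilities from the first step against $\*p_r$ — the events indexed by $\pi'$ partition $C_r$, so their probabilities sum to $1$ — then yields the claimed bound $\*p_r(j,a)\le\*p_j(\overline a)\big/\bigl((\beta-1)\sum_{\tau\in C_{v_j}}\*p_j(\tau)\bigr)$.

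I do not expect a genuine obstacle here: this is essentially the counting scheme already used for \Cref{lem:marginal_bound_1} and \Cref{lem:claw-marginal-generalized}, now arranged so that the common factor $w_j(\pi')$ cancels and only edge $e_j$'s ``local'' list and subtree weights survive. The only points that require care are the degree bookkeeping — in particular recognizing that the edges adjacent to $e_j$ split into the $d-1$ siblings of $e_j$ at $r$ and the $\deg(v_j)-1$ children of $v_j$, for a total of $\deg(e_j)$ — and remembering that the $\*p_i$ are arbitrary nonnegative vectors rather than probability distributions, which is exactly why the normalization $\sum_{\tau\in C_{v_j}}\*p_j(\tau)$ (and not $1$) appears in the statement. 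If one prefers to avoid conditional-probability language, the same computation can be run by comparing the numerator $\sum_{\pi\in C_r:\pi(e_j)=a}\prod_i\*p_i(\overline{\pi(e_i)})$ with $Z$ directly.
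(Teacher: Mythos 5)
Your proof is correct and follows essentially the same route as the paper: both isolate the contribution of $e_j$ by fixing (or taking a supremum over) the partial coloring $\sigma$ on $E(r)\setminus\{e_j\}$, and then lower-bound the denominator $\sum_{a'\in\+L(e_j)\setminus c(\sigma)}\*p_j(\overline{a'})$ uniformly over $\sigma$ by counting admissible colors per $\tau\in C_{v_j}$. The only cosmetic difference is that you exchange the two sums and count colors avoiding $c(\sigma)\cup c(\tau)$ directly (cleanly yielding the lower bound $\bigl(|\+L(e_j)|-\deg(e_j)\bigr)\sum_\tau\*p_j(\tau)\ge\beta\sum_\tau\*p_j(\tau)$), whereas the paper first bounds $\sum_{a'\in\+L(e_j)}\*p_j(\overline{a'})$ and then subtracts the $a'\in c(\sigma)$ terms; both routes give the claimed $(\beta-1)$-bound with room to spare.
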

\begin{proof}
        We have that
        \begin{align*}
            \*p_r(j,a) &= \frac{\displaystyle \sum_{\substack{\sigma\text{: coloring on }E(v)\setminus e_j\\a\notin \sigma}} \prod_{i\neq j}\*p_i(\overline{\sigma(v_i)})\*p_j(\overline{a})}{\displaystyle \sum_{\sigma\text{: coloring on }E(r)\setminus e_j} \prod_{i\neq j}\*p_i(\overline{\sigma(v_i)})(\sum_{a'\notin \sigma} \*p_j(\overline{a'}))}
            \\&\leq \sup_{\sigma}\frac{\displaystyle \*p_j(\overline{a})}{\displaystyle \sum_{a'\notin \sigma} \*p_j(\overline{a'})}
            \\&= \sup_{\sigma}\frac{\displaystyle \*p_j(\overline{a})}{\displaystyle(\+L(e_j) - \deg(v_j) + 1)\sum_{\tau\in C_{v_j}}\*p_j(\tau) - \sum_{a'\in \sigma} \*p_j(\overline{a'})}
            \\&\leq \frac{\*p_j(\overline{a})/(\sum_{\tau\in C_{v_j}}\*p_j(\tau))}{\+L(e_j) - \deg(v_j) - \deg(r) + 1}.
        \end{align*}
\end{proof}

\newcommand{\ma}{\mu_{E-i}^{i\pin a}}
\newcommand{\mb}{\mu_{E-i}^{i\pin b}}
\newcommand{\mab}{\mu_{E-i}^{\substack{i\pin a\\j\pin b}}}
\newcommand{\mba}{\mu_{E-i}^{\substack{i\pin b\\j\pin a}}}
\newcommand{\man}{\mu_{E-i}^{\substack{i\pin a\\b\notin N}}}
\newcommand{\mbn}{\mu_{E-i}^{\substack{i\pin b\\a\notin N}}}
\renewcommand{\pab}{\mu_{E-i}^{i\pin a}(j\pin b)}
\newcommand{\pba}{\mu_{E-i}^{i\pin b}(j\pin a)}
\newcommand{\pan}{\mu_{E-i}^{i\pin a}(b\notin N)}
\newcommand{\pbn}{\mu_{E-i}^{i\pin b}(a\notin N)}
\newcommand{\sjn}{\sum_{j\in N}}
\newcommand{\sj}{\sum_{j}}

\newcommand{\zzz}{1+\sum_k(\gamma_k\vee\delta_k)}

\section{FPTAS for counting proper edge colorings on general graphs $q\ge 3\Delta$}\label{sec:FPTAS}

In this section, we prove the following main algorithmic result, which is a formal version of \Cref{thm:FPTAS-informal}.
\begin{theorem}\label{thm:FPTAS}
    Assume $\Delta\geq 4$. There exists a deterministic algorithm that outputs $\hat{Z}$ satisfying $(1-\delta)Z_{G,\+L}\leq \hat{Z}\leq (1+\delta)Z_{G,\+L}$ for any $(\Delta + 2)$-extra edge coloring instance $(G,\+L)$ with maximum degree $\Delta$ and given error bound $0<\delta <1$ in time $\tp{\frac{n}{\delta}}^{C(\Delta)}$,  where $n$ is the number of edges in $G$ and $C(\Delta)=\+O\tp{\Delta^{\Delta\log \Delta}\log \Delta}$
    is a universal constant only depends on $\Delta$.
\end{theorem}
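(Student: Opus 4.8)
The plan is to obtain the FPTAS by combining the coupling independence property for edge colorings with the black-box algorithmic machinery of Chen–Feng–Guo–Zhang–Zhang (building on Moitra's linear-programming approach). Concretely, the deterministic counting algorithm is \emph{not} built from scratch; instead I would verify the hypotheses of the general counting theorem of \cite{CFGZZ24}, which states (roughly) that if a Gibbs distribution over $[q]^V$ with bounded dependence degree satisfies $C$-coupling independence for a constant $C = C(q,\Delta)$, together with the mild structural conditions satisfied by list edge coloring (local uniformity / marginal lower bounds, which here follow from the $\beta$-extra condition and \Cref{cor:marginal-bound-gamma-delta}), then there is a deterministic $\left(\frac{n}{\delta}\right)^{C(\Delta)}$-time approximation scheme for the partition function. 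So the real mathematical content I must supply is: \emph{the uniform distribution over proper edge colorings of $(G,\+L)$, when $(G,\+L)$ is $(\Delta+2)$-extra, satisfies $O(1)$-coupling independence}. Recall that on the line graph of $G$, a $(\Delta+2)$-extra edge coloring instance has the property that each edge $e$ has $\abs{\+L(e)} \ge \deg(e) + \Delta + 2$; since $\deg(e) \le 2\Delta - 2$ this is implied by $q \ge 3\Delta$, so the informal statement follows from the formal one.

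The heart of the argument is therefore the coupling independence bound, proved via the new local coupling described in the technical overview. I would set it up as follows. Fix $\Lambda \subseteq E$ and two partial colorings $\sigma,\tau \in [q]^\Lambda$ with $d(\sigma,\tau) = 1$, differing only at edge $i$, say $\sigma(i) = a$ and $\tau(i) = b$. I want to construct a coupling of $\mu^{\sigma}$ and $\mu^{\tau}$ under which the expected Hamming distance of the two sampled full colorings is $O(1)$; equivalently, I want to bound $\W{\mu^\sigma,\mu^\tau}$. The strategy is to reveal colors edge by edge (in a suitable order, e.g. BFS from $i$), maintaining at each step a set of ``discrepancy edges'' where the two chains have committed to different colors, and to show that this discrepancy process is dominated by a subcritical branching-type process. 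At a single discrepancy edge $i$ with colors $a$ vs.\ $b$: when we process a neighboring edge $j \in E(i)$, in one chain the forbidden set includes $a$ and in the other it includes $b$; using the optimal coupling of the two conditional marginals $\mu_j^{\,\cdot}$, with probability comparable to the marginal mass of $\{a,b\}$ on edge $j$ the two chains disagree at $j$ (creating a new discrepancy), and otherwise they agree. The key quantitative inputs are precisely the marginal bounds of \Cref{lem:claw-marginal-generalized} and \Cref{cor:marginal-bound-gamma-delta}: the probability that edge $j$ takes a specific color, relative to the probability it avoids a given color class, is at most $\frac{1}{\beta - 1} = \frac{1}{\Delta+1}$. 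Summing over the at most $\deg(i) \le 2\Delta - 2$ neighbors of $i$, the expected number of \emph{new} discrepancies created when resolving one old discrepancy is at most roughly $\frac{2\Delta - 2}{\Delta + 1} < 2$ — which is \emph{not} subcritical. The refinement described in the overview is essential here: the coupling is designed so that resolving a discrepancy at $i$ with colors $\{a,b\}$ has a \emph{nonzero} chance of going to zero discrepancies (when the two chains happen to agree on all of $E(i)$, in particular neither uses $a$ nor $b$ in a conflicting way) and at most \emph{one} new ``active'' discrepancy can be passed forward per branch in the relevant accounting, so that after combining the gain/loss terms the expected net change is strictly negative. I would formalize this by exhibiting an explicit potential (e.g.\ a weighted count of discrepancies) that is a supermartingale with negative drift bounded away from $0$, using the $\beta = \Delta + 2$ slack; then the expected total number of recolorings, hence $\W{\mu^\sigma,\mu^\tau}$, is $O(1)$ by optional stopping / a standard hitting-time estimate.

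With $O(1)$-coupling independence in hand, I would then: (i) recall from \cite{CFGZZ24} the precise statement of the implication ``coupling independence $+$ bounded degree $+$ marginal lower bounds $\Rightarrow$ FPTAS'', checking that edge coloring on the line graph with the $(\Delta+2)$-extra hypothesis meets every precondition (bounded maximum degree $2\Delta - 2$ in the line graph, marginal lower bounds on each $\mu_e^\tau$ from local uniformity, and the permissiveness/connectivity of the color lists under pinnings, recorded via $\+L^\tau$ and $\deg^\tau$); (ii) read off the running time $\left(\frac{n}{\delta}\right)^{C(\Delta)}$ with $C(\Delta) = O(\Delta^{\Delta\log\Delta}\log\Delta)$ directly from their bound, where the tower-type dependence on $\Delta$ comes from enumerating local structures / the size of the LP in Moitra's method; the assumption $\Delta \ge 4$ is a mild technicality ensuring the marginal bounds ($\beta - 1 = \Delta + 1 \ge 5$) are strong enough for the coupling drift to be negative. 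I expect the main obstacle to be item in the previous paragraph — namely, designing the local coupling at a discrepancy so that the drift is \emph{provably} negative rather than merely $\le$ a constant, which requires carefully exploiting the chance of discrepancy-annihilation and a clean choice of potential; the reduction to \cite{CFGZZ24} in this paragraph should be essentially bookkeeping once the constant $C(q,\Delta)$ is established.
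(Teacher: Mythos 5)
Your high-level reduction agrees with the paper: \Cref{thm:FPTAS} is deduced from $O(\Delta)$-coupling independence (\Cref{thm:coupling-independence}), the marginal lower bound of \Cref{lem:marginal-bound-gkm}, and the black-box counting theorem of~\cite{CFGZZ24} (\Cref{thm:ci-FPTAS}), with the exponent $C(\Delta)$ read off from the latter; your observation that $q\ge 3\Delta$ implies the $(\Delta+2)$-extra hypothesis on the line graph is also correct. The problem is the coupling-independence step itself, which you flag as incomplete, and the route you sketch for it does not close as stated.

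You propose a BFS-revelation coupling with a branching-process analysis, but your own estimate of the branching factor is $\tfrac{2\Delta-2}{\Delta+1}\approx 2$, which is supercritical; in fact accounting for both endpoints of a discrepancy edge gives $\deg(e)\le 2\Delta-2$ line-graph neighbors each contributing on the order of $\tfrac{2}{\beta-1}$, i.e.\ drift $\approx 4$. You then appeal to an unspecified supermartingale potential and to ``discrepancy annihilation,'' but you do not supply the argument, and this is where all the work lies. The paper uses a different inductive structure: it defines $\kappa_{s,\Delta,\lambda}$ (\Cref{def:kappa}) as the worst-case Wasserstein distance over instances with at most $s$ edges and a \emph{pendant} discrepancy edge, and proves the recursion $\kappa_s\le\tfrac{2}{2+\eps}\bigl(\kappa_{s-1}+\tfrac12\bigr)$ (\Cref{lem:kappa-recursion}). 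The pendant hypothesis is essential, not cosmetic: if $i=\{u,v\}$ with $\deg(u)=1$, only $N=E(v)-i$ (at most $\Delta-1$ edges) can create new discrepancies, which halves the effective branching relative to a generic edge; the general case is recovered at the end by splitting $i$ into two pendant halves and applying the triangle inequality. The recursion itself comes from the explicit convex decomposition of $\ma-\mb$ in \Cref{sec:coupling}: the annihilation term $\man$ versus $\mbn$ contributes Wasserstein distance exactly $0$ (\Cref{lem:s-to-s-1}, part~3), the symmetric swap $\mab$ versus $\mba$ contributes $1+\kappa_{s-1}$, and the asymmetric terms contribute $1+2\kappa_{s-1}$; the coefficients $\gamma_j,\delta_j\le\tfrac1{\lambda\Delta}$ from \Cref{cor:marginal-bound-gamma-delta} then yield the contraction factor $\tfrac{2}{2+\eps}<1$. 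Without the pendant reduction and this specific decomposition, a constant Wasserstein bound does not follow from the marginal bounds alone, so the gap you acknowledge in your sketch is a genuine one rather than bookkeeping.
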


Our key contribution is the following coupling independence result.

\begin{theorem}\label{thm:coupling-independence}
    Let $(G=(V, E), \+L)$ be a $((1+\eps)\Delta+1)$-extra list-edge-coloring instance.  Then $\mu_E$ is $\tp{1+\frac{2}{\eps}}$-coupling independent. That is, for any $i\in E$, $a, b\in \+L(i)$,
    \[
    \W{\mu_E^{i\pin a}, \mu_E^{i\pin b}} \le 1 + \frac{2}{\eps}.
    \]
\end{theorem}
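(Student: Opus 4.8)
The plan is to build an explicit recursive coupling of $\mu_E^{i\pin a}$ and $\mu_E^{i\pin b}$ that keeps a \emph{single} discrepancy ``alive'' and moves it around the graph, so that the expected total number of recolored edges — which upper bounds $\W{\mu_E^{i\pin a},\mu_E^{i\pin b}}$ via the Hamming-distance interpretation of the Wasserstein distance — is at most $1+\tfrac2\eps$. First I would set up the coupling along the lines sketched in the introduction: start from the two instances that agree everywhere except that edge $i$ is pinned to $a$ on one side and to $b$ on the other. Reveal the colorings edge by edge in a breadth-first fashion starting from $i$; at each step the two colorings agree on everything revealed so far except at one ``frontier'' edge $e$, where they differ, say between colors $a$ and $b$ (the roles of $a,b$ are carried along but the argument is symmetric). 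When we come to color a neighbor $f$ of $e$ incident to the shared endpoint, the conditional marginals $\mu^{(1)}_f$ and $\mu^{(2)}_f$ differ only because color $a$ is forbidden on one side and $b$ on the other; optimally coupling them, with probability $1$ minus a small amount they agree and the discrepancy is resolved at $e$ with no new discrepancy created, with some probability the discrepancy moves to $f$ (now a disagreement between $a$ and $b$ again, one level deeper), and — the subtle case — with some probability $f$ gets colored $a$ on one side and $b$ on the other, which formally creates \emph{two} discrepancies that however share an endpoint and can be treated as one moved discrepancy once we recolor $e$; alternatively one phrases this as: the discrepancy either vanishes, persists and moves, or momentarily splits but recombines. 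The key quantitative input is \Cref{cor:marginal-bound-gamma-delta} (and \Cref{lem:claw-marginal-generalized}): in a $((1+\eps)\Delta+1)$-extra instance, $\Pr{c(f)=a}/\Pr{a\notin c(E_v)}\le 1/\eps\Delta$ — wait, more precisely $\le 1/(\beta-1)$ with $\beta=(1+\eps)\Delta+1$, so the per-step ``bad'' probability, summed over the at most $2(\Delta-1)$ neighbors through which the discrepancy could propagate, is controlled by roughly $\tfrac{2(\Delta-1)}{\eps\Delta}\approx \tfrac2\eps$ times a contraction factor.

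Concretely, I would define a potential or directly track the expected cost: let $M_k$ be the number of surviving discrepancies after $k$ peeling steps (always $0$ or, transiently, $2$ sharing a vertex, but morally $1$), and let $X$ be the total number of edges on which the final coupled pair disagree. The recursion structure gives, conditioned on the current single discrepancy at $e$ and the revealed history, an inequality of the form $\E{\text{future cost}} \le 1 + \sum_{f}(\text{prob.\ discrepancy passes to }f)\cdot \E{\text{future cost from }f}$, where the ``$1$'' accounts for recoloring $e$ itself and the sum ranges over the $\le 2(\Delta-1)$ candidate edges $f$. Using the marginal bound to show $\sum_f (\text{prob.\ passes to }f) \le \rho$ for some $\rho<1$ — indeed $\rho = \tfrac{2(\Delta-1)}{(1+\eps)\Delta} < \tfrac{2}{1+\eps}$, which for the coupling independence we actually want bounded by something giving the $1+2/\eps$ constant, so I expect the right bookkeeping is a geometric series $\sum_{k\ge 0}\rho^k$ with $\rho$ chosen as the probability the discrepancy survives at all (which must be $<1$, using that with probability $\ge$ something the optimal coupling of $\mu_f^{(1)},\mu_f^{(2)}$ simply agrees) and the per-level contribution bounded using the marginal bounds. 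The target bound $1+\tfrac2\eps$ strongly suggests the clean statement: expected number of recolorings $= 1 + (\text{geometric tail})$ where each term contributes a factor coming from $\tfrac{1}{\beta-1}$-type bounds summed over $\Delta-1$ or $2(\Delta-1)$ neighbors, telescoping to $\tfrac2\eps$. I would make the ``single discrepancy'' bookkeeping rigorous by using \Cref{prop:coupling-convex-decomposition}: decompose $\mu_E^{i\pin a}-\mu_E^{i\pin b}$ after one step of revealing into a convex combination of simpler signed differences (each a pair of conditional distributions differing at one deeper edge), bound each recursively, and sum.

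The main obstacle, as the authors themselves flag, is the case where the discrepancy ``increases to two'': when coloring the neighbor $f$, the optimal coupling may color it $a$ on the side where $e$ is $b$, and $b$ on the side where $e$ is $a$ (so both $e$ and $f$ become discrepant simultaneously). I need to argue this does not blow up the discrepancy count — presumably because these two discrepant edges are incident to a common vertex $v$, so they behave as a single ``unit'' that, once the coloring moves past $v$, collapses back to one frontier discrepancy (recoloring $e$ fixes $e$, and $f$ carries the discrepancy onward), and crucially the \emph{probability} of this double event is itself governed by the marginal bound $\Pr{c(f)=a}\le \tfrac1\beta$ so it is lower order. Getting the constant exactly $1+\tfrac2\eps$ (rather than some messier function of $\eps$ and $\Delta$) will require the $(1+\eps)\Delta+1$ hypothesis to be used sharply — i.e., $\beta-1 = (1+\eps)\Delta$ and the $\le 2(\Delta-1)<2\Delta$ neighbor count combine so that the propagation probability per level is $<\tfrac{2}{1+\eps}$, but the \emph{net} expected cost beyond the first recoloring is $\tfrac2\eps$; I would verify this by writing the discrepancy process as a subcritical branching-type process and summing the expected population. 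A secondary technical point is handling edges $f$ incident to the \emph{other} endpoint of $e$ and ensuring the BFS peeling order is consistent (no edge revealed twice, and the ``frontier'' is well-defined on a general graph, not just a tree), which I would handle by fixing an arbitrary edge-revealing order refining BFS-distance from $i$ and only ever coupling at the unique current frontier discrepancy.
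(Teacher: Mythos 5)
Your proposal captures the right ingredients---a local coupling that moves a single discrepancy, marginal bounds from \Cref{cor:marginal-bound-gamma-delta}, and \Cref{prop:coupling-convex-decomposition} to organize the recursion---and the high-level plan is the same as the paper's. But there are two genuine gaps that keep this from being a proof, and both concern the quantitative heart of the argument.

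First, your estimated propagation rate $\rho < \tfrac{2}{1+\eps}$ exceeds $1$ whenever $\eps < 1$, so the geometric series you invoke simply does not converge in the regime of interest (recall $\eps$ can be arbitrarily small). The paper's recursion achieves the contraction $\tfrac{2}{2+\eps}$, and the improvement is not mere bookkeeping: it comes from an algebraic move you don't identify. Writing $\gamma_j = \mu_{E-i}^{i\pin a}(j\pin b)/\mu_{E-i}^{i\pin a}(b\notin N)$ and $\delta_j$ analogously, the conditional probabilities decompose as $\mu_{E-i}^{i\pin a}(j\pin b) = \gamma_j/(1+\sum_k\gamma_k)$, and the ``neither $a$ nor $b$ in $N$'' event---which contributes \emph{zero} Wasserstein cost because the two conditioned distributions are literally identical (\Cref{lem:s-to-s-1}, part 3)---supplies the ``$1+$'' in the denominator $1+\sum_k(\gamma_k\vee\delta_k)$. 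This is what drops the rate from $\approx\tfrac{\sum 2(\gamma\vee\delta)}{\sum(\gamma\vee\delta)}\approx 2$ down to $\tfrac{\sum 2(\gamma\vee\delta)}{1+\sum(\gamma\vee\delta)} \le \tfrac{2}{2+\eps}$. Your sketch nods at the ``with probability $1$ minus a small amount they agree'' case but never uses it to renormalize, which is exactly where the contraction comes from. The paper's decomposition also multiplies one side by $\alpha = (1+\sum\delta_k)/(1+\sum\gamma_k)$ to align the coefficients of the zero-cost terms so that no ``$\cdot\notin N$'' pinning (which would reduce the number of extra colors and break the induction hypothesis) appears in the recursive terms---another detail your sketch doesn't account for.

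Second, the ``discrepancy momentarily splits but recombines'' step is load-bearing and cannot be waved away. The paper makes it a lemma (\Cref{lem:s-to-s-1}, part 1): when one side has $(i,j)\pin(a,b)$ and the other has $(i,j)\pin(b,a)$, the constraints imposed on $E\setminus\set{i,j}$ are \emph{identical}, so after deleting $i$ and detaching $j$ the instance shrinks by one edge, $j$ becomes pendant, and the discrepancy on $j$ alone inherits the bound $\kappa_{s-1}$, with cost $1+\kappa_{s-1}$ total. This relies on the paper's induction being over the edge count $s$ with the discrepant edge always kept pendant (new pendant edges are manufactured by cutting and reattaching), which is also why the proof extends from pendant $i$ to arbitrary $i$: split $i$ into two pendant half-edges and use the triangle inequality. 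Your BFS peeling never addresses how to keep the frontier edge pendant (or why it can be treated as such on a general graph), nor does it supply the final reduction from general $i$ to pendant $i$, which is where the paper picks up the extra factor giving $1 + 2\kappa^* = 1 + \tfrac{2}{\eps}$.
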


We first set up our terminologies to argue about the Wasserstein distance. We define some upper bounds for the $\+W_1$ distance between $\lambda\Delta$-extra list-colorings on $\Delta$-degree graphs with $s$ edges with respect to one different pinning. We will construct recursion on these upper bounds.
An edge is \emph{pendant} if one of its endpoints has degree exactly $1$.
\begin{definition}[Universal upper bounds for coupling independence]\label{def:kappa}
Define
\[
\kappa_{s, \Delta, \lambda} \defeq 
\sup_{\substack{(G=(V, E), \+L)\\ i\in E\colon i \text{ is pendant} \\ a, b\in \+L(i), a\neq b}}
\W{
\mu_{E-i}^{i\pin a}, \mu_{E-i}^{i\pin b}
}
\]
where $(G, \+L)$ is taken over
\begin{enumerate}
    \item all graph $G=(V, E)$ such that $\deg(G)\le \Delta$, $|E|\le s$;
    \item all color lists $\+L$ such that $\forall e\in E: |\+L(e)|\ge \deg(e) + \lambda \Delta + 1$.
\end{enumerate}
\end{definition}
\begin{remark}
    It is clear from the definition that $\kappa_{s+1, \Delta, \lambda}\ge \kappa_{s, \Delta, \lambda}$
    and $\kappa_{1, \Delta, \lambda}=0$.
\end{remark}

Note that we only pin color on the edge $i$ in \Cref{def:kappa}. If we need other pinnings, we can simply consider the pinnings as deleting the pinned edges and remove the pinned color from the lists of their adjacent edges.

The main lemma of this section is a recursion for $\kappa_{s, \Delta, \lambda}$
and leads to \Cref{thm:coupling-independence} immediately.

\begin{lemma}\label{lem:kappa-recursion}
    Let $\lambda=1+\eps$ for some $\eps>0$ and $s\ge 2$. Then
    \[
    \kappa_{s, \Delta, \lambda} \le \frac{2}{2+\eps}\tp{\kappa_{s-1, \Delta, \lambda} + \frac12}.
    \]
\end{lemma}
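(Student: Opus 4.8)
The plan is to establish the recursion by running a single-step coupling on the two pinned distributions $\mu_{E-i}^{i\pin a}$ and $\mu_{E-i}^{i\pin b}$, where $i=(u,v)$ is pendant with $v$ the degree-$1$ endpoint. Since $i$ is pendant, the only discrepancy between the two instances lives at the vertex $u$: in one instance color $a$ is forbidden on the edges of $E(u)-i$, in the other color $b$ is. Let $N=E(u)-i$ be the ``neighborhood'' edges. I would reveal the colors of the edges in $N$ one at a time, coupling optimally, and then recurse on the remaining subtrees hanging off the far endpoints of the edges in $N$. The key point is that for any edge $j\in N$, conditioned on a common partial coloring of $N$ revealed so far, the two conditional distributions on $j$ differ only in whether $a$ or $b$ is excluded, so there is a coupling of the color of $j$ that agrees with probability $1-\bigl(\mu^{i\pin a}_{N}(j\pin b)+\mu^{i\pin b}_{N}(j\pin a)\bigr)$, roughly speaking, and on the agreement event the discrepancy is resolved and we are left with at most one new discrepancy further down (an edge pinned to a color, contributing $\kappa_{s-1}$), while on the disagreement event we must pay for two discrepant subtrees.

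Concretely, I would set up the accounting as follows. After revealing all of $N$ under a maximal coupling, use \Cref{prop:coupling-convex-decomposition} to write $\mu_{E-i}^{i\pin a}-\mu_{E-i}^{i\pin b}$ as a convex combination over partial colorings $\sigma$ on $N$: with probability equal to the overlap we get identical continuations (cost $0$ once the $u$-discrepancy is gone, but we still paid to reveal $N$), and with the residual probability $p_j$ that edge $j$ ``sees'' the discrepancy we get a term bounded by the Wasserstein distance between two subtree-distributions each pinned at one edge, i.e. $\le 2\kappa_{s-1,\Delta,\lambda}$ plus the contribution from the revealed edges. The marginal bound \Cref{cor:marginal-bound-gamma-delta} (or \Cref{lem:claw-marginal-generalized}) gives $\mu^{i\pin a}_N(j\pin b)\le \frac{1}{\lambda\Delta-1+\deg j}\le \frac{1}{\lambda\Delta}$-type bounds, and summing over the at most $\Delta-1$ edges $j\in N$ shows the total ``escape'' probability is at most roughly $\frac{\Delta-1}{(1+\eps)\Delta}\le \frac{1}{1+\eps}$; combined with each escape costing one extra unit of Hamming distance (the discrepant edge itself) plus a recursive $\kappa_{s-1}$ term, the bookkeeping collapses to the claimed $\frac{2}{2+\eps}(\kappa_{s-1,\Delta,\lambda}+\tfrac12)$.

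The step I expect to be the main obstacle is getting the constants in the escape probability to line up exactly with the factor $\frac{2}{2+\eps}$ rather than something weaker. This requires being careful that (i) we use the refined marginal ratio bound — the $\frac{\Pr{a\in c(F)}}{\Pr{a\notin c(E_v)}}\le \frac{|F|}{\beta-1}$ form of \Cref{lem:claw-marginal-generalized} rather than the cruder absolute bound — because the denominator event ``$a\notin c(E(u))$'' is exactly the event under which no new discrepancy is created; (ii) we correctly charge the unit cost $\tfrac12$ for the pendant discrepancy edge itself (this is where the $+\tfrac12$ comes from, and the factor $2$ in front of $\kappa_{s-1}$ combines with it); and (iii) the subtrees hanging off $N$ have at most $s-1$ edges each and still satisfy the $\lambda\Delta+1$-extra list condition after removing $j$ and the forbidden color, so that $\kappa_{s-1,\Delta,\lambda}$ legitimately applies. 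A secondary subtlety is handling the case where revealing an edge $j\in N$ to a color other than $a$ or $b$ still leaves the two instances coupled but with $j$ now pinned — this is fine because it only shrinks the instance, but one must make sure the convex-decomposition is genuinely over a common refinement so that \Cref{prop:coupling-convex-decomposition} applies.
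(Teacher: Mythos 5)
Your high-level intuition is right (one-step analysis, need the ratio form of the marginal bound, constants come from the pendant edge plus the recursion), but the concrete mechanism you propose --- reveal \emph{all} of $N$ under a maximal coupling and then decompose over partial colorings $\sigma$ on $N$ --- does not close up, and this is where the paper departs from your plan.

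The problem is multiplicity of discrepancies. If you jointly couple $\mu^{i\pin a}_N$ and $\mu^{i\pin b}_N$ over the whole broom $N$, then on the disagreement event the two revealed colorings $\sigma_a,\sigma_b$ can differ on many edges of $N$ simultaneously (up to $\Delta-1$), and each such edge forks into a recursive $\kappa_{s-1}$ contribution plus a unit of Hamming cost. You cannot bound this by ``one extra unit plus one recursive call''; the naive bound becomes $\propto \Delta$ per escape, which destroys the $\tfrac{2}{2+\eps}$ factor. Your statement that a disagreement produces ``at most one new discrepancy further down'' is therefore not justified by the reveal-all-of-$N$ mechanism.

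What the paper actually does is an \emph{algebraic} decomposition (the unnamed lemma in Section~4.1) in which each term has \emph{exactly one} extra pinned edge. The coefficients are not probabilities but ratios $\gamma_j=\mu^{i\pin a}_{E-i}(j\pin b)/\mu^{i\pin a}_{E-i}(b\notin N)$ and $\delta_j=\mu^{i\pin b}_{E-i}(j\pin a)/\mu^{i\pin b}_{E-i}(a\notin N)$, and a rescaling multiplier $\alpha=\mu^{i\pin a}(b\notin N)/\mu^{i\pin b}(a\notin N)$ is injected so that the ``no-discrepancy'' terms $\mu^{\substack{i\pin a\\ b\notin N}}$ and $\mu^{\substack{i\pin b\\ a\notin N}}$ appear with matched coefficients and cancel exactly. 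Without this alignment trick, the residual in your Proposition~\ref{prop:coupling-convex-decomposition} decomposition is not a difference of equal-mass nonnegative pieces (because in general $\mu^{i\pin a}(b\notin N)\neq\mu^{i\pin b}(a\notin N)$), and you cannot apply the proposition. The paper then needs Lemma~\ref{lem:s-to-s-1} to bound each one-discrepancy term: $\W{\mu^{\substack{i\pin a\\j\pin b}},\mu^{\substack{i\pin b\\j\pin a}}}\le 1+\kappa_{s-1}$ (a genuine swap, handled by deleting $i$ and detaching $j$), and $\W{\mu^{\substack{i\pin a\\j\pin b}},\mu^{i\pin b}}\le 1+2\kappa_{s-1}$ (one side has an extra pinning, handled by a further convex decomposition and a triangle inequality through an intermediate instance). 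Your proposal references ``$\le 2\kappa_{s-1}$ plus the contribution from revealed edges'' but does not supply the argument that actually yields these bounds, and the factor-of-two there is not automatic --- it comes from that specific triangle inequality, not from ``two discrepant subtrees.'' In short: the ingredients you name (Corollary~\ref{cor:marginal-bound-gamma-delta}, the $+\tfrac12$ from the pendant edge, the invariance of the extra-color count under pinning) are correct, but the decomposition you describe would not by itself yield a controlled number of discrepancies, and the $\alpha$-alignment needed to make the no-escape terms vanish is missing.
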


The proof of \Cref{lem:kappa-recursion} is based on a greedy one-step coupling of two marginal distributions with different pinning on a single edge. We describe the coupling in \Cref{sec:coupling}. The proofs of \Cref{lem:kappa-recursion}, \Cref{thm:coupling-independence} and \Cref{thm:FPTAS} are in \Cref{sec:CI-proof}.

\subsection{Decomposition of Wasserstein distance}\label{sec:coupling}

The following lemma shows how we can go from $\kappa_{s, \Delta, \lambda}$ to $\kappa_{s-1, \Delta, \lambda}$
by one extra pinning.
\begin{lemma}\label{lem:s-to-s-1}
    For the instance $(G=(V, E), \+L)$, the pendant edge $i=\set{u, v}\in E$, and the colors $a, b\in \+L(i)$
    that fit into the definition of $\kappa_{s, \Delta, \lambda}$. 
    Suppose $\deg(u)=1$ and $\deg(v)\ge 2$, and $j\in N(i)$, $a, b\in \+L(j)$.
    Then
    \begin{enumerate}
        \item $\W{\mab, \mba} \le 1 + \kappa_{s-1, \Delta, \lambda}$,
        \item $\W{\mab, \mb}, \W{\mba, \ma} \le 1 + 2\kappa_{s-1, \Delta, \lambda}$,
        \item $\W{\man, \mbn} = 0$.
    \end{enumerate}
\end{lemma}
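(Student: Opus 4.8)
\textbf{Proof plan for \Cref{lem:s-to-s-1}.}

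The plan is to construct, for each of the three claims, an explicit coupling of the two conditional marginal distributions that appear and bound the expected Hamming distance by tracking how many edges can disagree. Throughout, recall that $i=\{u,v\}$ is pendant with $\deg(u)=1$, so pinning $i$ to a color only affects the edges in $E(v)-i$; and that pinning an edge $j$ to a color $c$ and then removing $j$ yields an instance on $s-1$ edges with the color $c$ deleted from the lists of the neighbors of $j$. The key bookkeeping fact I would use repeatedly: after we additionally pin the edge $j$ (a neighbor of $i$ sharing vertex $v$) to some color, the two sides differ only through the $i$-pinning, which now influences only edges incident to $v$ other than $i$ and $j$; but since $j$ is pinned, the instance ``on the $v$-side'' has lost the edge $j$, and what remains is an $(s-1)$-edge instance to which $\kappa_{s-1,\Delta,\lambda}$ applies. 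One subtlety to handle carefully: deleting $j$ may reduce available colors, but since we only decrease degrees and only remove colors that were forced unavailable anyway, the resulting instance still satisfies the $\lambda\Delta+1$-extra condition, so $\kappa_{s-1,\Delta,\lambda}$ is indeed a valid bound.

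For claim (3), $\man$ and $\mbn$ are the $E-i$ marginals conditioned on $\{i\pin a, b\notin N\}$ and $\{i\pin b, a\notin N\}$ respectively, where $N$ is the set of edges incident to $v$ other than $i$ (i.e.\ $N = E(v)-i$). The point is that conditioning on ``no edge of $N$ uses color $b$'' already decouples the pin $i\pin a$ from everything: once $b\notin N$ and $a\notin N$ respectively are imposed, the color on $i$ has no remaining influence on the rest of the graph, because the only constraints $i$ participates in are ``edges of $N$ differ from $c(i)$,'' and those are vacuous once $b$ (resp.\ $a$) is already banned from $N$. More precisely, I would show that the map swapping the roles of $a$ and $b$ on the edge $i$ (leaving all other edges fixed) is a measure-preserving bijection between $\Omega^{i\pin a, b\notin N}$ and $\Omega^{i\pin b, a\notin N}$ that is the identity on $E-i$; hence the two $(E-i)$-marginals are literally equal and the Wasserstein distance is $0$. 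This is the cleanest of the three.

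For claim (1), I would couple $\mab$ and $\mba$ (marginals conditioned on $\{i\pin a, j\pin b\}$ vs.\ $\{i\pin b, j\pin a\}$) by first coupling the color on edge $j$: on the left it is $b$, on the right it is $a$, a guaranteed discrepancy contributing $1$. Conditioned on the value of $j$, both instances become $(s-1)$-edge instances (delete $j$) that differ only in the pin on $i$ — one has $i\pin a$, the other $i\pin b$ — so the remaining expected Hamming distance on $E-i-j$ is at most $\kappa_{s-1,\Delta,\lambda}$. Adding the two contributions gives $1+\kappa_{s-1,\Delta,\lambda}$. For claim (2), say for $\W{\mab,\mb}$: here the left side has the extra pin $j\pin b$ while the right side does not. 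I would use \Cref{prop:coupling-convex-decomposition}: write $\mb$'s conditional law as a convex combination over the color $c$ that edge $j$ takes under the $i\pin b$ conditioning, $\mb = \sum_c \mu^{i\pin b}(j\pin c)\,\mu^{i\pin b, j\pin c}_{E-i}$, and note $\mab = \mu^{i\pin a, j\pin b}_{E-i}$. The term $c=b$ contributes $\W{\mu^{i\pin a,j\pin b}_{E-i}, \mu^{i\pin b,j\pin b}_{E-i}}\le \kappa_{s-1,\Delta,\lambda}$ (same pin on $j$, instances on $s-1$ edges differing only at $i$); each term $c\neq b$ contributes the distance between $\mu^{i\pin a, j\pin b}_{E-i}$ and $\mu^{i\pin b, j\pin c}_{E-i}$, which I bound via the triangle inequality for $\W{\cdot,\cdot}$ by routing through $\mu^{i\pin b, j\pin b}_{E-i}$: the piece from $\mu^{i\pin b,j\pin b}$ to $\mu^{i\pin b,j\pin c}$ is a change of pin on a single edge $j$ in an $(s-1)$-edge instance, bounded by $1+\kappa_{s-1,\Delta,\lambda}$ after accounting for the guaranteed discrepancy on $j$ plus $\kappa_{s-1}$ downstream — wait, more carefully, I would instead bound it directly as at most $1+2\kappa_{s-1,\Delta,\lambda}$ by a fresh coupling: discrepancy $1$ on $j$, then two further $\kappa_{s-1}$ terms since we must also reconcile the $i$-pin. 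Combining and using $\sum_{c\neq b}\mu^{i\pin b}(j\pin c) + \mu^{i\pin b}(j\pin b) = 1$ yields $\W{\mab,\mb}\le 1+2\kappa_{s-1,\Delta,\lambda}$.

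\textbf{Main obstacle.} The delicate point is claim (2): getting the coefficient exactly $1+2\kappa_{s-1,\Delta,\lambda}$ rather than something larger requires organizing the convex decomposition so that the guaranteed discrepancy on $j$ is counted only once and the ``two $\kappa$'' terms correspond precisely to (a) propagating the $i\pin a$ vs $i\pin b$ discrepancy and (b) propagating the $j\pin b$ vs $j\pin c$ discrepancy, without double counting. I would set this up by conditioning first on $c(j)$ on both sides simultaneously and invoking \Cref{prop:coupling-convex-decomposition} with exactly the right reference distributions, so that each sub-coupling is of the clean ``single differing pin on an $(s-1)$-edge instance'' type to which claim (1) or the definition of $\kappa_{s-1,\Delta,\lambda}$ applies verbatim; the arithmetic $1 + \kappa_{s-1} + \kappa_{s-1}$ then falls out. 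The other steps are routine once the reduction ``pin $j$, delete $j$, land in an $(s-1)$-instance still satisfying the extra-color hypothesis'' is stated cleanly at the outset.
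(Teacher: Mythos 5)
Your plan for claim (3) is correct and essentially matches the paper: under $\{i\pin a,\ b\notin N\}$ vs.\ $\{i\pin b,\ a\notin N\}$ the set of colors permitted on every edge of $E-i$ is identical (on $N$ one forbids $a$ from the $i$-pin and $b$ from the constraint, the other forbids $b$ and $a$; elsewhere the constraints coincide), so the two marginals are literally equal. Either a direct constraint comparison or your swap-bijection gives $\W{\man,\mbn}=0$.

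For claims (1) and (2), however, your plan has a genuine gap. In claim (1) you say ``conditioned on the value of $j$, both instances become $(s-1)$-edge instances (delete $j$) that differ only in the pin on $i$.'' That is not true: on the left side the pin $j\pin b$ has removed $b$ from the lists of the edges at $w$, while on the right the pin $j\pin a$ has removed $a$ there. So after deleting $j$ you are comparing two \emph{different} instances, each with a different pin on $i$ --- not the single-pin-on-one-edge situation that $\kappa_{s-1,\Delta,\lambda}$ quantifies over. The paper handles this by first constructing a single reformulated instance $(G',\+L')$: delete $i$, disconnect $j$ from $v$ (so $j$ becomes pendant at $w$), and remove \emph{both} $a$ and $b$ from the lists of $N(v)$. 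On $(G',\+L')$ the two target conditionals are exactly $\mu^{j\pin b}$ and $\mu^{j\pin a}$ --- the same instance, a single pendant edge, two pins --- and only then does $\kappa_{s-1,\Delta,\lambda}$ apply, after checking the new instance retains the extra-color guarantee. Your plan needs this explicit re-instancing step; without it the invocation of $\kappa_{s-1}$ is not justified.

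The issue in claim (2) is sharper. You propose to route through $\mu^{i\pin b,\,j\pin b}_{E-i}$, but $i$ and $j$ share the vertex $v$, so $i\pin b$ and $j\pin b$ is an infeasible pinning --- that distribution does not exist, and likewise there is no $c=b$ term in the convex decomposition of $\mb$. The paper's trick is subtler: for each admissible $c\neq b$ it defines a new instance $(G',\+L')$ by removing $j$, attaching a fresh pendant edge $j'$ to $w$, and removing $b$ from the lists of $N(v)$, so that the $\{i,j\}$ adjacency constraint is dissolved. One then checks the pin-identities $\mu^{\substack{i\pin a\\ j\pin b}}_{E-i-j;(G,\+L)}=\mu^{\substack{i\pin a\\ j'\pin b}}_{E'-i-j';(G',\+L')}$ and $\mu^{\substack{i\pin b\\ j\pin c}}_{E-i-j;(G,\+L)}=\mu^{\substack{i\pin c\\ j'\pin c}}_{E'-i-j';(G',\+L')}$, which makes the intermediate point $\mu^{\substack{i\pin a\\ j'\pin c}}$ well-defined. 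Triangle inequality then yields two legs, each a single-pin difference on an $(s-1)$-edge instance (after deleting whichever of $i,j'$ carries the common pin), giving $\kappa_{s-1}+\kappa_{s-1}$; the guaranteed discrepancy on $j$ contributes the $1$. Your fallback ``fresh coupling: discrepancy $1$ on $j$, then two further $\kappa_{s-1}$ terms'' gestures at the right arithmetic, but without the $(G',\+L')$ construction and the pin-identity, the two $\kappa_{s-1}$ terms are not instances of the $\kappa$-definition and the bound is not actually established.
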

\begin{proof}
    Assume $j=\set{v,w}$. 
    \begin{enumerate}
    \item We define a new instance $\tp{G'=(V', E'=E-i), \+L'}$ by removing $i$, disconnecting $j$ from $v$ and delete $a, b$ in the color lists of edges in $N(v)$. Then $j$ becomes to a pendant edge. We have that
    \begin{align*}
        \W{\mu_{E-i; (G, \+L)}^{\substack{i\pin a\\ j\pin b}}, \mu_{E-i;(G, \+L)}^{\substack{i\pin b\\ j\pin a}}}
        &= 1 + \W{\mu_{E-i-j; (G, \+L)}^{j\pin b}, \mu_{E-i-j; (G, \+L)}^{j\pin a}}\\
        &= 1 + \W{\mu_{E'-j;(G', \+L')}^{j\pin b}, \mu_{E'-j;(G', \+L')}^{j\pin a}}.
    \end{align*}
    After the deletion of $i$ and the removal of $a$ or $b$ from the color lists of $N(v)$, the number of extra colors of each edge remains unchanged. So by \Cref{def:kappa}, the  Wasserstein distance
    $\W{\mu_{E'-j;(G', \+L')}^{j\pin b}, \mu_{E'-j;(G', \+L')}^{j\pin a}}$
    is bounded by $\kappa_{s-1, \Delta, \lambda}$, so we have
    \begin{align*}
    \W{\mu_{E-i; (G, \+L)}^{\substack{i\pin a\\ j\pin b}}, \mu_{E-i;(G, \+L)}^{\substack{i\pin b\\ j\pin a}}}
       \le 1+\kappa_{s-1, \Delta, \lambda}.
    \end{align*}

    \item By the law of total probability,
    \begin{align*}
    \mu_{E-i; (G, \+L)}^{\substack{i\pin a\\ j\pin b}} - \mu_{E-i;(G, \+L)}^{i\pin b}
    &=\mu_{E-i; (G, \+L)}^{\substack{i\pin a\\ j\pin b}}
    - \sum_{c\in\+L(j)-b}\mu_{E-i;(G, \+L)}^{i\pin b}(j\pin c)
    \mu_{E-i;(G, \+L)}^{\substack{i\pin b\\j\pin c}}\\
    &=\sum_{c\in\+L(j)-b}
    \mu_{E-i;(G, \+L)}^{i\pin b}(j\pin c)
        \tp{\mu_{E-i; (G, \+L)}^{\substack{i\pin a\\ j\pin b}}
    -
    \mu_{E-i;(G, \+L)}^{\substack{i\pin b\\j\pin c}}}.
    \end{align*}
    By \cref{prop:coupling-convex-decomposition} this means
    \begin{align}
    \W{\mu_{E-i; (G, \+L)}^{\substack{i\pin a\\ j\pin b}} , \mu_{E-i;(G, \+L)}^{i\pin b}}
    &\le
    \sum_{c\in\+L(j)-b}
    \mu_{E-i;(G, \+L)}^{i\pin b}(j\pin c)
    \W{
    \mu_{E-i; (G, \+L)}^{\substack{i\pin a\\ j\pin b}}
    ,
    \mu_{E-i;(G, \+L)}^{\substack{i\pin b\\j\pin c}}
    }
    \notag
    \\&\le
    1+
    \sum_{c\in\+L(j)-b}
    \mu_{E-i;(G, \+L)}^{i\pin b}(j\pin c)
    \W{
    \mu_{E-i-j; (G, \+L)}^{\substack{i\pin a\\ j\pin b}}
    ,
    \mu_{E-i-j;(G, \+L)}^{\substack{i\pin b\\j\pin c}}
    }.
    \label{eq:sum-abac}
    \end{align}
    For each $c\in\+L(j)-b$, we construct a new list-edge-coloring instance $(G'=(V, E'), \+L')$
    By removing $j$, appending a new edge $j'$ to $w$, and removing $b$ from the color lists of edges in $N(v)$. Then we have the following identities since the color constraints of each pair of edges
    are the same.
    \begin{align*}
    \mu_{E-i-j; (G, \+L)}^{\substack{i\pin a\\ j\pin b}}
    =
    \mu_{E'-i-j'; (G', \+L')}^{\substack{i\pin a\\ j'\pin b}}
    ,\quad
    \mu_{E-i-j; (G, \+L)}^{\substack{i\pin b\\ j\pin c}}
    =
    \mu_{E'-i-j'; (G', \+L')}^{\substack{i\pin c\\ j'\pin c}}.
    \end{align*}
    Applying these identities and triangle inequality to~\eqref{eq:sum-abac}, we have
    \begin{align*}
    \W{\mu_{E-i; (G, \+L)}^{\substack{i\pin a\\ j\pin b}} , \mu_{E-i;(G, \+L)}^{i\pin b}}
    &\le
    1+
    \sum_{c\in\+L(j)-b}
    \mu_{E-i;(G, \+L)}^{i\pin b}(j\pin c)
    \W{
    \mu_{E'-i-j'; (G', \+L')}^{\substack{i\pin a\\ j'\pin b}}
    ,
    \mu_{E'-i-j'; (G', \+L')}^{\substack{i\pin c\\ j'\pin c}}
    }
    \\&\le
    1+
    \sum_{c\in\+L(j)-b}
    \mu_{E-i;(G, \+L)}^{i\pin b}(j\pin c)
    \Bigg(
    \W{
    \mu_{E'-i-j'; (G', \+L')}^{\substack{i\pin a\\ j'\pin b}}
    ,
    \mu_{E'-i-j'; (G', \+L')}^{\substack{i\pin a\\ j'\pin c}}
    }
    \\&\;+
    \W{
    \mu_{E'-i-j'; (G', \+L')}^{\substack{i\pin a\\ j'\pin c}}
    ,
    \mu_{E'-i-j'; (G', \+L')}^{\substack{i\pin c\\ j'\pin c}}
    }
    \Bigg).
    \end{align*}
    Since we may remove the edge with the same pinning from the graph and remain the distribution unchanged,
    the two $\+W_1$ distances are both bounded by $\kappa_{s-1, \Delta, \lambda}$, proving the second part of 
    the lemma.

    \item For the third part, notice that the available colors of all edges in $E-i$ are 
    exactly the same, so ${\mu_{E-i}^{\substack{i\pin a\\b\notin N}}}={\mu_{E-i}^{\substack{i\pin b\\a\notin N}}}$
    and 
    \[
        \W{\mu_{E}^{\substack{i\pin a\\b\notin N}},\mu_{E}^{\substack{i\pin b\\a\notin N}}}
        =0.
        \qedhere
    \]
    \end{enumerate}
\end{proof}

Let $(G=(V, E), \+L)$ be a list-edge-coloring instance that fits into the constraints of $\kappa_{s, \Delta, \lambda}$ in \Cref{def:kappa}, $i\in E$ be a pendant edge, and $a, b\in \+L(i)$ be colors. We do a one-step coupling to reduce the $W_1$ distance between graphs with $s$ edges to that between graphs with $s-1$ edges using \Cref{prop:coupling-convex-decomposition}.

Denoting the two endpoints of $i$ by $u, v$, we may assume $u$ is the pendant vertex, that is $\deg(u) = 1$ without loss of generality.
    
We denote the non-empty set $E(v)-i$ by $N$, and use integers from $1$ to denote the edges in $N$ so that $N = \set{1, \ldots, d}$. For every edge $j\in N$, define
\begin{align*}
    \gamma_j\defeq \frac{\pab}{\pan}, \quad
    \delta_j\defeq \frac{\pba}{\pbn}.
\end{align*}
The following lemma describes the greedy coupling we use.
\begin{lemma}
    \begin{align*}
        \ma-\mb
          =&\sum_{j\in N}\frac{\gamma_j\wedge\delta_j}{\zzz}\Bigg(\mab - \mba\Bigg)
        \\&+\sum_{j\in N}\frac{(\gamma_j-\delta_j)\vee 0}{\zzz}\Bigg(\mab-\mb\Bigg)
           +\sum_{j\in N}\frac{(\delta_j-\gamma_j)\vee 0}{\zzz}\Bigg(\ma-\mba\Bigg)
        \\&+\frac{1}{\zzz}\Bigg(\man-\mbn\Bigg).
    \end{align*}
\end{lemma}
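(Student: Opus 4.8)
The claimed identity is a law-of-total-probability decomposition of the signed measure $\ma-\mb$, where the ``mass'' is routed through the first exposed edge of $N$ (i.e., the color assigned to the neighbor $j$ of $i$ after pinning). The plan is to start from the obvious identity obtained by conditioning both $\ma$ and $\mb$ on the color of the edges in $N$, and then re-route the terms so that only differences of the form $\W{\mab,\mba}$, $\W{\mab,\mb}$, $\W{\mba,\ma}$, and $\W{\man,\mbn}$ appear, matching the three cases of \Cref{lem:s-to-s-1}. Concretely, I would first write
\begin{align*}
\ma &= \sum_{j\in N}\pan\cdot\mu_{E-i}^{\substack{i\pin a\\ j\pin b}} \;+\; \pan[\text{for }b]\cdot\man\,,
\end{align*}
wait — more carefully: condition $\ma$ on whether color $b$ appears on some edge of $N$ (it can appear on at most one such edge since they are mutually adjacent through $v$), giving $\ma=\sum_{j\in N}\mu_{E-i}^{i\pin a}(j\pin b)\,\mab+\mu_{E-i}^{i\pin a}(b\notin N)\,\man$, and symmetrically $\mb=\sum_{j\in N}\mu_{E-i}^{i\pin b}(j\pin a)\,\mba+\mu_{E-i}^{i\pin b}(a\notin N)\,\mbn$. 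Subtracting these two and grouping the $j$-th terms pairwise, one gets a sum over $j$ of $\pab\,\mab-\pba\,\mba$ plus the leftover $\pan[b]\man-\pbn[a]\mbn$ term.

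The next step is the bookkeeping: I need to show that the scalar weights work out to the stated convex-combination form with denominator $\zzz=1+\sum_k(\gamma_k\vee\delta_k)$. The key normalization identity to verify is that $\mu_{E-i}^{i\pin a}(b\notin N)=\frac{1}{\zzz}$ times the same base quantity as $\pab=\frac{\gamma_j}{\zzz}\cdot(\text{base})$, and likewise for the $b$-pinned side — this should follow because $\gamma_j=\pab/\pan$ means $\pab=\gamma_j\cdot\pan$, and summing $\sum_j\pab+\pan=1$ forces $\pan(1+\sum_j\gamma_j)=1$, hence $\pan=\frac{1}{1+\sum_j\gamma_j}$ and $\pab=\frac{\gamma_j}{1+\sum_j\gamma_j}$; similarly $\pbn=\frac{1}{1+\sum_j\delta_j}$, $\pba=\frac{\delta_j}{1+\sum_j\delta_j}$. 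Then I would split each pair $\pab\,\mab-\pba\,\mba$ using the algebraic identity $\alpha X-\beta Y=(\alpha\wedge\beta)(X-Y)+(\alpha-\beta)^+\,(X-\,?)-(\beta-\alpha)^+(\,?\,-Y)$ — but this only closes up if the ``$?$'' slots are filled by $\mb$ and $\ma$ respectively, which is exactly why the middle two lines of the claim have the mixed pairs $(\mab-\mb)$ and $(\ma-\mba)$. The weights $(\gamma_j-\delta_j)^+$ etc. must then be reconciled with the common denominator $\zzz$ rather than the two different denominators $1+\sum\gamma$, $1+\sum\delta$; I expect this is absorbed by further re-routing the excess mass through the $\man-\mbn$ term and using $\man=\mbn$ (from part 3 of \Cref{lem:s-to-s-1}, the two distributions are literally equal), so that one is free to add and subtract copies of it with whatever coefficient makes the arithmetic balance.

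\textbf{Main obstacle.} The real work is the second paragraph's reconciliation: the naive conditioning produces coefficients with denominators $1+\sum_j\gamma_j$ and $1+\sum_j\delta_j$, but the target identity uses the single denominator $\zzz=1+\sum_k(\gamma_k\vee\delta_k)$. Making these match requires carefully distributing the ``slack'' $(\gamma_k\vee\delta_k)-\gamma_k\ge 0$ (resp.\ $-\delta_k$) across the terms, and verifying that after this redistribution every coefficient is nonnegative and the total still telescopes to $\ma-\mb$. I would handle this by writing $\ma-\mb$ with an explicit free parameter for how much weight goes on each of the five building blocks, imposing (i) the marginal-matching constraints on each of $\Omega$'s coordinates and (ii) nonnegativity, and then checking that the stated choice of coefficients is a valid solution — essentially verifying an identity of signed measures coordinatewise, which is routine once the normalization facts $\pan=1/(1+\sum\gamma_j)$, $\pbn=1/(1+\sum\delta_j)$ and $\man=\mbn$ are in hand. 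A secondary point to be careful about: since the edges of $N$ are pairwise adjacent (all share $v$), at most one of them can carry color $b$ in any coloring consistent with $i\pin a$ (as $a\ne b$ and $i\in E(v)$ already uses $a$), so the events $\set{j\pin b}_{j\in N}$ together with $\set{b\notin N}$ genuinely partition the sample space — this is what licenses the law-of-total-probability step in the first place and should be stated explicitly.
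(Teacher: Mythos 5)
Your setup is right: condition each of $\ma,\mb$ on which edge of $N$ carries the "other" pinned color (or none does), which legitimately partitions the sample space because all edges in $N$ share $v$, and derive the normalization facts $\pan = 1/(1+\sum_j\gamma_j)$, $\pbn = 1/(1+\sum_j\delta_j)$, $\pab = \gamma_j\,\pan$, $\pba = \delta_j\,\pbn$. You also correctly diagnose the obstacle — the naive conditioning leaves two different denominators $1+\sum\gamma_j$ and $1+\sum\delta_j$, while the target has the single denominator $\zzz$.

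But the mechanism you propose for closing that gap is misdirected. You suggest absorbing the excess by "re-routing through the $\man-\mbn$ term and using $\man=\mbn$, so one is free to add and subtract copies of it with whatever coefficient." Since $\man-\mbn$ is identically zero, adding multiples of it is vacuous: it changes nothing except the coefficient on a term that is already zero, and in particular it cannot change the denominators sitting on the $\mab$ and $\mba$ terms. The paper's actual device is different: after multiplying $\mb$ by $\alpha=\pan/\pbn$ (to align the $\man$ and $\mbn$ coefficients, yielding a stray $(1-\alpha)\mb$ term), one adds $\frac{\sum_k(\delta_k-\gamma_k)\vee 0}{1+\sum_k\gamma_k}\big(\ma-\mb\big)$ to \emph{both sides}, which produces a factor $\frac{1+\sum_k(\gamma_k\vee\delta_k)}{1+\sum_k\gamma_k}$ on the left; dividing that factor out is what produces the common denominator $\zzz$. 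The fact $\man=\mbn$ plays no role at all in proving the identity — the coefficients of $\man$ and of $\mbn$ cancel independently once you substitute the normalization facts. That equality is only used in the \emph{next} lemma, to conclude $\W{\man,\mbn}=0$.

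Your fallback of a direct coordinatewise verification (substitute the total-probability decompositions into the claimed RHS and simplify) does in fact work — one finds the coefficient on $\ma$ becomes $\tfrac{1+\sum\gamma_k+\sum(\delta_k-\gamma_k)\vee 0}{\zzz}=1$ and likewise $-1$ on $\mb$, with the $\man,\mbn$ terms cancelling by themselves. So your plan is salvageable, but as written the key reconciliation step rests on a non-argument, and the assertion that $\man=\mbn$ is "in hand" and needed for the identity is false.
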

\begin{proof}
    The criterion of decomposing $\ma-\mb$ is whether
    $a$ or $b$ appears in $N$.
    Without loss of generality, we assume $\pan\le\pbn$,
    and denote $\pan/\pbn$ by $\alpha$.
    By the law of total probability, we have
    \begin{align}
        \ma = \sjn\pab\mab + \pan\man.
    \end{align}
    It is also clear that
    \begin{align*}
        \mb
          &= \alpha \mb + (1-\alpha) \mb
        \\&= \alpha \Bigg(\sjn\pba\mba + \pbn\mbn\Bigg)
         + (1-\alpha) \mb
        \\&= \alpha \sjn\pba\mba + \pan\mbn + (1-\alpha)\mb.
    \end{align*}
    The point of the multiplier $\alpha$ is to align the coefficients
    of $\man$ and $\mbn$, so that they don't pair with other distributions,
    and will not introduce the pinning $\cdot\notin N$, which reduces the 
    number of extra colors, into the recursion.
    By the above two decompositions,
    \begin{align}
           \ma - \mb
          =&\sjn\pab\mab - \alpha\sjn\pba\mba
            \notag
        \\&-(1-\alpha)\mb
            \notag
        \\&+\pan\Bigg(\man-\mbn\Bigg).
        \label{eq:u-v-fresh-coeff}
    \end{align}
    In general $\pab$ and $\pba$ do not equal, and we need to analyze them carefully.
    Then we can express $\pab,\pan,\pba,\pbn$ by them.
    \begin{align*}
          \pab &= \frac{\pab}{\pan+\sum_{k\in N} \ma(k\pin b)} = \frac{\gamma_j}{1+\sum_k\gamma_{k}}.
    \end{align*}
    We omitted the range of $k$ for simplicity.
    Similarly,
    \begin{align*}
             \pba = \frac{\delta_j}{1+\sum_{k}\delta_k},
       \quad \pan = \frac{1}{1+\sum_k\gamma_k},
       \quad \pbn = \frac{1}{1+\sum_k\delta_k},
       \quad \alpha = \frac{1+\sum_k\delta_k}{1+\sum_k\gamma_k}.
    \end{align*}
    Then we plug them into \cref{eq:u-v-fresh-coeff}.
    \newcommand{\zz}{1+\sum_k\gamma_k}
    \begin{align*}
           \ma - \mb
          =&\sj\frac{\gamma_j}{\zz}\mab - \sj\frac{\delta_j}{\zz}\mba
        \\&+\frac{\sum_k(\delta_k-\gamma_k)}{ 1 + \sum_k\gamma_k}\mb
           +\frac{1}{ 1 + \sum_k\gamma_k}\Bigg(\man-\mbn\Bigg)
        \\=&\sj\frac{\gamma_j\wedge\delta_j}{\zz}\Bigg(\mab - \mba\Bigg)
        \\&+\sum_j\frac{(\gamma_j-\delta_j)\vee 0}{\zz}\mab
           -\sum_j\frac{(\delta_j-\gamma_j)\vee 0}{\zz}\mba
        \\&+\frac{\sum_k(\delta_k-\gamma_k)}{\zz}\mb
           +\frac{1}{\zz}\Bigg(\man-\mbn\Bigg).
    \end{align*}
    By adding $\frac{\sum_k (\delta_k-\gamma_k)\vee 0}{\zz}\Big(\ma-\mb\Big)$ on both sides, we get
    \begin{align*}
           \frac{1+\sum_k(\gamma_k\vee\delta_k)}{\zz}\Big(\ma - \mb\Big)
          =&\sj\frac{\gamma_j\wedge\delta_j}{\zz}\Bigg(\mab - \mba\Bigg)
        \\&+\sum_j\frac{(\gamma_j-\delta_j)\vee 0}{\zz}\mab
           -\sum_j\frac{(\delta_j-\gamma_j)\vee 0}{\zz}\mba
        \\&+\sj\frac{(\delta_j-\gamma_j)\vee 0}{\zz}\ma
           {\color{purple}-\sj\frac{(\delta_j-\gamma_j)\vee 0}{\zz}\mb}
        \\&{\color{purple}+\frac{\sum_k(\delta_k-\gamma_k)}{\zz}\mb}
           +\frac{1}{\zz}\Bigg(\man-\mbn\Bigg).
    \end{align*}
    The highlighted terms sums to $-\sj\frac{(\gamma_j-\delta_j)\vee 0}{\zz}\mb$.
    So we can pair the terms to get
    \begin{align*}
           \frac{1+\sum_k(\gamma_k\vee\delta_k)}{\zz}\Big(\ma - \mb\Big)
          =&\sj\frac{\gamma_j\wedge\delta_j}{\zz}\Bigg(\mab - \mba\Bigg)
        \\&+\sum_j\frac{(\gamma_j-\delta_j)\vee 0}{\zz}\Bigg(\mab-\mb\Bigg)
           +\sum_j\frac{(\delta_j-\gamma_j)\vee 0}{\zz}\Bigg(\ma-\mba\Bigg)
        \\&+\frac{1}{\zz}\Bigg(\man-\mbn\Bigg).
    \end{align*}
    Finally, by dividing $\frac{\zzz}{\zz}$, 
    we get the decomposition
    \begin{align*}
        \ma-\mb
          =&\sj\frac{\gamma_j\wedge\delta_j}{\zzz}\Bigg(\mab - \mba\Bigg)
        \\&+\sum_j\frac{(\gamma_j-\delta_j)\vee 0}{\zzz}\Bigg(\mab-\mb\Bigg)
           +\sum_j\frac{(\delta_j-\gamma_j)\vee 0}{\zzz}\Bigg(\ma-\mba\Bigg)
        \\&+\frac{1}{\zzz}\Bigg(\man-\mbn\Bigg).
    \end{align*}
\end{proof}

\subsection{Proof of main theorems}\label{sec:CI-proof}

Now we prove \Cref{lem:kappa-recursion}, which provides a recursion for $\kappa_{s,\Delta,\lambda}$.

\begin{proof}[Proof of \Cref{lem:kappa-recursion}]
    We consider every $\lambda\Delta+1$-extra color instance $\tp{G=(V,E),\+L}$ with a pendant edge $i=\set{u,v}$ such that $\deg(G)\le \Delta$, $\abs{E}\le s$. Suppose $\deg(u)=1$. If $\deg(v)=1$, then $\W{\ma,\mb}=0$. In the following we assume $\deg(v)\ge 2$. 

An application of \Cref{prop:coupling-convex-decomposition} gives
\begin{align*}
    \W{\ma, \mb}
      \leq&\sj\frac{\gamma_j\wedge\delta_j}{\zzz}\W{\mab, \mba}
    \\&+\sum_j\frac{(\gamma_j-\delta_j)\vee 0}{\zzz}\W{\mab, \mb}
       +\sum_j\frac{(\delta_j-\gamma_j)\vee 0}{\zzz}\W{\ma, \mba}
    \\&+\frac{1}{\zzz}\W{\man, \mbn}.
\end{align*}

By \Cref{lem:s-to-s-1},
\begin{align*}
    \W{\ma, \mb}&\le 1-\frac{1}{\zzz} + \frac{\sum_j(\gamma_j\wedge\delta_j) + 2|\gamma_j-\delta_j|}{\zzz}\kappa_{s-1,\Delta,\lambda}
    \\&\le 1-\frac{1}{\zzz} + \frac{\sum_j2(\gamma_j\vee\delta_j)}{\zzz}\kappa_{s-1,\Delta, \lambda}.
\end{align*}
Finally, the bound $\forall j\in N : \gamma_j,\delta_j\le \frac{1}{\lambda\Delta}$ from \Cref{cor:marginal-bound-gamma-delta} gives
\begin{align*}
    \W{\ma, \mb} \le \frac{1}{1+\lambda} + \frac{2/\lambda}{1+1/\lambda}\kappa_{s-1,\Delta, \lambda}
    \le \frac{2}{2+\eps}\Big(\frac12 + \kappa_{s-1,\Delta,\lambda}\Big).
\end{align*}
Taking the supremum as in \Cref{def:kappa} proves the lemma.
\end{proof}

\begin{proof}[Proof of \Cref{thm:coupling-independence}]
    \Cref{lem:kappa-recursion} shows that $\sup_{s}\kappa_{s, \Delta, \lambda}\le \frac{1}{2+\eps}(1+2/\eps) = 1/\eps$.
    And
    \begin{align*}
        \W{\mu_{E}^{i\pin a}, \mu_{E}^{i\pin b}} \le 1 + \W{\mu_{E-i}^{i\pin a}, \mu_{E-i}^{i\pin b}}.
    \end{align*}
    To reduce to the pendant edge case,
    we may break $i$ into two pendant edges $i_1, i_2$,
    connected to the two endpoints of $i$ respectively,
    the color lists of $i_1, i_2$ are the same as $i$.
    We denote the new coloring instance by $(G',\+L')$.
    Then we have
    \begin{align*}
    \mu_{E-i      ; (G , \+L )}^{i\pin a} =
    \mu_{E-i_1-i_2; (G', \+L')}^{\substack{i_2\pin a \\ i_1\pin a}},
    \quad
    \mu_{E-i      ; (G , \+L )}^{i\pin b} =
    \mu_{E-i_1-i_2; (G', \+L')}^{\substack{i_2\pin b \\ i_1\pin b}}.
    \end{align*}
    So by triangle inequality, we have
    \begin{align*}
    &\phantom{{}={}}\W{\mu_{E;(G,\+L)}^{i\pin a}, \mu_{E;(G,\+L)}^{i\pin b}}
    \\&\le
    1 + \W{\mu_{E-i; (G,\+L)}^{i\pin a}, \mu_{E-i;(G,\+L)}^{i\pin b}}
    \\&=
    1 + 
    \W{
    \mu_{E-i_1-i_2; (G', \+L')}^{\substack{i_2\pin a \\ i_1\pin a}},
    \mu_{E-i_1-i_2; (G', \+L')}^{\substack{i_2\pin b \\ i_1\pin b}}
    }
    \\&\le
    1 + 
    \W{
    \mu_{E-i_1-i_2; (G', \+L')}^{\substack{i_2\pin a \\ i_1\pin a}},
    \mu_{E-i_1-i_2; (G', \+L')}^{\substack{i_2\pin b \\ i_1\pin a}}
    }
    +
    \W{
    \mu_{E-i_1-i_2; (G', \+L')}^{\substack{i_2\pin b \\ i_1\pin a}},
    \mu_{E-i_1-i_2; (G', \+L')}^{\substack{i_2\pin b \\ i_1\pin b}}
    }
    \\&\le
    1 + \frac{2}{\epsilon}.
    \end{align*}
\end{proof}

\Cref{thm:coupling-independence} shows that any $((1+\eps)\Delta+1)$-extra list-edge-coloring instance is $\tp{1+ \frac{2}{\eps}}$-coupling independent. This is because adding additional pinnings can be viewed as generating a new $((1+\eps)\Delta+1)$-extra list-edge-coloring instance by deleting the pinned edges and removing the corresponding colors from the lists of their adjacent edges. 

The work of~\cite{CFGZZ24} designs an \textbf{FPTAS} for counting the partition function of any Gibbs distribution of permissive spin systems that is marginally bounded and coupling independent. A spin system is specified by a $4$-tuple $S=(G = (V,E),q, A_E, A_V)$ where the state space is $[q]^V$ and the weight of a configuration is characterized by the matrices $A_E \in \bb{R}_{\geq 0}^{q\times q}$ and $A_V \in \bb{R}_{\geq 0}^q$. The Gibbs distribution is defined by:
\[
\mu(\sigma)\propto w(\sigma):=\prod_{u,v\in E} A_E(\sigma(u),\sigma(v))\prod_{v\in V} A_V(\sigma(v)).
\]
The normalizing factor of $\mu$ is called the partition function $Z:=\sum_{\sigma\in [q]^V}w(\sigma)$.

We say $S$ is permissive if for any partial configuration $\tau \in [q]^\Lambda$ with $\Lambda \subseteq V$, the conditional partition function $Z^\tau = \sum_{\sigma:\tau\subset \sigma} w(\sigma) >0$. For $\tau\in [q]^\Lambda$ with $\Lambda \subset V$, let $\mu_v^\tau$ be the marginal distribution on $v\in V\setminus \Lambda$ conditional on the partial configuration $\tau$. We say $\mu$ is $b$-marginally bounded if for any partial configuration $\tau \in [q]^\Lambda$ with $\Lambda \subseteq V$, any vertex $v \in V\setminus \Lambda$ and $c\in [q]$ such that $\mu_v^\tau>0$, we have $\mu_v^\tau\geq b$.

The main result of~\cite{CFGZZ24} that we will use is as follows.
\begin{theorem}[\cite{CFGZZ24}]\label{thm:ci-FPTAS}
    Let $q \geq 2, b>0, C>0, \Delta \geq 3$ be constants. There exists a deterministic algorithm such that given a permissive spin system $\mathcal{S}=\left(G, q, A_E, A_V\right)$ and error bound $0<\varepsilon<1$, if the Gibbs distribution of $\mathcal{S}$ is b-marginally bounded and satisfies $C$-coupling independence, and the maximum degree of $G$ is at most $\Delta$, then it returns $\hat{Z}$ satisfying $(1-\varepsilon) Z \leq \hat{Z} \leq(1+\varepsilon) Z$ in time $\left(\frac{n}{\varepsilon}\right)^{f(q, b, C, \Delta)}$, where $f(q, b, C, \Delta)=\Delta^{\+O\left(C\left(\log b^{-1}+\log C+\log \log \Delta\right)\right)} \log q$ is a constant.
\end{theorem}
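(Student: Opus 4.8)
The plan is to exhibit $\mu_E$ as the Gibbs distribution of a permissive spin system and then invoke the algorithm of~\cite{CFGZZ24} recorded in \Cref{thm:ci-FPTAS}. Pass to the line graph $H$ of $G$: its vertex set is $E$, two vertices are adjacent iff the corresponding edges of $G$ share an endpoint, and $\deg(H)\le 2\Delta-2$. A proper coloring of $(G,\+L)$ is precisely a proper list vertex coloring of $(H,\+L)$, so taking edge interaction $A_E(c,c')=\1{c\ne c'}$ and vertex activities equal to the indicators of the lists $\+L(\cdot)$ — the external-field version of the spin-system framework, which is the form in which~\cite{CFGZZ24} state their result — gives a spin system on $H$ whose partition function is $Z_{G,\+L}$ and whose Gibbs distribution is $\mu_E$. (If $\deg(H)\le 2$ then $H$ is a disjoint union of paths and cycles and $Z_{G,\+L}$ is computed exactly by transfer matrices, so we may assume $\deg(H)\ge 3$.) It then remains to verify the three hypotheses of \Cref{thm:ci-FPTAS}: permissiveness, $b$-marginal boundedness, and $C$-coupling independence, with $b$ and $C$ depending only on $\Delta$.

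\emph{Permissiveness} and \emph{coupling independence} follow directly from what is already available. A pinning on $\Lambda\subseteq E$ is the same as deleting the pinned edges and removing the pinned colors from the lists of their neighbours; each surviving edge loses at most one list color and exactly one unit of degree per pinned neighbour, so $\abs{\+L(e)}-\deg(e)$ never decreases, the residual instance stays $(\Delta+2)$-extra, and in particular it is $1$-extra, hence properly colorable (color the edges one at a time; each always has a free color), so $Z^\tau>0$ for every $\tau$. For coupling independence, a $(\Delta+2)$-extra instance is $((1+\eps)\Delta+1)$-extra with $\eps=\frac1\Delta$, so \Cref{thm:coupling-independence} gives that $\mu_E$ — and, since pinning again preserves the $(\Delta+2)$-extra property, each conditional $\mu_E^\tau$ — is $\tp{1+\frac2\eps}=(2\Delta+1)$-coupling independent; thus we take $C=2\Delta+1$. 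For \emph{marginal boundedness}, fix a residual instance, an edge $e$, and a color $a$ with $\mu_e^\tau(a)>0$: conditioned on the colors of the $\deg(e)$ edges sharing an endpoint with $e$, the color of $e$ is uniform over $\+L(e)$ minus at most $\deg(e)$ colors, a set of size $\ge\abs{\+L(e)}-\deg(e)\ge\Delta+2$; and applying \Cref{lem:claw-marginal-generalized} at the two endpoints of $e$ together with a union bound shows that $a$ avoids those neighbours with probability at least $\frac1\Delta$. Hence $\mu_e^\tau(a)\ge \frac1{\Delta\abs{\+L(e)}}$.

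This last bound degrades with $\abs{\+L(e)}$, and the running time in \Cref{thm:ci-FPTAS} also carries a $\log q$ factor, so to obtain an exponent depending on $\Delta$ only we must first bring the palette down to size $\mathrm{poly}(\Delta)$; this is the one genuinely delicate step. The bound $1/\abs{\+L(e)}$ is essentially tight (a single edge with a list of size $q$ has every marginal equal to $1/q$), so it cannot be avoided, and one must either strip from each list the colors that are ``locally irrelevant'' and track the resulting (efficiently computable) change to $Z_{G,\+L}$ while preserving the $(\Delta+2)$-extra property, or else dispose of the regime $q\gg\mathrm{poly}(\Delta)$ directly — there the correlations on $H$ are weak enough for a correlation-decay or cluster-expansion estimate whose cost depends on $\Delta$ and $\delta$ only — and fall back on \Cref{thm:ci-FPTAS} in the remaining bounded range of $q$. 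Granting such a reduction, so that $q=\mathrm{poly}(\Delta)$ and $b\ge 1/\mathrm{poly}(\Delta)$, we feed $q$, $b$, $C=2\Delta+1$, and maximum degree $\le 2\Delta-2$ into \Cref{thm:ci-FPTAS} with error $\delta$: since $\log b^{-1},\log C,\log q$ and $\log\log(2\Delta)$ are all $\+O(\log\Delta)$, it returns $\hat Z$ with $(1-\delta)Z\le\hat Z\le(1+\delta)Z$ in time $(n/\delta)^{f}$ with $f=\Delta^{\+O(\Delta\log\Delta)}\log\Delta=\+O\tp{\Delta^{\Delta\log\Delta}\log\Delta}=C(\Delta)$, and undoing the preprocessing gives the claimed estimate for $Z_{G,\+L}$ within the same error and time. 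The palette-reduction step is the main obstacle; everything else follows routinely from \Cref{thm:coupling-independence} and \Cref{thm:ci-FPTAS}.
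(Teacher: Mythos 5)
You have proved the wrong statement. \Cref{thm:ci-FPTAS} is the general FPTAS for permissive spin systems under $b$-marginal boundedness and $C$-coupling independence, which the paper imports verbatim from~\cite{CFGZZ24} and states without proof. Your argument instead proves the paper's \Cref{thm:FPTAS} — an FPTAS for $(\Delta+2)$-extra edge-coloring instances — and its very first step is ``invoke the algorithm of~\cite{CFGZZ24} recorded in \Cref{thm:ci-FPTAS}.'' That is circular as a proof of \Cref{thm:ci-FPTAS}: you assume the conclusion. An actual proof of \Cref{thm:ci-FPTAS} would have to reconstruct the linear-programming/correlation-decay machinery of Moitra~\cite{Moitra19} as adapted in~\cite{CFGZZ24} — establishing an $L^2$-decay or ``local uniformity'' estimate from coupling independence, setting up and rounding the associated LP, and proving the runtime bound $f(q,b,C,\Delta)$ — none of which appears, and none of which can be derived from the edge-coloring-specific lemmas in this paper.

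As a side remark, even read as a proof of \Cref{thm:FPTAS} your route differs from the paper's: the paper uses \Cref{lem:marginal-bound-gkm} to obtain a marginal lower bound $b=\Omega(1/\Delta)$ that is independent of the palette size, whereas your bound $\mu^\tau_e(a)\ge 1/(\Delta\,\abs{\+L(e)})$ scales with $\abs{\+L(e)}$, which is what forces the ``palette reduction'' step you flag as the main obstacle and leave unfinished. But that discussion concerns \Cref{thm:FPTAS}, not the statement you were asked to prove.
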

In our setting, the spin system is list-edge-coloring instance. Be careful with the parameters since the spins are on edges of the graph. In order to prove \Cref{thm:FPTAS}, we need the marginal lower bound from \cite{GKM15}.
\begin{lemma}[Corollary of Lemma 3 in \cite{GKM15}] \label{lem:marginal-bound-gkm}
Fix a $\beta$-extra edge coloring instances $(G,\+L)$ of maximum degree $\Delta$ with Gibbs distribution $\mu$. For any partial coloring $\tau$ on $\Lambda \subseteq E$, any $e \in E\setminus \Lambda$, and $c \in \+L^\tau(e)$, it holds that
	\[
		\mu^{\tau}_e(c) \geq \frac{\tp{1-\frac{1}{\abs{\+L^\tau(e)}-\deg^\tau(e)}}^{\deg^\tau(e)}}{\abs{\+L^\tau(e)}}\ge \frac{(1-\frac{1}{\beta})^{2\Delta-2}}{\beta+2\Delta-2} .
	\]
 \end{lemma}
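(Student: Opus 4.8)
The plan is to recognize this as the standard ``local uniformity'' marginal bound for list colorings — an edge coloring of $G$ is a vertex coloring of the line graph, in which $\deg^\tau(e)$ plays the role of the post-pinning degree — so that the first inequality is exactly Lemma~3 of~\cite{GKM15}. I would then obtain the second inequality by substituting $\abs{\+L^\tau(e)}-\deg^\tau(e)\ge\beta$, $\deg^\tau(e)\le\deg(e)\le 2\Delta-2$, and $\abs{\+L^\tau(e)}\le\abs{\+L(e)}\le\beta+2\Delta-2$. In fact, for the paper's purposes it is enough to establish the slightly weaker bound $\mu^\tau_e(c)\ge\abs{\+L^\tau(e)}^{-1}(1-1/\beta)^{\deg^\tau(e)}$, and I would prove that from scratch.

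\emph{Reducing away the pinning.} Let $G'$ be $G$ with the edges of $\Lambda$ deleted and, for each $e\in E\setminus\Lambda$, set $\+L'(e)\defeq\+L(e)\setminus\set{\tau(f):f\in\Lambda,\ f\cap e\neq\emptyset}$. Each pinned neighbour of $e$ removes at most one colour, so $\abs{\+L'(e)}\ge\abs{\+L(e)}-(\deg(e)-\deg^\tau(e))\ge\deg^\tau(e)+\beta$; hence $(G',\+L')$ is again a $\beta$-extra instance (with respect to its own degrees, which are the $\deg^\tau$) and $\mu^\tau=\mu_{G',\+L'}$. Since $\beta\ge 1$, every $\beta$-extra instance is list-colourable and remains so after assigning $e$ any colour in $\+L'(e)$ (induction on the number of vertices, using that the list size strictly exceeds the degree throughout), so $\+L^\tau(e)=\+L'(e)$. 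It therefore suffices to prove, for a $\beta$-extra instance $(G,\+L)$ with no pinning, that $\mu_e(c)\ge\abs{\+L(e)}^{-1}(1-1/\beta)^{\deg(e)}$ for every $e$ and every $c\in\+L(e)$; this gives the second displayed inequality, the first being the sharper companion statement from~\cite{GKM15}.

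\emph{The one-neighbour-at-a-time reveal.} Write $\mu_e(c)=\E[\sigma\sim\mu]{\Pr[\mu]{\sigma(e)=c\mid\sigma|_{E-e}}}$. Given the colours of all edges but $e$, the colour of $e$ is uniform over $\+L(e)$ minus the colours appearing on the $\deg(e)$ neighbours of $e$, so the inner probability is $\ge 1/\abs{\+L(e)}$ if $c$ appears on no neighbour of $e$, and is $0$ otherwise; thus $\mu_e(c)\ge\abs{\+L(e)}^{-1}\Pr[\mu]{c\notin\sigma(N(e))}$. Enumerating $N(e)=\set{u_1,\dots,u_D}$ with $D=\deg(e)$ and applying the chain rule,
\[
\Pr[\mu]{c\notin\sigma(N(e))}=\prod_{i=1}^{D}\Pr[\mu]{\sigma(u_i)\neq c \mid \sigma(u_1)\neq c,\dots,\sigma(u_{i-1})\neq c}.
\]
For a single factor I would bound the complementary event: inside the measure $\mu$ conditioned on $\set{\sigma(u_1)\neq c,\dots,\sigma(u_{i-1})\neq c}$ — an event depending only on already-revealed edges — condition further on the colours of all edges except $u_i$; then $\sigma(u_i)$ is uniform over the colours of $\+L(u_i)$ avoiding $\sigma(N(u_i))$, a set of size $\ge\abs{\+L(u_i)}-\deg(u_i)\ge\beta$, so $\Pr[\mu]{\sigma(u_i)=c\mid\cdots}\le 1/\beta$ and the factor is $\ge 1-1/\beta$. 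Multiplying the $D$ factors gives $\mu_e(c)\ge\abs{\+L(e)}^{-1}(1-1/\beta)^{D}$, and combining with $D\le\deg(e)\le 2\Delta-2$ (an edge shares an endpoint with at most $\Delta-1$ edges at each end) and $\abs{\+L(e)}\le\beta+2\Delta-2$ completes the derivation.

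\emph{Main obstacle.} The only delicate point is the bookkeeping in the reveal step: one must check that conditioning on $\set{\sigma(u_1)\neq c,\dots,\sigma(u_{i-1})\neq c}$ does not disturb the ``uniform given everything else'' structure at $u_i$. It does not — for any event $B$ measurable with respect to $\sigma|_{E-u_i}$, conditioning $\mu(\cdot\mid B)$ further on a value of $\sigma|_{E-u_i}$ consistent with $B$ coincides with $\mu(\cdot\mid\sigma|_{E-u_i})$, under which $\sigma(u_i)$ is uniform over its available colours — so the per-factor bound holds by the law of total expectation. One also has to verify that the $\beta$-extra property, the identity $\+L^\tau(e)=\+L(e)\setminus\set{\text{pinned-neighbour colours}}$, and the degree bound $\deg^\tau(e)\le\deg(e)\le 2\Delta-2$ all survive the reduction. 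Matching the exact base $\abs{\+L^\tau(e)}-\deg^\tau(e)$ of the first displayed inequality (rather than the crude $\beta$) is precisely the refinement carried out in Lemma~3 of~\cite{GKM15}, which we invoke; the $(1-1/\beta)$ version obtained above already suffices for everything the rest of the paper uses.
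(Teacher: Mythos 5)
Your proposal is correct and follows the same route as the paper, which offers no proof beyond citing Lemma~3 of~\cite{GKM15} for the first displayed inequality and reading off the second from the substitutions $\abs{\+L^\tau(e)}-\deg^\tau(e)\ge\beta$, $\deg^\tau(e)\le 2\Delta-2$, and $\abs{\+L^\tau(e)}\le\beta+2\Delta-2$ (the last being implicit in the paper's $q$-coloring setting, exactly as you note). Your supplementary self-contained derivation of the slightly weaker bound $\mu^\tau_e(c)\ge\abs{\+L^\tau(e)}^{-1}(1-1/\beta)^{\deg^\tau(e)}$ — via the pinning reduction, the greedy extendability argument showing $\+L^\tau(e)=\+L'(e)$, and the sequential revealing of neighbours — is also correct and would suffice for every use of this lemma in the paper.
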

Now we give the proof of our main theorem in this section.
\begin{proof}[Proof of \Cref{thm:FPTAS}]
It is easy to verify that any $\beta$-extra edge coloring instance with $\beta \geq 1$ is permissive. From \Cref{thm:coupling-independence} and \Cref{lem:marginal-bound-gkm}, we know that the Gibbs distribution $\mu$ of a $(\Delta + 2)$-extra edge coloring instance $(G,\+L)$ is $\tp{1+ 2\Delta}$-coupling independent and $b$-marginally bounded where $b=\frac{(1-\frac{1}{\Delta + 2})^{2\Delta-2}}{3\Delta}$ is $\Omega\tp{\frac{1}{\Delta}}$. Then by \Cref{thm:ci-FPTAS}, there exists a deterministic algorithm that outputs $\hat{Z}$ satisfying $(1-\delta)Z_{G,\+L}\leq \hat{Z}\leq (1+\delta)Z_{G,\+L}$ in time $\tp{\frac{n}{\delta}}^{C(\Delta)}$ where $n$ is the number of edges in $G$ and $C(\Delta)=\+O\tp{\Delta^{\Delta\log \Delta}\log \Delta}$.
\end{proof}

\section{Strong spatial mixing for edge colorings on trees when $q>(3+o(1))\Delta$}\label{sec:ssm}

In this section, we prove our theorem for strong spatial mixing.

\begin{theorem}\label{thm:SSM}
    Given a $\beta$-extra edge coloring instance $(G,\+L)$ where $G$ is a tree of maximum degree $\Delta$, the uniform distribution on such instance exhibits strong spatial mixing with exponential decay rate $1-\delta$ and constant $C = \max\{32q^{\frac{\Delta + 2}2}\Delta^2(1-\delta)^{-3},(1-\delta)^{-4}\}$ if $\beta > \max\{\Delta + 50, (1 + \eta_\Delta)\Delta + 1\}$, where
    \[
         \delta = \frac{(1 + (\beta - 1 -\Delta)/\Delta)^2 - (1+\eta_{\Delta})^2}{2(1 + (\beta - 1 -\Delta)/\Delta)^2}, \eta_{\Delta} = \+\+O\tp{\frac{\log^2 \Delta}{\Delta}}.
    \]
    Specifically, if $\beta = \tp{1 + \frac{\log^3 \Delta}{\Delta}}\Delta + 50$, then $\delta \approx \frac{\log^3 \Delta}{\Delta}$.
 \end{theorem}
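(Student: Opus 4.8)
The plan is to run a correlation-decay analysis on the ``broom'' recursion $\*p_r=f(\set{\*p_i}_{i\in[d]})$ of \Cref{lem:tree-recursion}. Given the two feasible pinnings $\tau_1,\tau_2$ agreeing outside $\partial_{\tau_1,\tau_2}$, first treat each pinned edge as deleting that edge and shrinking the lists of its neighbours (exactly as in \Cref{sec:FPTAS}); this keeps the $\beta$-extra property. Truncate $T$ at distance $K=\min_{e'\in\partial}\dist_T(e,e')$ from $e$: the pinnings are then entirely ``below'' the frontier, so they induce two genuine broom marginals at each frontier vertex, and \Cref{lem:tree-recursion} propagates these, level by level, up to $\*p_r$. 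At the frontier we only have the trivial bound $\norm{\,\cdot\,}_{\-{TV}}\le 1$; since everything below works in an $\ell_2$-type norm on $\+D(C_r)$ (with $\abs{C_r}\le q^{\Delta}$) and the potential function used below has bounded derivatives, the passage $\ell_1\leftrightarrow\ell_2$ together with the leading recursion step is exactly what produces the explicit prefactor $\max\set{32q^{(\Delta+2)/2}\Delta^2(1-\delta)^{-3},(1-\delta)^{-4}}$ in the statement; the rest of the argument is to show that each level contracts this norm by $\sqrt{1-2\delta}\le 1-\delta$, so that composing over $K$ levels gives the claimed $(1-\delta)^K$ decay.

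The per-level contraction is controlled by the Jacobian of $f$ after composing with a suitable potential $\varphi$ coordinatewise. The subtlety flagged in the technical overview is that the mean-value/FTC midpoints need not be genuine edge-coloring marginals (the ``subdistributions''), which would be fatal if we needed spectral independence at every intermediate level. The \emph{top-down} fix is to fix the data only at the frontier leaves, choose the FTC midpoint \emph{there} between the two frontier broom-marginal tuples, and regard the resulting leaf weights as defining a single \emph{weighted edge coloring} instance; the chain rule then writes the Jacobian of $\varphi\circ f\circ\cdots\circ f$ (raw recursion, composed down to the leaves) as a product $\prod_{\ell=1}^{K}\nabla f(\cdot)$ evaluated along the trajectory of that one weighted instance, together with a single outer $\nabla\varphi$. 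So we only ever need good properties of brooms of this weighted instance.

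Next comes \emph{dimension reduction}. The Jacobian of one application of $f$ has size up to $q^{\Delta}\times\Delta q^{\Delta}$, but by \eqref{eq:tree-recursion-short} $\*p_r(\pi)$ depends on each $\*p_i$ only through the $q$ numbers $\*p_{v_i}(\ol c)$, $c\in[q]$; hence it factors as $J=J_2J_1$ through an intermediate space of dimension $\le dq=\+O(\Delta q)$. Since $JJ^{\top}$ and $J_1J_1^{\top}J_2^{\top}J_2$ have the same nonzero spectrum, $\norm{J}_2^2$ equals the spectral radius of a matrix $M$ of size $\+O(\Delta q)$ — the trace/low-rank trick. A direct computation with \eqref{eq:tree-recursion-short} expresses $M$ (for one level, and telescoping for the composition over the path) as a product of covariance matrices of the broom marginals $\*p_{v_i}$ of the weighted instance, of the form $\bigl[\*p_{v}(e:c,e':c')-\*p_v(e:c)\*p_v(e':c')\bigr]_{(e,c),(e',c')}$, i.e.\ exactly the objects whose $2$-norm is spectral independence.

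Finally, invoke the spectral independence bound for weighted edge colorings proved in \Cref{sec:covariance} via the matrix trickle-down method of \cite{ALOG21,WZZ24}: it bounds those covariance matrices' norm by $1+\eta_{\Delta}$ with $\eta_{\Delta}=\+O(\log^2\Delta/\Delta)$ (the $\Delta+50$ slack absorbs the lower-order additive terms that trickle-down and the marginal bounds leave behind). On the other hand, the marginal bounds of \Cref{lem:marginal_bound_1} and \Cref{lem:marginal_bound_2} (and \Cref{cor:marginal-bound-gamma-delta}) contribute a ``gain'' of $(\beta-1)/\Delta=1+(\beta-1-\Delta)/\Delta$ per level, so $\norm{J}_2^2\le (1+\eta_\Delta)^2/\bigl((\beta-1)/\Delta\bigr)^2=1-2\delta$ with $\delta$ exactly as in the statement, giving per-level contraction $\sqrt{1-2\delta}\le 1-\delta$; composing over the $K$ levels and unwinding $\varphi$ and the norm conversions yields the theorem, and substituting $\beta=(1+\log^3\Delta/\Delta)\Delta+50$ gives $\delta\approx\log^3\Delta/\Delta$. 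The main obstacle is the coupling of the top-down linearization with the spectral-independence input: one must verify that the telescoped per-level matrices $M$ really are governed by the covariance matrices of \emph{one} weighted-edge-coloring instance, and separately that the matrix trickle-down recursion can be solved for that weighted instance with the quantitative rate $\eta_\Delta=\+O(\log^2\Delta/\Delta)$; the dimension-reduction identity and the norm bookkeeping are comparatively routine.
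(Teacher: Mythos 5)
Your plan is essentially the paper's proof: handle pinnings by deleting pinned edges and shrinking neighbouring lists, run the broom recursion of \Cref{lem:tree-recursion} with potential $\phi(x)=2\sqrt{x}$, take the FTC midpoint only at the frontier leaves (so all intermediate quantities are genuine broom marginals of a single \emph{weighted} edge-coloring instance, cf.\ \Cref{lem:p=mu}), apply the low-rank/trace trick of \Cref{lem:convert_A_to_B}--\Cref{prop:convert_B_to_Cov} to reduce $\lambda_{\max}(\*A)$ to a $\+O(\Delta q)$-sized matrix $\*B=\!{Cov}(\*p_r)D_T^{-1}\*R D_T^{-1}$ built from covariance matrices, and bound those via the trickle-down spectral-independence result \Cref{lem:SI-broom} with $\eta_\Delta=\+O(\log^2\Delta/\Delta)$, together with the marginal bounds \Cref{lem:marginal_bound_1}--\Cref{lem:marginal_bound_2}.

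One piece of the accounting is stated loosely: you claim the per-level contraction of ``this norm'' is $\sqrt{1-2\delta}\le 1-\delta$, but the paper actually needs $\norm{\+J f^\phi}_2\le(1-\delta)/\sqrt\Delta$ per level, because the $\ell_2$-norm of the frontier vector $\phi(\*p_\ell)-\phi(\*p_\ell')$ scales like $\Delta^{\ell/2}$ (it concatenates up to $\Delta^{\ell-1}$ brooms, each contributing $\norm{\cdot}_2\le 4$). Without the extra $1/\sqrt\Delta$ in the Jacobian bound, the $\Delta^{\ell/2}$ growth is not offset and nothing decays. The paper gets it from $\norm{\*B}_2\le(1+\eta)^2(\Delta-1)/(\beta-1)^2\le(1-2\delta)/\Delta$, so $\norm{\+J f^\phi}_2\le\sqrt{(1-2\delta)/\Delta}$. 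Your formula $(1+\eta_\Delta)^2\Delta^2/(\beta-1)^2=1-2\delta$ is the $\Delta$-corrected version (it equals $\Delta\cdot\norm{\*B}_2$ up to $\Delta/(\Delta-1)$), so the final numbers match --- but the statement ``$\norm{J}_2^2\le 1-2\delta$'' is off by a factor of about $\Delta$ if $J$ is the actual Jacobian of one $f^\phi$. You would also need the trivial Jacobian bounds of \Cref{claim:trivial_bound} for the two levels nearest the boundary (where \Cref{cond:marginal} hasn't yet been established by the recursion), which is what contributes the $\Delta q$ piece of the prefactor; your ``leading recursion step'' alludes to this but it should be made explicit.
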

 

We already introduced the recursion for marginal probabilities of edge colorings on trees and derived certain marginal bounds in \Cref{sec:marginals}.  We will then analyze its Jacobian matrix in \Cref{sec:jacobian}. Using the bounds on the norm of the Jacobian matrix, we prove \Cref{thm:SSM} in \Cref{sec:contraction}. Finally, we discuss the limit of our approach and possible further improvement in \Cref{sec:limit}. A key ingredient in our bounds for the norm of Jacobian matrix is a bound for certain covariance matrices, which is addressed in \Cref{sec:covariance}. 

\subsection{Upper bound the 2-norm of Jacobian}\label{sec:jacobian}

\subsubsection{The Jacobian}
Recall the recursion $f$ for marginals introduced in \Cref{sec:recursion}. We regard $f=(f_\pi)_{\pi \in C_r}: \bb{R}_{\geq 0}^{C_{v_1}}\times \bb{R}_{\geq 0}^{C_{v_2}}\times \cdots \times \bb{R}_{\geq 0}^{C_{v_d}} \rightarrow \+D(C_{r})$
as a function taking inputs $\*p = (\*p_1, \*p_2, \dots, \*p_d)$ where $\*p_i\in \bb R^{C_{v_i}}$ for $i\in [d]$. The Jacobian of $f$ is a matrix $(\+J f)(\*p) \in \bb{R}^{C_r\times \bigcup_{i\in [d]} C_{v_i}}$. Since $C_{v_i}$'s are disjoint, for every $\tau \in \bigcup_{i\in [d]} C_{v_i}$, we will denote it by $(i,\tau)$ if $\tau\in C_{v_i}$ for clarity. Therefore, 
\begin{equation*}
    (\+J f)_{\pi,(i,\tau)}(\*p)= \frac{\partial f_\pi}{\partial \*p_{i}(\tau)}.
\end{equation*}
For each $i\in [d]$, define the matrix $\+J_i \in \bb{R}^{C_r\times C_{v_i}}$ with entries
\begin{equation*}
    (\+J_i f)_{\pi,\tau}(\*p) = (\+J f)_{\pi,(i,\tau)}(\*p)
\end{equation*}
for every $\pi\in C_r$ and $\tau\in C_{v_i}$. 

\bigskip
We can write $\+J_if$ in a compact way. 
\begin{proposition} 
Let $\*p_r = f(\*p)$. Then
\begin{align*}
    (\+J_i f)(\*p)=\sum_{c\in \+L(e_i)}\*a_{i,c} \*b_{i,c}^{\top},
\end{align*}
where $\*a_{i,c^*}=f(\*p) \odot \Big[\1{\pi(e_i)=c^*}-\sum_{\pi' \in C_{r}: \pi'(e_i)=c^*}\*p_r(\pi')\Big]_{\pi\in C_r}$
and $\*b_{i,c^*}=\Big[\frac{\1{c^*\notin \tau}}{\sum_{\tau' \in C_{v_i}: c^*\notin \tau'}\*p_i(\tau')} \Big]_{\tau\in C_{v_i}}$. \footnote{We write $\*u \odot \*v$ for their Hadamard product (entry-wise product).}
\end{proposition}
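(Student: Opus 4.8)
The plan is to differentiate the closed form \eqref{eq:tree-recursion-short} directly and then regroup the result according to the color that $\pi$ assigns to $e_i$. Set $w(\rho)\defeq\prod_{j\in[d]}\*p_{v_j}(\overline{\rho(e_j)})$ and $Z\defeq\sum_{\rho\in C_r}w(\rho)$, so that $f_\pi(\*p)=w(\pi)/Z$ by \eqref{eq:tree-recursion-short}. The one elementary ingredient is $\partial\,\*p_{v_i}(\overline c)/\partial\,\*p_i(\tau)=\1{c\notin\tau}$, which is immediate from $\*p_{v_i}(\overline c)=\sum_{\tau'\in C_{v_i}:\,c\notin\tau'}\*p_i(\tau')$. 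Since only the $j=i$ factor of $w(\rho)$ depends on $\*p_i(\tau)$, this gives
\[
\frac{\partial\,w(\rho)}{\partial\,\*p_i(\tau)}=\Big(\prod_{j\neq i}\*p_{v_j}(\overline{\rho(e_j)})\Big)\1{\rho(e_i)\notin\tau},
\]
and summing over $\rho\in C_r$ yields the analogous expression for $\partial Z/\partial\,\*p_i(\tau)$.

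Next I apply the quotient rule to $f_\pi=w(\pi)/Z$, rewrite $\prod_{j\neq i}\*p_{v_j}(\overline{\rho(e_j)})=w(\rho)/\*p_{v_i}(\overline{\rho(e_i)})$, and use $w(\rho)/Z=\*p_r(\rho)$, obtaining
\[
\frac{\partial f_\pi}{\partial\,\*p_i(\tau)}
=\*p_r(\pi)\,\frac{\1{\pi(e_i)\notin\tau}}{\*p_{v_i}(\overline{\pi(e_i)})}
-\*p_r(\pi)\sum_{\rho\in C_r}\*p_r(\rho)\,\frac{\1{\rho(e_i)\notin\tau}}{\*p_{v_i}(\overline{\rho(e_i)})}.
\]
The final step is the regrouping: in the first term I write $\1{\pi(e_i)\notin\tau}/\*p_{v_i}(\overline{\pi(e_i)})=\sum_{c\in\+L(e_i)}\1{\pi(e_i)=c}\,\1{c\notin\tau}/\*p_{v_i}(\overline c)$, and in the second term I sum over $\rho$ by first fixing $c=\rho(e_i)$, so that $\sum_{\rho:\,\rho(e_i)=c}\*p_r(\rho)=\*p_r(i,c)$. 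Combining the two sums over $c$ gives
\[
\frac{\partial f_\pi}{\partial\,\*p_i(\tau)}=\*p_r(\pi)\sum_{c\in\+L(e_i)}\big(\1{\pi(e_i)=c}-\*p_r(i,c)\big)\,\frac{\1{c\notin\tau}}{\*p_{v_i}(\overline c)}.
\]
Since $\*a_{i,c}$ (being $f(\*p)$ Hadamard-multiplied with the bracket) has $\pi$-entry $\*p_r(\pi)\big(\1{\pi(e_i)=c}-\*p_r(i,c)\big)$ — using $\*p_r(i,c)=\sum_{\pi'\in C_r:\,\pi'(e_i)=c}\*p_r(\pi')$ — and $\*b_{i,c}$ has $\tau$-entry $\1{c\notin\tau}/\*p_{v_i}(\overline c)$, the right-hand side is exactly the $(\pi,\tau)$ entry of $\sum_{c\in\+L(e_i)}\*a_{i,c}\*b_{i,c}^{\top}$, which proves the claim.

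This is a routine computation and I do not expect a real obstacle; the only point to handle with a word of care is that both the rewrite $\prod_{j\neq i}\*p_{v_j}(\overline{\rho(e_j)})=w(\rho)/\*p_{v_i}(\overline{\rho(e_i)})$ and the vector $\*b_{i,c}$ itself divide by $\*p_{v_i}(\overline c)$. I would simply work at points $\*p$ where every $\*p_{v_i}(\overline c)>0$ — which is where the Jacobian of $f$ is evaluated in the subsequent sections — noting that both sides are rational and continuous there, so the identity extends by continuity wherever needed; alternatively one can carry the unfactored products $\prod_{j\neq i}\*p_{v_j}(\overline{\rho(e_j)})$ through the entire computation and pass to $\*p_r,\*a_{i,c},\*b_{i,c}$ only at the very last line.
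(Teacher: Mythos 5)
Your proof is correct and takes essentially the same route as the paper's: both differentiate the closed-form recursion \eqref{eq:tree-recursion-short} directly (you via the quotient rule on $w(\pi)/Z$, the paper via the chain rule through the intermediate variables $q_{i,\rho}=\*p_{v_i}(\overline{\rho(e_i)})$), arrive at the same expression $\*p_r(\pi)\bigl(\1{\pi(e_i)\notin\tau}/\*p_{v_i}(\overline{\pi(e_i)})-\sum_\rho \*p_r(\rho)\1{\rho(e_i)\notin\tau}/\*p_{v_i}(\overline{\rho(e_i)})\bigr)$, and then reindex the sum over $\rho$ by the color $c=\rho(e_i)$ to obtain the rank-one decomposition $\sum_c \*a_{i,c}\*b_{i,c}^\top$. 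The remark about working where $\*p_{v_i}(\overline c)>0$ is sensible but unnecessary to belabor, as the paper implicitly assumes the same.
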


\begin{proof}
    Let $q_{i,\rho}=\sum_{\tau
    \in C_{v_i}: \rho(e_i)\notin \tau} \*p_{i}(\tau)$.
    For any $\pi \in C_r$, we write $f_\pi$ as a function of $q_{i,\rho}$ 
    \begin{equation*}
        f_\pi = \frac{\prod_i q_{i,\pi}}{\sum_{\rho \in C_r} \prod_i q_{i,\rho}}.
    \end{equation*}
    Then we can compute
    \begin{equation*}
        \frac{\partial f_\pi}{\partial q_{i,\rho}} = \frac{1}{q_{i, \rho}}\tp{\1{\rho=\pi}-f_\rho}f_\pi.
    \end{equation*}
    Therefore,
    \begin{equation*}
          (\+J f)_{\pi,(i,\tau)}(\*p)=\frac{\partial f_\pi}{\partial \*p_i(\tau)}=\sum_{\rho\in C_r} \frac{\partial f_\pi}{\partial q_{i,\rho}} \frac{\partial q_{i,\rho}}{\partial \*p_i(\tau)}
         =\sum_{\rho\in C_r}\frac{1}{q_{i, \rho}}(\1{\rho=\pi}-f_\rho)f_\pi\cdot \1{\rho(e_i)\notin \tau} .
    \end{equation*}
We write $\+J_i f$ explicitly:
\begin{align*}
(\+J_i f)(\*p)
=\operatorname{diag}(f(\*p))\cdot\sum_{\rho \in C_r}\frac{1}{q_{i,\rho}}\cdot\Big[\1{\pi=\rho}-f_\rho(\*p)\Big]_{\pi \in C_r} \Big[\1{\rho(e_i)\notin \tau}\Big]_{\tau\in C_{v_i}}^\top .
\end{align*}
Noting that $q_{i,\rho}=\Pr[T_i]{\rho(e_i)\notin c(E_{T_i}(v_i))}$ only relies on the color $\rho(e_i)$, we have
\begin{align*}
    (\+J_i f)(\*p) = \operatorname{diag}(f(\*p))\sum_{c^*\in L({e_i})} \Big[\1{\pi(e_i)=c^*}-\sum_{\pi' \in C_{r} :\pi'(e_i)=c^*}\*p_r(\pi)\Big]_{\pi \in C_r}\Big[\frac{\1{c^*\notin \tau}}{\sum_{\tau' \in C_{v_i} :c^*\notin \tau'}\*p_i(\tau')}\Big]_{\tau\in C_{v_i}}^\top.
\end{align*}
\end{proof}

A well-known trick in the analysis of decay of correlation is to apply a potential function on the marginal recursion to amortize the contraction rate. Given an increasing  potential function $\phi:[0,1]\rightarrow \bb{R}$, we define $f^\phi$ such that for any $\pi \in C_r$ and $\*m \in \bb{R}^{C_{v_1}\times C_{v_2}\times \dots \times C_{v_d}}$,
\begin{equation*}
f^{\phi}_\pi(\*m)=\phi\left(f_\pi\left(\left(\phi^{-1}(\*m_{1}),\phi^{-1}(\*m_{2}),\dots, \phi^{-1}(\*m_{d})\right)\right)\right).
\end{equation*}

As a result, the Jacobian of $f^\phi$ can be obtained by the chain rule and the inverse function theorem as follows.
\begin{proposition}\label{prop:eq_of_Ji}
Given a smooth increasing function $\phi:[0,1]\rightarrow \bb{R}$ with derivative $\Phi = \phi'$, let $\*p = \phi^{-1}(\*m)$. Then we have
\begin{align*}
    (\+J_i f^{\phi})(\*m)=\sum_c (\Phi(f(\*p))\odot \*a_{i,c})(\*b_{i,c} \odot \Phi^{-1}(\*p_i))^{\top}.
\end{align*}
\end{proposition}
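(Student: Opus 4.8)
The plan is to verify \Cref{prop:eq_of_Ji} by a direct application of the chain rule together with the inverse function theorem, using the already-established formula
$(\+J_i f)(\*p)=\sum_{c}\*a_{i,c}\*b_{i,c}^{\top}$ from the previous proposition. First I would fix $i\in[d]$ and write out the definition $f^{\phi}_\pi(\*m)=\phi\bigl(f_\pi(\phi^{-1}(\*m_1),\dots,\phi^{-1}(\*m_d))\bigr)$, where $\phi^{-1}$ is applied entry-wise. Differentiating $f^{\phi}_\pi$ with respect to $\*m_i(\tau)$ gives, by the chain rule,
\[
\frac{\partial f^{\phi}_\pi}{\partial \*m_i(\tau)}
=\Phi\bigl(f_\pi(\*p)\bigr)\cdot\frac{\partial f_\pi}{\partial \*p_i(\tau)}\cdot\frac{\partial \phi^{-1}(\*m_i(\tau))}{\partial \*m_i(\tau)},
\]
where $\*p=\phi^{-1}(\*m)$; the middle factor is exactly $(\+J_i f)_{\pi,\tau}(\*p)$, and the last factor, by the inverse function theorem, equals $1/\phi'\bigl(\phi^{-1}(\*m_i(\tau))\bigr)=\Phi^{-1}(\*p_i(\tau))$.

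Next I would assemble these scalar identities into the claimed matrix form. The factor $\Phi(f_\pi(\*p))$ depends only on the row index $\pi$, so it contributes a left multiplication by $\operatorname{diag}(\Phi(f(\*p)))$, equivalently a Hadamard product of each column by the vector $\Phi(f(\*p))$; the factor $\Phi^{-1}(\*p_i(\tau))$ depends only on the column index $\tau$, so it contributes a Hadamard product of each row of $\+J_i f$ by $\Phi^{-1}(\*p_i)$. Substituting $(\+J_i f)(\*p)=\sum_{c}\*a_{i,c}\*b_{i,c}^{\top}$ and pushing the two diagonal scalings into the rank-one terms yields
\[
(\+J_i f^{\phi})(\*m)=\sum_{c}\bigl(\Phi(f(\*p))\odot \*a_{i,c}\bigr)\bigl(\*b_{i,c}\odot \Phi^{-1}(\*p_i)\bigr)^{\top},
\]
which is precisely the statement. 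The only care needed is bookkeeping: that $\phi^{-1}$ and $\Phi^{-1}$ act coordinate-wise, that $\Phi^{-1}$ means $(\phi')^{-1}$ and not $(\phi^{-1})'$ (though these agree after the change of variables, which is the content of the inverse function theorem step), and that differentiating in the $\*m_j$-block for $j\neq i$ plays no role since we are only extracting the $\+J_i$ sub-block.

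I do not expect a genuine obstacle here; the result is a routine consequence of the chain rule, and the main work is purely notational — keeping track of which Hadamard factor is row-indexed versus column-indexed and confirming that the $\Phi$ and $\Phi^{-1}$ evaluations land on the correct arguments ($f(\*p)$ and $\*p_i$ respectively). If anything is delicate, it is justifying differentiability: one should note that $\phi$ is assumed smooth and increasing on $[0,1]$ so $\phi^{-1}$ is smooth on $\phi([0,1])$, and that $f$ is smooth on the relevant open set of positive vectors (its denominator is a positive sum), so all the derivatives above exist and the chain rule applies on the interior of the domain.
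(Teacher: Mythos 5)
Your proof is correct and follows the same route the paper silently invokes — the paper merely states that the formula is obtained "by the chain rule and the inverse function theorem" without spelling out the computation, and your write-up supplies exactly those details: the chain-rule decomposition of $\partial f^\phi_\pi / \partial \*m_i(\tau)$ into three factors, the identification of the middle factor with $(\+J_i f)_{\pi,\tau}(\*p)$, the inverse-function-theorem evaluation of the outer factor as $1/\Phi(\*p_i(\tau))$, and the observation that the row-indexed and column-indexed scalings push into the rank-one terms $\*a_{i,c}\*b_{i,c}^\top$ as Hadamard products. One small point worth tightening: your parenthetical "that $\Phi^{-1}$ means $(\phi')^{-1}$ and not $(\phi^{-1})'$" is itself ambiguous, since $(\phi')^{-1}$ could be read as the functional inverse of $\phi'$; the intended reading (which your computation already uses, and which the paper's subsequent specialization to $\Phi(x)=1/\sqrt{x}$ confirms) is that $\Phi^{-1}$ denotes the pointwise reciprocal $1/\Phi$ applied entrywise to $\*p_i$, so it is clearer to say that directly rather than in terms of competing operator precedences.
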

Taking $\Phi(x)=\frac{1}{\sqrt{x}}$, we have
\begin{align*}
    (\+J_i f^{\phi})(\*m)=\sum_c \*a^{\phi}_{i,c} (\*b^{\phi}_{i,c})^{\top},
\end{align*}
where $\*a^{\phi}_{i,c^*}=\sqrt{f(\*p)}\odot \Big[\1{\pi(e_i)=c^*}-\sum_{\pi' \in C_{r} : \pi'(e_i)=c^*}\*p_r(\pi')\Big]_{\pi\in C_r}$ and $\*b^{\phi}_{i,c^*}=\Big[\frac{\1{c^*\notin \tau}\sqrt{\*p_i(\tau)}}{\sum_{\tau' \in C_{v_i}: c^*\notin \tau'}\*p_i(\tau')}\Big]_{\tau\in C_{v_i}}$.

\subsubsection{Bounding $\norm{(\+J f^{\phi})(\*p)}_2$}

In this section, we aim to derive an upper bound for the 2-norm of the Jacobian of the tree recursion.
For brevity, we follow some notations which is defined in \Cref{sec:marginal_bounds_on_trees} of marginal probabilities w.r.t $\*p_i$ and $\*p_r$.

Also we introduce some notations of matrices used in later proof.
\begin{definition}\label{def:covariance}
    For a distribution $\*p$ over proper colorings on a broom $E_{T_v}(v)=\set{e_1,\dots,e_m}$, we define $X_v=\set{(i,c):i\in [m], c\in \+L(e_i)}$ and its (local) covariance matrix $ \!{Cov}(\*p)\in \bb{R}^{X_v\times X_v}$ with entries:
    $$ \!{Cov}(\*p)((i,c_1),(j,c_2)) = \sum_{\tau \in C_v:\tau(e_i)=c_1 \& \tau(e_j)=c_2} \*p(\tau)-\tp{\sum_{\tau \in C_v:\tau(e_i)=c_1}\*p(\tau)}\tp{\sum_{\tau \in C_v:\tau(e_j)=c_2}\*p(\tau)}.$$
\end{definition}

\begin{definition}\label{def:diag_mean}
    For a distribution over colorings $\*p$ on a broom $E_{T_v}(v)=\set{e_1,\dots,e_m}$, we define $X_v=\set{(i,c):i\in [m], c\in \+L(e_i)}$ and  its diagonal matrix of mean vector $\Pi(\*p)\in \bb{R}^{X_v\times X_v}$ as follows,
    $$
        \Pi(\*p) = \-{diag}\set{\sum_{\tau\in C_v: \tau(e_i) = c}\*p(\tau)}_{(i,c)\in X_v}.
    $$
\end{definition}
Now we define the notion of spectral independence on a broom.
\begin{definition}\label{def:sepctral_independence}
    For any distribution over colorings $\*p$ on a broom $E(v) = \{e_1,...,e_m\}$, we say $\*p$ is $C$-spectrally independent if it holds that
    $$
         \!{Cov}(\*p) \preceq C\cdot \Pi(\*p).
    $$
\end{definition}
The following is the main result in this section.
\begin{condition}[marginal bound]\label{cond:marginal}
    For any $i\in [d]$, $\*p_i$ is a distribution on $C_{v_i}$ such that for any color $a$
\[
\*p_i(a)\leq \frac{\abs{E_{T_i}(v_i)}}{\beta -1 +\abs{E_{T_i}(v_i)}},
\]
and for $\*p_r = (f_\pi(\*p_1,\*p_2, \dots, \*p_d))_{\pi \in C_v}$,
\[
\frac{\*p_r(i,a)}{\*p_i(\bar{a})}\leq \frac{1}{\beta -1},
\]
where $\beta \geq (1+o(1))\Delta$ .
\end{condition}

\begin{theorem}\label{thm:bound_Jacobian}
    For any $i\in [d]$, $\*p_i = \phi^{-1}(\*m_i)$ and $\*p_r = f((\*p_i)_{i\in[d]})$ satisfy \Cref{cond:marginal} and $(1 + \eta)$-spectrally independent. Then $\beta \geq 1 + \frac{(1 + \eta)\Delta}{\sqrt{1 - 2\delta}}$ implies that 
    $$
        \norm{\+J f^\phi(\*m)}_2 \leq \frac{1-\delta}{\sqrt \Delta}
    $$
    where $\*m = \phi(\*p)$ and $\phi(x) = 2\sqrt{x}$.
\end{theorem}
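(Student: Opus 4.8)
The plan is to bound $\norm{\+J f^\phi(\*m)}_2$ by first reducing the problem, via the rank structure, to controlling the $2$-norm of a much smaller matrix, and then to identify that smaller matrix with a product of covariance-type matrices so that the spectral independence hypothesis can be applied directly. Recall from Proposition \ref{prop:eq_of_Ji} (with $\Phi(x)=1/\sqrt{x}$) that $(\+J_i f^\phi)(\*m)=\sum_{c}\*a^\phi_{i,c}(\*b^\phi_{i,c})^\top$, so $\+J f^\phi(\*m)=\big[\+J_1 f^\phi \mid \+J_2 f^\phi \mid \cdots \mid \+J_d f^\phi\big]$ is a sum of rank-one terms. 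Writing $A$ for the $C_r\times X$ matrix whose columns are $\sqrt{f(\*p)}\odot\big[\1{\pi(e_i)=c}-\*p_r(i,c)\big]_{\pi}$ indexed by $(i,c)$ and $B$ for the block-diagonal matrix whose $(i,c)$-column (supported on $C_{v_i}$) is $\*b^\phi_{i,c}$, we have $\+J f^\phi(\*m)=A B^\top$.

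Next I would use the standard trace trick: since $AB^\top$ and $B^\top A$ have the same nonzero singular values, $\norm{AB^\top}_2=\norm{B^\top A}_2$, and $B^\top A$ is a matrix indexed by $X\times X$ which is of manageable size. I would then compute $B^\top A$ entrywise and recognize its blocks. The $((i,c),(j,c'))$-block of $A^\top A$ is exactly $\sum_\pi f(\*p)(\pi)\big(\1{\pi(e_i)=c}-\*p_r(i,c)\big)\big(\1{\pi(e_j)=c'}-\*p_r(j,c')\big) = \!{Cov}(\*p_r)((i,c),(j,c'))$, i.e. the broom covariance matrix of Definition \ref{def:covariance}. Meanwhile the contribution of $B$ turns the statement into a bound of the form $\norm{\+J f^\phi(\*m)}_2^2 \le \norm{D^{1/2}\!{Cov}(\*p_r)D^{1/2}}_2$ for an appropriate diagonal matrix $D$ built from the $\*b^\phi_{i,c}$ vectors, whose entries are controlled by the marginal bounds in Condition \ref{cond:marginal} and Lemma \ref{lem:marginal_bound_2} (the factor $\*p_i(\bar a)/((\beta-1)\sum_\tau \*p_i(\tau))$ is precisely what appears when one pairs $\*b^\phi_{i,c}$ against itself, and $\*p_r(i,c)/\*p_i(\bar c)\le 1/(\beta-1)$ converts between the root and child marginals). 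Combining, one gets $\norm{\+J f^\phi(\*m)}_2^2 \le \tfrac{1}{(\beta-1)\Delta}\cdot\norm{\Pi(\*p_r)^{-1/2}\!{Cov}(\*p_r)\Pi(\*p_r)^{-1/2}}_2 \cdot (\text{bounded correction})$, and the spectral independence hypothesis $\!{Cov}(\*p_r)\preceq (1+\eta)\Pi(\*p_r)$ caps the middle factor by $1+\eta$.

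The final step is the arithmetic: plug in $\beta\ge 1+\tfrac{(1+\eta)\Delta}{\sqrt{1-2\delta}}$ to check that $\tfrac{(1+\eta)}{(\beta-1)\Delta}\cdot\Delta^2 \le (1-\delta)^2/\Delta$, i.e. that the accumulated constants collapse to $(1-\delta)^2/\Delta$; here the hypothesis was reverse-engineered to make this hold, modulo lower-order terms absorbed into $\eta_\Delta$. The main obstacle I anticipate is getting the diagonal scaling matrix $D$ exactly right: one has to be careful that $\*b^\phi_{i,c}$ involves $\sqrt{\*p_i(\tau)}/\sum_{\tau':c\notin\tau'}\*p_i(\tau')$ rather than a clean marginal, so the identification of $B^\top B$ (or of the relevant Gram structure) with $\Pi(\*p_r)^{-1}$ up to the factor $\tfrac{1}{(\beta-1)\Delta}$ requires invoking Lemma \ref{lem:marginal_bound_2} and Condition \ref{cond:marginal} in just the right places, and tracking that the ``spectrally independent'' hypothesis is applied to $\*p_r$ (the root broom distribution) and not to the $\*p_i$. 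A secondary subtlety is that $\*p$ need not be a genuine probability vector after applying $\phi^{-1}$ to a midpoint $\*m$, so one must make sure every inequality used holds for the non-normalized vectors as stated in Lemmas \ref{lem:marginal_bound_1}–\ref{lem:marginal_bound_2}.
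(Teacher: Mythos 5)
Your high-level plan matches the paper's: express each block $\+J_i f^\phi$ as a sum of rank-one terms $\*a^\phi_{i,c}(\*b^\phi_{i,c})^\top$, reduce the $2$-norm of the Jacobian to the spectral radius of a small $X\times X$ matrix built out of the $\*a$-Gram and $\*b$-Gram matrices, recognize the former as $\!{Cov}(\*p_r)$, and invoke spectral independence. The paper's Lemmas~\ref{lem:convert_A_to_B} and~\ref{prop:convert_B_to_Cov} implement exactly this reduction, arriving at $\lambda_{\max}(\*A)\le \norm{\*B}_2$ with $\*B=\!{Cov}(\*p_r)D_T^{-1}\*R D_T^{-1}$.

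However, there are two genuine gaps. First, a technical one: the equality $\norm{AB^\top}_2=\norm{B^\top A}_2$ is false in general (singular values, unlike nonzero eigenvalues, are not preserved under cyclic shifts), and in your setup $B^\top A$ does not even have compatible dimensions ($B^\top$ maps into $\bigcup_i C_{v_i}$ while $A$ maps from $X$). The correct statement is $\norm{AB^\top}_2^2=\lambda_{\max}\bigl(A^\top A\,B^\top B\bigr)$, which is how the $X\times X$ matrix actually arises; the paper proves the corresponding inequality $\lambda_{\max}(\*A)\le \norm{\*B}_2$ via a trace/power-method argument.

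Second, and more substantively: you explicitly plan to apply spectral independence \emph{only} to $\*p_r$ and to control the $\*b$-side contribution by a diagonal matrix $D$ built from marginal bounds. This is not enough. The Gram matrix $B^\top B$ is not diagonal but \emph{block}-diagonal: within block $i$ its $(c,c')$ entry is $\*p_i(\bar c,\bar c')/(\*p_i(\bar c)\*p_i(\bar c'))$, whose off-diagonal part is a (rescaled) covariance of $\*p_i$. The paper isolates this as $\*R=\operatorname{diag}\{C_i^\top\!{Cov}(\*p_i)C_i\}_i$ (Proposition~\ref{prop:convert_B_to_Cov}) and crucially bounds it using the $(1+\eta)$-spectral independence of \emph{each} $\*p_i$, yielding $\*R\preceq (1+\eta)(I-D_T)$. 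Without this second application of spectral independence, the off-diagonal entries within a block are uncontrolled and you cannot collapse $B^\top B$ to a diagonal scaling. This is also visible in your arithmetic: your estimate $\frac{(1+\eta)}{(\beta-1)\Delta}\cdot\Delta^2$ carries a spurious factor of $\Delta$ and only a single factor of $(1+\eta)$, so plugging in $\beta-1\ge \frac{(1+\eta)\Delta}{\sqrt{1-2\delta}}$ does not close to $(1-\delta)^2/\Delta$; the paper's bound $\|\*B\|_2\le (1+\eta)^2\frac{\Delta-1}{(\beta-1)^2}$ has the $(1+\eta)^2$ and the $(\beta-1)^2$ in the denominator precisely because spectral independence is applied to both $\*p_r$ and the $\*p_i$'s. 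The theorem's hypothesis does in fact assert spectral independence of all $\*p_i$ as well as $\*p_r$, so the ingredients are available; you simply need to use them.
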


We will prove the theorem in \Cref{sec:bound_Jac} after introducing our key reduction in \Cref{sec:dim}.

\subsubsection{Dimension reduction}\label{sec:dim}
By the definition of 2-norm, we have that $\norm{(\+J f^\phi)(\*p)}_2 = \sqrt{\lambda_{\max}( (\+J f^\phi)(\*p)(\+J f^\phi)(\*p)^\top )}$. Let $\*A := (\+J f^\phi)(\*p)(\+J f^\phi)(\*p)^\top$. We have that
\begin{align*}\label{eq:2norm_J}                 
    \*A
    = \sum_{i=1}^d  (\+J_i f^\phi)(\*p)(\+J_i f^\phi)(\*p)^\top
    = \sum_{i=1}^d \sum_{c_1,c_2\in \+L({e_i})} \inner{\*b_{i,c_1}^\phi}{\*b_{i,c_2}^\phi} \*a_{i,c_1}^\phi (\*a_{i,c_2}^\phi)^\top.
\end{align*}
The last equation simply follows from~\Cref{prop:eq_of_Ji}. The above calculation suggests that although the dimension of $\*A$ is exponential in $d$, its rank is polynomial in $d$. In the following, we will find a much smaller matrix which can be used to upper bound $\*A$. The idea is to use the trace method, namely to study $\Tr\tp{\*A^k}$. We have the following lemma.


\begin{lemma}\label{lem:clac_lammax}
    For any positive semi-definite matrix $\*M \in \mathbb R^{n\times n}$, we have that
    \[
        \lambda_{\max} (\*M) = \lim_{k\to \infty}
        \tp{\Tr(\*M^k)}^{\frac 1k}.
    \]
\end{lemma}
\begin{proof}[Proof of~\Cref{lem:clac_lammax}]
    Assume that $\lambda_1,...,\lambda_n$ are eigenvalues of $\*M$ and $0\leq \lambda_1 \leq \lambda_2 \leq ... \leq \lambda_n$.
    $\*M$ can be factored as $\*Q\*\Lambda\*Q^{-1}$ where $\*\Lambda$ is a diagonal matrix satisfying $\*\Lambda(i,i) = \lambda_i$.
    Therefore,
    \begin{equation*}
        \lim_{k\to \infty}\Tr\tp{\*M^k}^{\frac 1k}
        = \lim_{k\to \infty}\Tr\tp{\*Q\*\Lambda^k\*Q^{-1}}^{\frac 1k}
        = \lim_{k\to \infty}\Tr\tp{\*\Lambda^k\*Q^{-1}\*Q}^{\frac 1k}
        = \lim_{k\to \infty} \tp{\sum_{i=1}^n \lambda_i^k}^{\frac 1k}
        =\lambda_n.
        \qedhere
    \end{equation*}
\end{proof}
To simplify notations, we let
\[
V(i,z_1,c_1,z_2,c_2) \defeq \inner{\*b_{i,c_1}^\phi}{\*b_{i,c_2}^\phi}
    \tp{\1{z_1=c_1} - \*p_r(i,c_1)}\tp{
    \1{z_2=c_2} - \*p_r(i,c_2)}.
\]
Then we can write $\*A$ explicitly.
\begin{equation}\label{eq:equation_A}
    \*A(\pi,\tau) = \sqrt{f(\*p)(\pi)f(\*p)(\tau)} \sum_{i=1}^d \sum_{c_1,c_2\in \+L(e_i)} V(i,\pi(e_i),c_1,\tau(e_i),c_2).
\end{equation}
Let $g_k^\pi(i,c)$ denote
\begin{align*}
    &\sum_{\substack{\tau_1,...,\tau_{k-1}\in C_r\\ \tau_0=\pi}} \prod_{j=1}^{k-1}f(\*p)(\tau_j)
    \sum_{i_1,...,i_{k-1}\in[d],i_k=i}
    \sum_{\substack{c_{1,1}\in \+L(e_{i_1})\\...\\c_{1,k}\in \+L(e_{i_k})}}
    \sum_{\substack{c_{2,1}\in \+L(e_{i_1})\\...\\c_{2,k-1}\in \+L(e_{i_{k-1})}\\c_{2,k}=c}}
    \prod_{j=1}^{k-1} V(i_j,\tau_{j-1}(i_j),c_{1,j},\tau_j(i_j),c_{2,j})
    \\&\times \inner{\*b_{i,c_{1,k}}^\phi}{\*b_{i,c}^\phi}(\1{\tau_{k-1}(i) = c_{1,k}} -  \*p_r(i,c_{1,k})).
\end{align*}
We omit $\pi$ in $g_k^\pi$ for brevity.
Then we have that for any $\pi \in C_r$
\begin{align}\label{eq:trace_of_A}
    \*A^k(\pi,\pi) = f(\*p)(\pi)\sum_{i=1}^d
    \sum_{c\in \+L(e_i)} g_k(i,c) (\1{\pi(i) = c} -  \*p_r(i,c)).
\end{align}
Fix $\pi$, then we will show that $\{g_k\}_{k\geq 1}$ can be computed recursively, which gives a simple representation of $\*A^k(\pi,\pi)$. Let $X=\set{(i,c)|i\in [d], c\in \+L(e_i)}$ be the set of all feasible edge-color pairs. 
\begin{lemma}\label{lem:recursion_for_g}
    If $\*B(\*p)\in \mathbb R^{X\times X}$ satisfies that
    \begin{align*}
        \*B(\*p)((i,c_2),(j,c_4)) &= 
        \sum_{c_3 \in \+L(e_j)}
        \frac{\*p_j(\bar{c_3},\bar{c_4})}{\*p_j(\bar{c_3})\*p_j(\bar{c_4})}\times \tp{\*p_r(j,c_3,i,c_2) - \*p_r(j,c_3)\*p_r(i,c_2)}.
    \end{align*}
    Then we have that $g_k^\top = \alpha^\top_\pi \*B^{k-1}$ where
    \begin{align*}
        \alpha_\pi(i,c_2) &= \sum_{c_1 \in \+L(e_i)}
         \frac{\*p_i(\bar{c_1},\bar{c_2})}{\*p_i(\bar{c_1})\*p_i(\bar{c_2})}
        \times (\1{\pi(i) = c_1} - \*p_r(i,c_1)).
    \end{align*}
\end{lemma}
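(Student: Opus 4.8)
The plan is to establish the identity $g_k^\top = \alpha_\pi^\top \*B^{k-1}$ by induction on $k$, after first rewriting the inner product $\inner{\*b_{i,c}^\phi}{\*b_{i,c'}^\phi}$ in a form that matches the entries of $\*B$ and $\alpha_\pi$. First I would unpack the definition of $\*b_{i,c}^\phi$ from Proposition \ref{prop:eq_of_Ji}: since $\*b_{i,c^*}^\phi = \bigl[\1{c^*\notin\tau}\sqrt{\*p_i(\tau)}/\sum_{\tau'\in C_{v_i}:c^*\notin\tau'}\*p_i(\tau')\bigr]_{\tau\in C_{v_i}}$, a direct computation gives
\[
\inner{\*b_{i,c_1}^\phi}{\*b_{i,c_2}^\phi} = \frac{\sum_{\tau\in C_{v_i}: c_1,c_2\notin\tau}\*p_i(\tau)}{\bigl(\sum_{\tau: c_1\notin\tau}\*p_i(\tau)\bigr)\bigl(\sum_{\tau: c_2\notin\tau}\*p_i(\tau)\bigr)} = \frac{\*p_i(\overline{c_1},\overline{c_2})}{\*p_i(\overline{c_1})\,\*p_i(\overline{c_2})},
\]
using the marginal notation of \Cref{sec:marginal_bounds_on_trees}. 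This is precisely the combinatorial weight appearing in both $\alpha_\pi$ and $\*B$, which is the whole point of the substitution $\Phi(x)=1/\sqrt x$.

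For the base case $k=1$, I would check directly that $g_1(i,c) = \alpha_\pi(i,c)$. From the definition of $g_k$ with $k=1$, the outer products over $\tau_1,\dots,\tau_{k-1}$ and $i_1,\dots,i_{k-1}$ are empty, the product $\prod_{j=1}^{k-1}V(\cdots)$ equals $1$, and we are left with $\inner{\*b_{i,c_{1,1}}^\phi}{\*b_{i,c}^\phi}(\1{\pi(i)=c_{1,1}}-\*p_r(i,c_{1,1}))$ summed over $c_{1,1}\in\+L(e_i)$; substituting the inner-product identity above and renaming $c_{1,1}$ as $c_1$ gives exactly $\alpha_\pi(i,c)$.

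For the inductive step, assume $g_{k-1}^\top = \alpha_\pi^\top \*B^{k-2}$; I would peel off the outermost summation index (the one ranging over $i_1$ and $\tau_1$, i.e.\ the first factor $V(i_1,\pi(i_1),c_{1,1},\tau_1(i_1),c_{2,1})$) from the definition of $g_k$ and show it contributes exactly one application of $\*B$. Concretely, grouping the sum over $\tau_1\in C_r$, over $c_{1,1}$, and using $\sum_{\tau_1\in C_r: \text{fixed }\tau_1(e_{i_1})}f(\*p)(\tau_1)(\cdots) = \*p_r(i_1,\cdot)(\cdots)$ to collapse the $\tau_1$-sum into marginals of $\*p_r$, the coefficient relating the ``level-$1$'' data $(i_1,c_{2,1})$ to the ``level-$2$'' data $(i_2,c_{2,2})$ becomes
\[
\sum_{c_{1,1}\in\+L(e_{i_1})}\frac{\*p_{i_1}(\overline{c_{1,1}},\overline{c_{2,1}})}{\*p_{i_1}(\overline{c_{1,1}})\,\*p_{i_1}(\overline{c_{2,1}})}\bigl(\*p_r(i_1,c_{1,1},i_2,c_{2,2}) - \*p_r(i_1,c_{1,1})\*p_r(i_2,c_{2,2})\bigr),
\]
which after matching names ($c_{1,1}\leftrightarrow c_3$, $c_{2,1}\leftrightarrow c_4$, $i_1\leftrightarrow j$, $i_2\leftrightarrow i$) is precisely $\*B((i_2,c_{2,2}),(i_1,c_{2,1}))$ — note the transpose/index-order so that right-multiplication by $\*B$ is correct. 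Here I would be careful that in the definition of $V$ the two colors $c_1,c_2$ play asymmetric roles (one is a ``numerator'' color $c_{1,j}$ summed against an indicator of $\tau_{j-1}$, the other a ``state'' color $c_{2,j}$ carried forward), and that the $\1{\pi(i)=c}$ tail of $g_k$ does not participate in this peeling — only the $\1{\tau_{k-1}(i)=c_{1,k}}$ factor and the product of $V$'s do.

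The main obstacle I anticipate is purely bookkeeping: correctly tracking which summation index becomes a matrix row versus column so that the composition of the $V$-factors genuinely telescopes into $\*B^{k-1}$ rather than its transpose or some permuted variant, and verifying that the $f(\*p)(\tau_j)$ weights combine with the sums over $\tau_j(e_{i_j})$ at each level to produce exactly the pairwise and single-site marginals of $\*p_r$ that appear in $\*B$. There is no analytic difficulty — it is a matter of writing the nested sum in \eqref{eq:trace_of_A}'s building block $g_k$ as a sequence of matrix-vector products and checking the indices line up — but it must be done with care because of the asymmetry between the $c_1$-type and $c_2$-type colors in $V$.
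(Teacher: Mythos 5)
Your base case is correct, and your identification of the key substitution
\[
\inner{\*b_{i,c_1}^\phi}{\*b_{i,c_2}^\phi}=\frac{\*p_i(\overline{c_1},\overline{c_2})}{\*p_i(\overline{c_1})\*p_i(\overline{c_2})}
\]
is exactly the computational heart of the proof. The anticipated ``obstacle'' you flag is real, but the way you propose to navigate it actually falls into the trap.

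The problem is that you peel from the wrong end. You propose stripping off the \emph{first} layer (the $\tau_1$-sum and the factor $V(i_1,\pi(i_1),c_{1,1},\tau_1(i_1),c_{2,1})$). This has two defects. First, the remaining nested sum over $\tau_2,\dots,\tau_{k-1}$ has its ``initial condition'' $\tau_1$, not $\pi$, so it does \emph{not} factor as $g^\pi_{k-1}$ — the induction hypothesis $g_{k-1}^\top=\alpha_\pi^\top\*B^{k-2}$ cannot be invoked on it. Second, and more concretely, your displayed coefficient is wrong: in $V_1$ the factor $(\1{\tau_1(i_1)=c_{2,1}}-\*p_r(i_1,c_{2,1}))$ pairs, under the $\tau_1$-average, with the factor $(\1{\tau_1(i_2)=c_{1,2}}-\*p_r(i_2,c_{1,2}))$ coming from $V_2$, so the covariance that appears is $\*p_r(i_1,c_{2,1},i_2,c_{1,2})-\*p_r(i_1,c_{2,1})\*p_r(i_2,c_{1,2})$ — between $(i_1,c_{2,1})$ and $(i_2,c_{1,2})$, not $(i_1,c_{1,1})$ and $(i_2,c_{2,2})$ as you wrote. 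Correspondingly, the $c_{1,1}$-sum together with the $\inner{\*b_{i_1,c_{1,1}}^\phi}{\*b_{i_1,c_{2,1}}^\phi}$ factor and the $\pi$-term $(\1{\pi(i_1)=c_{1,1}}-\*p_r(i_1,c_{1,1}))$ should be grouped into $\alpha_\pi(i_1,c_{2,1})$, while the $\*B$-coefficient $\*B((i_1,c_{2,1}),(i_2,c_{2,2}))$ is produced by the $c_{1,2}$-sum together with the \emph{level-2} inner product $\inner{\*b_{i_2,c_{1,2}}^\phi}{\*b_{i_2,c_{2,2}}^\phi}$ and the covariance between $(i_2,c_{1,2})$ and $(i_1,c_{2,1})$. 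Your expression instead uses the level-1 inner product and sums over $c_{1,1}$, producing (as you observe) the transposed entry $\*B((i_2,c_{2,2}),(i_1,c_{2,1}))$ — and in fact it does not even correspond to a term that occurs in the expansion of $g_k$, because the covariance has the wrong arguments. Pushed through, this would yield $\alpha_\pi^\top(\*B^\top)^{k-1}$, not $\alpha_\pi^\top\*B^{k-1}$, and $\*B$ is not symmetric.

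The paper's proof avoids both issues by peeling from the \emph{tail}: it extracts the sums over $\tau_{k-1}$, $i_{k-1}$, $c_{1,k}$, $c_{2,k-1}$, leaving exactly $g^\pi_{k-1}(i_{k-1},c_{2,k-1})$ inside. Collapsing the $\tau_{k-1}$-sum yields the covariance between $(i,c_{1,k})$ and $(i_{k-1},c_{2,k-1})$, which together with the $c_{1,k}$-sum and the tail inner product $\inner{\*b_{i,c_{1,k}}^\phi}{\*b_{i,c}^\phi}$ produces precisely $\*B((i_{k-1},c_{2,k-1}),(i,c))$. That gives the clean one-step recursion $g_k^\top=g_{k-1}^\top\*B$, which iterates to the claim. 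You should redo the inductive step by peeling from the last factor rather than the first.
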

\begin{proof}[Proof of \Cref{lem:recursion_for_g}]
    For any $k > 1$, we expand one layer of summation and get
    \begin{align*}
        g_k(i,c) &= \sum_{\tau_{k-1}\in C_r} f(\*p)(\tau_{k-1})\sum_{i_{k-1}\in [d]} \sum_{c_{1,k}\in \+L(e_i)} \sum_{c_{2,k-1}\in \+L(e_{i_{k-1}})}
        g_{k-1}(i_{k-1},c_{2,k-1}) 
        \\&\quad\times \inner{\*b_{i,c_{1,k}}^\phi}{\*b_{i,c}^\phi}(\1{\tau_{k-1}(i) = c_{1,k}} - \*p_r(i,c_{1,k}))
        (\1{\tau_{k-1}(i_{k-1}) = c_{2,k-1}} - \*p_r(i_{k-1}, c_{2,k-1}))
        \\&= \sum_{c_1\in \+L(e_i)}\inner{\*b_{i,c_1}^\phi}{\*b_{i,c}^\phi}\sum_{j\in[d]}\sum_{c_{2,k-1}\in \+L(e_{j})}(\*p_r(i,c_1,j,c_{2,k-1})-\*p_r(i,c_1)\*p_r(j,c_{2,k-1})) g_{k-1}(j,c_{2,k-1}).
    \end{align*}
    Recall that
    $$ 
        \inner{\*b_{i,c_1}^\phi}{\*b_{i,c}^\phi} = 
        \frac{\*p_i(\bar{c_1},\bar{c})}{\*p_i(\bar{c_1})\*p_i(\bar{c})},
    $$
    which indicates that $g_k^\top = g_{k-1}^\top \*B(\*p)$.
    Now It is sufficient to prove that $g_1 = \alpha_{\pi}$.
    Straight calculation shows that
    \begin{equation*}
        g_1(i,c) = \sum_{c_1\in \+L(e_i)} \inner{\*b_{i,c_1}^\phi}{\*b_{i,c}^\phi}(\1{\pi(i) = c_1} - \*p_r(i,c_1)) = \alpha_\pi(i,c_2).
        \qedhere
    \end{equation*}
\end{proof}
In the following, we omit $(\*p)$ in $\*B(\*p)$ for brevity if there is no ambiguity. \Cref{lem:recursion_for_g} directly indicates that we can use the 2-norm of $\*B$ to upper bound that of $\*A$.
\begin{lemma}\label{lem:convert_A_to_B}
    Let $\*B \in \mathbb R^{X\times X}$ and $\alpha_\pi$ denote the matrix and the vector defined in ~\Cref{lem:recursion_for_g}.
    Then we have that for any $k\geq 1$,
    \begin{align}\label{eq:lem8}
        \sum_{\pi\in C_r}\*A^k(\pi,\pi)
        = \sum_{\pi\in C_r}f(\*p)(\pi)\alpha_\pi^\top \*B^{k-1}\beta_\pi
    \end{align}
    where $\beta_\pi(j,c_4) = \1{\pi(j) = c_4} - \*p_r(j,c_4)$, implying that $\lambda_{\max}(\*A) \leq \|\*B\|_2$.
\end{lemma}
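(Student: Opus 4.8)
The plan is to observe that \eqref{eq:lem8} follows immediately by feeding the recursion of \Cref{lem:recursion_for_g} into \eqref{eq:trace_of_A}, and then to upgrade this identity to the spectral bound $\lambda_{\max}(\*A)\le\norm{\*B}_2$ using the trace formula of \Cref{lem:clac_lammax}.

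First I would establish \eqref{eq:lem8}. By \eqref{eq:trace_of_A}, for every $\pi\in C_r$ we have
\[
\*A^k(\pi,\pi) = f(\*p)(\pi)\sum_{i=1}^d\sum_{c\in\+L(e_i)} g_k(i,c)\tp{\1{\pi(i)=c}-\*p_r(i,c)} = f(\*p)(\pi)\,g_k^\top\beta_\pi,
\]
where we read $g_k$ and $\beta_\pi$ as vectors indexed by $X$ and use the definition $\beta_\pi(i,c)=\1{\pi(i)=c}-\*p_r(i,c)$. Substituting $g_k^\top=\alpha_\pi^\top\*B^{k-1}$ from \Cref{lem:recursion_for_g} and summing over $\pi\in C_r$ yields \eqref{eq:lem8} at once.

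Next I would deduce the eigenvalue bound. Since $\*A=(\+J f^\phi)(\*p)(\+J f^\phi)(\*p)^\top$ is positive semidefinite, \Cref{lem:clac_lammax} gives $\lambda_{\max}(\*A)=\lim_{k\to\infty}\tp{\Tr(\*A^k)}^{1/k}$, and $\Tr(\*A^k)=\sum_{\pi\in C_r}\*A^k(\pi,\pi)$. Bounding each summand of \eqref{eq:lem8} by Cauchy--Schwarz together with submultiplicativity of the operator norm,
\[
\Tr(\*A^k)\le\sum_{\pi\in C_r} f(\*p)(\pi)\,\bigl|\alpha_\pi^\top\*B^{k-1}\beta_\pi\bigr|\le\sum_{\pi\in C_r} f(\*p)(\pi)\,\norm{\alpha_\pi}_2\,\norm{\*B}_2^{k-1}\,\norm{\beta_\pi}_2\le M\,\norm{\*B}_2^{k-1},
\]
where $M\defeq\max_{\pi\in C_r}\tp{\norm{\alpha_\pi}_2\norm{\beta_\pi}_2}<\infty$ does not depend on $k$ (using $\sum_\pi f(\*p)(\pi)=1$ and finiteness of $C_r$ and $X$). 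Taking $k$-th roots and letting $k\to\infty$ gives $\lambda_{\max}(\*A)\le\lim_{k\to\infty}M^{1/k}\norm{\*B}_2^{(k-1)/k}=\norm{\*B}_2$.

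There is no genuinely hard step here: the analytic content has already been extracted into \Cref{lem:recursion_for_g} (the dimension-reduction recursion $g_k^\top=\alpha_\pi^\top\*B^{k-1}$). The only points needing a moment's care are that $\*A$ is positive semidefinite, so that \Cref{lem:clac_lammax} applies, and that the prefactor $M$ in the bound on $\Tr(\*A^k)$ is uniform in $k$ — this uniformity is precisely what lets the $k$-th root annihilate $M$ in the limit, and it holds because the index sets $C_r$ and $X$ are finite.
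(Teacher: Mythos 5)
Your proof is correct and follows essentially the same route as the paper: substitute $g_k^\top=\alpha_\pi^\top\*B^{k-1}$ from \Cref{lem:recursion_for_g} into \eqref{eq:trace_of_A} to get \eqref{eq:lem8}, then use \Cref{lem:clac_lammax} together with Cauchy--Schwarz and operator-norm submultiplicativity, so that the $k$-independent prefactor vanishes under the $k$-th root. The only cosmetic difference is that you bundle the prefactor into a single $M$ and state the $k$-uniformity explicitly, whereas the paper leaves $\sum_\pi f(\*p)(\pi)\norm{\alpha_\pi}_2\norm{\beta_\pi}_2$ inside the $k$-th root — the underlying argument is identical.
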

\begin{proof}[Proof of~\Cref{lem:convert_A_to_B}]
    ~\Cref{eq:lem8} immediately follows from ~\Cref{eq:trace_of_A} and ~\Cref{lem:recursion_for_g}.
    Therefore, the maximum eigenvalue of $\*A$ can be expressed as follows.
    \begin{align*}
        \lambda_{\max}(\*A) &= \lim_{k\to \infty}\tp{\sum_{\pi\in C_r}f(\*p)(\pi)\alpha_\pi^\top \*B^{k-1}\beta_\pi}^{\frac 1k}
        \\&\leq \lim_{k\to \infty}\tp{\sum_{\pi\in C_r}f(\*p)(\pi)\norm{\alpha_\pi}_2 \norm{\*B^{k-1}\beta_\pi}_2}^{\frac 1k}
        \\&\leq \lim_{k\to \infty}\tp{\sum_{\pi\in C_r}f(\*p)(\pi)\norm{\alpha_\pi}_2 \norm{\beta_\pi}_2}^{\frac 1k} \norm{\*B}_2^{\frac{k-1}{k}}
        \\&= \|\*B\|_2.
    \end{align*}
\end{proof}
\subsubsection{Bound the transition matrix} \label{sec:bound_Jac}
Let $D_T\in \mathbb{R}^{X\times X}$ be the diagonal matrix where $D_T((i,c),(i,c)) = \*p_i(\bar{c})$.
In this section, we give an upper bound for $\norm{\*B}_2$ as $\*B$ can be represented as the product of covariance matrices of $\*p_r$, $\*p_i$ and some auxiliary diagonal matrices.

\begin{proposition}\label{prop:convert_B_to_Cov}
    Let $C_i\in \mathbb{R}^{X_{v_i}\times |\+L(e_i)|}$ denote the matrix satisfying that $C_i((j,c_1),c_2) = \1{c_1 = c_2}$ for any $i\in [d]$.
    Let $\*R \in \mathbb{R}^{X\times X}$ denote $$
    \operatorname{diag}\{C^\top_i \!{Cov}(\*p_i)C_i\}_{i\in [d]}.
    $$
    Then we have that
    $$
        \*B =  \!{Cov}(\*p_r)D_T^{-1}\*RD_T^{-1}.
    $$
\end{proposition}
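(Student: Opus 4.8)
The plan is to verify \Cref{prop:convert_B_to_Cov} by unpacking both sides into their explicit entry-wise forms and matching them. The statement asserts that $\*B$ factors as $ \!{Cov}(\*p_r)\,D_T^{-1}\,\*R\,D_T^{-1}$, so I would compute the $((i,c_2),(j,c_4))$ entry of the right-hand side and compare it with the formula for $\*B((i,c_2),(j,c_4))$ given in \Cref{lem:recursion_for_g}. The key structural observation is that $\*R=\operatorname{diag}\set{C_i^\top \!{Cov}(\*p_i)C_i}_{i\in[d]}$ is block-diagonal over the index blocks $\{i\}\times\+L(e_i)$, and the selector matrix $C_i$ (with $C_i((j,c_1),c_2)=\1{c_1=c_2}$) simply collapses the $(j,c_1)$-indexed rows of $ \!{Cov}(\*p_i)$ by summing over $j$: concretely, $\bigl(C_i^\top \!{Cov}(\*p_i)C_i\bigr)(c,c') = \sum_{j,j'}  \!{Cov}(\*p_i)((j,c),(j',c'))$, which by \Cref{def:covariance} telescopes to $\*p_i(\bar c,\bar{c'}) - \*p_i(\bar c)\*p_i(\bar{c'})$ once one uses the identity $\sum_{j:\,(j,c)\in X_{v_i}}\Pr{\tau(e_j)=c,\dots} = \Pr{c\notin\tau,\dots}$ relating "edge $e_j$ gets color $c$ for some $j$" to "color $c$ is absent". (Here one has to be slightly careful that the row index of $ \!{Cov}(\*p_i)$ ranges over pairs within the broom $E_{T_i}(v_i)$ while $\+L(e_i)$ indexes colors; the matrix $C_i$ is exactly the device that bridges these.)

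With that computed, the right-hand side's $((i,c_2),(j,c_4))$ entry becomes
\[
\sum_{c,c'}  \!{Cov}(\*p_r)((i,c_2),(j,c))\cdot \frac{\1{c=c'}}{\*p_j(\bar c)} \cdot \bigl(\*p_j(\bar{c'},\bar{c_4}) - \*p_j(\bar{c'})\*p_j(\bar{c_4})\bigr)\cdot\frac{1}{\*p_j(\bar{c_4})},
\]
since $D_T^{-1}$ contributes the factor $1/\*p_j(\bar c)$ on the left of the $j$-block of $\*R$ and $1/\*p_j(\bar{c_4})$ on the right, and $\*R$'s only nonzero block between the $i$-column and $j$-row requires $i$ and $j$ indices to match through the middle — actually the middle factor $\*R$ is diagonal in blocks, so the sum is over a single block index $j$ and the term is nonzero only when the covariance factor connects $(i,c_2)$ to $(j,\cdot)$. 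Collapsing $c=c'$ and relabeling $c_3$ for the summation color, this is exactly
\[
\sum_{c_3\in\+L(e_j)} \frac{\*p_j(\bar{c_3},\bar{c_4})}{\*p_j(\bar{c_3})\*p_j(\bar{c_4})}\bigl(\*p_r(j,c_3,i,c_2) - \*p_r(j,c_3)\*p_r(i,c_2)\bigr),
\]
upon recognizing that $ \!{Cov}(\*p_r)((i,c_2),(j,c_3)) = \*p_r(i,c_2,j,c_3) - \*p_r(i,c_2)\*p_r(j,c_3) = \*p_r(j,c_3,i,c_2) - \*p_r(j,c_3)\*p_r(i,c_2)$ by the symmetry of the covariance and the notation $\*p_r(i,c_1,j,c_2)$ introduced in \Cref{sec:marginal_bounds_on_trees}. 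This matches the definition of $\*B((i,c_2),(j,c_4))$ verbatim.

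The main obstacle I expect is bookkeeping rather than conceptual: correctly tracking which index set each matrix lives on ($X$ versus $X_{v_i}$ versus $\+L(e_i)$), making sure the block-diagonal structure of $\*R$ and $D_T$ is used consistently so that cross terms between different $i$-blocks vanish, and verifying the combinatorial identity $\sum_{j:\,v_i\in e_j}\Pr[\*p_i]{\tau(e_j)=c \ \&\ \ldots} = \Pr[\*p_i]{c\notin\tau \ \&\ \ldots}$ that powers the $C_i^\top(\cdot)C_i$ collapse — which holds because in a proper coloring each color appears on at most one edge of the broom, so the events $\{\tau(e_j)=c\}_j$ are disjoint and their union is exactly $\{c\in\tau\}$, whose complement is $\{c\notin\tau\}$; a small amount of care is needed with the two-color version $\{c,c'\notin\tau\}$ when $c\ne c'$ versus $c=c'$, but only the $c=c'$ contributions survive after contracting with the identity-block $C_i$. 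Once these identifications are pinned down, the proof is a direct entrywise match, so I would present it as: (i) expand $\bigl(C_i^\top \!{Cov}(\*p_i)C_i\bigr)(c,c')$ and simplify via the disjointness identity; (ii) expand the triple product $ \!{Cov}(\*p_r)D_T^{-1}\*R D_T^{-1}$ entrywise using the block structure; (iii) compare with \Cref{lem:recursion_for_g}.
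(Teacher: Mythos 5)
Your approach is the same as the paper's: identify $(C_i^\top\!{Cov}(\*p_i)C_i)(c,c')$ with $\*p_i(\bar c,\bar{c'})-\*p_i(\bar c)\*p_i(\bar{c'})$ via disjointness of the events $\set{\tau(e_j)=c}$ within a broom, expand $\!{Cov}(\*p_r)D_T^{-1}\*RD_T^{-1}$ entrywise using the block structure of $D_T$ and $\*R$, and match with the formula for $\*B$ from \Cref{lem:recursion_for_g}. You also correctly carry $\*p_j$ through the final expression, where the paper's own proof has a small typo writing $\*p_i$.

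There is, however, a step you slide past. After the block collapse your displayed entry simplifies to
\[
\sum_{c_3\in\+L(e_j)}\!{Cov}(\*p_r)\big((i,c_2),(j,c_3)\big)\left(\frac{\*p_j(\bar{c_3},\bar{c_4})}{\*p_j(\bar{c_3})\*p_j(\bar{c_4})}-1\right),
\]
which differs from the target $\*B((i,c_2),(j,c_4))$ by the additive term $-\sum_{c_3}\!{Cov}(\*p_r)((i,c_2),(j,c_3))$. You assert the two are ``exactly'' equal, but that extra term is not entrywise zero on its face; it vanishes only because
$\sum_{c_3}\bigl(\*p_r(j,c_3,i,c_2)-\*p_r(j,c_3)\*p_r(i,c_2)\bigr)=\*p_r(i,c_2)-\*p_r(i,c_2)=0$.
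This row-sum-zero identity is the one non-bookkeeping fact the paper singles out (``the second equality comes from $\sum_{c_3}(\*p_r(j,c_3,i,c_2)-\*p_r(j,c_3)\*p_r(i,c_2))=0$''), and it must be stated for the entrywise match to be justified. A smaller quibble: the remark that ``only the $c=c'$ contributions survive after contracting with the identity-block $C_i$'' is misattributed --- $C_i^\top\!{Cov}(\*p_i)C_i$ has genuinely nonzero off-diagonal entries; what forces the index collapse in the triple product is the diagonal matrix $D_T^{-1}$ on the left of $\*R$, not $C_i$.
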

\begin{proof}[Proof of~\Cref{prop:convert_B_to_Cov}]
Note that for any $i\in [d]$ and $c_1,c_2\in \+L(e_i)$,
    \begin{align*}
        \*R((i,c_1),(i,c_2)) &= C_i^\top  \!{Cov}(\*p_i)C_i(c_1,c_2)
        \\&= \*p_i(\bar{c_1},\bar{c_2})-\*p_i(\bar{c_1})\*p_i(\bar{c_2}).
    \end{align*}
    Therefore, for any $c_2\in \+L(e_i)$ and $c_4\in \+L(e_j)$, the following always holds.
    \begin{align*}
        \*B((i,c_2),(j,c_4)) &= 
        \sum_{c_3 \in \+L(e_j)}
        (\*p_r(j,c_3,i,c_2) - \*p_r(j,c_3)\*p_r(i,c_2))
        \times \frac{\*p_i(\bar{c_3},\bar{c_4})}{\*p_i(\bar{c_3})\*p_i(\bar{c_4})}
        \\&= \sum_{c_3\in \+L(e_j)}  \!{Cov}(\*p_r)((i,c_2),(j,c_3)) \tp{\frac{\*p_i(\bar{c_3},\bar{c_4})}{\*p_i(\bar{c_3})\*p_i(\bar{c_4})} - 1}
        \\&= \sum_{(k,c_3): c_3\in \+L(e_k)}  \!{Cov}(\*p_r)((i,c_2),(k,c_3))
        \frac{1}{\*p_k(\bar{c_3})}
        \*R((k,c_3),(j,c_4))
        \frac{1}{\*p_k(\bar{c_4})} .
    \end{align*}
    where the second equality comes from $\sum_{c_3 \in \+L(e_j)}
        (\*p_r(j,c_3,i,c_2) - \*p_r(j,c_3)\*p_r(i,c_2))=0$.
\end{proof}
Then we can establish~\Cref{thm:bound_Jacobian} through the above conclusions.
\begin{proof}[Proof of~\Cref{thm:bound_Jacobian}]
The Loewner Order still holds under the Congruent transformation. 
Therefore, by spectral independence of $\*p_i$, we have that
\begin{align}\label{eq:bound_R}
    \*R \preceq (1+\eta)\operatorname{diag}\{C^\top_i\Pi(\*p_i)C_i\}_{i\in [d]}
    = (1+\eta)(I - D_T).
\end{align}
Plugging in $\-{Cov}(\*p_r)\preceq (1+\eta)\Pi(\*p_r)$ and~\Cref{eq:bound_R}, we get
\begin{align}
    \nonumber \|\*B\|_2 &\leq (1+\eta)^2\lambda_{\max}(\Pi(\*p_r)D_T^{-1}(I - D_T)D_T^{-1})
    \\ \label{eq:marginal_equation} &= (1+\eta)^2\max_{(i,c): c\in \+L(e_i)} \frac{\*p_r(i,c)\*p_i(c)}{\*p_i(\bar{c})^2}.
\end{align}
Applying marginal bounds for~\Cref{cond:marginal}, we have that
\begin{align*}
    \|\*B\|_2  &\leq (1+\eta)^2\max_{(i,c): c\in \+L(e_i)} \frac{\*p_i(c)}{\*p_i(\bar{c})(\beta - 1)}
    \\&\leq (1+\eta)^2\max_{i\in[d]} \frac{\deg(v_i)-1}{(\beta - 1)^2} 
    \\&\leq \frac {1 - 2\delta}{\Delta}
\end{align*}
for some $\delta > 0$.
The last inequality follows from $\beta \geq 1 + \frac{(1 + \eta)\Delta}{\sqrt{1 - 2\delta}}$.
By~\Cref{lem:convert_A_to_B}, $\norm{(\+J f^\phi)(\*p)}^2 \leq \|\*B\|_2$.
Therefore, 
\[
    \norm{(\+J f^\phi)(\*p)}_2 \leq \sqrt{\|\*B\|_2} \leq \frac{1-\delta}{\sqrt \Delta}.
    \qedhere
\]
\end{proof}
\subsection{Strong spatial mixing via contraction}\label{sec:contraction}
\newcommand{\ra}{\rightarrow}
As \Cref{thm:bound_Jacobian} gives the upper bounds on the 2-norm of the Jacobian matrix, we now proceed to demonstrate how these bounds can be used to prove strong spatial mixing via contraction. Specifically, we will quantify the decay of correlations using the derived bounds.
Let $B_{G}(u,d)$ denote $\set{u'\in V\mid \dist_G(u,u') = d}$.

\begin{proof}[Proof of~\Cref{thm:SSM}]
Fix $e_r = (u, r)\in E$ and $r$ is the root of the tree.
Let $\tau_1$ and $\tau_2$ be two different feasible pinnings on $\Lambda \subseteq E\setminus \set{e_r}$.
We use $(T',\+L')$ to denote the edge coloring instance which is obtained by removing every $e\in \Lambda \setminus \partial_{\tau_1,\tau_2}$ from $T$ and removing $\tau_1(e)$ from the lists of the neighbours of $e$. It is easy to verify that $(T',\+L')$ is still a $\beta$-extra edge coloring instance.

Let $\ell := \min_{e\in \partial_{\tau_1,\tau_2}} \-{dist}_{T'}(e_r,e) - 1$.
It is trivial if $\ell = \infty$.
Without loss of generality, assume $\ell \geq 3$.
Let $\*p_\ell$ and $\*p_\ell'$ denote the marginal distribution over $\bigcup_{u\in B_{T'}(r,\ell)} E_{T_u'}(u)$ under the pinning $\tau_1$ and $\tau_2$ respectively.
Let $f^{\phi,i\ra i-1}$ and $f^{i\ra i-1}$ denote the concatenation of recursive function $f^\phi$ and $f$ of subtrees which is rooted at $B_{T'}(r, i-1)$ respectively.
For simplicity, let $f^{\phi,i\ra j} := f^{\phi,j+1\ra j} \circ f^{\phi,j+2\ra j+1} \circ \dots \circ f^{\phi,i\ra i-1}$ for any $i > j$ and it is the same for $f^{i\ra j}$.
We use $\*m_\ell(t)$ denote the linear combination of  $\phi(\*p_\ell)$ and $\phi(\*p_\ell')$, that is,
$\*m_\ell(t) = t\phi(\*p_\ell) + (1-t) \phi(\*p_\ell')$.
Then we have that
\begin{align}
    \nonumber \norm{f^{\phi,\ell\ra 0}(\*m_\ell(0)) - f^{\phi,\ell\ra 0}(\*m_\ell(1))}_2
    &= \norm{\int_0^1 (\+J f^{\phi,\ell\ra 0}(\*m_\ell(t))) \cdot (\phi(\*p_\ell) - \phi(\*p_\ell')) \d t}_2
    \\\nonumber &\leq \int_0^1 \norm{ (\+J f^{\phi,\ell\ra 0}(\*m_\ell(t))) \cdot (\phi(\*p_\ell) - \phi(\*p_\ell')) }_2\d t
    \\\nonumber &\leq \max_{t\in[0,1]} \norm{ (\+J f^{\phi,\ell\ra 0}(\*m_\ell(t)))}_2\norm{(\phi(\*p_\ell) - \phi(\*p_\ell')) }_2
    \\ \label{eq:contraction} &\leq \max_{t\in[0,1]} \norm{ (\+J f^{\phi,\ell\ra 0}(\*m_\ell(t)))}_2
    \max_{u\in B_{T'}(r,\ell-1)}\Delta^{\frac{\ell}{2}}\norm{\phi(\*p_u) - \phi(\*p_u')}_2
\end{align}
where $\*p_u$ and $\*p_u'$ are the marginal distributions over $E_{T_u'}(u)$ on $T_u'$ under the pinning $\tau_1$ and $\tau_2$ respectively.
As $\phi(x) = 2\sqrt{x}$, $\norm{\phi(\*p_u) - \phi(\*p_u')}_2 \leq \norm{\phi(\*p_u')}_2 + \norm{\phi(\*p_u)}_2 = 4$.
And we have that for any $t\in [0,1]$,
\begin{align}
    \nonumber \norm{ (\+J f^{\phi,\ell\ra 0}(\*m_\ell(t)))}_2 &= \norm{(\+J f^{\phi,1\ra 0}(\*m_1(t)))\dots (\+J f^{\phi,\ell\ra \ell-1}(\*m_\ell(t)))  }_2
    \\\nonumber &\leq \prod_{i=1}^{\ell}\norm{\+J f^{\phi,i\ra i-1}(\*m_i(t))}_2
    \\\nonumber &= \prod_{i=1}^{\ell}\sup_{u\in B_{T'}(r,i-1)}\norm{\+J f^{\phi}(\*m_u(t))}_2
    \\\label{eq:apply_trivial_bound} &\leq 16\Delta q \tp{\sup_{u\in B_{T'}(r,d), d\leq \ell-3}\norm{\+J f^{\phi}(\*m_u(t))}_2}^{\ell-2}
\end{align}
where $\*m_i(t) := f^{\phi,\ell\ra i}(\*m_\ell(t)) = \phi(f^{\ell\ra i}(\phi^{-1}(\*m_\ell(t))))$ and $\*m_i(t) = (\*m_u(t))_{u\in B_{T'}(r,i-1)}$.
Here \Cref{eq:apply_trivial_bound} follows from the following claim.
\begin{claim}\label{claim:trivial_bound}
    For any $t\in [0,1]$, we have that
    \begin{enumerate}
    \item $\norm{\+J f^\phi(\*m_u(t))}_2 \leq 4\sqrt{2\Delta q}$ holds for $u\in B_{T'}(r,\ell-1)$.
    \item $\norm{\+J f^\phi(\*m_u(t))}_2 \leq 2\sqrt{2\Delta q}$ holds for $u\in B_{T'}(r,\ell-2)$.
    \end{enumerate}
\end{claim}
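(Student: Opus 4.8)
The plan is to prove both bounds by a crude estimate that uses only $d\le\Delta$, a trivial bound on the ``covariance factor'' of the Jacobian, and a constant lower bound on the marginals $\*p_i(\overline c)$ of the input blocks $\*p_i=\phi^{-1}(\*m_u(t))_i$. Write $\+J f^\phi=\bigl[\+J_1 f^\phi\mid\cdots\mid\+J_d f^\phi\bigr]$, so that $\norm{\+J f^\phi}_2^2=\lambda_{\max}\bigl(\sum_{i=1}^d(\+J_i f^\phi)(\+J_i f^\phi)^\top\bigr)\le\sum_{i=1}^d\norm{\+J_i f^\phi}_2^2$, and recall the rank-one-sum form $\+J_i f^\phi=\sum_{c\in\+L(e_i)}\*a_{i,c}^\phi(\*b_{i,c}^\phi)^\top$ derived right after \Cref{prop:eq_of_Ji}; writing $A_i=[\*a_{i,c}^\phi]_c$ and $B_i=[\*b_{i,c}^\phi]_c$, this is $\+J_i f^\phi=A_iB_i^\top$, so $\norm{\+J_i f^\phi}_2\le\norm{A_i}_2\norm{B_i}_2$.

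Both factors admit trace bounds. The matrix $A_i^\top A_i$ is exactly the covariance matrix of the one-hot indicator $\tp{\1{\pi(e_i)=c}}_{c}$ under the distribution $\*p_r=f(\*p)$, hence $A_i^\top A_i\preceq\operatorname{diag}(\*p_r(i,\cdot))$ and $\norm{A_i}_2^2\le\max_c\*p_r(i,c)\le1$ (recall $f$ always returns a genuine distribution). The matrix $B_i^\top B_i$ is a Gram matrix whose diagonal entries are $\norm{\*b_{i,c}^\phi}_2^2=1/\*p_i(\overline c)$ (a computation already made in the proof of \Cref{lem:recursion_for_g}), so $\norm{B_i}_2^2\le\Tr(B_i^\top B_i)=\sum_{c\in\+L(e_i)}1/\*p_i(\overline c)\le q/\min_c\*p_i(\overline c)$. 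Combining and using $d\le\Delta$ gives $\norm{(\+J f^\phi)(\*m_u(t))}_2\le\sqrt{\Delta q/\min_{i,c}\*p_i(\overline c)}$, so it remains only to lower bound $\*p_i(\overline c)$ in each of the two cases.

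For $u\in B_{T'}(r,\ell-2)$, each input block $\*p_i$ is the image under the recursion $f$ of the nonnegative vectors produced one level lower; since $f$ returns a distribution, \Cref{lem:marginal_bound_1}---whose hypotheses do not require its inputs to be distributions---gives $\*p_i(c)\le\frac{d}{\beta-1+d}$, hence $\*p_i(\overline c)=1-\*p_i(c)\ge\frac{\beta-1}{\beta-1+\Delta}\ge\frac12$ since $\beta>\Delta+50$, and therefore $\norm{(\+J f^\phi)(\*m_u(t))}_2\le\sqrt{2\Delta q}\le2\sqrt{2\Delta q}$. For $u\in B_{T'}(r,\ell-1)$, each input block $\*p_i=\phi^{-1}\bigl(t\,\phi(\mu_i)+(1-t)\,\phi(\nu_i)\bigr)$ is a $\phi$-midpoint of two genuine Gibbs broom-marginals $\mu_i,\nu_i$ of $\beta$-extra subtree instances (these are exactly the boundary marginals $\*p_\ell,\*p_\ell'$ restricted to the broom in question), for which \Cref{lem:claw-marginal-generalized}, applied with $F$ the whole broom, gives $\mu_i(\overline c),\nu_i(\overline c)\ge\frac{\beta-1}{\beta-1+\Delta}\ge\frac12$. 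Since $\phi(x)=2\sqrt x$, dropping a nonnegative cross term yields $\*p_i(\overline c)=\sum_{\tau\colon c\notin\tau}\bigl(t\sqrt{\mu_i(\tau)}+(1-t)\sqrt{\nu_i(\tau)}\bigr)^2\ge t^2\mu_i(\overline c)+(1-t)^2\nu_i(\overline c)\ge\frac12\bigl(t^2+(1-t)^2\bigr)\ge\frac14$, hence $\norm{(\+J f^\phi)(\*m_u(t))}_2\le\sqrt{4\Delta q}=2\sqrt{\Delta q}\le4\sqrt{2\Delta q}$. If a level-$\ell$ child is a leaf of $T'$, the corresponding block is the point mass on the empty coloring and $\*p_i(\overline c)=1$, which is consistent.

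The main obstacle is not any single inequality but the bookkeeping that identifies which distributions the input blocks of $\+J f^\phi$ actually are: at a level-$(\ell-2)$ vertex they are outputs of $f$, so \Cref{lem:marginal_bound_1} is the right tool, whereas at a level-$(\ell-1)$ vertex they are $\phi$-midpoints of genuine Gibbs marginals of $\beta$-extra sub-instances, so \Cref{lem:claw-marginal-generalized} is the right tool---and the factor $2$ between the two bounds is exactly the $\frac12$ lost in passing through a $\phi$-midpoint. One should also record that every sub-instance appearing here is still $\beta$-extra: deleting the edges of $\Lambda\setminus\partial_{\tau_1,\tau_2}$ and absorbing the pinnings on $\partial_{\tau_1,\tau_2}$ inside the subtree each drop the degree and the list size of a neighbour by at most one, so $\abs{\+L(e)}-\deg(e)$ never falls below $\beta$.
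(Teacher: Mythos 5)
Your proof is correct and reaches both bounds, but it takes a genuinely different route from the paper's. The paper routes through the dimension-reduction machinery: it invokes \Cref{lem:convert_A_to_B} to pass to the matrix $\*B(\phi^{-1}(\*m_u(t)))$, then bounds $\norm{\*B}_1$ entry-wise using \Cref{cond:marginal}, arriving at $\norm{\+J f^\phi}_2^2 \leq \norm{\*B}_1 \leq 32\Delta q$ (resp.\ $8\Delta q$) exactly. You bypass $\*B$ entirely and work directly with the rank-one-sum form $\+J_i f^\phi = A_i B_i^\top$: you use $\lambda_{\max}(\sum_i M_i)\le\sum_i\lambda_{\max}(M_i)$ across children, submultiplicativity $\norm{A_iB_i^\top}_2\le\norm{A_i}_2\norm{B_i}_2$, the observation that $A_i^\top A_i$ is the covariance of a one-hot indicator (so $\norm{A_i}_2\le 1$), and a trace bound $\norm{B_i}_2^2\le\sum_c 1/\*p_i(\overline c)$; the two cases of the claim then differ only in the value of $\min_{i,c}\*p_i(\overline c)$, which you lower-bound by $1/2$ (output of $f$, via \Cref{lem:marginal_bound_1}) or $1/4$ (a $\phi$-midpoint, losing a factor $1/2$ through $t^2+(1-t)^2\ge 1/2$). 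The two approaches end up controlling essentially the same underlying quantities — the one-hot covariance and the diagonal of the $\*b$-Gram — but yours is more elementary, avoids any appeal to $\norm{\*B}_1$ (which, incidentally, requires some care since $\*B$ is not symmetric), and actually yields slightly sharper constants ($\sqrt{2\Delta q}$ and $2\sqrt{\Delta q}$) than the claim requires. Your accompanying bookkeeping — identifying the input blocks as outputs of $f$ at level $\ell-2$ vs.\ $\phi$-midpoints at level $\ell-1$, and noting that the restriction to $T'$ preserves $\beta$-extra-ness — matches the logic the paper relies on implicitly.
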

\begin{proof}[Proof of \Cref{claim:trivial_bound}]
    Fix $u\in B_{T'}(r,\ell-1)$.
    By the concavity of $\phi$, we have that for any $\tau \in C_u$ and $t\in [0,1]$,
    $$
        \phi^{-1}(\*m_u(t))(\tau)
        = \phi^{-1}(t \phi(\*p_u(\tau)) + (1-t)\phi(\*p_u'(\tau)))
        \leq t \*p_u(\tau) + (1-t)\*p_u'(\tau);
    $$
    $$
        \phi^{-1}(\*m_u(t))(\tau)
        \geq t^2\*p_u(\tau) + (1-t)^2\*p_u'(\tau).
    $$
    Since $\ell = \min_{e\in \partial_{\tau_1,\tau_2}} \-{dist}_{T'}(e_r,e) - 1$, by \Cref{lem:marginal_bound_1} and \Cref{lem:marginal_bound_2}, $\*p_\ell$ and $\*p_\ell'$ satisfy \Cref{cond:marginal}, thus we have that
    $$
        \norm{\+J f^\phi(\*m_u(t))}_2^2 \leq \norm{\*B(\phi^{-1}(\*m_u(t)))}_1
        \leq \Delta q\times \frac{1}{\tp{\frac 12 \frac{\beta -1}{\beta -1 + \Delta}}^2}\times 2
        = 32 \Delta q.
    $$
    For any $u\in B_{T'}(r,\ell-2)$, by \Cref{lem:marginal_bound_1} and \Cref{lem:marginal_bound_2}, $\phi^{-1}(\*m_u(t))$ satisfies \Cref{cond:marginal}.
    Therefore, 
    \[
        \norm{\+J f^\phi(\*m_u(t))}_2^2 \leq \norm{\*B(\phi^{-1}(\*m_u(t)))}_1
        \leq \Delta q\times \frac{1}{\tp{\frac{\beta -1}{\beta -1 + \Delta}}^2}\times 2
        = 8 \Delta q.
        \qedhere
    \]
\end{proof}
The second equation follows from $\+J f^{\phi,i\ra i-1}(\*m_i(t)) = \operatorname{diag}\{\+J f^{\phi}(\*m_u(t))\}_{u\in B_{T'}(r,i-1)}$.
By \Cref{lem:marginal_bound_1}, \Cref{lem:marginal_bound_2} and \Cref{lem:SI-broom}, for any $u\in B_{T'}(r,d)$ and $d\leq \ell-3$, $\*m_u(t)$ satisfies~\Cref{cond:marginal} and is $(1+\eta_\Delta)$-spectrally independent.
Plugging in~\Cref{thm:bound_Jacobian} and~\Cref{eq:contraction}, we have that 
\begin{align*}
     \norm{\mu_{e_r}^{\tau_1}- \mu_{e_r}^{\tau_2}}_{\-{TV}}
    &\leq \norm{\mu_{E_T(r)}^{\tau_1}- \mu_{E_T(r)}^{\tau_2}}_{\-{TV}}
    \\&= \frac 12\norm{f^{\phi,\ell\ra 0}(\*m_\ell(0)) - f^{\phi,\ell\ra 0}(\*m_\ell(1))}_1
    \\&\leq \frac{\sqrt{q^\Delta}}2\norm{f^{\phi,\ell\ra 0}(\*m_\ell(0)) - f^{\phi,\ell\ra 0}(\*m_\ell(1))}_2
    \\&\leq 32q^{\frac{\Delta + 2}2}\Delta^2(1-\delta)^{\ell-2}.
\end{align*}

We pick $C = \max\{32q^{\frac{\Delta + 2}2}\Delta^2(1-\delta)^{-3},(1-\delta)^{-4}\}$ to finish the proof.
\end{proof}

\subsection{Worst-case scenario}\label{sec:limit}

Though \Cref{thm:bound_Jacobian} establishes an upper bound on the 2-norm of the Jacobian matrix under certain conditions, in this section, we will introduce the ``worst'' pinning of $q$-edge coloring, which is the bottleneck of our analysis. 
In fact, with potential function $\phi(x) = 2\sqrt x$ and applying 2-norm for the correlation decay step, the best bound we expected to prove is $q > \tp{\frac{3+\sqrt 5}{2}+ o(1)}\Delta\approx 2.618\Delta$. However what we can only prove strong spatial mixing for instances of $(1+o(1))\Delta$-extra edge colorings, as we currently lack a better upper bound for $\*R$.
Note that the pinning that maximizes~\cref{eq:marginal_equation} is the same as the pinning in~\Cref{thm:worst_pinning} and~\cref{eq:marginal_equation} can indeed achieve the upper bound $\frac{\Delta-1}{(q-2\Delta+2)^2}$ under this worst pinning.

Before showing the worst-case scenario, we introduce the following technical lemma to calculate the eigenvalues of some simple matrices.
\begin{lemma}\label{lem:calc_eigenvalues}
    Given constant $k_1,k_2\neq 0$, 
    the eigenvalues of $k_1\*1\*1^\top +k_2 \!{Id}_n$ are either $k_2$ or $nk_1 + k_2$ where $\!{Id}_n$ is the identity matrix in $\mathbb{R}^{n\times n}$.
\end{lemma}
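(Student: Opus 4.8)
The statement concerns the eigenvalues of a rank-one perturbation of a scalar matrix, $\*M = k_1 \*1\*1^\top + k_2\!{Id}_n$, so the plan is to diagonalize it directly by exhibiting a basis of eigenvectors. First I would observe that $\*1$ is an eigenvector: since $\*1^\top\*1 = n$, we get $\*M\*1 = k_1\*1(\*1^\top\*1) + k_2\*1 = (nk_1 + k_2)\*1$, yielding the eigenvalue $nk_1 + k_2$ with multiplicity at least one. Next I would take any vector $\*v$ orthogonal to $\*1$, i.e.\ with $\*1^\top\*v = 0$; then $\*M\*v = k_1\*1(\*1^\top\*v) + k_2\*v = k_2\*v$, so every such $\*v$ is an eigenvector with eigenvalue $k_2$. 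The orthogonal complement of $\mathrm{span}(\*1)$ in $\mathbb{R}^n$ has dimension $n-1$, so $k_2$ is an eigenvalue of multiplicity at least $n-1$.

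Finally I would argue that these account for all eigenvalues: we have produced a decomposition $\mathbb{R}^n = \mathrm{span}(\*1) \oplus \*1^\perp$ into $\*M$-invariant subspaces of dimensions $1$ and $n-1$, and $\*M$ acts as the scalar $nk_1 + k_2$ on the first and as the scalar $k_2$ on the second, so the spectrum of $\*M$ is exactly $\{nk_1 + k_2\} \cup \{k_2\}$ (with the noted multiplicities, and these two values coincide precisely when $k_1 = 0$, which is excluded, or when $n$ and the $k_i$ conspire, but that does not affect the set-theoretic statement). Hence every eigenvalue is either $k_2$ or $nk_1 + k_2$, as claimed.

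There is essentially no obstacle here — the lemma is elementary linear algebra and the only thing to be careful about is phrasing the multiplicity bookkeeping so that the edge case $n = 1$ (where $\*1^\perp = \{0\}$ and the only eigenvalue is $nk_1 + k_2 = k_1 + k_2$) is not misstated; writing the conclusion as ``the eigenvalues are either $k_2$ or $nk_1+k_2$'' sidesteps this cleanly. If one wanted a more computational route one could instead note $\det(\*M - \lambda\!{Id}_n) = (k_2 - \lambda)^{n-1}(nk_1 + k_2 - \lambda)$ via the matrix determinant lemma, but the invariant-subspace argument is shorter and needs no extra machinery.
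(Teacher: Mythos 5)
Your proposal is correct and takes essentially the same approach as the paper: both arguments rest on the decomposition of $\mathbb{R}^n$ into $\mathrm{span}(\*1)$ and its orthogonal complement, the paper working backwards from the eigenvalue equation to force $\*v\parallel\*1$ or $\*1\perp\*v$, and you working forwards by exhibiting these eigenvectors directly. The two are logically equivalent and equally short.
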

\begin{proof}[Proof of~\Cref{lem:calc_eigenvalues}]
    The eigenvalues $\lambda$ and the corresponding eigenvectors $\*v$ satisfies that 
    \begin{align}\label{eq:eigen_equation}
        (k_1 \*1\*1^\top + k_2\!{Id})\*v = \lambda \*v
        \implies k_1\langle\*1,\*v\rangle \*1 = (\lambda - k_2)\*v.
    \end{align}
    Therefore, there are only two cases for~\Cref{eq:eigen_equation}.
    $$
    \begin{cases}
        \lambda = k_2 \land \*1\perp \*v
        \\\*v \parallel \*1
    \end{cases}.
    $$
    Plugging in $\*v = \*1$ and~\Cref{eq:eigen_equation}, we get
    \[
        k_1\langle\*1,\*1\rangle = \lambda - k_2 \implies \lambda = k_2 + nk_1.
        \qedhere
    \]
\end{proof}
Now we show the worst pinning and corresponding norm value of $\*B$ and $(\+J f^\phi)(\*p)$.
\begin{theorem}\label{thm:worst_pinning}
    Under some specific pinning, the norm of $\*B$ satisfies lower bound that
    $$
    \|(\+J f^\phi)(\*p)\|_2^2 = \|\*B\|_2 = \frac{\Delta-1}{(q-\Delta)(q-2\Delta+2)}.
    $$
    Then $\|\*B\|_2 < \frac 1\Delta$ implies that $q > \tp{\frac{3+\sqrt 5}{2}+o(1)}\Delta$.
\end{theorem}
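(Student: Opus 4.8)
# Proof Proposal for Theorem \ref{thm:worst_pinning}

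\textbf{The plan is to} exhibit an explicit family of pinnings on a tree whose root has $\Delta$ children, each child in turn having $\Delta-1$ further children, and to compute the matrix $\*B$ of \Cref{prop:convert_B_to_Cov} exactly on this instance. The natural candidate—matching the pinning that saturates the marginal bound in \Cref{cond:marginal}—is one where, after pinning, each subtree rooted at a child $v_i$ has its broom constrained so that the marginal on $e_i$ is as ``flat'' as possible over the remaining $q-\Delta$ colors not forbidden by the pinned edges around $v_i$, while all non-root edges see exactly $\Delta-1$ extra colors. Concretely, one pins a set of $\Delta-1$ distinct colors on the pendant edges hanging off each grandchild, and arranges the lists so that the only symmetry-breaking is the single forbidden color coming from the edge $e_i$ toward the root.

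\textbf{First I would} identify, under this pinning, the distributions $\*p_r$ and $\*p_i$ explicitly. By symmetry $\*p_i(\bar c) = \frac{q-2\Delta+2}{q-\Delta}$ for the colors $c$ that are ``live'' at $v_i$, and $\*p_r(i,c) = \frac{1}{q-\Delta}$ uniformly; the pairwise marginals $\*p_r(i,c_1,j,c_2)$ and $\*p_i(\bar{c_1},\bar{c_2})$ likewise take only two values depending on whether $c_1=c_2$. Plugging these into the formula
\[
\*B((i,c_2),(j,c_4)) = \sum_{c_3 \in \+L(e_j)}
\frac{\*p_j(\bar{c_3},\bar{c_4})}{\*p_j(\bar{c_3})\*p_j(\bar{c_4})}\tp{\*p_r(j,c_3,i,c_2) - \*p_r(j,c_3)\*p_r(i,c_2)}
\]
should collapse $\*B$ into a block matrix whose blocks are each a combination of an all-ones matrix $\*1\*1^\top$ and a scaled identity, i.e. of the form $k_1\*1\*1^\top + k_2\!{Id}$ on each index block with a rank-one coupling across blocks. \Cref{lem:calc_eigenvalues} then yields the eigenvalues in closed form, and keeping the dominant one gives $\|\*B\|_2 = \frac{\Delta-1}{(q-\Delta)(q-2\Delta+2)}$. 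The identity $\|(\+J f^\phi)(\*p)\|_2^2 = \|\*B\|_2$ should follow because the inequality $\lambda_{\max}(\*A)\le\|\*B\|_2$ from \Cref{lem:convert_A_to_B} is tight when all the covariance matrices are themselves rank-one-plus-diagonal and aligned (so the trace expansion loses nothing); alternatively one verifies directly that the eigenvector of $\*B$ pulls back through $\alpha_\pi,\beta_\pi$ to an eigenvector of $\*A$ with the same eigenvalue. For the final clause, solving $\frac{\Delta-1}{(q-\Delta)(q-2\Delta+2)} < \frac1\Delta$ for $q$ as $\Delta\to\infty$: writing $q = c\Delta$, the leading-order condition is $(c-1)(c-2) > 1$, i.e. $c^2-3c+1>0$, whose larger root is $c = \frac{3+\sqrt5}{2}$, giving $q > \tp{\frac{3+\sqrt5}{2}+o(1)}\Delta$.

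\textbf{The hard part will be} the exact combinatorial computation of $\*p_r$, $\*p_i$ and their pairwise marginals on the proposed instance—in particular verifying that the subtree partition functions $Z_{T_i}(c\notin c(E_{T_i}(v_i)))$ are genuinely independent of $c$ among the live colors, so that no residual asymmetry creeps in. One must choose the grandchild pinnings carefully: if two pendant edges at different grandchildren are pinned to the same color, or if the pinned colors overlap with the root-edge color in an uneven way, the symmetry is broken and $\*B$ is no longer block-circulant. I expect the cleanest choice is to make every subtree $T_i$ a ``star of stars'' where the grandchildren's pendant edges are pinned to a fixed set $S$ of $\Delta-1$ colors (the same set for every grandchild within $T_i$), forcing each edge $E_{T_i}(v_i)$ to avoid exactly those $\Delta-1$ colors plus $\pi(e_i)$, so that $|\+L^\tau(\text{non-root edge})| = q - (\Delta-1)$ and the live-color count at the root drops to $q-\Delta$. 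Once the symmetry is pinned down, everything else is linear algebra handled by \Cref{lem:calc_eigenvalues}, and the asymptotic inequality is elementary.
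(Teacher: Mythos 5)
Your proposed instance differs from the paper's in a way that creates a real gap. The paper pins $\Delta-1$ of the root's $\Delta$ edges, leaving the root with a single unpinned edge $e_1$; this makes $|C_r| = |\+L(e_1)| = q-\Delta+1$, so $\*A = (\+J f^\phi)(\*p)(\+J f^\phi)(\*p)^\top$ is itself a $(q-\Delta+1)\times(q-\Delta+1)$ matrix, and the authors simply verify entrywise that $\*A = \*B$, giving equality $\|(\+J f^\phi)(\*p)\|_2^2 = \|\*B\|_2$ outright. Your instance leaves all $\Delta$ root edges free, which (a) makes $|C_r| = (q-\Delta+1)^{\underline{\Delta}}$, so $\*A$ lives in an exponentially large space and cannot equal $\*B$ entrywise; (b) requires $\*B$ to be a $\Delta(q-\Delta+1)$-dimensional block matrix, whose spectrum is not immediately covered by \Cref{lem:calc_eigenvalues} (which only handles $k_1\*1\*1^\top + k_2\!{Id}$, not block matrices with off-diagonal coupling blocks); and (c) puts the full weight of the equality on the claim that the inequality $\lambda_{\max}(\*A)\le\|\*B\|_2$ from \Cref{lem:convert_A_to_B} is tight because the covariance structure is ``aligned,'' which you do not actually prove and which is not automatic — \Cref{lem:convert_A_to_B}'s proof uses Cauchy--Schwarz twice, and tightness requires the eigenvector of $\*B$ to pull back through all of $\set{\alpha_\pi,\beta_\pi}_{\pi\in C_r}$ simultaneously.

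There is also an arithmetic concern in your symmetry guess: with each subtree edge list having size $q-\Delta+1$ (after removing the $\Delta-1$ grandchild colors) and a broom of $\Delta-1$ edges at $v_i$, a straightforward count gives $\*p_i(\bar c) = \frac{q-2\Delta+2}{q-\Delta+1}$, not $\frac{q-2\Delta+2}{q-\Delta}$ as you state, and the root marginal would be uniform over $q-\Delta+1$ colors, not $q-\Delta$. These off-by-one slips would propagate into the final closed form and need to be fixed before the instance can be claimed to match the target $\frac{\Delta-1}{(q-\Delta)(q-2\Delta+2)}$. The moral is that the paper's choice to pin the root down to a pendant edge is not a cosmetic simplification; it is what makes both the closed-form computation and the equality $\|(\+J f^\phi)(\*p)\|_2^2=\|\*B\|_2$ go through. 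Your final asymptotic step (solving $(c-1)(c-2)>1$ to get $c>\frac{3+\sqrt5}{2}$) is correct and matches the paper.
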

\begin{proof}[Proof of~\Cref{thm:worst_pinning}]
    Consider the following instance of edge coloring $(G,\+L)$ generated from a $q$-coloring instance by pinning:
    \begin{enumerate}
        \item $\deg(r) = 1$, $\+L(e_1) = \{\Delta,...,q\}$, that is, in original instance $r$ has $\Delta$ children and we pin $e_2,...,e_{\Delta}$ with $1,2,...,\Delta-1$.
        \item $\deg(v_1) = \Delta$ and for any $u\in N(v_1)$ and $u\neq r$, $\+L(\{u,v_1\}) = \{\Delta,...,q\}$, that is, we assume $u$ has $\Delta-1$ children and we pin the children of $u$ with color $1,2,...,\Delta-1$.
    \end{enumerate}
    Then by~\Cref{prop:convert_B_to_Cov}$, \*B$ is equivalent to a matrix in $\mathbb{R}^{(q-\Delta+1)\times (q-\Delta+1)}$ and holds for the following equation
    $$
        \*B = -\frac{\Delta-1}{(q-\Delta+1)(q-\Delta)(q-2\Delta+2)}\tp{\*1\*1^\top - \!{Id}} + \frac{\Delta-1}{(q-\Delta+1)(q-2\Delta+1)}\!{Id}.
    $$
    $\*B$ is a symmetrical matrix, thus the singular values equals to the eigenvalues. 
    Applying~\Cref{lem:calc_eigenvalues}, we have that
    $$
        \|\*B\|_2 = \lambda_{\max}(\*B) = \frac{\Delta-1}{(q-\Delta)(q-2\Delta+2)}.
    $$
    For $(\+J f^\phi)(\*p)$, since $\abs{C_r} = |\+L(e_1)| = q-\Delta+1$, $(\+J f^\phi)(\*p)(\+J f^\phi)^\top(\*p) = \*A \in \mathbb{R}^{(q-\Delta+1)\times (q-\Delta+1)}$.
    Now it is sufficient to show that $\*A = \*B$.
    By~\Cref{eq:equation_A}, we have that
    \begin{align*}
        \*A(c_1,c_2) &= \frac 1{q-\Delta+1}\sum_{c_3,c_4\in \+L(e_1)}\langle \*b^\phi_{1,c_3}, \*b^\phi_{1,c_4}\rangle\tp{\1{c_1=c_3} - \frac 1{q-\Delta+1}}\tp{\1{c_2 = c_4} - \frac 1{q-\Delta+1}}
        \\&= \begin{cases}
            \frac{\Delta-1}{(q-\Delta+1)(q-2\Delta+2)} &, c_1 = c_2
            \\\frac{1-\Delta}{(q-\Delta+1)(q-\Delta)(q-2\Delta+2)} &, c_1\neq c_2
        \end{cases}
        \\&= \*B(c_1,c_2).
    \end{align*}
\end{proof}
\Cref{thm:worst_pinning} indicates the limitations of our current analysis by providing a lower bound on the norm of the Jacobian matrix. This indicates the best bound one can expect to use $2$-norm and the potential function $\phi(x) = 2\sqrt{x}$ is $q\approx 2.618\Delta$.


\section{Covariance matrix on brooms}\label{sec:covariance}
We will give an upper bound for the covariance matrix defined in \Cref{def:covariance} in this section.
\begin{lemma}\label{lem:SI-broom}
    Given a tree $T$ rooted with $r$ of depth $\ell+1$ and weight functions $\*p_v:E_{T_v}(v)\rightarrow \bb{R}_{\geq 0}$ for any $v\in B_T(r,\ell)$. If $\ell\geq 2$ and $\beta \geq \Delta + 50$, then $\*p_r\defeq  f^{\ell\rightarrow 0}(\tp{\*p_v}_{v\in B(r,\ell)})$ is $(1+\eta_\Delta)$-spectrally independent, i.e.
    \[
    \!{Cov}(\*p_r)\mle (1+\eta_{\Delta})\Pi(\*p_r).
    \]
    where $\eta_{\Delta}$ is $O\tp{\frac{\log^2 \Delta}{\Delta}}$.
\end{lemma}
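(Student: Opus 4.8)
The plan is to invoke the matrix trickle-down machinery of \cite{ALOG21}, in the form adapted to edge colorings in \cite{WZZ24}. The object to analyze is the broom distribution $\*p_r = f^{\ell\to 0}\tp{(\*p_v)_{v\in B_T(r,\ell)}}$, and the target is the Loewner bound $\!{Cov}(\*p_r)\preceq (1+\eta_\Delta)\Pi(\*p_r)$. The structural input is exactly the one-step identity already isolated in \Cref{prop:convert_B_to_Cov}: after a single level, $\!{Cov}(\*p_r)$ is obtained from the children's broom covariances $\!{Cov}(\*p_{v_i})$ and the diagonal marginal matrices $D_T,\Pi$ through an explicit congruence-and-product expression, and the Loewner order is preserved under the relevant congruences (as already used in the proof of \Cref{thm:bound_Jacobian}). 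Read through the matrix trickle-down theorem, a matrix-weighted spectral-independence bound at the children together with a contraction of the associated transfer operator yields a matrix-weighted bound at $r$; the point of carrying a matrix weight rather than a scalar constant is precisely to make this local-to-global step go through for a distribution that is only a normalized pushforward of arbitrary nonnegative weights, not a genuine Gibbs marginal.

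Concretely, I would set up a recursion on spectral-independence constants along the tree. Writing $\*p_v^{(d)}$ for the broom weights at $v\in B_T(r,d)$ produced by $f^{\ell\to d}$, I want $\!{Cov}(\*p_v^{(d)})\preceq (1+\eta_d)\Pi(\*p_v^{(d)})$, and from \Cref{prop:convert_B_to_Cov} together with the marginal bounds of \Cref{sec:marginals} I expect a recursion $\eta_{d-1}\le h(\eta_d)$ with $h$ depending only on $\beta,\Delta$. The marginal bounds apply at every level because the output of $f$ is always a probability distribution obeying \Cref{lem:marginal_bound_1} and \Cref{lem:marginal_bound_2} regardless of whether the inputs encode distributions; this is why $\beta\ge\Delta+50$ (essentially $\beta$ barely above $\Delta$) suffices here, in contrast with the stronger hypothesis that \Cref{thm:bound_Jacobian} needs for the contraction of $f^\phi$ itself. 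The content I would aim to prove is that $h$ has a fixed point $\eta^* = O(\log^2\Delta/\Delta)$, is a contraction near it, and moreover satisfies $h(\eta)\le\eta^*$ already for every $\eta$ that obeys only the marginal bounds (with no spectral-independence hypothesis) — which is what lets the arbitrariness of the bottom weights wash out. The induction then runs as follows: at level $\ell$ the $\*p_v$ are arbitrary nonnegative vectors; after one application of $f$ the weights $\*p_v^{(\ell-1)}$ are genuine distributions satisfying the marginal bounds; after the second application $h(\cdot)\le\eta^*$ gives $\eta_{\ell-2}\le\eta^*$, and the contraction keeps $\eta_d\le\eta^*$ for all $d\le\ell-2$, so $\eta_0\le\eta^*=O(\log^2\Delta/\Delta)$. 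This is exactly where $\ell\ge 2$ enters — two levels of honest recursion below the root are needed before the estimate can be started.

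The main obstacle will be establishing this one-step matrix recursion with the correct contraction, and in particular producing an $O(\log^2\Delta/\Delta)$ fixed point when $\beta$ is only $\Delta+50$. Reading \Cref{prop:convert_B_to_Cov} at the level of matrix spectral independence, one must bound the operator $D_T^{-1}\*R D_T^{-1}$ between the relevant inner-product spaces and control its interaction with $\!{Cov}(\*p_r)$; the delicate point is that a broom can carry up to $\sim q$ colors whose marginal-type ratios $\*p_{v_i}(c)/\*p_{v_i}(\bar c)$ on a child broom are each as small as $\Theta(1/\Delta)$, so a naive union bound over colors loses a factor $\Delta$ and destroys the estimate. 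I expect the fix to be a second-moment/truncation argument — separating colors by the scale of their marginal around a threshold $\sim\log\Delta$, handling the bulk directly and the tail through \Cref{lem:marginal_bound_1} and \Cref{cor:marginal-bound-gamma-delta} — which recovers the bound at the cost of the $\operatorname{poly}(\log\Delta)$ slack that surfaces in $\eta_\Delta$. Pinning down the precise constants (e.g. that $\beta\ge\Delta+50$, rather than something smaller, is what the truncation and the trickle-down contraction jointly require) is the bulk of the work.
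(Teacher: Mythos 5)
There is a genuine structural gap in your proposal: you never isolate the one idea that makes the paper's argument work. You propose a \emph{level-by-level} recursion on spectral-independence constants along the tree, $\eta_{d-1}\le h(\eta_d)$, driven by \Cref{prop:convert_B_to_Cov}. But \Cref{prop:convert_B_to_Cov} does not give a Loewner inequality between $\!{Cov}(\*p_r)$ and the children's covariances $\!{Cov}(\*p_{v_i})$ — it factors the transfer matrix $\*B$ as $\!{Cov}(\*p_r) D_T^{-1}\*R D_T^{-1}$, which feeds the Jacobian-norm estimate in \Cref{thm:bound_Jacobian}, not a recursion on covariances. The spectral independence of $\*p_r$ concerns correlations between the $d$ edges of the \emph{root} broom $E(r)$ (forced to be properly colored among themselves), and there is no one-step matrix map sending "children are $(1+\eta)$-SI on their own brooms" to "the root is $(1+h(\eta))$-SI on its broom." Your proposal acknowledges this is "the main obstacle" and speculates about a truncation by marginal scale; that is not what resolves it.

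The paper avoids your recursion entirely. The key step is \Cref{lem:p=mu}: by treating the level-$\ell$ vectors $\*p_v$ as \emph{boundary weight functions} of a single weighted edge-coloring distribution $\mu$ (\Cref{def:weighted-coloring}), one has $\*p_r = f^{\ell\to 0}(\*p) = \mu_K$ \emph{exactly}, where $K=E(r)$. This replaces the arbitrary tower of recursion inputs with one fixed, well-behaved Gibbs-like measure whose marginal on the root broom is what we want. The matrix trickle-down theorem (\Cref{thm:mtd-inductive}) is then run on the simplicial complex of proper colorings of $K$ alone — the recursion is over the co-dimension of a pinning $\tau$ \emph{inside} $K$, not over tree depth. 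The $\ell\ge 2$ hypothesis is what makes $K$ disjoint from and non-adjacent to the boundaries, so that the marginal bounds \Cref{lem:marginal-upper-weighted,lem:marginal-lower-weighted,lem:marginal-bound-weighted-1,lem:marginal-bound-weighted-2} for the weighted instance hold on $K$; it is not "two honest levels of recursion before the estimate can be started." The $O(\log^2\Delta/\Delta)$ factor comes from the choice of the coefficient sequence $\{b_k\}$ in the trickle-down construction (as in Lemma 26 of \cite{WZZ24}), not from a bucketing of colors by marginal scale. Your instinct to use matrix trickle-down and your identification of $\ell\ge 2$ as playing a role are right, but the reformulation via a weighted Gibbs measure with boundary weights — the device that makes the trickle-down applicable at all — is absent, and without it the argument does not close.
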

\subsection{Weighted edge coloring instance}
\label{sec:weight-edge-coloring}
Given a tree $T$ and associated $\beta$-extra color lists $\+L$ for all edges in $T$, an weighted edge coloring instance is a distribution over proper list colorings $(T, \+L)$.
There are non-intersecting connected subgraphs $\set{K_i}_{i\in [N]}$ in $T$
called \emph{boundaries} and associated \emph{weight functions} $\set{w_i}_{i\in [N]}$
such that $w_i : C_{v_i} \to \mathbb R_{\ge 0}$.
\begin{definition}[Distribution of weighted list edge coloring]\label{def:weighted-coloring}    
    Given a list edge coloring instance $(T, \+L)$,
    boundaries $\set{K_i}_{i\in [N]}$
    and weighted functions $\set{w_i}_{i\in [N]}$
    as above, define the associated weighted edge coloring instance
    be the distribution $\mu$ on $\Omega$ such that
    \begin{align*}
        \mu(\sigma) \defeq \frac1Z \prod_{i\in [N]} w_i(\sigma|_{K_i}),
    \end{align*}
    where $Z\defeq \sum_{\sigma\in \Omega}\prod_{i\in [N]} w_i(\sigma|_{K_i})$
    is the partition function.
\end{definition}

Notice that, if we pick an $\omega_i\in\Omega_{K_i}$ for each $i\in [N]$,
then
\[\mu^{\omega_1\cup\cdots\cup \omega_N}(\sigma)
\propto
\begin{cases}
 \frac{1}{Z}   \prod_{i\in [N]} w_i(\sigma|_{K_i}),  &\forall i\in [N], \sigma|_{K_i} = \omega_i,
 \\ 0, & \text{otherwise},
\end{cases}
\]
which means $\mu$ is a distribution over uniform edge colorings conditioned on a 
pinning on all $K_i$. This leads to the following lemma.
\begin{lemma}\label{lem:weighted-decomposition-to-unweighted}
    We have
    \[
        \mu = \sum_{\omega_1\in \Omega_{K_1},\cdots, \omega_N\in \Omega_{K_N}}
        \mu_{K_1\cup\cdots\cup K_N}(\omega_1\cup\cdots\cup\omega_{N}) \mu^{\omega_1\cup\cdots\cup\omega_N}.
    \]
    That is, the distribution on weighted coloring instance in \Cref{def:weighted-coloring}
    can be expressed as a mixture of uniform distributions over edge colorings.
\end{lemma}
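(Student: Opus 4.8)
\textbf{Proof proposal for \Cref{lem:weighted-decomposition-to-unweighted}.}

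The plan is to prove the identity pointwise, for each $\sigma\in\Omega$, as an instance of the law of total probability for $\mu$ with respect to the partition of its support induced by the restriction map $\sigma\mapsto(\sigma|_{K_1},\dots,\sigma|_{K_N})$. Write $K\defeq K_1\cup\cdots\cup K_N$. For a tuple $(\omega_1,\dots,\omega_N)$ with $\omega_i\in\Omega_{K_i}$, the quantity $\mu_{K_1\cup\cdots\cup K_N}(\omega_1\cup\cdots\cup\omega_N)$ is by definition the probability under $\mu$ that a sample agrees with $\omega_i$ on $K_i$ for every $i$, and these events form a partition of the support of $\mu$. Hence for every fixed $\sigma\in\Omega$ the right-hand side has exactly one potentially nonzero summand, namely the one with $\omega_i=\sigma|_{K_i}$ for all $i$, and by the defining property of conditioning that summand equals $\mu_{K}(\sigma|_{K})\cdot\mu^{\sigma|_K}(\sigma)=\mu(\sigma)$ (this last equality holds whether or not $\mu_K(\sigma|_K)>0$: if it vanishes then $\mu(\sigma)\le\mu_K(\sigma|_K)=0$ as well, and both sides are $0$). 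So the displayed identity will follow once we know the components $\mu^{\omega_1\cup\cdots\cup\omega_N}$ are what the lemma asserts.

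The one point that needs a short verification is therefore that, whenever $\mu_{K_1\cup\cdots\cup K_N}(\omega_1\cup\cdots\cup\omega_N)>0$, the conditional $\mu^{\omega_1\cup\cdots\cup\omega_N}$ is the uniform distribution over proper edge colorings compatible with that boundary pinning, independent of the weight functions — this is exactly the observation recorded in the display preceding the lemma, which I would spell out as follows. Fix such a tuple and abbreviate $w\defeq\prod_{i\in[N]}w_i(\omega_i)$; positivity of the marginal forces $w>0$ and forces the existence of at least one $\sigma\in\Omega$ with $\sigma|_{K_i}=\omega_i$ for all $i$. For any such $\sigma$ the product $\prod_{i\in[N]}w_i(\sigma|_{K_i})$ equals the constant $w$, so from \Cref{def:weighted-coloring},
\[
\mu^{\omega_1\cup\cdots\cup\omega_N}(\sigma)=\frac{\mu(\sigma)\,\1{\forall i:\sigma|_{K_i}=\omega_i}}{\sum_{\sigma'\in\Omega}\mu(\sigma')\,\1{\forall i:\sigma'|_{K_i}=\omega_i}}=\frac{(w/Z)\,\1{\forall i:\sigma|_{K_i}=\omega_i}}{(w/Z)\,\bigl|\set{\sigma'\in\Omega:\forall i,\ \sigma'|_{K_i}=\omega_i}\bigr|},
\]
and the factor $w/Z$ cancels, leaving precisely the uniform distribution over proper edge colorings that agree with $\omega_1\cup\cdots\cup\omega_N$. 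Combining this with the previous paragraph gives the lemma.

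I expect no real obstacle here: the content is entirely bookkeeping. The only care required is the handling of boundary tuples whose joint weight or joint feasibility is zero — these have zero mixing weight $\mu_{K_1\cup\cdots\cup K_N}(\cdot)$ and may simply be dropped from the sum (or their $\mu^{\cdot}$-component defined arbitrarily). The substantive fact, that conditioning a weighted coloring on all of its boundaries annihilates the weight factors through cancellation in the normalization, is the computation made explicit in the display above.
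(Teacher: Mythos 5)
Your proof is correct and takes the same approach as the paper, which simply cites the law of total probability and relies on the observation (in the display preceding the lemma) that conditioning on all boundaries makes the weight factors cancel. You have merely spelled out that same bookkeeping in detail, including the harmless degenerate cases.
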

\begin{proof}
    This is just an application of the total probability formula.
\end{proof}

With \Cref{lem:weighted-decomposition-to-unweighted}, we can prove
weighted edge colorings inherit the marginal bound \Cref{lem:claw-marginal-generalized}
on non-boundary edges.
\begin{lemma}[Marginal upper bound - weighted version]\label{lem:marginal-upper-weighted}
    Consider a weighted edge coloring instance on a tree $T=(V, E)$
    as defined in \Cref{def:weighted-coloring},
    whose distribution is denoted by $\mu$, and a pinning $\xi\in \Omega_{S}$ for some subset $S\subset E$.
    Then for any $i\in V$, $F\subseteq E_i\setminus \tp{\bigcup_{i\in [N]}K_i\cup S}$ and color $a$:
    denoting $\beta\defeq\min_{e\in F}\set{\abs{\+L^\xi(e)}-\deg(e)}$, we have
    \[\mu^\xi(a\in c(F)) \le \frac{|F|}{\beta - 1 + |F|}.\]
\end{lemma}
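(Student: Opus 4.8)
\textbf{Proof proposal for Lemma~\ref{lem:marginal-upper-weighted}.}
The plan is to reduce the weighted statement to the already-established unweighted marginal bound in \Cref{lem:claw-marginal-generalized} via the mixture decomposition of \Cref{lem:weighted-decomposition-to-unweighted}. First I would invoke \Cref{lem:weighted-decomposition-to-unweighted} (applied to the instance conditioned on $\xi$, or equivalently after folding $\xi$ into the boundaries) to write $\mu^\xi$ as a convex combination $\mu^\xi = \sum_{\omega} \lambda_\omega\, \nu_\omega$, where the sum ranges over all proper partial colorings $\omega = \omega_1\cup\cdots\cup\omega_N$ on the boundaries $\bigcup_i K_i$ compatible with $\xi$, each $\nu_\omega = \mu^{\xi\cup\omega}$ is the \emph{uniform} distribution over proper edge colorings of $(T,\+L)$ conditioned on $\xi$ and $\omega$, and $\lambda_\omega = \mu^\xi_{\cup K_i}(\omega) \ge 0$ sums to $1$. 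Since $a\in c(F)$ is an event, $\mu^\xi(a\in c(F)) = \sum_\omega \lambda_\omega\, \nu_\omega(a\in c(F))$, so it suffices to bound $\nu_\omega(a\in c(F))$ uniformly over $\omega$ by $\tfrac{|F|}{\beta-1+|F|}$.

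Next I would observe that each $\nu_\omega$ is, by construction, the uniform distribution over proper list edge colorings of a genuine (unweighted) list edge coloring instance: namely the instance obtained from $(T,\+L)$ by deleting all edges in the support of $\xi\cup\omega$ (here $S\cup\bigcup_i K_i$) and, for each surviving edge, removing from its list the colors used by pinned neighbours. Call this reduced instance $(G', \+L')$ and note $F$ survives into $G'$ because $F\cap(S\cup\bigcup_i K_i)=\emptyset$ by hypothesis. The key point is the list-size bookkeeping: for an edge $e\in F$, its new list is $\+L^{\xi\cup\omega}(e) = \+L^\xi(e)$ minus possibly some colors forced by the extra pinning on the $K_i$'s; but simultaneously its degree in $G'$ drops by exactly the number of its neighbours that were pinned. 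One checks that for each such neighbour removed, the list loses at most one color, so $|\+L'(e)| - \deg_{G'}(e) \ge |\+L^\xi(e)| - \deg(e) \ge \beta$; hence $(G',\+L')$ restricted around $F$ still satisfies the $\beta$-extra condition required by \Cref{lem:claw-marginal-generalized} (with the vertex $v$ taken to be the common endpoint of the edges in $F\subseteq E_i$). Applying \Cref{lem:claw-marginal-generalized} to $(G',\+L')$ then gives $\nu_\omega(a\in c(F)) \le \tfrac{|F|}{\beta-1+|F|}$, and averaging over $\omega$ closes the argument.

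The main obstacle I anticipate is the list-size/degree accounting in the second step — making precise that deleting a pinned neighbour of $e$ removes at most one color from $\+L^\xi(e)$ while removing exactly one from $\deg(e)$, so that the defect $|\+L(e)|-\deg(e)$ is nondecreasing under pinning-by-deletion, and handling the edge case where two pinned neighbours happen to have been assigned the same color (then the list loses fewer colors, which only helps). A secondary subtlety is that \Cref{lem:weighted-decomposition-to-unweighted} as stated decomposes the unconditional $\mu$; to get $\mu^\xi$ one either re-runs the total-probability argument with $\xi$ fixed throughout, or absorbs $\xi$ into an enlarged collection of boundaries with $0/1$ weight functions — I would spell out whichever is cleaner. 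Everything else is a direct quotation of \Cref{lem:claw-marginal-generalized} and convexity of expectations of an indicator.
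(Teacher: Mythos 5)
Your proposal is correct and follows essentially the same route as the paper: decompose $\mu^\xi$ as a mixture of uniform conditional distributions via \Cref{lem:weighted-decomposition-to-unweighted}, then apply \Cref{lem:claw-marginal-generalized} termwise and average. In fact you spell out the list-size/degree accounting (the defect $|\+L(e)|-\deg(e)$ is nondecreasing under pinning-by-deletion) more carefully than the paper, which simply asserts that the residual instance is $\beta$-extra around $F$; the two ``subtleties'' you flag are genuine but minor, and the paper handles them the same way, implicitly.
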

\begin{proof}
    We write $\mu^\xi$ into a mixture of uniform distributions on edge colorings
    by \Cref{lem:weighted-decomposition-to-unweighted}:
    \begin{align*}
        \mu^\xi = \sum_{\omega_1\in \Omega_{K_1},\cdots, \omega_N\in \Omega_{K_N}}
        \mu^\xi_{K_1\cup\cdots\cup K_N}(\omega_1\cup\cdots\cup\omega_{N}) \mu^{\xi\cup\omega_1\cup\cdots\cup\omega_N}.
    \end{align*}
    It suffices to prove the bound for any $\mu^{\xi\cup\omega_1\cup\cdots\cup\omega_N}$.
    Notice that $\mu^{\xi\cup\omega_1\cup\cdots\cup\omega_N}$ are uniform distributions 
    on proper colorings defined on $T\setminus\tp{\bigcup_{i\in [N]}K_i\cup S}$, with $\beta$-extra colors 
    on all edges in $F$,
    so \Cref{lem:claw-marginal-generalized} applies and proves the lemma.
\end{proof}

We also need the following lower bound in the matrix trickle-down.
\begin{lemma}\label{lem:marginal-lower-weighted}
    Consider a weighted edge coloring instance on a tree $T$ with $\beta$-extra
    color lists $\+L$.
    If $K=E(v)$ for some vertex $v$ in $T$ and none of the neighbours of $K$ is in the boundary,
    $\tau\in \Omega_S$ is a pinning on some subset $S \subset E$, and the number of 
    unpinned edges in $K$ is $k$. Then for any unpinned $e\in K$ and $c\in \+L^\tau(e)$,
    \[
    \mu^{\tau}_e(c) \ge \frac{(\beta-1)^2}{(\beta+k-2)(\beta+\Delta-2)}\frac{1}{\ell_e^\tau-k+1}.
    \]
    where $\ell^\tau_e\defeq|\+L^\tau(e)|$.
\end{lemma}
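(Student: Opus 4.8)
The plan is to reduce the weighted instance to a mixture of uniform edge colorings and then estimate the marginal of a single unpinned edge inside the broom $K = E(v)$ conditioned on the pinning. First I would apply \Cref{lem:weighted-decomposition-to-unweighted} to write $\mu^\tau$ as a convex combination of distributions of the form $\mu^{\tau \cup \omega_1 \cup \cdots \cup \omega_N}$, where each $\omega_i$ is a proper coloring on the boundary $K_i$. Since none of the neighbours of $K$ lie in any boundary, conditioning on the $\omega_i$'s only removes colours from lists of edges that are \emph{not} adjacent to $K$; hence each $\mu^{\tau \cup \omega_1 \cup \cdots \cup \omega_N}$ is a uniform distribution over proper colorings of the (sub)instance obtained by deleting pinned edges and removing pinned colours, and on this instance every edge of $K$ still has $\beta$-extra colours relative to its \emph{current} degree. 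It therefore suffices to prove the lower bound for a uniform distribution, and I would fix one such term, abbreviating it $\nu$, and work with it for the rest of the proof.

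Next I would isolate the edge $e \in K$ whose marginal we want, say with remaining list $\+L^\tau(e)$ of size $\ell^\tau_e$, and estimate $\nu_e(c)$ for $c \in \+L^\tau(e)$. The standard two-step argument is: (1) show that the marginal distribution on $e$ conditioned on the colours of the \emph{other} unpinned edges of $K$ is close to uniform on its available list, giving each surviving colour probability at least $\frac{1}{\ell^\tau_e - (k-1)}$ in the worst case (the other $k-1$ unpinned broom-edges can block at most $k-1$ colours); (2) for the remaining edges of $K$ and the edges outside $K$ adjacent to $v$'s neighbours, use the marginal \emph{upper} bound \Cref{lem:marginal-upper-weighted} (or \Cref{lem:claw-marginal-generalized} directly on the uniform instance $\nu$) to control the probability that colour $c$ is ``used up'' by those edges. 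Concretely, one conditions on a proper colouring of $K - e$; given such a colouring, $c$ is still available for $e$ unless some edge adjacent to $e$ but outside $K$ carries $c$, and each of the at most $\Delta - 1$ such edges carries $c$ with probability at most $\frac{1}{\beta - 1}$ by \Cref{lem:claw-marginal-generalized} with $|F| = 1$, after accounting for the degree reduction. Combining the two steps and being careful about whether one measures against ``$c$ not in the rest of $K$'' versus ``$c$ nowhere in $E(v)$'' produces the two falling-type factors $\frac{\beta-1}{\beta+k-2}$ and $\frac{\beta-1}{\beta+\Delta-2}$.

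A cleaner route that avoids re-deriving the single-vertex colouring estimates from scratch: mimic the proof of \Cref{lem:marginal_bound_1}/\Cref{lem:claw-marginal-generalized} but in the \emph{reverse} direction — build an injection-with-weights map that recolours $e$ \emph{to} $c$, rather than away from it, and count. If $\omega$ is a proper colouring with $\omega(e) \neq c$ but $c$ absent from $E(v) \setminus e$ in $\omega$, then recolouring $e$ to $c$ yields a proper colouring $\omega'$ with $\omega'(e) = c$, and each such $\omega'$ can be reached from at most $\ell^\tau_e - k + 1$ preimages (the number of colours $e$ could have had, minus those forbidden by the other $k-1$ unpinned broom edges). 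Conversely, starting from a proper $\omega$, the number of valid targets for a single recolour of $e$ is at least the number of colours available to $e$, which—after subtracting the $k-1$ broom-neighbours and the $\Delta-1$ external neighbours of $e$, then comparing with the upper bounds on how often those colours actually appear—yields $\frac{(\beta-1)^2}{(\beta+k-2)(\beta+\Delta-2)}$ as the ratio. Then $\nu_e(c) = \frac{Z_\nu(\omega(e)=c)}{Z_\nu} \ge \frac{(\beta-1)^2}{(\beta+k-2)(\beta+\Delta-2)} \cdot \frac{1}{\ell^\tau_e - k + 1}$, and transferring through the convex combination preserves the bound since it holds termwise.

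The main obstacle, I expect, is bookkeeping the two distinct ``degree'' quantities correctly: the number $k$ of unpinned edges in the broom $K$ (which governs how many colours can be blocked \emph{within} $K$ and hence the $\ell^\tau_e - k + 1$ denominator and one $\beta + k - 2$ factor), versus the full degree $\Delta$ bounding the external neighbours of $e$ (giving the $\beta + \Delta - 2$ factor). One must verify that the $\beta$-extra property is maintained under \emph{both} the boundary-conditioning from \Cref{lem:weighted-decomposition-to-unweighted} and the pinning $\tau$, which is exactly where the hypothesis ``none of the neighbours of $K$ is in the boundary'' is used — it guarantees the boundary weights never touch the lists of edges incident to $v$, so those lists only shrink by at most $\deg^\tau(e) \le \Delta - 1$ and retain at least $\beta - 1$ slack. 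The inequalities themselves are elementary once this is set up; no delicate estimate is needed beyond \Cref{lem:claw-marginal-generalized}.
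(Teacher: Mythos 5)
Your first route is essentially the paper's proof. The paper reduces to the uniform case via \Cref{lem:weighted-decomposition-to-unweighted} exactly as you propose, then writes the chain-rule decomposition
\[
\mu^{\xi}_e(c) = \sum_{\sigma_1 : c\notin\sigma_1}\mu^\xi_{K\setminus\{e\}}(\sigma_1)\sum_{\sigma_2 : c\notin\sigma_2}\mu^{\xi\cup\sigma_1}_{E(u)\setminus\{e\}}(\sigma_2)\,\mu^{\xi\cup\sigma_1\cup\sigma_2}_e(c),
\]
where $u$ is the other endpoint of $e$, bounds the first two sums below by $\frac{\beta-1}{\beta+k-2}$ and $\frac{\beta-1}{\beta+\Delta-2}$ using \Cref{lem:marginal-upper-weighted}, and then notes that after pinning \emph{all} edges adjacent to $e$ the edge $e$ is uniform over a list of size at most $\ell_e^\tau-k+1$. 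Two small imprecisions in your write-up, both easy to repair: (i) conditioning only on $K\setminus\{e\}$ does not make $e$ uniform on its list — you must also condition on a full coloring $\sigma_2$ of $E(u)\setminus\{e\}$ (not merely on the event that $c$ is absent from those edges), since only when every edge adjacent to $e$ is pinned does the marginal on $e$ become uniform; and (ii) you should apply the marginal upper bound to the whole broom $F = E(u)\setminus\{e\}$ at once, which gives $\Pr[c\notin F]\ge \frac{\beta-1}{\beta+\Delta-2}$, rather than a per-edge union bound with $|F|=1$, which only gives $1-\frac{\Delta-1}{\beta}$ — strictly weaker, and not the constant stated in the lemma. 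Your alternate ``injection'' route is a genuinely different idea and plausible, but you did not carry it out and the paper does not take it.
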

\begin{proof}
    We assume $e = \set{u, v}$, where $K=E(v)$ and $L\defeq E(v)$.
    By \Cref{lem:weighted-decomposition-to-unweighted},
    it suffices to prove the bound for any $\mu^{\sigma\cup\omega_1\cup\cdots\cup\omega_N}$.
    Denoting $\xi$ as the shortcut for $\sigma\cup\omega_1\cup\cdots\cup\omega_N$,
    we can do a further decomposition:
    \begin{align*}
        \mu^{\xi}_e(c) =
        \sum_{\substack{\sigma_1\in\Omega^{\xi            }_{K\setminus\set e}\\ c\notin \sigma_1}}
        \mu^{\xi                        }_{K\setminus\set e}(\sigma_1)\cdot
        \sum_{\substack{\sigma_2\in\Omega^{\xi\cup\sigma_1}_{L\setminus\set e}\\ c\notin \sigma_2}}
        \mu^{\xi\cup\sigma_1            }_{L\setminus\set e}(\sigma_2)\cdot
        \mu^{\xi\cup\sigma_1\cup\sigma_2}_e(c).
    \end{align*}
    By assumption, $K$ and $L$ are not in the boundary, then \Cref{lem:marginal-upper-weighted}
    gives
    \begin{align*}
       &\sum_{\substack{\sigma_1\in\Omega^{\xi            }_{K\setminus\set e}\\ c\notin \sigma_1}}
        \mu^{\xi                        }_{K\setminus\set e}(\sigma_1)
        \ge \frac{\beta-1}{\beta + k - 2}
     \\&\sum_{\substack{\sigma_2\in\Omega^{\xi\cup\sigma_1}_{L\setminus\set e}\\ c\notin \sigma_2}}
        \mu^{\xi\cup\sigma_1            }_{L\setminus\set e}(\sigma_2)
        \ge \frac{\beta-1}{\beta + \Delta - 2}.
    \end{align*}
    Moreover, since $\mu^{\xi\cup\sigma_1\cup\sigma_2}_e$ is the uniform distribution
    over $\+L^{\xi\cup\sigma_1\cup\sigma_2}(e)$, whose size is at most $\ell_e^{\tau}-k+1$, it is
    lower bounded by $1/(\ell_e^{\tau}-k+1)$.

    Combining the three bounds, we have
    \[
        \mu^{\tau}_e(c) \ge \frac{(\beta-1)^2}{(\beta+k-2)(\beta+\Delta-2)}\frac{1}{\ell_e^\tau-k+1}.
        \qedhere
    \]
\end{proof}

After introducing the concept of weighted edge coloring instances, we now turn our attention to the marginal distribution on a broom. For simplicity we run the matrix trickle down theorem on one broom $K$ in $T$,
that does not intersect with the boundaries
(actually on the quotient simplicial complex on $K$).
Then it is necessary to look at the marginal distribution of $\mu$
on $K$ under some pinning $\xi \in \Omega_F$ on a subset of edges $F$.
We have
\begin{align}\label{eq:broom-marginal-1}
\mu^\xi_K(\tau)
&\propto \sum_{\substack{\sigma:\sigma|_K=\tau \\ \sigma|_F=\xi}}
 \prod_{i\in [N]} w_i(\sigma|_{K_i}).
\end{align}

The following lemma demonstrates the relation between the weighted coloring instance and the tree recursion.
\begin{lemma}\label{lem:p=mu}
    Assume the tree $T$ with root $r$ is of depth $\ell+1$. Let the boundaries be all brooms on $\ell+1$-th level.
    For any $v\in B(r,\ell)$, the weight function on $E_{T_v}(v)$ is just $\*p_v$, and $\*p=(\*p_v)_{v\in B(r,\ell)}$. Choose $K=E(r)$. Then for the simplicial complex defined as above, we have  \[\*p_r\defeq  f^{\ell\rightarrow 0}(\*p)= \mu_K .\]
\end{lemma}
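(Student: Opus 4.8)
The plan is to prove $f^{\ell\rightarrow 0}(\*p)=\mu_K$ by induction on $\ell$, the parameter controlling the depth. The base case is $\ell=0$: then $B(r,0)=\set{r}$, the only boundary is $K=E(r)$ itself carrying weight function $\*p_r$, and $f^{0\rightarrow 0}$ is the empty composition, i.e.\ the identity. Here \Cref{def:weighted-coloring} gives $\mu(\sigma)\propto\*p_r(\sigma|_K)$, hence $\mu_K\propto\*p_r$ on proper colorings; since $\*p_r\in\+D(C_r)$ is already a probability distribution the constant of proportionality is $1$, and we are done.

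For the inductive step I would fix $\ell\ge 1$, write the children of $r$ as $v_1,\dots,v_d$ with $e_i=(r,v_i)$ and $T_i\defeq T_{v_i}$, and first record the crucial \emph{factorization over subtrees}: since $\ell\ge 1$, every boundary broom $E_{T_v}(v)$ with $v\in B_T(r,\ell)$ lies entirely inside a single $T_i$, so $\prod_{v\in B_T(r,\ell)}w_v(\sigma|_{E_{T_v}(v)})=\prod_{i=1}^d W_i(\sigma|_{E_{T_i}})$, where $W_i$ collects the weights of the boundaries inside $T_i$. Restricting the weighted coloring instance to each $T_i$ yields a weighted coloring instance $\mu^{(i)}$ on $(T_i,\+L|_{T_i})$ of depth $\ell$ whose boundaries sit at $B_{T_i}(v_i,\ell-1)=B_T(r,\ell)\cap T_i$; applying the induction hypothesis to it gives that $\mu^{(i)}_{E_{T_i}(v_i)}$ equals $f^{(\ell-1)\rightarrow 0}$ of the weights inside $T_i$, which by the definition of the composed recursion is exactly the $i$-th block $\*q_i$ of $f^{\ell\rightarrow 1}(\*p)$. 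Since $f^{\ell\rightarrow 0}=f^{1\rightarrow 0}\circ f^{\ell\rightarrow 1}$ and $f^{1\rightarrow 0}$ is a single application of the recursion $f$ of \Cref{lem:tree-recursion} at the root, it then suffices to show $\mu_K=f(\*q_1,\dots,\*q_d)$.

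This last identity is the weighted analogue of \Cref{lem:tree-recursion}, and I would establish it by the same tree-factorization argument: for proper $\pi\in C_r$ the only coupling between distinct subtrees $T_i$ is through the colors $\pi(e_i)$ of the root edges, which are fixed by $\pi$, so
\[
\mu_K(\pi)\ \propto\ \sum_{\sigma:\,\sigma|_K=\pi}\prod_{i=1}^d W_i(\sigma|_{T_i})
=\prod_{i=1}^d\Bigg(\sum_{\substack{\sigma_i\text{ proper on }T_i\\ \pi(e_i)\notin c(E_{T_i}(v_i))}}W_i(\sigma_i)\Bigg);
\]
grouping each inner sum by the restriction $\tau\in C_{v_i}$ of $\sigma_i$ to $E_{T_i}(v_i)$ and using $\mu^{(i)}_{E_{T_i}(v_i)}=\*q_i$ from the induction hypothesis, the $i$-th factor is proportional to $\sum_{\tau\in C_{v_i}:\,\pi(e_i)\notin\tau}\*q_i(\tau)=\*q_i(\overline{\pi(e_i)})$, and after normalizing over $\pi\in C_r$ this is precisely the formula \eqref{eq:tree-recursion-short} for $f(\*q_1,\dots,\*q_d)$. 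This closes the induction.

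The main obstacle is not any one computation but getting the bookkeeping right: one must verify that the boundaries genuinely factor across subtrees (which needs $\ell\ge 1$, the case $\ell=0$ being the base case), keep straight that each root edge $e_i$ belongs to $K$ but not to $T_i$ so that the subtrees interact only through the forbidden-color constraint $\pi(e_i)\notin c(E_{T_i}(v_i))$, and check that the partition functions line up so that the proportionalities above become equalities after normalization. A minor point to handle is that the weight functions $\*p_v$ need not be normalized, but $f$ is invariant under rescaling each of its arguments, so this does not affect the recursion.
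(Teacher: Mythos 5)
Your induction is, in substance, the same computation as the paper's proof: the paper unrolls $f^{\ell\to 0}$ layer by layer into a nested product--sum and recognizes the result as $\sum_{\sigma\colon\sigma|_K=\pi}\prod_{v\in B(r,\ell)}\*p_v\tp{\sigma|_{E_{T_v}(v)}}$, which is $\mu_K(\pi)$ up to normalization; your inductive step makes exactly the same factorization over the subtrees $T_i$ precise one layer at a time. There is nothing genuinely different in the route, just a cleaner bookkeeping.

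One small point to repair: the base case $\ell=0$ does not quite stand as written. The $\*p_v$ in this lemma (inherited from Lemma~\ref{lem:SI-broom}) are only nonnegative weight functions, not elements of $\+D(C_r)$, so the claim ``since $\*p_r\in\+D(C_r)$ is already a probability distribution the constant of proportionality is $1$'' is an unjustified normalization assumption; moreover $f^{0\to 0}$ is not among the compositions $f^{i\to j}$ with $i>j$ that the paper defines, so declaring it to be the identity is an extra convention. The clean fix is to start the induction at $\ell=1$: there $f^{1\to 0}$ is a single application of $f$ (which produces a normalized distribution regardless of the scaling of its inputs), and the computation is exactly your inductive step with $\*q_i=\*p_{v_i}$. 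With that adjustment the argument is correct and matches the paper.
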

\begin{proof} 
    For simplicity, we assume that all leaves in $T$ are at the same level. It can be generalized to any tree of depth $\ell+1$. We write the entries of $\*p_r$ explicitly:
    \begin{align*} 
    \*p_r(\pi) &\propto \prod_{v^1\in N(r)} \sum_{\substack{\sigma^1 \in  C_{v^1}\\\pi((r,v^1))\notin \sigma^1}} \*p_{v^1}(\sigma^1)\\
    &\propto \prod_{v^1\in N(r)}
    \sum_{\substack{\sigma^1 \in C_{v^1}\\\pi((r,v^1))\notin \sigma^1}} \cdots
    \prod_{v^\ell\in N(v^{\ell-1})}
    \sum_{\substack{\sigma^\ell \in C_{v^{\ell-1}}\\\sigma^{\ell-1}((v^{\ell-1},v^\ell))\notin \sigma^\ell} \*p_{v^\ell}}(\sigma^\ell)\\
    & = \sum_{\sigma|_K=\pi}\prod_{v \in B(v,\ell)} \*p_v(\sigma|_{E_{T_v}(v)}).
    \end{align*}
    Therefore, $\*p_r = \mu_K$.
\end{proof}

Suppose $K = \set{e_i}_{i\in [d]}$.
Then $T\setminus K$ is composed of $d$ disconnected subgraphs denoted by $T_i$
such that $T_i$ is adjacent to $K$ in $T$ (we define $T_i = \emptyset$
if $e_i$ is pendant).
Moreover, we define $\+K_i\defeq\set{K_j\mid K_j\subseteq T_i}$.
Since both $\set{T_i}$ and $\set{\+K_i}$ contain disjoint subgraphs and 
the $\sigma$ in the summation in \cref{eq:broom-marginal-1}
is determined by partial colorings $\sigma|_{T_i}$,
we can define $\+S^\xi_{i, c}\subseteq \Omega^\xi_{T_i}$ by the proper colorings 
on $T_i$ such that is compatible with $e_i$ being colored $c$.
\begin{align*}
\mu^\xi_K(\tau)
&\propto
  \sum_{\sigma_1\in \+S^\xi_{1, \tau(e_1)}}
  \sum_{\sigma_2\in \+S^\xi_{2, \tau(e_2)}}\dots
  \sum_{\sigma_N\in \+S^\xi_{N, \tau(e_N)}}
  \prod_{i\in [d]} \prod_{j : K_j\in \+K_i} w_j(\sigma_i|_{K_j})\\
&= \prod_{i\in [d]} \sum_{\sigma_i\in \+S^\xi_{i, \tau(e_i)}}
     \prod_{j : K_j\in \+K_i} w_j(\sigma_i|_{K_i}).
\end{align*}
Define 
\[
p^\xi_{e_i,c}=\sum_{\sigma_i\in \+S^\xi_{i, c}}
     \prod_{j : K_j\in \+K_i} w_j(\sigma_i|_{K_i}).
\]
Then we have $\mu^\xi_K(\tau)\propto\prod_{i\in [d]} p^\xi_{e_i,\tau(e_i)}$.
This proves the following lemma.
\begin{lemma}[Marginal distribution on a broom]
    Consider a list edge coloring instance $(T, \+L)$,
    boundaries $\set{K_i}_{i\in [N]}$
    weighted functions $\set{w_i}_{i\in [N]}$,
    and a pinning $\xi\in \Omega_{F}$ for some subset $F\subset E$.
    Then for a broom $K$ disjoint from boundaries
    there exists constants $p^\xi_{e, c}$ for $e\in K$, $c\in \+L^\xi(e)$ such that
    \[
    \mu^\xi_K(\tau)\propto\prod_{i\in [d]} p^\xi_{e_i,\tau(e_i)}.
    \]
\end{lemma}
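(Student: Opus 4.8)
The plan is to unwind the definition of the weighted edge coloring distribution restricted to the broom $K$ and exploit the tree structure to show that the marginal factorizes over the edges of $K$. First I would fix a pinning $\xi\in\Omega_F$ and write out $\mu^\xi_K(\tau)$ via \eqref{eq:broom-marginal-1}, summing the product of weight functions $\prod_{i\in[N]}w_i(\sigma|_{K_i})$ over all proper colorings $\sigma$ that agree with $\xi$ on $F$ and with $\tau$ on $K$. The key structural observation is that since $T$ is a tree and $K$ is a broom disjoint from all boundaries, deleting the edges of $K$ disconnects the rest of the tree into components $\set{T_i}_{i\in[d]}$, one hanging off each edge $e_i\in K$; moreover each boundary $K_j$ lies entirely within one such $T_i$, so the collection $\set{\+K_i}$ partitions the boundaries.

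Next I would argue that the only coupling between the different $T_i$'s in the summation is through the colors $\tau(e_i)$ assigned on $K$ — once $\tau$ is fixed, a proper extension $\sigma$ is just an independent choice of a proper coloring $\sigma_i$ on each $T_i$ compatible with $e_i$ receiving color $\tau(e_i)$ (and compatible with $\xi$). This lets me interchange sum and product: $\mu^\xi_K(\tau)\propto\prod_{i\in[d]}\sum_{\sigma_i\in\+S^\xi_{i,\tau(e_i)}}\prod_{j:K_j\in\+K_i}w_j(\sigma_i|_{K_j})$, where $\+S^\xi_{i,c}$ is the set of proper colorings on $T_i$ consistent with $\xi$ and with $e_i\mapsto c$. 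Defining $p^\xi_{e_i,c}$ to be exactly that inner sum for each edge $e_i\in K$ and each color $c\in\+L^\xi(e_i)$ gives the claimed product form $\mu^\xi_K(\tau)\propto\prod_{i\in[d]}p^\xi_{e_i,\tau(e_i)}$, and these quantities depend only on $\xi$, $e_i$, and $c$, not on the rest of $\tau$.

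Since the paper has already carried out essentially this computation in the paragraph preceding the lemma statement, the proof is really just a matter of packaging: I would note that the displayed chain of equalities immediately above establishes the factorization, so the lemma follows by setting $p^\xi_{e,c}$ as defined there. The one point that deserves a sentence of care — and the closest thing to an obstacle — is justifying the interchange of summation and product, i.e.\ that the feasibility constraints genuinely decouple across the $T_i$. This relies on two facts: (i) $K$ is a broom, so all its edges share the vertex whose incidence they describe, and the properness constraints \emph{within} $K$ are already encoded by requiring $\tau\in C_r$ (a proper partial coloring), and (ii) $T$ being a tree means $T_i\cap T_{i'}=\emptyset$ for $i\neq i'$ and there are no edges of $T$ between $T_i$ and $T_{i'}$, so no properness constraint links an edge of $T_i$ to an edge of $T_{i'}$; the only shared vertex is the center of the broom, whose incident colors are fixed by $\tau$. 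Once these are spelled out the factorization is immediate, and so is the conclusion.
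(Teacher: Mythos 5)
Your proposal is correct and follows essentially the same argument as the paper: unwind $\mu^\xi_K(\tau)$ via \eqref{eq:broom-marginal-1}, use the tree structure to observe that removing the broom $K$ disconnects $T$ into subtrees $T_i$ each containing its own boundaries, swap the sum and product, and name the resulting per-edge factors $p^\xi_{e_i,c}$. The extra sentence you add on why the feasibility constraints decouple (brooms meet only at the center vertex, whose incident colors are fixed by $\tau$; no edges cross between different $T_i$) makes explicit what the paper leaves implicit in its remark that the $T_i$ and $\+K_i$ are disjoint, but it is the same reasoning.
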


Let $q^\xi_e=\sum_{c\in \+L^\xi(e)} p^\xi_{e,c}$
and $q^\xi_{e,f}=\sum_{c\in \+L^\xi(e)\cap \+L^\xi(f)} p^\xi_{e,c}p^\xi_{f,c}$.

Next we present some bounds on the quantities $p^\xi_{e, c}$ and $q^\xi_\cdot$
\begin{lemma}\label{lem:marginal-bound-weighted-1}
For any $e_i\in K, \xi\in \Omega_F$ for some subset $F \subset E$,
\[
\frac{p^\xi_{e_i,c}}{q^\xi_{e_i}}\le \frac{1}{\ell^\xi_{e_i}} .
\]
\end{lemma}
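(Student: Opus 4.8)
The plan is to prove this marginal upper bound by a fractional-injection argument, exactly in the spirit of the proofs of \Cref{lem:claw-marginal-generalized} and \Cref{lem:marginal_bound_1}. First I would rewrite the two sides in a transparent form. Writing $w(\sigma)\defeq\prod_{j\,:\,K_j\in\+K_i}w_j(\sigma|_{K_j})$ for the boundary weight carried by a proper coloring $\sigma$ of $T_i$, and $N(\sigma)\defeq\set{\sigma(f)\cmid f\in E_{T_{v_i}}(v_i)}$ for the (pairwise distinct) colors appearing at $v_i$ inside $T_i$, the definitions of $p^\xi_{e_i,c}$ and $q^\xi_{e_i}$ unfold to
\[
p^\xi_{e_i,c}=\sum_{\substack{\sigma\in\Omega^\xi_{T_i}\\ c\notin N(\sigma)}}w(\sigma)
\quad\text{and}\quad
q^\xi_{e_i}=\sum_{\sigma\in\Omega^\xi_{T_i}}w(\sigma)\bigl|\+L^\xi(e_i)\setminus N(\sigma)\bigr|.
\]
Thus the claim is equivalent to $\ell^\xi_{e_i}\cdot p^\xi_{e_i,c}\le q^\xi_{e_i}$, i.e.\ to the statement that among the $\ell^\xi_{e_i}$ colors available at $e_i$, no color is ``less blocked at $v_i$'' than the average.

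Second I would strip off the weights. By \Cref{lem:weighted-decomposition-to-unweighted}, $\mu^\xi$ is a convex combination over boundary pinnings $\omega$ of the conditional uniform instances $\mu^{\xi\cup\omega}$, and the same decomposition applies termwise to $p^\xi_{e_i,c}$ and to $q^\xi_{e_i}$ (with $w$ replaced by the indicator of the boundary pinning). Hence it suffices to prove $\ell^{\xi\cup\omega}_{e_i}\cdot p^{\xi\cup\omega}_{e_i,c}\le q^{\xi\cup\omega}_{e_i}$ for each such instance, i.e.\ the case $w\equiv 1$; there the quantities count proper colorings of $T_i$ with a fixed color excluded at $v_i$, and one recognizes $p^{\xi\cup\omega}_{e_i,c}/q^{\xi\cup\omega}_{e_i}$ as exactly the marginal $\nu_{e_i}(c)$, where $\nu$ is the uniform distribution over proper colorings of $T_i\cup\set{e_i}$ with $e_i$ a pendant edge constrained only by $E_{T_{v_i}}(v_i)$.

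Third, for this unweighted instance I would construct an injection witnessing $\ell\cdot p_c\le\sum_{c'}p_{c'}$ (with $\ell=\ell^{\xi\cup\omega}_{e_i}$): send a pair $(\sigma,c')$, where $\sigma\in\Omega^{\xi\cup\omega}_{T_i}$ avoids $c$ at $v_i$ and $c'\in\+L^{\xi\cup\omega}(e_i)$ is arbitrary, to a pair $(\sigma',c')$ with $c'$ avoided by $\sigma'$ at $v_i$. If $c'$ is already avoided by $\sigma$ at $v_i$, set $\sigma'=\sigma$; otherwise let $f$ be the unique edge of $E_{T_{v_i}}(v_i)$ colored $c'$ and obtain $\sigma'$ by performing the Kempe swap of colors $c$ and $c'$ along the $\set{c,c'}$-alternating path through $f$ — which, because $c\notin N(\sigma)$, dead-ends at $v_i$ and touches no other edge incident to $v_i$. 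This map is injective: a target pair $(\tau,d)$ arose from the first case precisely when $c\notin N(\tau)$ and from the second precisely when $c\in N(\tau)$, and in the latter case $\sigma$ is recovered by the inverse Kempe swap of $c$ and $d$ along the alternating path through the $v_i$-edge colored $c$. Since the source has (weighted) size $\ell\cdot p_c$ and the target has size $\sum_{c'}p_{c'}$, the bound $\ell^{\xi\cup\omega}_{e_i}p^{\xi\cup\omega}_{e_i,c}\le q^{\xi\cup\omega}_{e_i}$ follows, and re-inserting the convex decomposition of Step~2 finishes the proof.

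The step I expect to be the main obstacle is ensuring the Kempe swap is \emph{legal}: the alternating path through $f$ runs down into $T_i$, so the swap may attempt to place color $c$ on an edge whose pinned list (after $\xi\cup\omega$) does not contain $c$, and — before the reduction of Step~2 — may cross a boundary and change $w(\sigma)$. The weight issue is exactly what Step~2 is designed to neutralize; the list-membership issue is where the $\beta$-extra hypothesis enters, since it guarantees that along any $\set{c,c'}$-alternating path both colors remain in the list of all but a controlled number of edges, so the swap can be rerouted (or the few offending colors pruned from the count) at a cost that does not affect the bound. Once this legality point is pinned down, everything else — the interpretation of the ratio as a marginal, the pushforward through the mixture, and the reintroduction of the outer pinning $\xi$ — is routine.
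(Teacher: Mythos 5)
The approach you take is not the paper's, and it cannot work: the bound $p^\xi_{e_i,c}/q^\xi_{e_i}\le 1/\ell^\xi_{e_i}$ that you set out to prove is actually false for list edge colorings, and the legality obstruction to the Kempe swap that you yourself flag at the end is not a repairable technicality --- it is exactly where the argument must break down.

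To see that the claim as literally stated fails, take $T_i$ to be a single vertex $v_i$ with two pendant edges $f_1,f_2$, no boundary weights, $\+L(e_i)=\set{1,2,3,4,5}$, and $\+L(f_1)=\+L(f_2)=\set{2,3,4,5}$. There are $12$ colorings of $T_i$ avoiding color $1$ at $v_i$ and $6$ avoiding each of $2,3,4,5$, so $p_{e_i,1}/q_{e_i}=12/36=1/3$, whereas $1/\ell^\xi_{e_i}=1/5$. Your injection, if it existed, would prove $p_{e_i,1}/q_{e_i}\le 1/5$, so it cannot exist: when you try to Kempe-swap $\set{1,c'}$ along the path through the $v_i$-edge colored $c'$, the swap would place color $1$ on $f_1$ or $f_2$, whose lists do not contain $1$, and there is no valid reroute. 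The ``$\beta$-extra rescues legality'' hope does not hold in list colorings; alternating paths are only color-preserving in the unlisted setting. So the gap you identified is fatal rather than routine.

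What the paper actually proves (and what it uses everywhere downstream, e.g.\ in \Cref{lem:marginal-bound-weighted-2} and in the trickle-down base case) is the weaker bound $p^\xi_{e_i,c}/q^\xi_{e_i}\le 1/\beta$; the $\ell^\xi_{e_i}$ in the lemma's display appears to be a misprint for $\beta$. The paper's proof is also quite different from yours and does not need an injection. It observes that on the instance $\tilde T_i=T_i\cup\set{e_i}$ with boundaries $\+K_i$, the left-hand side equals the weighted marginal $\nu^\xi_{e_i}(c)$; it then conditions on a full coloring $\sigma$ of $T_i$ via the law of total probability, $\nu^\xi_{e_i}(c)=\sum_\sigma\nu^\xi_{T_i}(\sigma)\,\nu^{\sigma\cup\xi}_{e_i}(c)$, and applies \Cref{lem:marginal-upper-weighted} (with $\abs F=1$) to each conditional marginal to get $\nu^{\sigma\cup\xi}_{e_i}(c)\le 1/\beta$, which averages to the claim. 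Note that this uses \Cref{lem:weighted-decomposition-to-unweighted} only implicitly (inside \Cref{lem:marginal-upper-weighted}), whereas your Step~2 applies it directly; that part of your plan is fine. The fix to your proposal is therefore to drop the Kempe-injection Step~3 entirely, aim for $1/\beta$ rather than $1/\ell^\xi_{e_i}$, and close the argument with the already-established weighted marginal upper bound.
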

\begin{proof}
    Denote the subtree induced by $T_i$ and $e_i$ by $\tilde T_i$.
    We consider the weighted coloring instance on $(\tilde T_i, \+L)$ with boundaries
    $\+K_i$ and weighted functions $\set{w_j}_{K_j\in \+K_i}$
    and denote the associated distribution by $\nu$.
    Then by \Cref{def:weighted-coloring},
    \[
    \nu^\xi_{e}(c) = \frac{p^\xi_{e_i, c}}{q^\xi_{e_i, c}}.
    \]
    Then
    \[
    \frac{p^\xi_{e_i,c}}{q^\xi_{e_i}}
    =
    \sum_{\sigma\in C_{T_i, \+L}} \nu^{\sigma\cup\xi}_e(c)\nu^\xi_{T_i}(\sigma).
    \]
By \Cref{lem:marginal-upper-weighted},
$\nu^{\sigma\cup\xi}_e(c)\leq \frac{1}{\beta}$.
On the other hand, $\sum_{\sigma\in C_{T_i, \+L}} \nu^\xi_{T_i}(\sigma)=1$.
So $\frac{p_{e,c}}{q_e}\leq \frac{1}{\beta}$.
\end{proof}
\begin{lemma}\label{lem:marginal-bound-weighted-2}
For $e,f\in K$ and $\xi\in \Omega_F$ for some subgraph $F$,
    \[
    q^\xi_{e,f}\leq \frac{q^\xi_e q^\xi_f}{\beta} .
    \]
\end{lemma}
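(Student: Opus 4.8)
The plan is to bound the ratio $q^\xi_{e,f}/(q^\xi_e q^\xi_f)$ directly from the definitions. Writing out $q^\xi_{e,f}$, $q^\xi_e$ and $q^\xi_f$, we get
\[
\frac{q^\xi_{e,f}}{q^\xi_e \, q^\xi_f}
= \sum_{c\in \+L^\xi(e)\cap \+L^\xi(f)} \frac{p^\xi_{e,c}}{q^\xi_e}\cdot\frac{p^\xi_{f,c}}{q^\xi_f},
\]
so it suffices to bound one factor of each summand uniformly in $c$ and let the remaining factors resum to at most $1$.

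Concretely, I would apply \Cref{lem:marginal-bound-weighted-1} to the edge $f$ (legitimate since $e,f$ lie on the broom $K$, which by hypothesis is disjoint from the boundaries): for every color $c$ one has $p^\xi_{f,c}/q^\xi_f \le 1/\ell^\xi_f \le 1/\beta$, the last step because $(T,\+L)$ is $\beta$-extra and this is preserved under conditioning on $\xi$, so $\ell^\xi_f = |\+L^\xi(f)| \ge \deg^\xi(f)+\beta \ge \beta$. Substituting this bound,
\[
\frac{q^\xi_{e,f}}{q^\xi_e \, q^\xi_f}
\le \frac1\beta \sum_{c\in \+L^\xi(e)\cap \+L^\xi(f)} \frac{p^\xi_{e,c}}{q^\xi_e}
\le \frac1\beta \sum_{c\in \+L^\xi(e)} \frac{p^\xi_{e,c}}{q^\xi_e}
= \frac1\beta,
\]
where the final equality is just $\sum_{c\in\+L^\xi(e)} p^\xi_{e,c} = q^\xi_e$. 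Multiplying through by $q^\xi_e q^\xi_f$ yields the claimed inequality $q^\xi_{e,f}\le q^\xi_e q^\xi_f/\beta$.

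I do not expect any real obstacle here. The only points to keep straight are that \Cref{lem:marginal-bound-weighted-1} is available for $e$ and $f$ (both lie on the broom $K$, which by assumption avoids all boundaries, so the subtrees hanging off them carry only ordinary $\beta$-extra constraints), and that passing to the conditional instance under $\xi$ does not destroy the $\beta$-extra property (each pinned edge simultaneously deletes a color from an adjacent list and removes itself from the degree count). Both facts are already used in the surrounding development, so the argument amounts to the two-line computation above.
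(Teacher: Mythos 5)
Your proof is correct, and it takes a different (and somewhat cleaner) route than the paper. The paper's proof first applies Cauchy--Schwarz to get $q^\xi_{e,f} \le \bigl(\sum_c (p^\xi_{e,c})^2\bigr)^{1/2}\bigl(\sum_c (p^\xi_{f,c})^2\bigr)^{1/2}$, then bounds each inner sum via $p^\xi_{e,c}/q^\xi_e \le 1/\beta$ (from \Cref{lem:marginal-bound-weighted-1}), giving $\sum_c (p^\xi_{e,c})^2 \le (q^\xi_e)^2/\beta$ and hence $q^\xi_{e,f}\le q^\xi_e q^\xi_f/\beta$. You skip Cauchy--Schwarz entirely: you pull the bound $p^\xi_{f,c}/q^\xi_f \le 1/\beta$ out of the sum as a uniform factor and then resum the remaining $p^\xi_{e,c}/q^\xi_e$ over the (enlarged) range $\+L^\xi(e)$ to get exactly $1$. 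Both arguments rely on the same ingredients --- \Cref{lem:marginal-bound-weighted-1} and the fact that $K$ avoids the boundaries so that the $\beta$-extra lower bound $\ell^\xi_e\ge\beta$ holds --- and yield the identical constant $1/\beta$. Your version is a bit more economical since it treats $e$ and $f$ asymmetrically and avoids the intermediate squared sums; the paper's Cauchy--Schwarz version is symmetric in $e$ and $f$ but does not buy a better constant here.
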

\begin{proof}
    Direct calculation yields
\begin{align*}
q^\xi_{e,f}&=\sum_{c\in \+L(e)\cap \+L(f)} p^\xi_{e,c}p^\xi_{f,c}\\
&\leq \tp{\sum_{c\in \+L(e)} \tp{p^\xi_{e,c}}^2}^{1/2} \tp{\sum_{c\in \+L(f)} (p^\xi_{f,c})^2}^{1/2}\\
&\leq \frac{\sqrt{q^\xi_e q^\xi_f}}{\beta}\tp{\sum_{c\in \+L^\xi(e)} p^\xi_{e,c}}^{1/2} \tp{\sum_{c\in \+L^\xi(f)} p^\xi_{f,c}}^{1/2}\\
&=\frac{q^\xi_e q^\xi_f}{\beta}.
\end{align*}
\end{proof}

\subsection{Simplicial complexes}
First we introduce simplicial complexes to encode the edge coloring instance. Given a universe $U$, a \emph{simplicial complex} $\+C\subseteq 2^U$ is a collection of subsets of $U$ that is downward close, which means that if $\sigma\in \+C$ and $\sigma'\subseteq\sigma$, then $\sigma'\in \+C$. Every element $\sigma\in \+C$ is called a \emph{face}, and a face that is not a proper subset of any other face is called a \emph{maximal face} or a \emph{facet}. The dimension of a face $\sigma$ is $\!{dim}(\sigma)\defeq \abs{\sigma}$, namely the size of $\sigma$. For every $k\ge 0$, let $\+C_k\defeq \set{\sigma\in\+C\cmid \abs{\sigma}=k}$ be the set of faces of dimension $k$. Specifically, $\+C_0=\set{\emptyset}$. The dimension of $\+C$ is the maximum dimension of faces in $\+C$. 

Besides, we say $\+C$ is a \emph{pure} $n$-dimensional simplicial complex if all maximal faces in $\+C$ are of dimension $n$. In this work we only deal with pure simplicial complexes. In a pure simplicial complexe, we define the co-dimension of a face $\sigma$ as $\!{codim}(\sigma)\defeq n-\!{dim}(\sigma)$.

Let $\pi_n$ be a distribution over the maximal faces $\+C_n$. We use the pair $(\+C,\pi_n)$ to denote a \emph{weighted simplicial complex} where for each $1\le k < n$, the distribution $\pi_n$ induces a distribution $\pi_k$ over $\+C_k$. Formally, for every $1\le k<n$ and every $\sigma'\in \+C_k$, $\pi_k(\sigma')$ is proportional to the sum of weights of maximal faces containing $\sigma$. Formally,
\begin{align*}
    \pi_k(\sigma') \defeq \frac{1}{\binom n k}\sum_{\sigma\in \+C_n\cmid\sigma' \subset \sigma} \pi_{n}(\sigma).
\end{align*}
It can be easily verified that $\pi_k$ is a distribution on $\+C_k$. Sometimes, we omit the subscript when $k=1$, i.e., we write $\pi$ for $\pi_1$.

Also we define the simplicial complexes generated by pinning a face in $\+C$. For a face $\tau \in \+C$ of dimension $k$, we define its \emph{link} as
\begin{align*}
    \+C_\tau=\set{\sigma\setminus \tau \cmid \sigma \in \+C \land \tau \subseteq \sigma }.
\end{align*}
Obviously, $\+C_\tau$ is a pure $(n-k)$-dimensional simplicial complex. Similarly, for every $1\le j \le n-k$, $\+C_{\tau,j}$ is the set of faces in $\+C_\tau$ of dimension $j$. We also use $\pi_{\tau,j}$ to denote the \emph{marginal distribution} on $\+C_{\tau,j}$. Formally, for every $\sigma\in \+C_{\tau,j}$,
\begin{align*}
    \pi_{\tau,j}(\sigma)\defeq \Pr[\alpha \sim \pi_{k+j}]{\alpha=\tau \cup \sigma \cmid \tau\subseteq \alpha}=\frac{\pi_{k+j}(\tau \cup \sigma)}{\binom{k+j}{k} \cdot \pi_{k}(\tau)}.
\end{align*}
We also drop the subscript when $j=1$, i.e., we write $\pi_\tau$ for $\pi_{\tau,1}$. We define a random walk $P_\tau$ with stationary distribution $\pi_\tau$ as
\[
P_\tau(x,y)=\frac{\pi_{\tau,2}(\set{x,y})}{2\pi_\tau(x)}.
\]


\subsection{Matrix trickle down on a broom}

Now we define the corresponding simplicial complex to the weighted edge coloring instance defined in \Cref{sec:weight-edge-coloring}.
Recall that we are dealing with a weighted edge coloring instance on a tree $T=(V, E)$
with $\beta$-extra color lists $\+L$. $K = E_i$ for some $i\in V$ and $|K| = d$.
We are going to show that the distribution of weighted colorings on $K$ can be
represented by a weighted simplicial complex.

Since any proper edge coloring  $\sigma:E\rightarrow [q]$ could be regarded as a set of pairs of edge and color, namely $\set{(e,c)\in E\times [q]\cmid \sigma(e)=c}$, the weighted edge-coloring instance restricted on $K$ can be naturally represented as a pure weighted simplicial complex $(\+C,\pi_{d})$ where $\+C$ consists of all proper partial colorings in $(K,\+L|_K)$ and $\pi_{d}=\mu_K$.

Before introducing the matrix trickle-down theorem, we define notations for matrices related to $\pi_\tau$. Define $\Pi_\tau \in \bb R^{\+C_1\times\+C_1}$ as $\Pi_\tau \defeq \operatorname{diag}(\pi_\tau)$ supported on $\+C_{\tau,1}\times \+C_{\tau,1}$,
and $\pi_\tau\defeq[\pi_\tau(x)]_{x\in \+C_1}$ be a vector supported on $C_{\tau, 1}$

For convenience, define the pseudo inverse $\Pi_\tau^{-1} \in \bb R^{\+C_1\times\+C_1}$ of $\Pi_\tau$ as $\Pi_\tau^{-1}(x,x)=\pi_\tau(x)^{-1}$ for $x\in \+C_{\tau,1}$ and $0$ otherwise. Similarly, the pseudo inverse square root $\Pi_\tau^{-1/2} \in \bb R^{\+C_1\times\+C_1}$is defined as $\Pi_\tau^{-1/2}(x,x)=\pi_\tau(x)^{-1/2}$ for $x\in \+C_{\tau,1}$ and $0$ otherwise. When $\tau=\emptyset$, we omit the subscript.

Recalling \Cref{def:covariance} and \Cref{def:diag_mean}, 
the following lemma relates the covariance of $\mu_K$ and the matrices defined in this section.
\begin{proposition}\label{lem:mtd-to-si}
    \[
    \Pi(\mu_K)=d\Pi
    \]
    and
    \[
    \!{Cov}(\mu_K) =d \tp{(d-1) \tp{\Pi P -\frac{d}{d-1}\pi \pi^\top} + \Pi}
    \]
    in the sense of padding with zeros.
\end{proposition}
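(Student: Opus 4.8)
The plan is to unfold the definitions of the induced distributions $\pi_1,\pi_2$ and of the walk $P=P_\emptyset$ attached to the weighted simplicial complex $(\+C,\pi_d)$ with $\pi_d=\mu_K$, rewrite each matrix in terms of the single-edge and pairwise marginals of $\mu_K$, and then compare with \Cref{def:covariance} and \Cref{def:diag_mean} entry by entry. Two structural facts are used throughout: $\+C_1=X_v$, since a single coloured edge is always a proper partial colouring (so there is no padding at level one); and $K=\set{e_1,\dots,e_d}$ is a star, so for $i\neq j$ the pair $\set{(i,c_1),(j,c_2)}$ is a $2$-face of $\+C$ exactly when $c_1\neq c_2$, and for $i=j$ it is never a $2$-face. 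I begin by noting $\pi_1(\set{(i,c)})=\frac{1}{\binom d1}\sum_{\tau:\,\tau(e_i)=c}\mu_K(\tau)=\frac1d\sum_{\tau:\,\tau(e_i)=c}\mu_K(\tau)$, i.e.\ $\frac1d$ of the $(i,c)$-entry of $\Pi(\mu_K)$; hence $\Pi=\operatorname{diag}(\pi)=\frac1d\Pi(\mu_K)$, proving the first identity.

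For the second identity I compute the pair distribution: for $x=(i,c_1)\neq y=(j,c_2)$ in $\+C_1$ one has $\pi_2(\set{x,y})=\frac{1}{\binom d2}\sum_{\tau:\,\tau(e_i)=c_1,\,\tau(e_j)=c_2}\mu_K(\tau)$, where both sides vanish when $\set{x,y}$ is not a $2$-face, using that $\mu_K$ is supported on proper colourings of the star $K$. Hence $(\Pi P)(x,y)=\pi(x)P_\emptyset(x,y)=\frac12\pi_2(\set{x,y})$ for $x\neq y$, while $(\Pi P)(x,x)=0$ because $\set{x}$ is not a $2$-face. Expanding the claimed right-hand side as $d(d-1)\,\Pi P-d^2\,\pi\pi^\top+d\,\Pi$, I then verify equality with $\!{Cov}(\mu_K)$ entrywise. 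For an off-diagonal entry $x=(i,c_1)\neq y=(j,c_2)$ the three terms give $d(d-1)\cdot\frac{1}{2\binom d2}\sum_{\tau:\,\tau(e_i)=c_1,\tau(e_j)=c_2}\mu_K(\tau)-d^2\cdot\frac1d\mu_K(i,c_1)\cdot\frac1d\mu_K(j,c_2)$, writing $\mu_K(i,c)\defeq\sum_{\tau:\,\tau(e_i)=c}\mu_K(\tau)$; since $d(d-1)/(2\binom d2)=1$ and $d^2\cdot\frac1d\cdot\frac1d=1$, this is precisely $\!{Cov}(\mu_K)(x,y)$ from \Cref{def:covariance}, and the degenerate sub-cases $i\neq j,\,c_1=c_2$ and $i=j,\,c_1\neq c_2$ both collapse to $-\mu_K(i,c_1)\mu_K(j,c_2)$ on the two sides.

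It remains to check the diagonal: for $x=(i,c)$ the right-hand side is $d(d-1)\cdot 0-d^2\pi(x)^2+d\pi(x)=\mu_K(i,c)-\mu_K(i,c)^2$, which equals $\!{Cov}(\mu_K)(x,x)$ because the pairwise marginal with $i=j$ and $c_1=c_2=c$ is just $\mu_K(i,c)$; the vanishing diagonal of $P_\emptyset$ is exactly why the separate $+\,d\Pi$ term is present. Finally, any $(i,c)$ with $\pi(\set{(i,c)})=0$ produces a zero row and column of every matrix in sight, which is what ``in the sense of padding with zeros'' refers to. I expect no genuine obstacle here --- the argument is essentially bookkeeping --- and the only two delicate points are keeping the binomial normalisations $\binom d1,\binom d2$ straight when passing from $\pi_d=\mu_K$ down to $\pi_1$ and $\pi_2$, and using that $P_\emptyset$ is the non-lazy walk, so that the off-diagonal and diagonal of $\!{Cov}(\mu_K)$ are carried by the $d(d-1)\Pi P$ and $d\Pi$ terms respectively.
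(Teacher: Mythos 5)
Your proof is correct and follows the same route the paper intends: the paper states ``The proof is by direct calculation,'' and your argument is precisely that calculation, unfolding $\pi_1$, $\pi_2$, $P_\emptyset$ in terms of the marginals of $\mu_K$ and matching entries against \Cref{def:covariance} and \Cref{def:diag_mean}. The two points you flag as delicate --- the $\binom{d}{1}$, $\binom{d}{2}$ normalisations and the vanishing diagonal of $P_\emptyset$, forcing the separate $d\Pi$ term to supply the $\mu_K(i,c)$ on the diagonal of the covariance --- are indeed the only places one could slip, and you handle both correctly.
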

The proof is by direct calculation.
We apply the matrix trickle-down theorem on $(\+C,\pi_d)$ to prove the following lemma. The main idea of the proof is almost the same as that in \cite{WZZ24} except for substituting the number of feasible colors to the weights of feasible colors w.r.t $\set{p_{e,c}}$ after pinning. And the construction of matrix upper bound is simpler since the line graph of $K$ is a clique. We include the details in \Cref{appendix-mtd}.
\begin{lemma}\label{lem:PiP-bound}
    If $K$ is not adjacent to boundaries, i.e. $\set{K_i}$ and $\beta \geq \Delta + 50$, then the simplicial complex $(\+C,\pi_d)$ defined as above satisfying
    \[
    \Pi P -\frac{d}{d-1}\pi \pi^\top \mle \frac{\eta_{\Delta}}{d-1} \Pi ,
    \]
    where $\eta_{\Delta}$ is $\+O\tp{\frac{\log^2 \Delta}{\Delta}}$.
\end{lemma}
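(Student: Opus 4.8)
The plan is to apply the matrix trickle-down theorem of \cite{ALOG21} to the weighted simplicial complex $(\+C,\pi_d)$, following the strategy of \cite{WZZ24} while exploiting two simplifications specific to our situation. First, since $K=E_i$ is a broom, its edges are pairwise adjacent in the line graph, so $(\+C,\pi_d)$ is exactly the complex of weighted proper colorings of a clique on $d$ edges: a face $\tau$ of co-dimension $k$ fixes colors on $d-k$ edges, and by the factorization $\mu^\xi_K(\tau)\propto\prod_{i\in[d]}p^\xi_{e_i,\tau(e_i)}$ established above, the link $(\+C_\tau,\pi_{\tau,k})$ is again a weighted clique coloring on the $k$ remaining edges, with the colors used by $\tau$ deleted from their lists and with the same multiplicative weights $p^\xi_{e,c}$. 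Since $K$ is disjoint from and not adjacent to the boundaries, every quantitative input about these weights is already available: the marginal ratios $p^\xi_{e,c}/q^\xi_e\le 1/\ell^\xi_e$ from \Cref{lem:marginal-bound-weighted-1}, the pair bound $q^\xi_{e,f}\le q^\xi_e q^\xi_f/\beta$ from \Cref{lem:marginal-bound-weighted-2}, and the weighted marginal upper and lower bounds \Cref{lem:marginal-upper-weighted} and \Cref{lem:marginal-lower-weighted}; moreover $\beta\ge\Delta+50$ leaves a free color for each unpinned edge of $K$, so every walk $P_\tau$ is irreducible. (We assume $d\ge 2$; for $d=1$ the complex is one-dimensional and there is nothing to prove.)

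Next I would fix the certificate matrices $\set{Z_\tau}$ over faces $\tau$ with $\!{codim}(\tau)\ge 2$. Writing $\+C_{\tau,1}=\bigsqcup_{e}\set{(e,c):c\in\+L^\xi(e)}$ grouped by the unpinned edges $e$ of $K$, I would take $Z_\tau$ in the block form
\[
Z_\tau \defeq a_k\,\Pi_\tau \;+\; b_k\sum_{e}\Pi^{(e)}_\tau\,\*1^{(e)}\tp{\*1^{(e)}}^{\top}\Pi^{(e)}_\tau ,
\]
a diagonal part plus a rank-one correction inside each color block (manifestly positive semidefinite), where $k=\!{codim}(\tau)$, $\Pi^{(e)}_\tau$ is $\Pi_\tau$ restricted to the block of $e$, $\*1^{(e)}$ the all-ones vector on that block, and $a_k,b_k>0$ are scalars to be chosen. (This is the analogue of the matrices of \cite{WZZ24}, but genuinely simpler since there is a single clique rather than the line graph of a bounded-degree graph.) The base case of the trickle-down theorem concerns co-dimension-two faces $\tau$, whose links are two-dimensional; there the two-point correlations are controlled by \Cref{lem:marginal-bound-weighted-1} and \Cref{lem:marginal-bound-weighted-2} (in particular any two edges of the clique never take the same color, so the same-color cross terms vanish and the remaining ones carry mass $O\tp{1/\beta}$ relative to the product of marginals), which lets one verify the base inequality with $a_2,b_2=O\tp{1/\beta}$.

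I would then run the trickle-down induction from the co-dimension-two faces to $\tau=\emptyset$: for each $\tau$ with $\!{codim}(\tau)=k\ge 3$ one checks the recursion of \cite{ALOG21}, which expresses the admissibility of $Z_\tau$ through $\Pi_\tau P_\tau$ and the averaged restrictions $\E[x\sim\pi_\tau]{Z_{\tau\cup\set{x}}}$. Because of the clique structure the directions $x$ are simply the remaining edge-color pairs, so this average is an explicit finite sum and the recursion reduces to a scalar comparison between $(a_k,b_k)$ and $(a_{k+1},b_{k+1})$; as in \cite{WZZ24} it can be solved with $a_k,b_k$ of size $O\tp{1/\beta}$ together with per-level corrections that telescope over $k=2,\dots,d$ and cost a logarithmic factor coming from the $O\tp{\log\Delta}$ levels at which $\ell^\xi_e$ is only a bounded multiple of $\Delta$. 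Unwinding the induction gives $Z_\emptyset\mle\frac{\eta_\Delta}{d-1}\Pi$ with $\eta_\Delta=\+O\tp{\log^2\Delta/\Delta}$, and the matrix trickle-down theorem applied at $\tau=\emptyset$ (in the normalization of \Cref{lem:mtd-to-si}) then yields $\Pi P-\frac{d}{d-1}\pi\pi^{\top}\mle Z_\emptyset\mle\frac{\eta_\Delta}{d-1}\Pi$, which is the claim.

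The main obstacle will be precisely this last calibration of $a_k$ and $b_k$: they must make the base case and every trickle-down comparison hold at once, yet accumulate a final constant of only $\+O\tp{\log^2\Delta/\Delta}$. This is where $\beta\ge\Delta+50$ really enters — besides keeping every link irreducible, it ensures that at each of the $O\tp{\log\Delta}$ ``dangerous'' levels (where some edge still has only a constant multiple of $\Delta$ available colors) the per-level contribution is $O\tp{\log\Delta/\Delta}$ rather than a constant, so the sum over levels remains $\+O\tp{\log^2\Delta/\Delta}$. The rest — irreducibility of the $P_\tau$, the harmless case of a pendant $e_i$, and the bookkeeping relating $\Pi_{\tau\cup\set{x}}$ to $\Pi_\tau$ — is routine and is carried out in \Cref{appendix-mtd}.
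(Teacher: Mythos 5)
Your overall framework — matrix trickle-down from \cite{ALOG21}, certificate matrices parametrized by two sequences $(a_k)$, $(b_k)$, an induction over co-dimension, and a final $\+O\tp{\log^2\Delta/\Delta}$ bound — is the right one and matches the paper's at that level. But the specific certificate matrices you propose are genuinely different from the paper's, and as written your recursion does not close; this is a real gap, not a cosmetic one.

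Your $Z_\tau = a_k\Pi_\tau + b_k\sum_e\Pi^{(e)}_\tau\*1^{(e)}(\*1^{(e)})^\top\Pi^{(e)}_\tau$ is block diagonal \emph{by edge} (you say ``color block'', but the sum runs over edges $e$ and $\Pi^{(e)}_\tau$ is the marginal of a single edge), with a rank-one \emph{positive} correction that depends \emph{quadratically} on $\Pi_\tau$. The paper's $M_\tau$ is block diagonal \emph{by color}, and inside each color block it is $\tfrac{1}{k-1}(A^c_\tau + \Pi^c_\tau B^c_\tau)$ where $A^c_\tau$ is off-diagonal and \emph{negative} (it tracks the hard constraint that two incident edges cannot share a color, realized through $-\!{Adj}_\tau$, whose spectrum $\set{-1,k-1}$ is exactly what makes the eigenvalue computation at \eqref{eqn:spectral-radius-of-A} closed-form) and $B^c_\tau$ is a \emph{constant} diagonal $b_k\!{Id}_\tau$. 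That constant structure is not an incidental convenience: it is what makes $\E_{x\sim\pi_\tau}\big[\Pi_{\tau\cup\set{x}}B_{\tau\cup\set{x}}\big] = b_{k-1}\Pi_\tau$ hold exactly (this is \eqref{eqn:contraint-B-entry}), and it is what lets the trickle-down inequality reduce to the scalar recurrence \eqref{eqn:constraint-B} in $(b_k)$. Similarly $A_\tau$ is \emph{defined} as the recursive average \eqref{eqn:A-def} of base-case matrices, which forces the identity $\E_{x\sim\pi_\tau}[A_{\tau\cup\set{x}}] = \tfrac{k-2}{k-1}\tfrac{a_{k-1}}{a_k}A_\tau$ by construction.

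Your $Z_\tau$ has neither of these features. Because the rank-one part involves $\Pi_\tau$ quadratically, the averaging $\E_{x\sim\pi_\tau}\big[\Pi^{(e)}_{\tau\cup\set{x}}\*1\*1^\top\Pi^{(e)}_{\tau\cup\set{x}}\big]$ has entries $\E_x\big[\pi_{\tau\cup\set{x}}(ec_1)\pi_{\tau\cup\set{x}}(ec_2)\big]$, which are \emph{not} equal to $\pi_\tau(ec_1)\pi_\tau(ec_2)$ — they pick up a covariance correction, and bounding it is precisely the kind of two-point information the whole argument is trying to establish, so you cannot appeal to it. Hence the claim ``the recursion reduces to a scalar comparison between $(a_k,b_k)$ and $(a_{k+1},b_{k+1})$'' is not justified for your $Z_\tau$. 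In addition, since your correction is entry-wise nonnegative while the actual covariance $\Pi_\tau P_\tau - 2\pi_\tau\pi_\tau^\top$ has large \emph{negative} off-block entries at same-color pairs $(ec,fc)$ (two edges of the broom can never share a color, so $\Pi_\tau P_\tau(ec,fc)=0$), the base-case domination $\Pi_\tau P_\tau-2\pi_\tau\pi_\tau^\top\mle Z_\tau$ with $a_2,b_2=O(1/\beta)$ also requires a nontrivial Schur-complement check that you do not supply; the paper avoids it by putting the negative mass directly into $A^c_\tau$. To repair the argument you would essentially need to abandon the edge-block $\Pi_\tau\*1\*1^\top\Pi_\tau$ form and pass to the color-block $A_\tau$-plus-constant-$B_\tau$ ansatz of \Cref{sss:Atau} and \Cref{sss:Btau}, which is what the paper does.
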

Then \Cref{lem:SI-broom} is derived directly from \Cref{lem:PiP-bound} and \Cref{lem:mtd-to-si}.

\section{Tight bound for weak spatial mixing}\label{sec:wsm}
In this section, we prove the tight bound for weak spatial mixing  of $q$-edge coloring instance on trees.
Note that the strong spatial mixing of $q$-edge coloring instance is a special case of spatial mixing of list instance (for list coloring instance, weak spatial mixing is equivalent to strong spatial mixing).
Therefore the theorem stated in this section is a weak version of spatial mixing conclusions and thus we can give a tight bound for trees. 

The main theorem for weak spatial mixing on trees is as follows.
\begin{theorem}\label{thm:tight_WSM}
    Given a tree $T = (V,E)$ with $n$ vertices, $m$ edges and maximum degree $\Delta$.
    Suppose that instance $(T,\+L)$ is a $q$-edge coloring instance (i.e. for any $e\in E$, $\+L(e) = [q]$). 
    Then we have that
    \begin{enumerate}
        \item If $q \geq 2\Delta - 1$, the edge coloring instance satisfies weak spatial mixing with rate $1 - \frac{1 - \eps}{2\Delta - (1+\eps)}$, where $\eps = \max\{\frac{\Delta - 1}{\Delta}, \frac {e - 1}e\}$.
        \item If $q \leq 2\Delta - 2$, there exists an instance that does not satisfy weak spatial mixing.
    \end{enumerate}
\end{theorem}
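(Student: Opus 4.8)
The plan is to read off both directions from one object: the recursion for the single-edge marginal $\mu_{e_r}$ on a rooted tree. Root the tree at $r$, let $e_r=(u,r)$ be the edge whose marginal we track, and recall from \Cref{sec:recursion} that $\mu_{e_r}$ is a deterministic function of the broom marginal $\*p_r$, which is produced from the broom marginals of the children by the recursion $f$ of \Cref{lem:tree-recursion}. (Alternatively one may pass to the lower-dimensional recursion on the vectors $g_v(c)\defeq\Pr_{T_v}[c\notin c(E_{T_v}(v))]$, which satisfies $g_v(c)\propto\sum_{c_1,\dots,c_d \text{ distinct},\,c_i\ne c}\prod_i g_{v_i}(c_i)$ and determines $\mu_{e_r}\propto g_r$; this is cheaper to analyse near the threshold.) Since WSM equals SSM for list instances, it suffices to prove, for the list instance obtained by re-reading $\Lambda$ as list-shrinking, a single-step contraction estimate for $f$ (or $f^\phi$ for a suitable potential $\phi$) and then telescope over the $K$ levels separating $\Lambda$ from $e_r$, bounding the top step crudely by $1$ in total variation.

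For part (1), $q\ge 2\Delta-1$, the worst instance is only $1$-extra, so the Jacobian estimates of \Cref{sec:jacobian} — which need $\beta$ of order $\Delta$ — are unavailable, and I would instead bound the per-level contraction directly from the marginal estimates of \Cref{sec:marginals}. The point is that in the rooted recursion every broom sees at most $\Delta-1$ child edges, so the messages $\*p_{v_i}(\bar c)$ never degenerate: combining the upper bounds of \Cref{lem:marginal_bound_1,lem:marginal_bound_2,cor:marginal-bound-gamma-delta} with the GKM lower bound of \Cref{lem:marginal-bound-gkm} pins every relevant marginal into a window whose extremes are captured by $\eps=\max\{\frac{\Delta-1}{\Delta},\frac{e-1}{e}\}$ (the $\frac{e-1}{e}$ term coming from the $(1-1/k)^k$ shape of the GKM bound at small extra-colour count). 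Plugging these into the one-step estimate gives a contraction factor $1-\delta$ with $\delta=\frac{1-\eps}{2\Delta-(1+\eps)}$, and telescoping over the $K$ levels yields $\|\mu_{e_r}^\sigma-\mu_{e_r}^\tau\|_{\DTV}\le C(q,\Delta)(1-\delta)^K$, which is \Cref{thm:tight_WSM}(1).

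For part (2), $q\le 2\Delta-2$, I would exhibit one infinite tree $T$ with maximum degree $\Delta$ on which WSM fails. Since an edge of degree $\ge q$ has no feasible colour, $T$ must avoid such edges, so I would take a ``subdivided near-regular caterpillar'': degree-$\Delta$ hubs each carrying $\Delta-2$ pendant edges, joined along a spine through degree-$2$ relay vertices, so that every spine edge has degree at most $\Delta\le q-1$ and is as constrained as the instance allows. Fixing $e_r$ at the top, take $\Lambda_K$ to be the set of edges at depth $K$ and pick two feasible pinnings $\sigma_K,\tau_K$ (a ``free-type'' configuration and a ``rigid-type'' one), aiming to show $\|\mu_{e_r}^{\sigma_K}-\mu_{e_r}^{\tau_K}\|_{\DTV}\ge c>0$ for all $K$. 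This is a non-uniqueness statement, which I would prove by reducing $f$ on $T$ to a low-dimensional recursion — tracking only the ``deficiency'' statistics of a hub, since most coordinates decouple — locating its symmetric fixed point, and showing that for $q\le 2\Delta-2$ the derivative there has modulus $\ge 1$. Then the symmetric fixed point is repelling, there is a stable $2$-cycle, the two boundary families flow to different limits, and the root marginals stay a constant apart.

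The hard part is part (2): beyond guessing the right tree, the genuine difficulty is turning ``the linearised recursion is expanding'' into a rigorous, uniform-in-$K$ lower bound on $\|\mu_{e_r}^{\sigma_K}-\mu_{e_r}^{\tau_K}\|_{\DTV}$. Because $f$ is intrinsically high-dimensional, the crux is to isolate the correct low-dimensional invariant sub-recursion on which the instability is visible and to certify that the chosen boundary conditions stay in its basin of attraction; once a clean one- or two-dimensional recursion with an expanding fixed point is in hand, the non-WSM conclusion follows from a standard two-point (two-cycle) comparison. A secondary nuisance in part (1) is that at the exact threshold $q=2\Delta-1$ the contraction factor collapses to $1$ unless one uses the sharp marginal bounds above rather than the crude $\beta$-extra ones, which is why the argument cannot simply quote \Cref{thm:SSM}.
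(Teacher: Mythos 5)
Your proposal for part (1) is in the right spirit: anchor with the GKM lower bound of \Cref{lem:marginal-bound-gkm} to get the base-case window that produces $\eps=\max\{\frac{\Delta-1}{\Delta},\frac{e-1}{e}\}$, then telescope a per-level contraction. The paper does exactly this, but on the scalar quantity $P_{r,c}=\Pr_T[c(e_r)=c\mid\tau]$ and its deviation from $1/q$ (\Cref{lem:WSM_contraction}), not on the broom recursion $f$ of \Cref{sec:recursion}, which is both unnecessary at $q\ge 2\Delta-1$ and would not admit the Jacobian machinery (as you noted, $\beta$ can be as small as $1$). You also omit the extension from pendant edges to arbitrary edges — the paper handles this by subdividing $e$ into two pendant edges and comparing products of the resulting marginals — but the overall plan is workable.

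The proposal for part (2) has a genuine gap, both in the construction and in the method. The paper's example (\Cref{eg:hardness_WSM}) is the $(\Delta-1)$-ary tree: pin the leaf layer once with the palette $\{1,\dots,\Delta-1\}$ and once with the disjoint palette $\{\Delta,\dots,2\Delta-2\}$. Because $q\le 2\Delta-2$, these two palettes exhaust $[q]$ up to parity, and the coloring is then \emph{forced} level by level — each broom's $\Delta-1$ child edges use up one palette, so the parent edge must lie in the complementary palette, and this alternates all the way to the root. The root marginal under $\tau_1$ and under $\tau_2$ are therefore supported on disjoint color sets, giving TV distance exactly $1$ at every depth, i.e.\ no decay at all. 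No fixed-point or linearization analysis is needed; rigidity alone kills WSM. Your ``subdivided caterpillar with degree-2 relay vertices'' would not work: a degree-2 relay vertex leaves substantial slack (a spine edge at a relay is constrained by only one other edge on that side), so the chain of forcing breaks and the root marginal washes out — if anything, the relays \emph{help} correlation decay. And your proposed route — locate the symmetric fixed point of $f$, show its derivative has modulus $\ge 1$, exhibit a stable 2-cycle — is precisely the hard analytic program you flag as the ``hard part,'' but it is not what's needed here: the counterexample is a zero-entropy rigidity phenomenon, not a bifurcation of a smooth recursion. You should replace the caterpillar and the dynamical-systems argument with the one-paragraph forcing argument on the $(\Delta-1)$-ary tree.
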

Consider the following example which simply shows that hardness of weak spatial mixing:
\begin{example}\label{eg:hardness_WSM}
    Consider a $(\Delta - 1)$-regular tree with depth $d$ (the depth of the root $r$ is $0$) and $d$ is an even number. Let $\Lambda$ be the edges between vertices of depth $d$ and $d-1$. Let $\tau_1$ be the pinning over $\Lambda$ which only uses $1,2,\dots,\Delta-1$ and $\tau_2$ be the pinning over $\Lambda$ which only uses $\Delta,\dots,2\Delta - 2$. It is easy to verify that for any $\{u,v\}\in E\setminus \Lambda$ and $\-{dep}(u) = \-{dep}(v) + 1$,
    $$
        \forall \sigma \in \Omega^{\tau_1}, , \sigma(\{u,v\}) \in 
        \begin{cases}
            \{1,\dots,\Delta-1\} &, 2\ |\ \-{dep}(u)
            \\\{\Delta,\dots,2\Delta-2\} &, 2 \nmid \-{dep}(u)
        \end{cases}
    $$
    $$
        \forall \sigma \in \Omega^{\tau_2}, \sigma(\{u,v\}) \in 
        \begin{cases}
            \{\Delta,\dots,2\Delta-2\} &, 2\ |\ \-{dep}(u)
            \\\{1,\dots,\Delta-1\} &, 2 \nmid \-{dep}(u)
        \end{cases}
    $$
    where $\-{dep}(u)$ is the depth of $u$. Therefore, for any $e\in E\setminus \Lambda$, we have that
    $$
        \|\mu_e^\sigma - \mu_e^\tau\|_{\-{TV}} = 1
    $$
    which demonstrates that the weak spatial mixing does not hold.
\end{example}
The proof scheme is also using the idea of correlation decay and we use the uniform distribution as bridge to prove the weak spatial mixing property. And we use another recursion which is different from that in strong spatial mixing.
Instead of considering a broom of edges, we specify the marginal probability of a pendant edge and then generalize to every edge.
Since the lists of feasible colors are clear, we use $\Pr[T]{\cdot}$ to denote $\Pr[T,\+L]{\cdot}$ for simplicity.
\begin{lemma}[One-step contraction]\label{lem:WSM_contraction}
Suppose $(T = (V,E), \+L)$ is a $q$-edge coloring instance, where $T$ is a tree with a pendant edge $e = \{r',r\}$ on its root $r$ (that is, $\deg(r')=1$) and $\tau$ is the pinning over a set of edges $\Lambda$ whose edges are incident to leaf vertices.  
If for any $e_i = \{v_i,r\}\in E$, the subtree $T_i$ with pendant edge $e_i$ satisfies that 
$$
    \forall c\in[q], \abs{\Pr[T_i\cup \{e_i\}]{c(e_i) = c\mid c(\Lambda) = \tau} - \frac 1q} \leq \delta
$$
where $\delta < \frac 1q$ is a universal constant, then we have that
$$
    \forall c\in [q], \abs{\Pr[T]{c(e) = c\mid c(\Lambda) = \tau} - \frac 1q} \leq \frac{2\Delta - 2}{q(1 - \delta\abs{q-2\Delta + 2})} \delta.
$$
\end{lemma}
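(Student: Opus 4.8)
The plan is to write $\Pr[T]{c(e)=c \mid c(\Lambda)=\tau}$ as a function of the marginal probabilities on the child edges $e_1,\dots,e_d$ (where $d=\deg(r)-1 \le \Delta-1$), and then Taylor-expand this function around the uniform point. Since $e=\{r',r\}$ is pendant, a color $c$ is feasible for $e$ exactly when it is not used on any child edge, so by the tree factorization
\[
\Pr[T]{c(e)=c \mid c(\Lambda)=\tau} = \frac{\prod_{i=1}^d \bigl(1-x_{i,c}\bigr)}{\sum_{c'\in[q]}\prod_{i=1}^d \bigl(1-x_{i,c'}\bigr)},
\]
where $x_{i,c} \defeq \Pr[T_i\cup\{e_i\}]{c(e_i)=c \mid c(\Lambda)=\tau}$. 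The hypothesis says $|x_{i,c}-\tfrac1q|\le\delta$ for all $i,c$. Writing $x_{i,c} = \tfrac1q + \eps_{i,c}$ with $|\eps_{i,c}|\le\delta$, I would expand numerator and denominator and track the deviation from $\tfrac1q$.

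The key computation: let $P_c \defeq \prod_{i=1}^d(1-x_{i,c}) = \prod_{i=1}^d(\tfrac{q-1}{q}-\eps_{i,c})$. Then
\[
\Pr[T]{c(e)=c\mid c(\Lambda)=\tau} - \frac1q = \frac{P_c - \frac1q\sum_{c'}P_{c'}}{\sum_{c'}P_{c'}}.
\]
For the numerator, I would factor out $(\tfrac{q-1}{q})^d$ and write $P_c = (\tfrac{q-1}{q})^d \prod_i(1 - \tfrac{q}{q-1}\eps_{i,c})$; a first-order bound gives $|P_c - (\tfrac{q-1}{q})^d| \le (\tfrac{q-1}{q})^{d-1} d\,\delta + (\text{higher order})$, and more carefully the difference $P_c - \tfrac1q\sum_{c'}P_{c'}$ is controlled because the leading terms $(\tfrac{q-1}{q})^d$ cancel. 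The cleanest route is: numerator $= \sum_{c'} \tfrac1q (P_c - P_{c'})$, and bound $|P_c - P_{c'}|$ as a telescoping product difference, getting roughly $(d\cdot \tfrac{q}{q-1}\cdot 2\delta)\cdot(\tfrac{q-1}{q})^d$ up to the factor accounting for the $|q-2\Delta+2|\delta$ correction in the denominator of the target bound. For the denominator I would lower-bound $\sum_{c'}P_{c'} \ge q(\tfrac{q-1}{q})^d(1 - \delta|q-2\Delta+2|)$ or similar, which is where the $q-2\Delta+2$ term enters — it comes from the worst-case spread of the $\eps_{i,c'}$ across the $q$ colors subject to $\sum_c x_{i,c}$ being constrained (each $x_{i,\cdot}$ is a sub-probability with $\sum_c x_{i,c} \le \deg(e_i)+\dots$, but more relevantly only $q-2\Delta+2$ colors can be "free"). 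I would make this precise using that at most $2\Delta-2$ colors are forbidden on any given edge, so the deviations $\eps_{i,c}$ are nonzero in a controlled way.

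The main obstacle I anticipate is getting the constant exactly right: the target $\tfrac{2\Delta-2}{q(1-\delta|q-2\Delta+2|)}\delta$ has a specific shape, and sloppy first-order estimates will produce extra factors of $\tfrac{q}{q-1}$ or lose the tightness in the denominator. The factor $2\Delta-2$ (rather than $d \le \Delta-1$) suggests the bound accounts for the two-sided deviation $|x_{i,c}-\tfrac1q|\le\delta$ contributing a factor $2$ somewhere, or alternatively that both the numerator spread and a denominator correction each contribute $\Delta-1$. I would resolve this by carefully writing $P_c - P_{c'}$ as a sum of $d$ terms each of the form $(\prod_{i<k}\cdots)(\eps_{k,c'}-\eps_{k,c})(\prod_{i>k}\cdots)$, bounding $|\eps_{k,c'}-\eps_{k,c}|\le 2\delta$, and dividing by the denominator bound. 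Everything else — the tree factorization, the feasibility characterization for pendant edges, and the algebra — is routine once the telescoping decomposition is set up. Finally, I note the lemma only handles a pendant root edge; the extension to arbitrary edges and the iteration over $K$ levels to get exponential decay is handled separately in the proof of Theorem~\ref{thm:tight_WSM}.
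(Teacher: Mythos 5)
Your proposed tree factorization
\[
\Pr[T]{c(e)=c \mid c(\Lambda)=\tau} = \frac{\prod_{i=1}^d \bigl(1-x_{i,c}\bigr)}{\sum_{c'\in[q]}\prod_{i=1}^d \bigl(1-x_{i,c'}\bigr)}
\]
is incorrect, and this is the load-bearing step of the proposal. This recursion is the right one for \emph{vertex} coloring on trees, where the children of a vertex are independent given the color of that vertex. For \emph{edge} coloring the child edges $e_1,\dots,e_d$ all share the endpoint $r$ and must be pairwise distinctly colored, so they are not independent given $c(e)$, and the product $\prod_i(1-x_{i,c})$ is not proportional to the true marginal. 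A concrete counterexample: take $q=3$, $d=2$ with sub-probabilities $x_{1,\cdot}=(\tfrac12,\tfrac12,0)$ and $x_{2,\cdot}=(0,\tfrac12,\tfrac12)$; your formula gives $\Pr{c(e)=1}=\tfrac25$ whereas the correct answer (and what the paper's recursion gives) is $\tfrac13$. The correct recursion, which the paper uses, sums over \emph{proper} (pairwise-distinct) colorings $A$ of the children:
\[
P_{r,c}=\frac{\sum_{A\in C_r,\; c\notin A}\prod_{i=1}^d P_{i,A_i}}{(q-d)\sum_{A\in C_r}\prod_{i=1}^d P_{i,A_i}},
\]
and the subsequent algebra is very different from a Taylor expansion around the uniform point: the paper analyzes the reciprocal $1/P_{r,c}$, decomposes it as $q-d+\sum_i P_{i,c}\cdot(\text{a ratio bounded by }(\tfrac1q-\delta)^{-1})$, and deduces two-sided bounds $q-\tfrac{2dq\delta}{1+q\delta}\le \tfrac{1}{P_{r,c}}\le q+\tfrac{2dq\delta}{1-q\delta}$. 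Your heuristic that $q-2\Delta+2$ should be read as a count of ``free'' colors is a rationalization that doesn't match the actual mechanism: in the paper $q-2d$ appears purely algebraically when inverting these reciprocal bounds, not from a counting argument about forbidden colors. Because the factorization at the very first step is wrong, the telescoping estimate that follows cannot be repaired without replacing it by the sum-over-proper-colorings recursion, at which point you'd essentially be reproducing the paper's argument.
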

\begin{proof}[Proof of \Cref{lem:WSM_contraction}]
    Assume that there are $d$ children of $r$.
    Let $P_{r,c}$ denote $\Pr[T]{c(e) = c\mid c(\Lambda) = \tau}$ and $P_{i,c}$ denote $\Pr[T_i\cup\{e_i\}]{c(e_i) = c\mid c(\Lambda) = \tau}$.
    Then we have the following recursion for any $c\in [q]$, 
    $$
        P_{r,c} = \frac{\displaystyle \sum_{A\in C_r, c\notin A}\prod_{i=1}^d P_{i,A_i}}{(q-d) \displaystyle \sum_{A\in C_r}\prod_{i=1}^d P_{i,A_i}}.
    $$
    $\delta < \frac 1q$ implies that $|C_r| = q^{\underline{d}}$. Therefore, the following should be true for any $c\in[q]$,
    \begin{align}
        \nonumber \frac 1{P_{r,c}} &= q - d + \sum_{i=1}^d P_{i,c}\frac{\displaystyle (q-d)\sum_{A\in C_r, A_i = c}\prod_{j\neq i}P_{j,A_j}}{\displaystyle \sum_{A\in C_r, c\notin A}\prod_{j=1}^d P_{j,A_j}}
        \\\nonumber &= q - d + \sum_{i=1}^d P_{i,c}\frac{\displaystyle \sum_{A\in C_r, c\notin A}\prod_{j\neq i}P_{j,A_j}}{\displaystyle \sum_{A\in C_r, c\notin A}P_{i,A_i}\prod_{j\neq i} P_{j,A_j}}
        \\\nonumber &\leq q - d + \sum_{i=1}^d (\frac 1q + \delta)\frac{\displaystyle \sum_{A\in C_r, c\notin A}\prod_{j\neq i}P_{j,A_j}}{\displaystyle (\frac 1q - \delta)\sum_{A\in C_r, c\notin A}\prod_{j\neq i} P_{j,A_j}}
        \\ \label{eq:left_side_WSM_contraction} &= q + \frac{2dq\delta}{1-q\delta}.
    \end{align}
    In the same way, we can show that $\frac 1{P_{r,c}} \geq q - \frac{2dq\delta}{1 + q\delta}$. 
    Combine this inequality and \Cref{eq:left_side_WSM_contraction}, we get
    $$
        \frac{-2d\delta}{q-\delta q(q-2d)}\leq P_{r,c} - \frac 1q \leq \frac{2d\delta}{q+\delta q(q-2d)}
        \implies |P_{r,c} - \frac 1q| \leq \frac{2d}{q(1 - \delta\abs{q-2d})} \delta.
    $$
    When $q > 2d$, $\frac{2d}{q(1 - \delta(q-2d))} \delta$ is monotone increasing with respect to $d$. Then $d\leq \Delta - 1$ implies that
    \[
        \abs{P_{r,c} - \frac 1q} \leq \frac{2\Delta - 2}{q(1 - \delta\abs{q-2\Delta + 2})} \delta.
        \qedhere
    \]
\end{proof}
\Cref{lem:marginal-bound-gkm} shows that the $q$-edge coloring instance admits the marginal lower bound, which is a start point of recursive contraction.
Now we can prove \Cref{thm:tight_WSM}.
\begin{proof}[Proof of \Cref{thm:tight_WSM}]
    We prove weak spatial mixing for pendant edges first.
    Let $d$ denote $\min_{e'\in \Lambda} \-{dist}_{T}(e',e)$.
    \Cref{lem:marginal-bound-gkm} implies that if $d = 2$, 
    $$
        \frac{1}{eq}\leq \Pr[T]{c(e) = c\mid c(\Lambda) = \tau} \leq \frac 1{q-\Delta + 1}.
    $$
    The right hand side inequality trivially follows from the recursion in the proof of \Cref{lem:WSM_contraction}.
    Therefore, we have that
    \begin{align}\label{eq:base_case_WSM}
        \forall c\in [q], \abs{\Pr[T]{c(e) = c\mid c(\Lambda) = \tau} - \frac 1q} \leq \max\set{\frac{\Delta - 1}{\Delta}, \frac {e - 1}e}\frac 1q < \frac 1q
    \end{align}
    which serves as the base case of recursive contraction of the marginal probability.
    Let $\eps = \max\set{\frac{\Delta - 1}{\Delta}, \frac {e - 1}e}$.
    Therefore, Plugging \Cref{eq:base_case_WSM} and \Cref{lem:WSM_contraction}, we get that for $d\geq 2$,
    \begin{align}\label{eq:WSM_decay}
        \forall c\in [q], \abs{\Pr[T]{c(e) = c\mid c(\Lambda) = \tau} - \frac 1q} \leq \frac{\eps}{q} \tp{1 - \frac{1 - \eps}{2\Delta - (1+\eps)}}^{d-2}.
    \end{align}

    For general edge $e = \{u,v\}$, we split $e$ into two pendant edges $e_1 = \{u,w\}$ and $e_2 = \{w,v\}$ by adding a new vertex $w$ to $V$ and $\+L(e_1) = \+L(e_2) = \+L(e)$.
    Let $T'$ denote the new graph after splitting $e$. 
    Then we have that for any $c\in [q]$,
    \begin{align*}
        \Pr[T]{c(e) = c\mid c(\Lambda) = \tau} &= \Pr[T']{c(e_1) = c\mid c(e_1) = c(e_2), c(\Lambda) = \tau}
        \\&=\frac{\Pr[T']{c(e_1) = c(e_2) = c\mid c(\Lambda) = \tau}}{\Pr[T']{c(e_1) = c(e_2)\mid c(\Lambda) = \tau}}
        \\&= \frac{\Pr[T']{c(e_1) = c\mid c(\Lambda) = \tau}\Pr[T']{c(e_2) = c\mid c(\Lambda) = \tau}}{\sum_{c'\in[q]}\Pr[T']{c(e_1) = c'\mid c(\Lambda) = \tau}\Pr[T']{c(e_2) = c'\mid c(\Lambda) = \tau}}.
    \end{align*}
    The last equation follows from the disconnection between $e_1$ and $e_2$.
    For $d\geq 2$, let $\eta = \frac{\eps}{q} \tp{1 - \frac{1 - \eps}{2\Delta - (1+\eps)}}^{d-2}$ for simplicity.
    By \Cref{eq:WSM_decay}, we have that
    \begin{align*}
        \frac{(\frac 1q - \eta)^2}{q(\frac 1q + \eta)^2} \leq \Pr[T]{c(e) = c\mid c(\Lambda) = \tau} \leq \frac{(\frac 1q + \eta)^2}{q(\frac 1q - \eta)^2}.
    \end{align*}
    Therefore, for $d\geq 2$, plugging $\eta \leq \frac{\eps}{q}$ and the above inequality implies that 
    $$
        |\Pr[T]{c(e) = c\mid c(\Lambda) = \tau} - \frac 1q| \leq \frac{4\eta}{(1 - q\eta)^2}\leq \frac{4\eps}{q(1-\eps)^2} \tp{1 - \frac{1 - \eps}{2\Delta - (1+\eps)}}^{d-2}.
    $$
    We pick $C = \max\set{\frac{8\eps}{(1-\eps)^2}(1 - \frac{1 - \eps}{2\Delta - (1+\eps)})^{-2}, \tp{1 - \frac{1 - \eps}{2\Delta - (1+\eps)}}^{-1} }$ to finish the first part of the proof.

    For the second part, it is trivial after applying \Cref{eg:hardness_WSM}.
\end{proof}

\bibliographystyle{alpha}
\bibliography{edge_coloring}

\appendix

\section{Proofs for matrix trickle-down process}
\label{appendix-mtd}
In this section, we use the language of simplicial complexes as in \Cref{sec:covariance}.
That is, for a weighted edge coloring instance on a tree $T$ with $\beta$-extra
color losts $\+L$ and distribution $\mu$, we consider the colorings on a broom $K$ 
as a weighted simplicial complex $(\+C, \pi_{|K|})$ such that $\pi_{|K|} = \mu_K$.

Throughout this section, we consider $K$ as an edge set.
For a pinning $\tau\in \+C_{i}$ with $0\le i\le|K|-2$, we define
$K_\tau = \set{e\in K\mid e\notin \tau}$ and $\!{col}(\tau)=\set{c \mid \exists e\in K, \tau(e)=c}$. Let $\!{Id}_\tau \in \bb{R}^{\+C_1\times \+C_1}$ be the identity matrix restricted on $\+C_{\tau,1}$. Define 
$\!{Adj}_\tau \in \bb{R}^{\+C_1\times \+C_1}$ such that $\!{Adj}_\tau(ec,fc)=1$ if $ec,fc\in \+C_{\tau,1}$ and all other entries are $0$.
\subsection{Matrix trickle-down theorem}
The following proposition is the main tool we use in this section.
\begin{proposition}[Theorem 1.3 in \cite{ALOG21}]
	\label{thm:mtd-inductive}
	Given a pure $d$-dimensional weighted simplicial complex $(\+C,\pi_d)$, if there exists a family of matrices $\set{M_\tau\in\bb R^{\+C_1\times \+C_1}}$ satisfying
	\begin{itemize}
		\item For every $\tau\in\+C_{d-2}$, 
		\[
			\Pi_\tau P_\tau -2 \pi_\tau \pi_\tau^\top \mle M_\tau \mle \frac{1}{5}\Pi_\tau;
		\]
		\item For every face $\tau\in \+C_{d-k}$ with $k\geq 3$, one of the following two conditions hold:
		\begin{enumerate}
			\item 
			\[
				M_\tau \mle \frac{k-1}{3k-1}\Pi_\tau\quad\mbox{ and }\quad\E[x\sim\pi_\tau]{M_{\tau\cup\set{x}}}\mle M_\tau -\frac{k-1}{k-2}M_\tau\Pi^{-1}_\tau M_\tau .
			\]
			\item $(\+C_\tau,\pi_{\tau,k})$ is the product of $M$ pure weighted simplicial complexes $(\+C^{(1)},\pi^{(1)}), \dots (\+C^{(M)},\pi^{(M)})$ of dimension $n_1,\dots,n_M$ respectively and
			\[
				M_\tau = \sum_{i\in [M]\colon n_i\ge 2} \frac{n_i(n_i-1)}{k(k-1)}\cdot M_{\tau\cup \eta_{-i}}
			\]
			where $\eta_{-i} = \eta\setminus \+C^{(i)}_1$ for an arbitrary $\eta\in \+C_{\tau,k}$.
			\end{enumerate}
	\end{itemize}
	Then for every face $\tau\in\+C_{d-k}$ with $k\geq 2$, it holds that
	\[
		\Pi_\tau P_\tau -\frac{k}{k-1}\pi_\tau \pi_\tau^\top \mle M_\tau \mle \frac{k-1}{3k-1}\Pi_\tau.
	\]
	In particular, $\lambda_2(P_\tau)\le \lambda_1(\Pi_\tau^{-1}M_\tau)$.
\end{proposition}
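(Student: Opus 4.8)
Here is how I would approach proving the matrix trickle-down theorem (\Cref{thm:mtd-inductive}), which is the statement we are borrowing; I would reproduce the argument of \cite{ALOG21}.

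The plan is to prove the two-sided Loewner bound by \emph{downward induction on the co-dimension} $k$, in the spirit of Oppenheim's trickle-down method, with the scalar second-eigenvalue estimates replaced by the matrix estimates carried by the auxiliary family $\set{M_\tau}$. The base case $k=2$ is immediate: for $\tau\in\+C_{d-2}$ the first displayed hypothesis is exactly the desired conclusion, since $\frac{k}{k-1}=2$ and $\frac{k-1}{3k-1}=\frac15$ when $k=2$.

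For the inductive step, fix $\tau\in\+C_{d-k}$ with $k\ge 3$, assume the conclusion for all faces of co-dimension $k-1$, write $W\defeq\Pi_\tau P_\tau$, and for $x\in\+C_{\tau,1}$ let $\widehat{(\cdot)}$ denote zero-padding of an operator on $\+C_{\tau\cup\set x,1}$ to one on $\+C_{\tau,1}$. The engine is Garland's local-to-global identity on the weighted complex,
\[
W \;=\; \E[x\sim\pi_\tau]{\widehat{\Pi_{\tau\cup\set x}P_{\tau\cup\set x}}},
\]
which one verifies by a direct entrywise computation using $\pi_{\tau\cup\set x}(y)=P_\tau(x,y)$ and the sum rule $\sum_{x}\pi_\tau(x)\,\pi_{\tau\cup\set x,2}(\set{y,z})=\pi_{\tau,2}(\set{y,z})$. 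The induction hypothesis gives $\Pi_{\tau\cup\set x}P_{\tau\cup\set x}\mle M_{\tau\cup\set x}+\frac{k-1}{k-2}\pi_{\tau\cup\set x}\pi_{\tau\cup\set x}^\top$ for each $x$; padding, averaging, and using the companion identity $\E[x\sim\pi_\tau]{\widehat{\pi_{\tau\cup\set x}\pi_{\tau\cup\set x}^\top}}=W\Pi_\tau^{-1}W$ (again from $\pi_{\tau\cup\set x}(y)=P_\tau(x,y)$ and reversibility of $P_\tau$), we obtain the quadratic Loewner inequality
\[
W \;\mle\; \E[x\sim\pi_\tau]{\widehat{M_{\tau\cup\set x}}} \;+\; \tfrac{k-1}{k-2}\,W\Pi_\tau^{-1}W .
\]

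It remains to close this into $W-\frac{k}{k-1}\pi_\tau\pi_\tau^\top\mle M_\tau\mle\frac{k-1}{3k-1}\Pi_\tau$, which is where the second bullet of the hypothesis enters. Under condition~(1) we substitute $\E[x\sim\pi_\tau]{\widehat{M_{\tau\cup\set x}}}\mle M_\tau-\frac{k-1}{k-2}M_\tau\Pi_\tau^{-1}M_\tau$ into the quadratic inequality; conjugating by $\Pi_\tau^{-1/2}$ and peeling off the one-dimensional top eigenspace of the symmetrized walk $\Pi_\tau^{1/2}P_\tau\Pi_\tau^{-1/2}$ (eigenvector $\sqrt{\pi_\tau}$, eigenvalue $1$)—whose contribution is exactly absorbed by the $\frac{k}{k-1}\pi_\tau\pi_\tau^\top$ correction since $1-\frac{k}{k-1}=-\frac1{k-1}\le 0$—reduces the problem on $\sqrt{\pi_\tau}^{\perp}$ to a self-improving quadratic Loewner inequality; the constant $\frac{k-1}{3k-1}$ in the a priori bound $M_\tau\mle\frac{k-1}{3k-1}\Pi_\tau$ (which is itself part of condition~(1)) is exactly what makes the scalar analogue force contraction, and a Loewner-monotonicity argument lifts it to $W-\frac{k}{k-1}\pi_\tau\pi_\tau^\top\mle M_\tau$. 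Under condition~(2) the link $\+C_\tau$ factors, so $\pi_\tau$ is a product measure and $P_\tau$ a coordinate-weighted average of the factor walks, $\widehat{M_{\tau\cup\eta_{-i}}}$ is precisely the $i$-th factor's contribution, the formula $M_\tau=\sum_i\frac{n_i(n_i-1)}{k(k-1)}M_{\tau\cup\eta_{-i}}$ matches the block decomposition of $W-\frac{k}{k-1}\pi_\tau\pi_\tau^\top$ term by term, and the inductive bounds $M_{\tau\cup\eta_{-i}}\mle\frac{n_i-1}{3n_i-1}\Pi_{\tau\cup\eta_{-i}}$ together with $\sum_i\frac{n_i(n_i-1)}{k(k-1)}\cdot\frac{n_i-1}{3n_i-1}\le\frac{k-1}{3k-1}$ (using $\sum_i n_i=k$, $n_i\ge 2$) deliver both halves at once. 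This completes the induction; the ``in particular'' statement follows by conjugating $W-\frac{k}{k-1}\pi_\tau\pi_\tau^\top\mle M_\tau$ by $\Pi_\tau^{-1/2}$ and restricting the quadratic form to $\sqrt{\pi_\tau}^{\perp}$, which annihilates the rank-one term and yields $\lambda_2(P_\tau)\le\lambda_1(\Pi_\tau^{-1}M_\tau)$.

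The main obstacle is this closing step: deriving Garland's identity with the exact constant $\frac{k-1}{k-2}$ via the up/down-walk bookkeeping on the \emph{weighted} complex, and—above all—converting the quadratic Loewner inequality into $W-\frac{k}{k-1}\pi_\tau\pi_\tau^\top\mle M_\tau$. The latter is delicate because $W$, $M_\tau$ and $\E[x\sim\pi_\tau]{\widehat{M_{\tau\cup\set x}}}$ need not be simultaneously diagonalizable, so the scalar fixed-point argument has to be upgraded to a genuine Loewner-order argument, and condition~(2) is present precisely to handle links where the contraction argument under condition~(1) is unavailable. The remaining ingredients—the entrywise verification of the two identities, the reduction to $\sqrt{\pi_\tau}^{\perp}$, and the block arithmetic—are routine.
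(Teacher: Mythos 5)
The paper does not give a proof of \Cref{thm:mtd-inductive}: it is imported verbatim from \cite{ALOG21} (Theorem~1.3 there) and is used as a black box. There is therefore no in-paper argument to compare your proposal against; the relevant comparison is with the argument in \cite{ALOG21} itself.

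As a roadmap of that argument, your sketch is accurate. The base case $k=2$ is indeed immediate from the first bullet. The two ``companion identities'' you invoke are correct and I verified them from the paper's definitions: with $j=\!{dim}(\tau)$ one has $\sum_{x}\pi_{j+3}(\tau\cup\{x,y,z\})=(j+3)\pi_{j+2}(\tau\cup\{y,z\})$, which gives $\Pi_\tau P_\tau=\E[x\sim\pi_\tau]{\widehat{\Pi_{\tau\cup\{x\}}P_{\tau\cup\{x\}}}}$, and the reversibility relation $\pi_\tau(x)\pi_{\tau\cup\{x\}}(y)=\pi_\tau(y)P_\tau(y,x)$ gives $\E[x\sim\pi_\tau]{\widehat{\pi_{\tau\cup\{x\}}\pi_{\tau\cup\{x\}}^\top}}=\Pi_\tau P_\tau^2 = (\Pi_\tau P_\tau)\Pi_\tau^{-1}(\Pi_\tau P_\tau)$. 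Combined with the inductive hypothesis at co-dimension $k-1$ this does produce the quadratic Loewner inequality you write. However, you explicitly punt on the step that actually closes the induction: passing from $\tilde W\mle\tilde M-\frac{k-1}{k-2}\tilde M^2+\frac{k-1}{k-2}\tilde W^2$ (in the $\Pi_\tau^{-1/2}$-conjugated picture) to $\tilde W-\frac{k}{k-1}\sqrt{\pi_\tau}\sqrt{\pi_\tau}^\top\mle\tilde M$. This is where the a priori bound $M_\tau\mle\frac{k-1}{3k-1}\Pi_\tau$ is indispensable and where the scalar trickle-down monotonicity argument has to be promoted to a genuine matrix/Loewner-order fixed-point argument (the matrices need not commute). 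You flag this as the ``main obstacle'' but do not supply it, and condition~(2) (the product-complex case) is similarly described only at the level of ``the block arithmetic works out.'' So your proposal is a faithful high-level outline of the \cite{ALOG21} proof with correct supporting identities, but not yet a proof: the two closing lemmas that carry all the technical weight remain to be written out.
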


\subsection{Base case}\label{sec:base-case}
\newcommand{\revp}{\tilde{\*p}}
We do matrix trickle down on $\mu_K(\tau)$.
Consider the base case: Assume the free edges in $K$ are $e,f$ and other edges are pinned with assignment $\tau$. 
Let $\*p_{ef}=(p_{e,c},p_{f,c})_{c\in \+L^\tau(e)\cup \+L^\tau(f)}$ and $\*p_{fe}=(p_{f,c},p_{e,c})_{c\in \+L^\tau(e)\cup \+L^\tau(f)}$. We omit the superscript $\tau$ in $q_e^\tau$ and $q_{e,f}^\tau$ in base case part since it is clear in the context.
Moreover, let $\revp_e=(0, p_{e, c})_{c\in \+L^\tau(e)\cup \+L^\tau(f)}, \revp_f = (p_{f, c}, 0)_{c\in \+L^\tau(e)\cup \+L^\tau(f)}$,
and define $\bb 1_e, \bb 1_f\in \bb R^{\+C_\tau, 1}$ such that $\bb 1_e(i, c) = \1{i=e},\bb 1_f(i, c) = \1{i=f}$.
Finally, we define $\*1_{ef}\in \bb R^{\+C_{\tau, 1}\times \+C_{\tau, 1}}$ such that $\*1_{ef}((i, c_1), (j, c_2) = \1{c_1=c_2\wedge i\neq j}$.
\begin{align*}
    &\phantom{{}={}}\Pi_\tau P_\tau- 2\pi_\tau\pi_\tau^\top 
    \\
    &= \frac{1}{2(q_eq_f-q_{e,f})^2}\-{diag}(\*p_{ef})\tp{(q_eq_f-q_{e,f})(\bb{1}_e\bb{1}_f^\top + \bb{1}_f\bb{1}_e^\top-\*1_{ef})-\frac{(q_e\bb{1}_f+q_f\bb{1}_e)(q_e\bb{1}_f+q_f\bb{1}_e)^\top}{2}+\*p_{fe}\*p_{fe}^\top} \-{diag}(\*p_{ef})\\
    &\mle \frac{1}{2(q_eq_f-q_{e,f})^2}\-{diag}(\*p_{ef})\tp{\*p_{fe}\*p_{fe}^\top-(q_eq_f-q_{e,f})\*1_{ef})} \-{diag}(\*p_{ef})\\
    &\mle \frac{1}{2(q_eq_f-q_{e,f})^2}\-{diag}(\*p_{ef})\tp{2(\revp_{e}\revp_{e}^\top+\revp_{f}\revp_{f}^\top)-(q_eq_f-q_{e,f})\*1_{ef})} \-{diag}(\*p_{ef}) .
\end{align*}
We do a row summation to bound the first term. Firstly,
\begin{align*}
    \revp_{e}\revp_{e}^\top+\revp_{f}\revp_{f}^\top \mle q_e\-{diag}(\revp_e) + q_f\-{diag}(\revp_f).
\end{align*}
Then,
\begin{align*}
    \frac{1}{(q_eq_f-q_{e,f})^2}\-{diag}(\*p_{ef})(\revp_{e}\revp_{e}^\top+\revp_{f}\revp_{f}^\top)\-{diag}(\*p_{ef})
    \mle \frac{1}{(q_eq_f-q_{e,f})^2}\-{diag}(\*p_{ef})(q_e\-{diag}(\revp_e) + q_f\-{diag}(\revp_f))\-{diag}(\*p_{ef}),
\end{align*}
which is a diagonal matrix.
On the entry $(ec, ec)$, it equals
\begin{align*}
    \frac{q_fp_{ec}}{(q_eq_f-q_{e,f})^2}p_{fc}p_{ec}
   &=       \frac{2q_fq_e}{q_fq_e}
      \cdot \frac{p_{ec}}{q_e}
      \cdot \frac{q_fq_e}{q_fq_e-q_{fe}}
      \cdot \frac{(q_f-p_{fc})p_{ec}}{2(q_fq_e-q_{fe})}
      \cdot \frac{p_{fc}}{q_f-p_{fc}}
    \\&\le
            2
      \cdot \frac1\beta
      \cdot \frac{\beta}{\beta-1}
      \cdot \pi_\tau(ec)
      \cdot \frac{1}{\beta-1}
      & \text{\big(\Cref{lem:marginal-bound-weighted-1,lem:marginal-bound-weighted-2}\big)}
    \\&=\frac{2}{(\beta-1)^2}\pi_\tau(ec).
\end{align*}
Applying the same argument to $(fc, fc)$, we have
\begin{align*}
    \Pi_\tau P_\tau- 2\pi_\tau\pi_\tau^\top 
    \mle
    -\frac{1}{2(q_eq_f-q_{e,f})^2}\-{diag}(\*p_{ef})\tp{(q_eq_f-q_{e,f})\*1_{ef})} \-{diag}(\*p_{ef})
    +
    \frac{2}{(\beta-1)^2}\Pi_\tau.
\end{align*}
Let $M_\tau$ be a block diagonal matrix with blocks $M_\tau^c$:
\begin{equation}\label{eqn:base-case-def-ol}
    M_\tau^c=\frac{p_{e,c}p_{f,c}}{2(q_e q_f-q_{e,f})}\tp{\begin{matrix}
    0 & -1\\
    -1 & 0
\end{matrix}}
+
\frac{2}{(\beta-1)^2}\Pi^c_\tau.
\end{equation}
Then we have the base case inequality
\begin{align*}
    \Pi_\tau P_\tau- 2\pi_\tau\pi_\tau^\top \mle \-{diag}(M_\tau^c).
\end{align*}

\subsection{Induction}
The induction step in \Cref{thm:mtd-inductive} is to show that for every $\tau$ with $\!{codim}(\tau) = k>2$ and connected $G_\tau$,
\begin{equation}\label{eqn:induction-main}
	\E[x\sim\pi_\tau]{M_{\tau\cup \set{x}}}\mle M_\tau - \frac{k-1}{k-2}M_\tau \Pi_\tau^{-1} M_\tau.
\end{equation}
For every $\tau$ and $c\in [q]$, we will define a matrix $M_\tau^c \in \bb R^{K^c\times K^c}$ and let $M_\tau$ be the block diagonal matrix with block $M_\tau^c$ for every $c\in [q]$. 
It is not hard to see that we only require 
\begin{equation}\label{eqn:induction-main-c}
	\E[x\sim\pi_\tau]{M_{\tau\cup \set{x}}^c}\mle M_\tau^c - \frac{k-1}{k-2}M_\tau^c (\Pi_\tau^c)^{-1} M_\tau^c
\end{equation}
to hold for every $c$ and $\tau$, where $(\Pi_\tau^c)^{-1}$ is $\Pi_\tau^{-1}$ restricted on $K^c\times K^c$.
We now describe our construction of $M^c_\tau$ for a fixed color $c$.
We write $M_\tau^c$ into the sum of a diagonal matrix and an off-diagonal matrix, i.e.,
\begin{equation}\label{eqn:N-decompose}
	M_\tau^c = \frac{1}{k-1}(A_\tau^c+\Pi_\tau^c B_\tau^c),
\end{equation}
where $A_\tau^c$ is an off-diagonal matrix and $B_\tau^c$ is a diagonal matrix.

From now on, when $c$ is clear from the context, we will omit the superscript $c$ for matrices.
For example, we will write $M_\tau$, $\Pi_\tau$, $\!{Adj}_\tau$, $\!{Id}_\tau$, $A_\tau$, $B_\tau$, $\dots$ instead of
                           $M^c_\tau$, $\Pi^c_\tau$, $\!{Adj}_\tau^{c}$, $\!{Id}_\tau^{c}$, $A_\tau^c$, $B_\tau^c$, $\dots$ respectively.
Plugging the above construction of $M_\tau$ into \eqref{eqn:induction-main-c} and remembering that the superscript $c$ has been omitted, we obtain
 \[
 (k-1)\cdot\E[x\sim\pi_\tau]{A_{\tau\cup\set{x}}+\Pi_{\tau\cup\set{x}}B_{\tau\cup\set{x}}} \mle (k-2)\cdot\tp{A_\tau+\Pi_\tau B_\tau} -\tp{A_\tau+\Pi_\tau B_\tau}\Pi_\tau^{-1}\tp{A_\tau+\Pi_\tau B_\tau}.
 \]
Here we need the following inequality of matrices.
\begin{lemma}[Corollary 12 in \cite{WZZ24}]
	\label{lem:matrix-squared-sum}
Let $A_1,\dots,A_n$ be a collection of symmetric matrices and $\Pi\mge 0$. Then
\[
	\tp{\sum_{i=1}^n A_i}\Pi \tp{\sum_{i=1}^n A_i} \mle n\sum_{i=1}^n A_i\Pi A_i.
\]	
\end{lemma}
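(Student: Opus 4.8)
The plan is to reduce this Loewner inequality to an elementary sum-of-squares identity for matrices. Since each $A_i$ is symmetric and $\Pi\mge 0$, we may factor $\Pi=\Pi^{1/2}\Pi^{1/2}$ with $\Pi^{1/2}$ the self-adjoint positive semidefinite square root --- this is the only point where a possible singularity of $\Pi$ matters, and it causes no difficulty. Put $C_i\defeq\Pi^{1/2}A_i$ for $i\in[n]$, so that $A_i\Pi A_i=A_i^\top\Pi^{1/2}\Pi^{1/2}A_i=C_i^\top C_i$, and, writing $S\defeq\sum_{i=1}^n A_i$ (which is symmetric), $S\Pi S=S^\top\Pi^{1/2}\Pi^{1/2}S=\tp{\sum_i C_i}^\top\tp{\sum_i C_i}$. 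The asserted inequality is therefore equivalent to
\[
\tp{\sum_{i=1}^n C_i}^\top\tp{\sum_{i=1}^n C_i}\mle n\sum_{i=1}^n C_i^\top C_i .
\]

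I would establish this by writing the gap as an honest sum of squares:
\[
n\sum_{i=1}^n C_i^\top C_i-\tp{\sum_{i=1}^n C_i}^\top\tp{\sum_{i=1}^n C_i}
=\sum_{1\le i<j\le n}(C_i-C_j)^\top(C_i-C_j)\mge 0 ,
\]
which one verifies by expanding the right-hand side (each diagonal block $C_i^\top C_i$ occurs $n-1$ times and each off-diagonal pair $C_i^\top C_j+C_j^\top C_i$ occurs once, reassembling exactly the left-hand side). Equivalently, one may test against a fixed vector $v$, set $w_i\defeq\Pi^{1/2}A_i v$, and invoke $\norm{\sum_i w_i}_2^2\le n\sum_i\norm{w_i}_2^2$ (immediate from Cauchy--Schwarz); this is literally the same computation read off at the level of quadratic forms.

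I do not expect any genuine obstacle here: the statement is a standard matrix power-mean bound, and the argument is just the two-line reduction above. The only things worth recording are that $\Pi\mge 0$ guarantees a self-adjoint square root $\Pi^{1/2}$, and that the symmetry hypothesis on the $A_i$ is exactly what makes both sides of the claimed inequality symmetric matrices, so that the $\mle$ comparison is meaningful in the first place.
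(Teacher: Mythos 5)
Your proof is correct, and the decomposition as a sum of squares $\sum_{i<j}(C_i-C_j)^\top(C_i-C_j)$ with $C_i\defeq \Pi^{1/2}A_i$ is exactly the right way to see this. Note, however, that the paper does not supply its own proof of this lemma: it states it as ``Corollary~12 in~\cite{WZZ24}'' and simply cites it. So there is no paper argument to compare against; you have filled in a proof the authors chose to omit. Two small remarks on presentation. First, the symmetry of each $A_i$ is indeed what lets you write $A_i\Pi A_i = C_i^\top C_i$, but it is worth noting that symmetry of $A_i$ is \emph{not} needed to make the $\mle$ comparison meaningful --- $S\Pi S^\top$ and $\sum_i A_i \Pi A_i^\top$ are automatically symmetric PSD for any $A_i$ and $\Pi\mge 0$, and the inequality holds in that generality by the very same SOS identity; the hypothesis that $A_i$ be symmetric is what makes the statement reduce to the form written. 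Second, your quadratic-form phrasing via $w_i = \Pi^{1/2}A_i v$ and the scalar Cauchy--Schwarz bound $\norm{\sum_i w_i}_2^2\le n\sum_i\norm{w_i}_2^2$ is the cleanest formulation and would suffice on its own; the matrix-level SOS expansion is a nice cross-check but logically redundant once you have the pointwise bound.
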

 It follows from \Cref{lem:matrix-squared-sum} that
 \[
 \tp{A_\tau+\Pi_\tau B_\tau}\Pi_\tau^{-1}\tp{A_\tau+\Pi_\tau B_\tau}\mle 2A_\tau\Pi_\tau^{-1}A_\tau + 2\Pi_\tau B_\tau^2.
 \]
 As a result, in order for \eqref{eqn:induction-main} to hold, we only need to design $A_\tau$ and $B_\tau$  satisfying
\begin{equation}\label{eqn:condition-main}
 	(k-1)\cdot \E[x\sim\pi_\tau]{A_{\tau\cup\set{x}}} - (k-2) \cdot A_\tau+2A_\tau \Pi_\tau^{-1} A_\tau  \mle (k-2)\Pi_\tau B_\tau -(k-1)\cdot \E[x\sim\pi_\tau]{\Pi_{\tau\cup\set{x}} B_{\tau\cup\set{x}}}-2\Pi_\tau\tp{B_\tau}^2.
\end{equation}

\subsection{Construction of $A_\tau^i$}\label{sss:Atau}
We define
\begin{equation}\label{eqn:A-def}
	A_\tau = a_{k}\cdot (k-1)\cdot \E[\sigma\sim \pi_{\tau, k-2}]{A_{\tau\cup\sigma}}.
\end{equation}
Then we can deduce the following relation between $A_\tau$'s whose co-dimensions differ by one.
\begin{lemma}
	\[
		\E[x\sim \pi_\tau]{A_{\tau\cup\set{x}}} = \frac{k-2}{k-1}\cdot \frac{a_{k-1}}{a_{k}} A_\tau(ec, fc).
	\]
	where $h = h_\tau \geq 1$.
\end{lemma}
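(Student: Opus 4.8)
The plan is to unfold the recursion \eqref{eqn:A-def} by one level, merge the two resulting nested averages into a single average over the link, and then apply \eqref{eqn:A-def} a second time in the opposite direction to recover $A_\tau$ with the correct scalar. Throughout, all matrices are padded with zeros to a common index set so that sums and averages make sense, and (with the convention $a_2=1$) formula \eqref{eqn:A-def} is valid for every $k\ge 2$, so no separate treatment of the base level $k=3$ is needed.

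First I would apply \eqref{eqn:A-def} at co-dimension $k-1$ to each matrix $A_{\tau\cup\set{x}}$ appearing on the left-hand side,
\[
A_{\tau\cup\set{x}} = a_{k-1}(k-2)\,\E[\sigma'\sim\pi_{\tau\cup\set{x},\,k-3}]{A_{\tau\cup\set{x}\cup\sigma'}},
\]
and then take the $\pi_\tau$-average to get
\[
\E[x\sim\pi_\tau]{A_{\tau\cup\set{x}}} = a_{k-1}(k-2)\,\E[x\sim\pi_\tau]{\,\E[\sigma'\sim\pi_{\tau\cup\set{x},\,k-3}]{A_{\tau\cup\set{x}\cup\sigma'}}\,}.
\]

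Next I would establish the consistency identity that, for any matrix-valued function $g$ on faces of size $k-2$ in the link $\+C_\tau$,
\[
\E[x\sim\pi_\tau]{\,\E[\sigma'\sim\pi_{\tau\cup\set{x},\,k-3}]{g(\set{x}\cup\sigma')}\,} = \E[\rho\sim\pi_{\tau,\,k-2}]{g(\rho)}.
\]
This follows directly from the definitions of the marginal distributions $\pi_{\tau,j}$: writing $\pi_\tau$ and $\pi_{\tau\cup\set{x},k-3}$ through the facet distribution $\pi_{\tau,k}$ of the link, the left-hand side is the law of $\set{x}\cup\sigma'$ obtained by first drawing a facet $\alpha\sim\pi_{\tau,k}$ and then a uniformly random ordered pair (singleton, disjoint $(k-3)$-subset) inside a uniformly random $(k-2)$-subset of $\alpha$; averaging over the $k-2$ positions for the marked singleton cancels exactly the extra normalization, so the unordered face $\rho=\set{x}\cup\sigma'$ is a uniformly random $(k-2)$-subset of $\alpha$, i.e. it is distributed as $\pi_{\tau,k-2}$. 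The hypothesis $h_\tau\ge 1$ only excludes the degenerate case where the color-$c$ block is empty, in which both sides vanish.

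Finally I would rewrite \eqref{eqn:A-def} at co-dimension $k$ as $\E[\rho\sim\pi_{\tau,k-2}]{A_{\tau\cup\rho}} = \frac{1}{a_k(k-1)}A_\tau$ and substitute it (with $g = A_{\tau\cup(\cdot)}$) into the collapsed identity, obtaining
\[
\E[x\sim\pi_\tau]{A_{\tau\cup\set{x}}} = a_{k-1}(k-2)\cdot\frac{1}{a_k(k-1)}\,A_\tau = \frac{k-2}{k-1}\cdot\frac{a_{k-1}}{a_k}\,A_\tau,
\]
which is the claimed identity (equivalently, comparing the $(ec,fc)$ entries of both sides). I expect the only delicate point to be the second step: one must check that the marginal distributions used at the two consecutive levels are genuinely the ones induced by passing to the link, so that the facet-distribution description is legitimate and the combinatorial multiplicity is exactly $k-2$ with no stray binomial coefficient surviving. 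Since a broom is a clique in the line graph, every link arising here is itself a colored clique, which keeps this bookkeeping transparent; everything else is direct substitution.
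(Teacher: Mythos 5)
Your proposal is correct and matches the paper's proof in essence: you unfold $A_{\tau\cup\{x\}}$ via \eqref{eqn:A-def} at co-dimension $k-1$, collapse the nested averages to $\E[\rho\sim\pi_{\tau,k-2}]{A_{\tau\cup\rho}}$ (the paper does this same collapse by explicit computation with the $\mu$-notation and the $\binom{\cdot}{\cdot}$ normalizations, while you state it as a clean consistency identity for marginal laws on the link — a correct reformulation), and then invert \eqref{eqn:A-def} at co-dimension $k$. Your observation that $a_2=1$ makes the recursion uniformly valid, and your reading of the stray $h_\tau\ge 1$ hypothesis as ruling out the trivially zero case, are both sensible; no gap.
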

\begin{proof}
For any $ec, fc \in K_\tau$,
	\begin{align*}
		\E[x\sim \pi_\tau]{A_{\tau\cup\set{x}}}(ec, fc)
        &= (k-2)\sum_{x\in \+C_{\tau,1}} \frac{1}{k} \mu_{\ol K_x}^\tau(x) a_{k-1} \sum_{\sigma\in \+C_{\tau\cup\set{x}, k-3}} \frac{2}{(k-1)(k-2)}\mu^{\tau\cup\set{x}}_{\ol K_\sigma}(\sigma) A_{\tau\cup\set{x}\cup\sigma}(ec, fc)\\
		&= \frac{2}{k(k-1)} \sum_{x\in \+C_{\tau,1}}\sum_{\sigma\in \+C_{\tau\cup\set{x}, k-3}}\mu^{\tau}_{\ol K_x}(x) \mu^{\tau\cup\set{x}}_{\ol K_\sigma}(\sigma)a_{k-1}A_{\tau\cup\set{x}\cup\sigma}(ec, fc)\\
		&= \frac{2}{k(k-1)} \sum_{\sigma'\in \+C_{\tau,k-2}} \mu^\tau_{\ol K_{\sigma'}}(\sigma') A_{\tau\cup\sigma'}(uc, vc) (k-2)a_{k-1}\\
        &= (k-2)\E[\sigma'\sim \pi_{\tau, k-2}]{A_{\tau\cup\sigma'}(ec, fc)}\\
		&= \frac{k-2}{k-1}\cdot \frac{a_{k-1}}{a_{k}} A_\tau(ec, fc).
	\end{align*}
\end{proof}

It follows from the definition that $A_\tau$ is proportional to the expectation of the base cases when the boundary is drawn from $\pi_{\tau,k-2}$. For some technical reasons, we would like to isolate those boundaries containing the color $c$. This leads us to the following lemma.

\begin{lemma}~\label{lem:A-tau-i}
    \newcommand{\taur}{\tau\cup\omega|_{K_\tau\setminus\set{e, f}}}
	\[
		A_\tau = \frac{2a_k}{k}\sum_{\substack{\omega\in \+C_{\tau,k}\\ c\notin\!{col}(\omega)}}\pi_{\tau,k}(\omega)A^{i,\omega}_\tau,
	\]
	where $A^{i,\omega}_\tau$ is the matrix supported on $K_\tau^c\times K_\tau^c$ such that
	\[
		A_\tau^{\omega}(uc,vc) = -\frac{p^{\taur}_{e, c}p^{\taur}_{f, c}}{\ol q^{\taur}_{e}\ol q^{\taur}_{f} - \ol q^{\taur}_{ef}}, 
	\]
	and $\ol q^{\cdot}_{e}\defeq q^\cdot_{e} - p^\cdot_{ec}$,
    $\ol q^{\cdot}_{f}\defeq q^\cdot_{f} - p^\cdot_{fc}$, and
    $\ol q^{\cdot}_{ef}\defeq q^\cdot_{ef} - p^\cdot_{ec}p^\cdot_{fc}$.
\end{lemma}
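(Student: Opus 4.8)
The plan is to unfold the defining recursion \eqref{eqn:A-def} for $A_\tau$ all the way down to the base‑case matrices of \Cref{sec:base-case}, and then re‑index the resulting sum so that it runs over facets of the link $\+C_\tau$ rather than over co‑dimension‑$2$ boundaries, which is precisely the form of the claimed identity.

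First I would use that, since $\tau$ has co‑dimension $k$, each face $\sigma\in\+C_{\tau,k-2}$ appearing in \eqref{eqn:A-def} colors all but two of the free edges of $K_\tau$, so $\tau\cup\sigma$ already has co‑dimension $2$ and $A_{\tau\cup\sigma}$ is a base‑case matrix. By \eqref{eqn:base-case-def-ol} and \eqref{eqn:N-decompose} its only possibly nonzero color‑$c$ entry is $A_{\tau\cup\sigma}(uc,vc)=-\frac{p^{\tau\cup\sigma}_{u,c}\,p^{\tau\cup\sigma}_{v,c}}{2(q^{\tau\cup\sigma}_uq^{\tau\cup\sigma}_v-q^{\tau\cup\sigma}_{uv})}$, where $u,v$ are the two free edges of $\tau\cup\sigma$. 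A key initial observation is that this entry vanishes unless $c\notin\!{col}(\tau\cup\sigma)$ (all broom edges are mutually adjacent, so $uc$ is a face only if $c$ is unused), and since we may assume $c\notin\!{col}(\tau)$, otherwise both sides of the statement are $0$, only $c$‑avoiding $\sigma$ contribute. This already accounts for the restriction to $c$‑avoiding objects in the target sum, and reduces $A_\tau(uc,vc)$ to $a_k(k-1)$ times the $\pi_{\tau,k-2}$‑average of the base‑case entries over $c$‑avoiding colorings $\sigma$ of $K_\tau\setminus\set{u,v}$.

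Next I would fix a pair of free edges $u,v\in K_\tau$ and, for each $c$‑avoiding coloring $\sigma$ of $K_\tau\setminus\set{u,v}$, group together all facets $\omega\in\+C_{\tau,k}$ with $\omega|_{K_\tau\setminus\set{u,v}}=\sigma$ and $c\notin\!{col}(\omega)$; these are exactly the $\omega=\sigma\cup\set{u\mapsto c_1,\,v\mapsto c_2}$ with $c_1\neq c_2$ and $c_1,c_2\notin\!{col}(\sigma)\cup\set c$. Using the product form $\mu_K(\alpha)\propto\prod_e p^\tau_{e,\alpha(e)}$ from \Cref{sec:weight-edge-coloring} together with the defining formulas for $\pi_{\tau,k}$ and $\pi_{\tau,k-2}$, the total $\pi_{\tau,k}$‑mass of this group equals $\pi_{\tau,k-2}(\sigma)$ times a purely combinatorial (binomial) factor times $\frac{\ol q^{\tau\cup\sigma}_u\ol q^{\tau\cup\sigma}_v-\ol q^{\tau\cup\sigma}_{uv}}{q^{\tau\cup\sigma}_uq^{\tau\cup\sigma}_v-q^{\tau\cup\sigma}_{uv}}$, because $\sum_{c_1\neq c_2}p_{u,c_1}p_{v,c_2}$ over the allowed pairs equals the numerator, while over all pairs compatible with $\sigma$ it equals the denominator. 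Substituting this into the expression from the previous step, the factor $q^{\tau\cup\sigma}_uq^{\tau\cup\sigma}_v-q^{\tau\cup\sigma}_{uv}$ cancels the base‑case denominator, producing exactly the $\ol q$‑denominator that appears in $A_\tau^{i,\omega}(uc,vc)$; collecting the surviving binomial factors with $a_k$ yields the coefficient $\tfrac{2a_k}{k}$, and carrying this out for every pair $u,v$ (noting $A_\tau^{i,\omega}$ is supported precisely on these entries) assembles the full matrix identity. Throughout I would also invoke that $p^{\tau\cup\sigma}_{e,c}$ depends only on the subtree hanging off $e$, hence equals $p^\tau_{e,c}$ and is unaffected by pinning other broom edges, so that $\tau_\omega=\tau\cup\omega|_{K_\tau\setminus\set{u,v}}$ really does give the matrix in the statement.

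I expect the main obstacle to be the bookkeeping in the facet‑regrouping step: keeping track of the normalization constant $Z$ of $\mu_K$ and of the three interlocking combinatorial factors relating $\pi_{\tau,k}(\omega)$, $\pi_{\tau,k-2}(\sigma)$, and the marginal of the uniform‑coloring law $\mu^\tau$ on partial colorings, so that the net constant comes out as exactly $\tfrac{2a_k}{k}$ rather than off by a factor depending on $k$ or $d$. A secondary subtlety to handle carefully is that while the $p$'s are invariant under broom pinnings, the lists $\+L^{\tau\cup\sigma}(e)$ — and hence the $q$'s and $\ol q$'s — shrink by exactly $\!{col}(\sigma)$ (resp.\ $\!{col}(\sigma)\cup\set c$), which is precisely what makes the cancellation in the last step go through and which must be applied consistently for both free edges $u$ and $v$.
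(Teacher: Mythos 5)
Your proposal is correct and follows essentially the same route as the paper's own proof. Both arguments unfold the defining recursion \eqref{eqn:A-def} down to the co-dimension-$2$ base case, observe that only $c$-avoiding boundary colorings contribute, and then convert the sum over $(k-2)$-dimensional faces $\sigma$ into a sum over facets $\omega$ of $\+C_\tau$ by using the conditional probability $\sum_{\xi:c\notin\xi}\mu^{\tau\cup\sigma}_{\set{e,f}}(\xi) = \frac{\ol q_e^{\tau\cup\sigma}\ol q_f^{\tau\cup\sigma}-\ol q_{ef}^{\tau\cup\sigma}}{q_e^{\tau\cup\sigma}q_f^{\tau\cup\sigma}-q_{ef}^{\tau\cup\sigma}}$, which cancels the base-case denominator and produces the $\ol q$-denominator of $A_\tau^\omega$. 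The paper phrases this as a ``multiply by $1$'' insertion while you phrase it as a regrouping-by-$\sigma$ mass computation, but these are the same manipulation; your observation that $p^{\tau\cup\sigma}_{e,c}$ is invariant under pinnings of the other broom edges while $\+L^{\tau\cup\sigma}(e)$, and hence the $q$'s, shrink by $\!{col}(\sigma)$, is exactly the fact the paper also relies on.
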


\begin{proof}
    \newcommand{\taur}{\tau\cup\omega|_{K_\tau\setminus\set{e, f}}}
    In the proof we use $c\notin \cdot$ as a shortcut for $c\notin \!{col}(\cdot)$.
    By the definition of $A_\tau$,
    \begin{align*}
        A_\tau(ec, fc)
        &=  (k-1)a_k\frac{2}{k(k-1)}
              \sum_{\substack{\sigma\in \+C_{\tau, k-2}\\ c\notin\sigma}}
              \mu^\tau_{K_\tau\setminus\set{e, f}}(\sigma)
              \frac{p^{\tau\cup\sigma}_{ec}p^{\tau\cup\sigma}_{fc}}
              {q^{\tau\cup\sigma}_{e}q^{\tau\cup\sigma}_f - q^{\tau\cup\sigma}_{ef}}
      \\&=  \frac{2a_k}{k}
              \sum_{\substack{\sigma\in \+C_{\tau, k-2}\\ c\notin\sigma}}
              \Big(
              \sum_{\xi\in \+C_{\tau\cup\sigma, 2}}
              \mu^{\tau\cup\sigma}_{\set{e, f}}(\xi)
              \Big)
              \mu^\tau_{K_\tau\setminus\set{e, f}}(\sigma)
              \frac{p^{\tau\cup\sigma}_{ec}p^{\tau\cup\sigma}_{fc}}
              {q^{\tau\cup\sigma}_{e}q^{\tau\cup\sigma}_f - q^{\tau\cup\sigma}_{ef}}.
    \end{align*}
    The equality is becase if $c\in \sigma$, the $p^{\tau\cup\sigma}_{\cdot, c}$ terms vanish.
    Notice that for any $\sigma\in \+C_{\tau, k-2}$ and $\xi\in\+C_{\tau\cup\sigma, 2}$,
    \begin{align*}
        1 &= 
        \sum_{\xi\in \+C_{\tau\cup\sigma, 2}}
        \mu^{\tau\cup\sigma}_{\set{e, f}}(\xi)
        \\&=
        \frac{
        \sum_{\xi\in \+C_{\tau\cup\sigma, 2}}
        \mu^{\tau\cup\sigma}_{\set{e, f}}(\xi)
        }{
        \sum_{\substack{\xi\in \+C_{\tau\cup\sigma, 2}\\ c\notin \xi}}
        \mu^{\tau\cup\sigma}_{\set{e, f}}(\xi)
        }
        \sum_{\substack{\xi\in \+C_{\tau\cup\sigma, 2} \\ c\notin \xi}}
        \mu^{\tau\cup\sigma}_{\set{e, f}}(\xi)
        \\&=
        \frac{
        {q^{\tau\cup\sigma}_{e}q^{\tau\cup\sigma}_f - q^{\tau\cup\sigma}_{ef}}
        }{
        {\ol q^{\tau\cup\sigma}_{e}\ol q^{\tau\cup\sigma}_f - \ol q^{\tau\cup\sigma}_{ef}}
        }
        \sum_{\substack{\xi\in \+C_{\tau\cup\sigma, 2}\\ c\notin \xi}}
        \mu^{\tau\cup\sigma}_{\set{e, f}}(\xi).
    \end{align*}
    Multiply this expression in the former equation, we have
    \begin{align*}
        A_\tau(ec, fc)
        &=  \frac{2a_k}{k}
              \sum_{\substack{\sigma\in \+C_{\tau, k-2}\\ c\notin\sigma}}
              \mu^\tau_{K_\tau\setminus\set{e, f}}(\sigma)
              \frac{
              {q^{\tau\cup\sigma}_{e}q^{\tau\cup\sigma}_f - q^{\tau\cup\sigma}_{ef}}
              }{
              {\ol q^{\tau\cup\sigma}_{e}\ol q^{\tau\cup\sigma}_f - \ol q^{\tau\cup\sigma}_{ef}}
              }
              \sum_{\substack{\xi\in \+C_{\tau\cup\sigma, 2}\\ c\notin \xi}}
              \mu^{\tau\cup\sigma}_{\set{e, f}}(\xi)
                    \frac{p^{\tau\cup\sigma}_{ec}p^{\tau\cup\sigma}_{fc}}
                    {q^{\tau\cup\sigma}_{e}q^{\tau\cup\sigma}_f - q^{\tau\cup\sigma}_{ef}}
      \\&=  \frac{2a_k}{k}
              \sum_{\substack{\sigma\in \+C_{\tau, k-2}\\ c\notin\sigma}}
              \mu^\tau_{K_\tau\setminus\set{e, f}}(\sigma)
              \sum_{\substack{\xi\in \+C_{\tau\cup\sigma, 2}\\ c\notin \xi}}
              \mu^{\tau\cup\sigma}_{\set{e, f}}(\xi)
                    \frac{p^{\tau\cup\sigma}_{ec}p^{\tau\cup\sigma}_{fc}}
                         {\ol q^{\tau\cup\sigma}_{e}\ol q^{\tau\cup\sigma}_f - \ol q^{\tau\cup\sigma}_{ef}}
      \\&=  \frac{2a_k}{k}
              \sum_{\substack{\omega\in \+C_{\tau, k}\\ c\notin\sigma}}
              \mu^\tau_{K_\tau}(\omega)
                    \frac{p^{\taur}_{ec}p^{\taur}_{fc}}
                         {\ol q^{\taur}_{e}\ol q^{\taur}_f - \ol q^{\taur}_{ef}},
    \end{align*}
    and the lemma follows.
\end{proof}

\subsection{Spectral analysis of $A_\tau$}


For every $\omega\in\+C_{\tau,k}$ such that $c\not\in \!{col}((\omega\cup\tau)|_{K_\tau})$,
define $\Xi_\tau^{\omega} \in \bb R^{K^c\times K^c}$ as the diagonal matrix such that for every $ec \in K_\tau^{c}$:
\[
	\Xi_\tau^{\omega}(ec,ec) =
    \frac{p^{\tau\cup\omega|_{K\setminus \set e}}_{ec}}
         {\ol q^{\tau\cup\omega|_{K\setminus \set e}}_e}.
\]
\begin{lemma}\label{lem:A-tau-i-omega}
	\[
	A_\tau^{\omega} = \Xi_\tau^{\omega} (-\!{Adj}_\tau + \+R_\tau^{\omega}) \Xi_{\tau}^{\omega} ,
	\]
	where $\rho(\+R_\tau^{i,\omega}) \le  \frac{2(k-1)}{\beta - 1}$.
\end{lemma}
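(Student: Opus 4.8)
The plan is to read off $\+R_\tau^\omega$ directly from the asserted factorization and then bound its spectral radius by a Cauchy--Schwarz estimate followed by Gershgorin's theorem. Fix the color $c$ and the face $\omega\in\+C_{\tau,k}$ with $c\notin\!{col}((\tau\cup\omega)|_{K_\tau})$; every matrix below is supported on $K_\tau^c\times K_\tau^c$, an index set of size at most $k$. The one structural fact I will use repeatedly is that each weight $p^\xi_{e,c'}$ depends only on the restriction of $\xi$ to the subtree of $T$ hanging off $e$, which is disjoint from $K\setminus\set e$; hence, for a free edge $e$, the values $p^\xi_{e,c'}$, $q^\xi_e$, and $\ol q^\xi_e\defeq q^\xi_e-p^\xi_{e,c}$ are unchanged under additionally pinning any other edge of $K$, and I abbreviate them $p_{e,c'}$, $q_e$, $\ol q_e$; similarly $\ol q_{ef}\defeq\sum_{c'\ne c}p_{e,c'}p_{f,c'}$ for two free edges $e,f$. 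With this convention $\Xi_\tau^\omega(ec,ec)=p_{e,c}/\ol q_e$, and the formula of \Cref{lem:A-tau-i} reads $A_\tau^\omega(ec,fc)=-p_{e,c}p_{f,c}/(\ol q_e\ol q_f-\ol q_{ef})$ for distinct free $e,f$.

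First I would establish the identity. The matrix $A_\tau^\omega$ is off-diagonal, being the off-diagonal part of $M_\tau^c$ in the decomposition~\eqref{eqn:N-decompose} (matching the base case~\eqref{eqn:base-case-def-ol}), and $\!{Adj}_\tau$ has zero diagonal, so setting $\+R_\tau^\omega(ec,ec)\defeq 0$ makes $A_\tau^\omega=\Xi_\tau^\omega(-\!{Adj}_\tau+\+R_\tau^\omega)\Xi_\tau^\omega$ hold on the diagonal. For $e\ne f$, substituting $\Xi_\tau^\omega(ec,ec)=p_{e,c}/\ol q_e$, $\Xi_\tau^\omega(fc,fc)=p_{f,c}/\ol q_f$ and the formula of \Cref{lem:A-tau-i}, then cancelling $p_{e,c}p_{f,c}$, forces
\[
\+R_\tau^\omega(ec,fc)=1-\frac{\ol q_e\ol q_f}{\ol q_e\ol q_f-\ol q_{ef}}=-\frac{t_{ef}}{1-t_{ef}},\qquad t_{ef}\defeq\frac{\ol q_{ef}}{\ol q_e\ol q_f},
\]
which I take as the definition of $\+R_\tau^\omega$; here $t_{ef}\in[0,1)$, since $\ol q_e\ol q_f-\ol q_{ef}$ is (up to a positive factor) the unnormalised probability that $e$ and $f$ both avoid $c$, which is positive by permissiveness.

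Next I would bound $t_{ef}$; this is where the hypotheses $\beta\ge\Delta+50$ and ``$K$ not adjacent to the boundaries'' are used. After pinning every edge of $K$ except $e,f$, the edge $e$ still has $\ell_e\ge\beta$ feasible colors, because the $\beta$-extra list $\+L(e)$ loses at most $d-2\le\Delta-2$ colors to its pinned neighbours inside $K$ while $\deg(e)\ge d-1$. By \Cref{lem:marginal-bound-weighted-1} (which applies since $K$ is disjoint from the boundaries), $p_{e,c'}\le q_e/\ell_e\le\ol q_e/(\beta-1)$ for every $c'$, hence $\sum_{c'\ne c}p_{e,c'}^2\le\ol q_e^2/(\beta-1)$, and likewise for $f$; Cauchy--Schwarz then yields $\ol q_{ef}\le\ol q_e\ol q_f/(\beta-1)$, i.e.\ $t_{ef}\le 1/(\beta-1)$ --- this is exactly \Cref{lem:marginal-bound-weighted-2} applied after deleting the color $c$. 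Consequently $\abs{\+R_\tau^\omega(ec,fc)}=t_{ef}/(1-t_{ef})\le 1/(\beta-2)$.

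Finally, $\+R_\tau^\omega$ is symmetric (as $t_{ef}=t_{fe}$), has zero diagonal, and has at most $\abs{K_\tau^c}-1\le k-1$ nonzero off-diagonal entries per row, so Gershgorin's theorem gives $\rho(\+R_\tau^\omega)\le (k-1)/(\beta-2)\le 2(k-1)/(\beta-1)$, the last step using $\beta\ge 3$; this is the claimed bound. I expect the main obstacle to be not any single estimate but the bookkeeping: keeping straight the several distinct pinnings that appear in $\Xi_\tau^\omega$ versus in \Cref{lem:A-tau-i}, legitimising the ``unambiguous'' notation $p_{e,c'},\ol q_e,\ol q_{ef}$ through subtree-locality, and carefully justifying $\ell_e\ge\beta$ from the structural hypotheses on $K$.
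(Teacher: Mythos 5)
There is a genuine gap: your claim that $q^\xi_e$ and $\ol q^\xi_e$ are ``unchanged under additionally pinning any other edge of $K$'' is false, and the factorization you write down does not actually hold with your choice of $\+R_\tau^\omega$. It is true that each individual weight $p^\xi_{e,c'}$ depends only on $\xi$ restricted to the subtree hanging off $e$ (so $p^{\Gamma(e)}_{e,c'}=p^{\Gamma(e,f)}_{e,c'}$). But $q^\xi_e=\sum_{c'\in\+L^\xi(e)}p^\xi_{e,c'}$ sums over the \emph{post-pinning} list $\+L^\xi(e)$, and pinning the adjacent edge $f\in K$ to $\omega(f)$ removes $\omega(f)$ from that list. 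Concretely, with $\Gamma(e)=(\tau\cup\omega)|_{K\setminus\set e}$ and $\Gamma(e,f)=(\tau\cup\omega)|_{K\setminus\set{e,f}}$, one has $\ol q^{\Gamma(e,f)}_e=\ol q^{\Gamma(e)}_e+p^{\Gamma(e,f)}_{e,\omega(f)}$, which is a nonzero correction in general. Since $\Xi_\tau^\omega(ec,ec)$ uses $\ol q^{\Gamma(e)}_e$ while the formula of \Cref{lem:A-tau-i} for $A_\tau^\omega(ec,fc)$ uses $\ol q^{\Gamma(e,f)}_e$ and $\ol q^{\Gamma(e,f)}_f$, your ``unambiguous'' shorthand $\ol q_e$ conflates two different quantities, and the resulting $\+R_\tau^\omega(ec,fc)=-\ol q_{ef}/(\ol q_e\ol q_f-\ol q_{ef})$ does not verify $\Xi_\tau^\omega(-\!{Adj}_\tau+\+R_\tau^\omega)\Xi_\tau^\omega=A_\tau^\omega$.

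The correct remainder, obtained by solving for $\+R_\tau^\omega$ without conflating the two pinnings, is
\[
\+R_\tau^\omega(ec,fc)=\frac{-\ol q^{\Gamma(e)}_e\ol q^{\Gamma(f)}_f+\ol q^{\Gamma(e,f)}_e\ol q^{\Gamma(e,f)}_f-\ol q^{\Gamma(e,f)}_{ef}}{\ol q^{\Gamma(e,f)}_e\ol q^{\Gamma(e,f)}_f-\ol q^{\Gamma(e,f)}_{ef}},
\]
and the dominant contribution is not the covariance-type term $\ol q_{ef}$ you analyzed but the pinning-mismatch terms $\ol q^{\Gamma(e,f)}_e\,p^{\Gamma(e,f)}_{f,\omega(e)}+\ol q^{\Gamma(e,f)}_f\,p^{\Gamma(e,f)}_{e,\omega(f)}$ coming from $\ol q^{\Gamma(e,f)}_e\ol q^{\Gamma(e,f)}_f-\ol q^{\Gamma(e)}_e\ol q^{\Gamma(f)}_f$. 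These are controlled by the marginal upper bound (\Cref{lem:marginal-bound-weighted-1}) as a sum of two conditional marginals, each at most $1/(\beta-1)$, which is what yields the $2/(\beta-1)$ per-entry bound and hence $\rho(\+R_\tau^\omega)\le 2(k-1)/(\beta-1)$ after a row-sum/Gershgorin step. Your Cauchy--Schwarz estimate on $\ol q_{ef}$ and the Gershgorin step at the end are sound as far as they go, but they bound only a piece of the true remainder and miss the part that actually determines the lemma's constant.
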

\begin{proof}
	Let $ec,fc\in K_\tau^{c}$.
    To ease the notation, when $\tau$ and $\omega$ are clear from the context,
    we use $\Gamma(e)$ and $\Gamma(e,f)$ to denote the partial coloring $(\tau\cup\omega)|_{V\setminus\set{e}}$ and $(\tau\cup\omega)|_{V\setminus\set{e,f}}$ respectively.
	Using our new notations, we have
	\[
	\Xi_\tau(ec,ec) =
    \frac{p^{\Gamma(e)}_{ec}}
         {\ol q^{\Gamma(e, f)}_e}.
	\]
	Observing that since $c\not\in\!{col}\tp{(\tau\cup \omega)|_{K_\tau}}$, we have
	\begin{align*}
        p^{\Gamma(e)}_{e, c} = p^{\Gamma(e, f)}_{e, c},
      \\p^{\Gamma(f)}_{f, c} = p^{\Gamma(e, f)}_{e, c}.
	\end{align*}
	So we can write $A_\tau^{i,\omega}$ as
	\begin{align*}
		A_\tau^{i,\omega}(uc,vc)
		&=-\frac{p^{\Gamma(e)}_{e, c}p^{\Gamma(f)}_{f, c}}
        {\ol q^{\Gamma(e,f)}_e \ol q^{\Gamma(e,f)}_f - \ol q^{\Gamma(e,f)}_{ef}}
	  \\&=\frac{p^{\Gamma(e)}_{e, c}p^{\Gamma(f)}_{f, c}}
        {\ol q^{\Gamma(e)}_e \ol q^{\Gamma(f)}_f}
        \Bigg(
        -1 + \+R_\tau^{\omega}(ec,fc) 
        \Bigg)
	\end{align*}
	where 
	\begin{equation}\label{eqn:eta-remain}
		\+R_\tau^{i,\omega}(ec,fc) =
        \frac{
        -{\ol q^{\Gamma(e)}_e \ol q^{\Gamma(f)}_f}
        +
        {\ol q^{\Gamma(e,f)}_e \ol q^{\Gamma(e,f)}_f - \ol q^{\Gamma(e,f)}_{ef}}
        }{
        {\ol q^{\Gamma(e,f)}_e \ol q^{\Gamma(e,f)}_f - \ol q^{\Gamma(e,f)}_{ef}}
        }.
	\end{equation}
    Notice that
    \begin{align*}
    \abs{\ol q^{\Gamma(e,f)}_e \ol q^{\Gamma(e,f)}_f - \ol q^{\Gamma(e)}_e \ol q^{\Gamma(f)}_f}
    &=
    \abs{\ol q^{\Gamma(e,f)}_e p^{\Gamma(e, f)}_{f, \omega(e)} +
         \ol q^{\Gamma(e,f)}_f p^{\Gamma(e, f)}_{e, \omega(f)} -
         p^{\Gamma(e, f)}_{f, \omega(e)} p^{\Gamma(e, f)}_{e, \omega(f)} }
    \\&\le 
         \ol q^{\Gamma(e,f)}_e p^{\Gamma(e, f)}_{f, \omega(e)} +
         \ol q^{\Gamma(e,f)}_f p^{\Gamma(e, f)}_{e, \omega(f)}.
    \end{align*}
    Therefore,
	\begin{align*}
		\abs{\+R_\tau^{i,\omega}(ec,fc)} \le
        \frac{
         \ol q^{\Gamma(e,f)}_e p^{\Gamma(e, f)}_{f, \omega(e)} +
         \ol q^{\Gamma(e,f)}_f p^{\Gamma(e, f)}_{e, \omega(f)}
        }{
        \ol q^{\Gamma(e,f)}_e \ol q^{\Gamma(e,f)}_f - \ol q^{\Gamma(e,f)}_{ef}
        },
	\end{align*}
    which equals 
    $\mu^{\Gamma(e, f)}_{\set e;\+L'}(\omega(e)) + \mu^{\Gamma(e, f)}_{\set f;\+L'}(\omega(f))$,
    defining $\+L'$ be the color lists obtained by removing $c$ from the color lists of $e, f$.
    By \Cref{lem:marginal-bound-weighted-1}, this is bounded by $\frac{2}{\beta-1}$
    since the modified coloring istance is $(\beta-1)$-extra.
    The lemma then follows by doing a row summation to $\+R_\tau^\omega$.
\end{proof}

\begin{proposition}\label{prop:matrix-sq-coeff}
    Consider non-zero coefficients $\lambda_i$, $i\in [N]$, matrix $A = \sum_{i\in [N]}\lambda_i A_i$ and 
    $\Sigma\mle 0$, and all matrices are square, of the same size and symemtric. Then
    \[
    A\Sigma A \mle
    \big(\sum_{i\in [N]} \lambda_i\big)
    \bigg( \sum_{i\in [N]}\lambda_i A_i\Sigma A_i \bigg).
    \]
\end{proposition}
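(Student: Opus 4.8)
The plan is to deduce the proposition from a single exact matrix identity that holds for \emph{arbitrary} real scalars $\lambda_1,\dots,\lambda_N$ and \emph{arbitrary} symmetric matrices $A_1,\dots,A_N,\Sigma$ of a common size:
\begin{equation}\label{eq:sq-coeff-identity}
\Big(\sum_{i\in[N]}\lambda_i\Big)\Big(\sum_{i\in[N]}\lambda_i A_i\Sigma A_i\Big)-\Big(\sum_{i\in[N]}\lambda_i A_i\Big)\Sigma\Big(\sum_{j\in[N]}\lambda_j A_j\Big)=\frac12\sum_{i,j\in[N]}\lambda_i\lambda_j\,(A_i-A_j)\,\Sigma\,(A_i-A_j).
\end{equation}
Once \eqref{eq:sq-coeff-identity} is established, the proposition is immediate. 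The coefficients $\lambda_i$ are nonzero and of a common sign (in the applications of the proposition they are positive weights), so $\lambda_i\lambda_j\ge 0$ for every pair $(i,j)$; and since $\Sigma$ is semidefinite and each $A_i-A_j$ is symmetric, each $(A_i-A_j)\Sigma(A_i-A_j)=(A_i-A_j)^\top\Sigma(A_i-A_j)$ is semidefinite of the same sign as $\Sigma$, being a congruence transform of $\Sigma$. Hence the right-hand side of \eqref{eq:sq-coeff-identity} is a nonnegative combination of matrices all semidefinite of the sign of $\Sigma$, so it is semidefinite of that sign, and writing $A=\sum_i\lambda_i A_i$ this is precisely the claimed Loewner comparison between $A\Sigma A$ and $\big(\sum_i\lambda_i\big)\big(\sum_i\lambda_i A_i\Sigma A_i\big)$.

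It then remains to verify \eqref{eq:sq-coeff-identity}, which is a routine expansion. Expanding the squares on the right yields $\frac12\sum_{i,j}\lambda_i\lambda_j\big(A_i\Sigma A_i+A_j\Sigma A_j-A_i\Sigma A_j-A_j\Sigma A_i\big)$. Using the relabelling symmetry $i\leftrightarrow j$ of the double sum one has $\sum_{i,j}\lambda_i\lambda_j A_i\Sigma A_i=\sum_{i,j}\lambda_i\lambda_j A_j\Sigma A_j=\big(\sum_j\lambda_j\big)\big(\sum_i\lambda_i A_i\Sigma A_i\big)$, and likewise $\sum_{i,j}\lambda_i\lambda_j A_i\Sigma A_j=\sum_{i,j}\lambda_i\lambda_j A_j\Sigma A_i=\big(\sum_i\lambda_i A_i\big)\Sigma\big(\sum_j\lambda_j A_j\big)=A\Sigma A$. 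Substituting these four evaluations collapses the right-hand side of \eqref{eq:sq-coeff-identity} to its left-hand side; symmetry of the $A_i$ and of $\Sigma$ is what makes both sides symmetric matrices and legitimises calling the summands ``semidefinite''.

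This is the coefficient-weighted strengthening of \Cref{lem:matrix-squared-sum}, which is recovered by taking $\lambda_1=\dots=\lambda_N=1$, and the argument is morally the same convexity / sum-of-squares device carried out with weights. I do not expect a genuine obstacle: the only things needing care are tracking the sign of $\Sigma$ so the inequality comes out in the stated direction, and using the hypothesis that the $\lambda_i$ are nonzero of one sign exactly where it is needed, namely to obtain $\lambda_i\lambda_j\ge 0$ in \eqref{eq:sq-coeff-identity}.
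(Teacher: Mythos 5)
Your approach is mathematically the same as the paper's: the exact identity you write down, expanding $\tfrac12\sum_{i,j}\lambda_i\lambda_j(A_i-A_j)\Sigma(A_i-A_j)$, is precisely the aggregation of the term-by-term step the paper performs ($\lambda_i\lambda_j(A_i\Sigma A_j + A_j\Sigma A_i) \mle \lambda_i\lambda_j(A_i\Sigma A_i+A_j\Sigma A_j)$ for each pair $i<j$, then summed and re-indexed). Packaging it as an identity is a bit cleaner and makes both required hypotheses visible, but it is not a different route. You were also right to insert the hypothesis that all $\lambda_i$ share a sign: the proposition only says ``non-zero,'' but without the common-sign assumption $\lambda_i\lambda_j$ can be negative for some pairs and the right side of the identity has no definite sign, so the Loewner comparison fails. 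In the one place the proposition is invoked (\Cref{lem:bound-A-tau-i}) the coefficients are probabilities $\pi_\tau(\omega)>0$, so this is fine.

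However, there is a sign problem that your write-up does not actually resolve, even though you flag ``tracking the sign of $\Sigma$'' as the delicate point. With $\Sigma\mle 0$ as written in the proposition, every congruence $(A_i-A_j)\Sigma(A_i-A_j)$ is $\mle 0$, so (taking $\lambda_i\lambda_j\ge 0$) your identity gives
\[
\Big(\sum_i\lambda_i\Big)\Big(\sum_i\lambda_i A_i\Sigma A_i\Big)-A\Sigma A \ \mle\ 0,
\]
i.e.\ $A\Sigma A\ \mge\ \big(\sum_i\lambda_i\big)\big(\sum_i\lambda_i A_i\Sigma A_i\big)$ --- the \emph{reverse} of the stated conclusion. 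A scalar sanity check ($N=2$, $\Sigma=-1$, $A_1=1$, $A_2=0$, $\lambda_1=\lambda_2=1$) confirms: left side $-1$, right side $-2$, and $-1\not\le -2$. So the proposition as stated is false; the intended hypothesis is $\Sigma\mge 0$, which is exactly what is in force in \Cref{lem:bound-A-tau-i} where $\Sigma=\Pi_\tau^{-1}\mge 0$. The paper's own proof has the identical issue (its term-by-term $\mle$ step only holds for $\Sigma\mge 0$ and $\lambda_i\lambda_j\ge 0$), so this is a typo in the statement rather than an error in your argument --- but your claim that the identity ``is precisely the claimed Loewner comparison'' overstates the case: it yields the claimed direction only after you replace $\Sigma\mle 0$ by $\Sigma\mge 0$, and you should say so explicitly.
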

\begin{proof}
    \begin{align*}
    A\Sigma A &= \sum_{i, j\in [N], i < j} \lambda_i\lambda_j \big( A_i\Sigma A_j + A_i\Sigma A_j \big)
               + \sum_{i\in [N]}\lambda_i^2 A_i\Sigma A_i
    \\& \mle \sum_{i, j\in [N]} \lambda_i\lambda_j\big( A_i\Sigma A_i + A_j\Sigma A_j \big)
               + \sum_{i\in [N]}\lambda_i^2 A_i\Sigma A_i
    \\& = 
    \big(\sum_{i\in [N]} \lambda_i\big)
    \bigg( \sum_{i\in [N]}\lambda_i A_i\Sigma A_i \bigg).
    \end{align*}
\end{proof}

Define $C_\tau\defeq \sum_{\substack{\omega\in \+C_\tau} \\ c\notin \omega}\pi_\tau(\omega)$,
and $\+C'_{\tau, k}\defeq \set{\omega\in \+C_{\tau, k} \mid c\notin\omega}$,
then the following lemma holds.
\begin{lemma}\label{lem:bound-A-tau-i}
 $A_\tau\Pi_\tau^{-1}A_\tau
 \mle
 \frac{4a_k^2C_\tau}{k^2} \sum_{\omega\in\+C'_{\tau, k}} \pi_\tau(\omega) A_\tau^{\omega}\Pi^{-1}_\tau A_\tau^{\omega}
 \mle
 \frac{4a_k^2}{k^2} \sum_{\omega\in\+C'_{\tau, k}} \pi_\tau(\omega) A_\tau^{\omega}\Pi^{-1}_\tau A_\tau^{\omega}$.
\end{lemma}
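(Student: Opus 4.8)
\textbf{Proof proposal for \Cref{lem:bound-A-tau-i}.} The plan is to reduce the whole statement to the convex‑combination formula for $A_\tau$ supplied by \Cref{lem:A-tau-i} together with the matrix Cauchy–Schwarz inequality \Cref{prop:matrix-sq-coeff}. Working throughout inside the color‑$c$ block (so that $A_\tau,\Pi_\tau,\Pi_\tau^{-1}$ always denote the restrictions $A_\tau^c,\Pi_\tau^c,(\Pi_\tau^c)^{-1}$ to $K_\tau^c\times K_\tau^c$, following the paper's convention of suppressing the superscript $c$), recall from \Cref{lem:A-tau-i} that
\[
A_\tau=\frac{2a_k}{k}\sum_{\omega\in\+C'_{\tau,k}}\pi_{\tau,k}(\omega)\,A_\tau^{\omega},
\]
i.e.\ $A_\tau$ is a nonnegatively weighted sum of the symmetric matrices $A_\tau^{\omega}$ with weights $\lambda_\omega\defeq\pi_{\tau,k}(\omega)\ge 0$. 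Since $\Pi_\tau^{-1}$ is diagonal with nonnegative diagonal entries (strictly positive exactly on $K_\tau^c$, which is also where every $A_\tau^{\omega}$ is supported), it is positive semidefinite there, so \Cref{prop:matrix-sq-coeff} applies with $\Sigma=\Pi_\tau^{-1}$ and this family of coefficients.

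For the first inequality I would factor out the scalar $\tp{\frac{2a_k}{k}}^2=\frac{4a_k^2}{k^2}$ and invoke \Cref{prop:matrix-sq-coeff}:
\[
A_\tau\Pi_\tau^{-1}A_\tau=\frac{4a_k^2}{k^2}\tp{\sum_{\omega}\lambda_\omega A_\tau^{\omega}}\Pi_\tau^{-1}\tp{\sum_{\omega}\lambda_\omega A_\tau^{\omega}}\mle\frac{4a_k^2}{k^2}\tp{\sum_{\omega}\lambda_\omega}\tp{\sum_{\omega}\lambda_\omega A_\tau^{\omega}\Pi_\tau^{-1}A_\tau^{\omega}}.
\]
It then remains only to identify $\sum_{\omega\in\+C'_{\tau,k}}\lambda_\omega=\sum_{\omega\in\+C'_{\tau,k}}\pi_{\tau,k}(\omega)=C_\tau$, the total $\pi_{\tau,k}$‑mass carried by those $k$‑faces of the link that avoid the color $c$, which gives precisely the first claimed bound.

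For the second inequality I would use two elementary facts. First, $C_\tau\le 1$, being a partial sum of the probability distribution $\pi_{\tau,k}$. Second, $\sum_{\omega}\lambda_\omega A_\tau^{\omega}\Pi_\tau^{-1}A_\tau^{\omega}\mge 0$: each summand is PSD because for any vector $v$ one has $v^\top A_\tau^{\omega}\Pi_\tau^{-1}A_\tau^{\omega}v=(A_\tau^{\omega}v)^\top\Pi_\tau^{-1}(A_\tau^{\omega}v)\ge 0$ (using symmetry of $A_\tau^{\omega}$ and $\Pi_\tau^{-1}\mge 0$), and the weights $\lambda_\omega$ are nonnegative. Multiplying a PSD matrix by the scalar $C_\tau\in[0,1]$ only shrinks it in the Loewner order, which yields the second claimed bound. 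There is no genuine obstacle here — the mathematical content is entirely carried by \Cref{lem:A-tau-i} and \Cref{prop:matrix-sq-coeff}, which are already available — and the only points needing (mild) care are bookkeeping: staying inside the color‑$c$ block so that the products and the pseudo‑inverse are well defined and $\Pi_\tau^{-1}$ is genuinely PSD there, combining the scalar prefactors correctly, and recognizing $\sum_{\omega\in\+C'_{\tau,k}}\pi_{\tau,k}(\omega)$ as $C_\tau\le 1$.
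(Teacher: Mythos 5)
Your proof is correct and is essentially the same argument the paper intends: decompose $A_\tau$ via \Cref{lem:A-tau-i} into a non-negatively weighted sum of the $A_\tau^\omega$, factor out $\frac{4a_k^2}{k^2}$, apply \Cref{prop:matrix-sq-coeff} with $\Sigma=\Pi_\tau^{-1}$, identify the coefficient sum as $C_\tau$, and then drop the factor $C_\tau\le 1$ using positive semidefiniteness of the resulting sum. One small bookkeeping note: the paper's one-line proof cites \Cref{lem:A-tau-i-omega}, but the decomposition you actually need is \Cref{lem:A-tau-i} (the former only gives the $\Xi$-factorization of a single $A_\tau^\omega$), so your citation is the right one; and the stated hypothesis ``$\Sigma\mle 0$'' in \Cref{prop:matrix-sq-coeff} is a sign typo — the proof of that proposition uses $(A_i-A_j)\Sigma(A_i-A_j)\mge 0$, which requires $\Sigma\mge 0$, and that is the condition your $\Pi_\tau^{-1}$ satisfies.
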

\begin{proof}
    By \Cref{lem:A-tau-i-omega} and \Cref{prop:matrix-sq-coeff}.
\end{proof}

In the following discussion,
let $\gamma=\bigg( 1+ \frac{\Delta-1}{\beta-1} \bigg)^3\frac{1}{\beta-1}$.
\begin{lemma}\label{lem:bound-A-tau-i-omega-square}
	$ A_\tau^{\omega} \Pi_\tau^{-1} A_\tau^{\omega}
    \mle \gamma k\cdot \Xi_\tau^{\omega}\tp{\tp{\!{Adj}_\tau}^2
    +
    \frac{4(k-1)^2}{(\beta-1)^2}\!{Id}_\tau}\Xi_\tau^{i,\omega}$.
\end{lemma}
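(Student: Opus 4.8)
The plan is to reduce the matrix inequality to a scalar estimate on a diagonal matrix, starting from the factorization already available in \Cref{lem:A-tau-i-omega}.

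\emph{Setup.} By \Cref{lem:A-tau-i-omega} one may write $A_\tau^\omega=\Xi_\tau^\omega\tp{-\!{Adj}_\tau+\+R_\tau^\omega}\Xi_\tau^\omega$, where $\!{Adj}_\tau$ (the all-ones matrix inside each color block, which records that the line graph of the broom $K$ is a clique) and $\+R_\tau^\omega$ are symmetric and $\rho\tp{\+R_\tau^\omega}\le \frac{2(k-1)}{\beta-1}$. Since $\Xi_\tau^\omega$ is diagonal and positive on $K_\tau^c$,
\[
A_\tau^\omega\Pi_\tau^{-1}A_\tau^\omega
=\Xi_\tau^\omega\,\tp{-\!{Adj}_\tau+\+R_\tau^\omega}\,\Sigma\,\tp{-\!{Adj}_\tau+\+R_\tau^\omega}\,\Xi_\tau^\omega,
\qquad \Sigma\defeq \Xi_\tau^\omega\Pi_\tau^{-1}\Xi_\tau^\omega ,
\]
with $\Sigma$ a nonnegative diagonal matrix. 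Hence it suffices to prove the ``stripped'' bound $\tp{-\!{Adj}_\tau+\+R_\tau^\omega}\Sigma\tp{-\!{Adj}_\tau+\+R_\tau^\omega}\mle \gamma k\tp{\!{Adj}_\tau^2+\frac{4(k-1)^2}{(\beta-1)^2}\!{Id}_\tau}$ and then conjugate back by $\Xi_\tau^\omega$.

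\emph{Main step.} I would establish $\Sigma\mle \frac{\gamma k}{2}\!{Id}_\tau$, i.e. $\Sigma(ec,ec)=\tp{\Xi_\tau^\omega(ec,ec)}^2/\pi_\tau(ec)\le \frac{\gamma k}{2}$ for every $(e,c)\in K_\tau^c$. For the numerator, $\Xi_\tau^\omega(ec,ec)=p^{\Gamma(e)}_{e,c}/\ol q^{\Gamma(e)}_e$ where $\Gamma(e)$ denotes the pinning fixing every edge of $K$ except $e$; by the weighted marginal upper bound \Cref{lem:marginal-bound-weighted-1}, $p^{\Gamma(e)}_{e,c}/q^{\Gamma(e)}_e\le 1/\ell^{\Gamma(e)}_e$, so $\Xi_\tau^\omega(ec,ec)\le 1/\tp{\ell^{\Gamma(e)}_e-1}\le 1/(\beta-1)$. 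For the denominator I would first record the identity $\pi_\tau(ec)=\frac1k\,\mu^\tau_e(c)$, which follows directly from the definition of the induced marginal $\pi_{\tau,1}$ once the binomial coefficients cancel (it is the link version of \Cref{lem:mtd-to-si}), and then apply the weighted marginal lower bound \Cref{lem:marginal-lower-weighted}. The bookkeeping that makes the constants line up is that passing from $\tau$ to $\Gamma(e)$ only pins the $k-1$ siblings of $e$ inside $K$, hence deletes at most $k-1$ colors from $\+L^\tau(e)$ and preserves the $\beta$-extra slack; consequently $\ell^{\Gamma(e)}_e\ge \ell^\tau_e-k+1\ge \beta$ and $\beta+k-2\le \beta+\Delta-2$, and substituting these so that the $1/\ell^{\Gamma(e)}_e$ factor from the bound on $\Xi_\tau^\omega(ec,ec)$ cancels the $1/(\ell^\tau_e-k+1)$ factor from the lower bound on $\mu^\tau_e(c)$ leaves the remaining factors combining into $\big(\tfrac{\beta+\Delta-2}{\beta-1}\big)^3\tfrac{1}{\beta-1}=\gamma$ (up to the overall $k$). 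This matching of constants — in particular making sure the crude step $k\le\Delta$ is absorbed by the hypothesis $\beta\ge\Delta+50$ rather than lost — is the step I expect to be the main obstacle, and if the direct combination of the two marginal bounds is off by an absolute constant one has to sharpen one of them by noting that the pinnings saturating the upper bound on $\Xi_\tau^\omega(ec,ec)$ and the lower bound on $\mu_e^\tau(c)$ cannot occur simultaneously.

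\emph{Conclusion.} Granting $\Sigma\mle \frac{\gamma k}{2}\!{Id}_\tau$, the remainder is routine. Using $(X+Y)\Sigma(X+Y)\mle 2X\Sigma X+2Y\Sigma Y$ for symmetric $X,Y$ (which is just $(X-Y)\Sigma(X-Y)\mge 0$), together with $\Sigma\mle \frac{\gamma k}{2}\!{Id}_\tau$ and $\+R_\tau^\omega\+R_\tau^\omega\mle \rho\tp{\+R_\tau^\omega}^2\!{Id}_\tau\mle \frac{4(k-1)^2}{(\beta-1)^2}\!{Id}_\tau$, we get
\[
\tp{-\!{Adj}_\tau+\+R_\tau^\omega}\Sigma\tp{-\!{Adj}_\tau+\+R_\tau^\omega}
\mle 2\,\!{Adj}_\tau\Sigma\!{Adj}_\tau+2\,\+R_\tau^\omega\Sigma\+R_\tau^\omega
\mle \gamma k\,\!{Adj}_\tau^2+\gamma k\cdot\frac{4(k-1)^2}{(\beta-1)^2}\!{Id}_\tau ,
\]
and conjugating by $\Xi_\tau^\omega$ gives the claimed inequality.
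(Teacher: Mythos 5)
Your proof follows the same route as the paper's own argument: factor $A_\tau^\omega = \Xi_\tau^\omega(-\!{Adj}_\tau + \+R_\tau^\omega)\Xi_\tau^\omega$ via \Cref{lem:A-tau-i-omega}, bound the diagonal middle factor $\Sigma = \Xi_\tau^\omega\Pi_\tau^{-1}\Xi_\tau^\omega$ through the weighted marginal bounds, and then apply $(X+Y)\Sigma(X+Y)\mle 2X\Sigma X + 2Y\Sigma Y$ together with $\rho(\+R_\tau^\omega)\le \tfrac{2(k-1)}{\beta-1}$. The only presentational difference is that the paper splits $\Sigma$ into the two pieces $\Xi_\tau^\omega\Pi_\tau^{-1}(uc,uc)\le k\bigl(1+\tfrac{\Delta-1}{\beta-1}\bigr)^3$ and $\Xi_\tau^\omega \mle \tfrac{1}{\beta-1}\!{Id}_\tau$ rather than combining them first, but the resulting scalar bound is identical.

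The issue you correctly flag — that the direct combination of the marginal bounds gives $\Sigma \mle \gamma k\,\!{Id}_\tau$ rather than $\tfrac{\gamma k}{2}\!{Id}_\tau$, leaving the final coefficient at $2\gamma k$ instead of the stated $\gamma k$ — is not a gap in your argument but an inconsistency internal to the paper: the paper's own proof of this lemma concludes with $2k\gamma\,\Xi_\tau^\omega(\!{Adj}_\tau^2 + \tfrac{4(k-1)^2}{(\beta-1)^2}\!{Id}_\tau)\Xi_\tau^\omega$, which does not match the $\gamma k$ in the lemma statement. So the factor of $2$ you were trying to recover is not actually recovered in the paper either, and your speculative sharpening (``the two bounds cannot be saturated simultaneously'') is not something the paper attempts. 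One small bookkeeping point: the loss in $\ell_e$ when passing from $\tau$ to $\Gamma(e)$ is not just $k-1$ colors from the pinned siblings inside $K$; since $\ell^\xi_e$ in \Cref{lem:marginal-bound-weighted-1} is in effect the slack $|\+L^\xi(e)|-\deg^\xi(e)$, the paper's computation also charges the up-to-$\Delta-1$ unpinned neighbours of $e$ outside $K$, which is where the $\ell_u - k - \Delta + 1$ in the denominator comes from — but this does not change the final $k\gamma$ (or $2k\gamma$) bound.
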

\begin{proof}
	By \Cref{lem:marginal-lower-weighted,lem:marginal-upper-weighted,lem:marginal-bound-weighted-1} for any $ec \in K_\tau^{c}$,
	\begin{align*}
			\Xi_\tau^{\omega}\Pi_\tau^{-1}(uc, uc)\leq
            k \frac{(\beta+k-2)(\beta+\Delta-2)}{(\beta-1)^2}\frac{\ell_u - k + 1}{\ell_u-k-\Delta + 1}
            &\le
            k \frac{(\beta+\Delta-2)^2}{(\beta-1)^2}\frac{\beta + \Delta-2}{\beta-1}
            \\&\le k\bigg( 1+ \frac{\Delta-1}{\beta-1} \bigg)^3.
	\end{align*}
	Then it follows from \Cref{lem:A-tau-i-omega} and $\Xi_\tau^{i,\omega}\mle \frac{1}{\beta-1}\cdot \!{Id}_{\tau}$ that
    \[
	 A_\tau^{\omega} \Pi_\tau^{-1} A_\tau^{\omega}
     \mle
     2 k\gamma \Xi_\tau^{i,\omega}\tp{\tp{\!{Adj}_\tau}^2  + \frac{4(k-1)^2}{(\beta-1)^2}\!{Id}_\tau}\Xi_\tau^{\omega}.
     \qedhere
    \]

\end{proof}

\begin{lemma}\label{lem:bound-Xi-Pi}
$ \frac{1}{k}\sum_{\omega\in \+C_{\tau, k}}\pi_{\tau, k}(\omega)\Xi^\omega_\tau\Pi^{-1}_\tau = \!{Id}_\tau$.
\end{lemma}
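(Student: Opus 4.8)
The plan is to reduce the matrix identity to a scalar one. Both sides are diagonal matrices supported on $K_\tau^c\times K_\tau^c$ (the color superscript $c$ is suppressed, so $\!{Id}_\tau$ is the identity on $K_\tau^c$, $\Pi_\tau^{-1}$ is diagonal, and each $\Xi_\tau^\omega$ is diagonal by its definition), hence it suffices to check the $(ec,ec)$ entry for every $ec\in K_\tau^c$. Two elementary facts about the induced distributions will be used: $\pi_{\tau,k}=\mu^\tau_{K_\tau}$ as distributions over proper colorings of the $k$ free edges $K_\tau$; and $\pi_\tau(ec)=\pi_{\tau,1}(ec)=\tfrac1k\sum_{\omega\in\+C_{\tau,k}\colon ec\in\omega}\pi_{\tau,k}(\omega)=\tfrac1k\,\mu^\tau_{K_\tau}(c(e)=c)$, since $\pi_{\tau,1}$ is obtained from $\pi_{\tau,k}$ by drawing a facet and then a uniformly random one of its $k$ elements. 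Thus the claim is equivalent to the scalar identity
\[
\sum_{\omega\in\+C_{\tau,k}}\pi_{\tau,k}(\omega)\,\Xi_\tau^\omega(ec,ec)=\mu^\tau_{K_\tau}(c(e)=c).
\]

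The key structural observation concerns $\Xi_\tau^\omega(ec,ec)=p^{\xi_\omega}_{e,c}\big/\ol q^{\xi_\omega}_e$, where $\xi_\omega\defeq\tau\cup\omega|_{K_\tau\setminus\set{e}}$ pins every broom edge except $e$. By definition $p^{\xi}_{e,c}$ is a weighted count over colorings of the subtree attached to $e$ outside $K$, compatible with $e\mapsto c$ and with $\xi$; since $\xi_\omega$ pins only edges of $K$, which is disjoint from that subtree, $p^{\xi_\omega}_{e,c}=p_{e,c}$ is independent of $\tau$ and of $\omega$. Consequently $\mu^{\xi_\omega}_e=\mu^{\tau\cup\omega'}_e$ (with $\omega'\defeq\omega|_{K_\tau\setminus\set{e}}$) is the conditional marginal of $e$ given $\xi_\omega$, equal to $p_{e,\cdot}/q^{\xi_\omega}_e$ on its support $\+L^{\xi_\omega}(e)$, and therefore
\[
\Xi_\tau^\omega(ec,ec)=\frac{p_{e,c}}{q^{\xi_\omega}_e-p_{e,c}}=\frac{\mu^{\tau\cup\omega'}_e(c)}{1-\mu^{\tau\cup\omega'}_e(c)},
\]
which is well defined since $\mu^{\tau\cup\omega'}_e(c)\le 1/\beta<1$ by \Cref{lem:marginal-upper-weighted}, so $\ol q^{\xi_\omega}_e>0$.

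Next I would split the sum over $\omega\in\+C_{\tau,k}$ according to the pair $(\omega',\omega(e))$, using $\pi_{\tau,k}(\omega)=\mu^\tau_{K_\tau}(\omega)=\mu^\tau_{K_\tau\setminus\set{e}}(\omega')\,\mu^{\tau\cup\omega'}_e(\omega(e))$ and the fact that $\Xi_\tau^\omega(ec,ec)$ depends on $\omega$ only through $\omega'$. For a fixed proper $\omega'$, the admissible values of $\omega(e)$ keeping $\tau\cup\omega$ in the support of $\pi_{\tau,k}$ with $c\notin\!{col}(\omega)$ are exactly the colors of $\+L^{\tau\cup\omega'}(e)\setminus\set{c}$, so the inner sum $\sum_{\omega(e)}\mu^{\tau\cup\omega'}_e(\omega(e))$ equals $1-\mu^{\tau\cup\omega'}_e(c)$, which cancels the denominator above and leaves $\mu^{\tau\cup\omega'}_e(c)$. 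The remaining outer sum is $\sum_{\omega'}\mu^\tau_{K_\tau\setminus\set{e}}(\omega')\,\mu^{\tau\cup\omega'}_e(c)=\mu^\tau_{K_\tau}(c(e)=c)$ by the tower rule; this is precisely the displayed scalar identity. The off-diagonal entries of both sides vanish, so the matrices agree and equal $\!{Id}_\tau$.

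The only genuinely delicate point, and the one I expect to need the most care, is the bookkeeping in the middle two paragraphs: confirming that $p^{\xi_\omega}_{e,c}$ really is insensitive to how the other broom edges are pinned (this is where disjointness of $K$ and the subtree below $e$ enters), and matching the index set $\+L^{\xi_\omega}(e)$ on which $\mu^{\tau\cup\omega'}_e$ is supported with the set of admissible extensions $\omega(e)$, so that the inner summation telescopes to exactly $1-\mu^{\tau\cup\omega'}_e(c)$. Tracking the normalization $\tfrac1k$ relating $\pi_\tau(ec)$ to the one-edge marginal of $\mu^\tau_{K_\tau}$ is routine but must not be dropped.
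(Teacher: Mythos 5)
Your proof is correct and takes essentially the same route as the paper: reduce the matrix identity to the $(ec,ec)$ scalar entry, rewrite $\Xi^\omega_\tau(ec,ec)$ as $\mu^{\tau\cup\omega'}_e(c)/(1-\mu^{\tau\cup\omega'}_e(c))$ with $\omega'=\omega|_{K_\tau\setminus\{e\}}$, group the facet sum by $\omega'$ so that the inner sum over $\omega(e)$ produces the factor $1-\mu^{\tau\cup\omega'}_e(c)$ that cancels the denominator, and recognize the remaining sum as $\mu^\tau_{K_\tau}(c(e)=c)=k\pi_\tau(ec)$. Your write-up is in fact somewhat more explicit than the paper's (which is terse and carries a notational ambiguity in its intermediate display); the side remark that $p^{\xi_\omega}_{e,c}$ is insensitive to the broom pinning is correct but not strictly needed, since the identity $\Xi^\omega_\tau(ec,ec)=\mu^{\xi_\omega}_e(c)/(1-\mu^{\xi_\omega}_e(c))$ already follows directly from $\mu^{\xi_\omega}_e(c)=p^{\xi_\omega}_{e,c}/q^{\xi_\omega}_e$ and the definition of $\overline q^{\xi_\omega}_e$.
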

\begin{proof}
    At entry $(ec, ec)$ such that $\pi_\tau(ec)\neq 0$, the LHS is
	\begin{align*}
        \frac{1}{k} \sum_{\omega\in \+C_{\tau, k}}\mu^\tau_{K_\tau}(\omega) \pi^{-1}_\tau(ec)
                      \frac{p^{\tau\cup\omega|_{K\setminus \set e}}_{e, c}}
                       {\ol q^{\tau\cup\omega|_{K\setminus \set e}}_{e   }}
        &= 
        \frac{1}{k} \sum_{\omega'\in \+C_{\tau, K_\tau}}\mu^\tau_{K_\tau}(\omega') \pi^{-1}_\tau(ec)
                      \frac{p^{\tau\cup\omega'}_{e, c}}
                       {\ol q^{\tau\cup\omega'}_{e   }}
      \\&= 
        \frac{1}{k} \sum_{\omega'\in \+C_{\tau, K_\tau}}\mu^\tau_{K_\tau\setminus\set e}(\omega')
                      \mu^{\tau\cup\omega'}_{\set e}(c)
                    \pi^{-1}_\tau(ec)
      \\&= 
        \frac{1}{k} \sum_{\omega'\in \+C_{\tau, K_\tau}}\mu^\tau_{K_\tau}(\omega\cup\set{ec})
                    \pi^{-1}_\tau(ec)
         = 
        \pi^{-1}_\tau(ec) \pi_\tau(ec)
        =1.
	\end{align*}
\end{proof}

We are now ready to bound $\tp{(k-1)\cdot \E[x\sim\pi_\tau]{A_{\tau\cup\set{x}}} - (k-2) \cdot A_\tau+4A_\tau \Pi_\tau^{-1} A_\tau}$ in the LHS of \Cref{eqn:condition-main}.
\begin{lemma}
	\label{lem:bound-A-tau}
	There exists a sequence of non-negative numbers $\set{a_h}_{0\le h\le \Delta}$ such that
	\[
		\Pi_\tau^{-\frac{1}{2}} \tp{(k-1)\cdot \E[x\sim\pi_\tau]{A_{\tau\cup\set{x}}}
        - (k-2) \cdot A_\tau+2A_\tau \Pi_\tau^{-1} A_\tau}
        \Pi_\tau^{-\frac{1}{2}} \mle
        \frac{8\gamma(k-1)}{\beta-1}\Big(1 + \frac{\Delta}{\beta-1}\big(1+\frac{2}{\beta-1}\big)\Big).
	\]
\end{lemma}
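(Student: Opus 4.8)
The plan is to collapse the left-hand side to the single term $2A_\tau\Pi_\tau^{-1}A_\tau$ by a suitable choice of the free parameters $\{a_h\}$, and then estimate that term using the Loewner inequalities prepared in \Cref{lem:A-tau-i}, \Cref{lem:A-tau-i-omega}, \Cref{lem:bound-A-tau-i} and \Cref{lem:bound-A-tau-i-omega-square}. First I would take $\{a_h\}$ to be \emph{constant}, $a_h\equiv a$, with $a$ of order $\Delta^{-1/2}$ fixed at the very end (for instance $a=(\Delta-1)^{-1/2}$; any nearly constant sequence works equally well, which is presumably why the statement only asserts existence of some sequence). With $a_{k-1}=a_k$, the identity $\E[x\sim\pi_\tau]{A_{\tau\cup\set{x}}}=\frac{k-2}{k-1}\cdot\frac{a_{k-1}}{a_k}A_\tau$ proved just after \eqref{eqn:A-def} gives $(k-1)\E[x\sim\pi_\tau]{A_{\tau\cup\set{x}}}=(k-2)A_\tau$, so the first two terms cancel and the bracket equals $2A_\tau\Pi_\tau^{-1}A_\tau$.

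Next I would bound $A_\tau\Pi_\tau^{-1}A_\tau$. By \Cref{lem:bound-A-tau-i} (which combines the decomposition $A_\tau=\frac{2a_k}{k}\sum_{\omega\in\+C'_{\tau,k}}\pi_{\tau,k}(\omega)A_\tau^{\omega}$ of \Cref{lem:A-tau-i} with the Cauchy--Schwarz-type \Cref{prop:matrix-sq-coeff} and $C_\tau\le 1$),
\[
A_\tau\Pi_\tau^{-1}A_\tau \mle \frac{4a_k^2}{k^2}\sum_{\omega\in\+C'_{\tau,k}}\pi_{\tau,k}(\omega)\,A_\tau^{\omega}\Pi_\tau^{-1}A_\tau^{\omega}.
\]
Plugging in \Cref{lem:bound-A-tau-i-omega-square}, which bounds each summand by $\gamma k\,\Xi_\tau^{\omega}\bigl((\!{Adj}_\tau)^2+\tfrac{4(k-1)^2}{(\beta-1)^2}\!{Id}_\tau\bigr)\Xi_\tau^{\omega}$, and using that $K$ is a broom so $\!{Adj}_\tau$ is the adjacency matrix of a clique on at most $k$ vertices, hence $(\!{Adj}_\tau)^2\mle(k-1)^2\!{Id}_\tau$, together with the pointwise estimate $\Xi_\tau^{\omega}\mle\frac{1}{\beta-1}\!{Id}_\tau$ (to replace $(\Xi_\tau^{\omega})^2$ by $\frac{1}{\beta-1}\Xi_\tau^{\omega}$) and finally \Cref{lem:bound-Xi-Pi} in the rearranged form $\sum_{\omega\in\+C_{\tau,k}}\pi_{\tau,k}(\omega)\Xi_\tau^{\omega}=k\Pi_\tau$, I obtain
\[
A_\tau\Pi_\tau^{-1}A_\tau \mle \frac{4a_k^2\gamma(k-1)^2}{\beta-1}\Bigl(1+\tfrac{4}{(\beta-1)^2}\Bigr)\Pi_\tau.
\]

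Finally I would conjugate by $\Pi_\tau^{-1/2}$, multiply by $2$, use $a_k^2(k-1)\le 1$ (valid for the chosen $a$ since $k\le\Delta$), and bound $\frac{4}{(\beta-1)^2}\le\frac{\Delta}{\beta-1}\bigl(1+\frac{2}{\beta-1}\bigr)$, which gives exactly $\frac{8\gamma(k-1)}{\beta-1}\bigl(1+\frac{\Delta}{\beta-1}(1+\frac{2}{\beta-1})\bigr)$, with all the $\mle$'s understood ``with padding by zeros'' on the support $\+C_{\tau,1}$ at color $c$ (so the scalar bound is against $\!{Id}_\tau$). The hard part is not any single inequality but the bookkeeping: keeping constants consistent through \Cref{lem:A-tau-i}--\Cref{lem:bound-A-tau-i-omega-square}, in particular pinning down the stray factors of $2$ (the excerpt is not internally consistent on whether the quadratic term is $2A_\tau\Pi_\tau^{-1}A_\tau$ or $4A_\tau\Pi_\tau^{-1}A_\tau$, nor on whether \Cref{lem:bound-A-tau-i-omega-square} carries $\gamma k$ or $2\gamma k$), and checking that the constant choice of $\{a_h\}$ remains compatible with the constraints the subsequent analysis of the diagonal blocks $B_\tau$ imposes on the same sequence.
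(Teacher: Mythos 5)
The crucial difficulty your proposal glosses over is that $a_2$ is \emph{not} a free parameter. The recursion \eqref{eqn:A-def} defines $A_\tau$ at codimension $k\ge 3$ in terms of the base-case matrices at codimension $2$ (\cref{eqn:base-case-def-ol}), and the identity
\[
\E[x\sim\pi_\tau]{A_{\tau\cup\set{x}}}=\frac{k-2}{k-1}\cdot\frac{a_{k-1}}{a_{k}} A_\tau
\]
which you invoke, applied at $k=3$, involves $a_2$; consistency with the base case forces $a_2=1$ (otherwise the unnamed lemma after \eqref{eqn:A-def} is false at $k=3$, and also the codimension-$2$ inequality $\Pi_\tau P_\tau - 2\pi_\tau\pi_\tau^\top\mle M_\tau$ established in \Cref{sec:base-case} would need to be reproved with a rescaled off-diagonal block). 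So a \emph{constant} sequence $a_h\equiv a$ with $a\approx\Delta^{-1/2}$ is not available. If you instead enforce $a_2=1$ and only make $a_h$ constant for $h\ge 3$, the cancellation you rely on already breaks at $k=3$: the linear part becomes $(k-2)\bigl(\frac{a_{k-1}}{a_k}-1\bigr)A_\tau = (\frac{1}{a}-1)A_\tau\approx(\sqrt{\Delta}-1)A_\tau$, and since $A_\tau$ is indefinite (its off-diagonal block is proportional to $-\!{Adj}_\tau$, which has eigenvalues $\pm 1$ on the top side), this contributes a positive term of order $(\sqrt{\Delta}-1)\cdot\frac{2a_3}{\beta-1}\approx\frac{2}{\beta-1}$ to $\Pi_\tau^{-1/2}(\cdot)\Pi_\tau^{-1/2}$, which dwarfs the target bound $\frac{8\gamma(k-1)}{\beta-1}(1+o(1))\approx\frac{16}{\Delta(\beta-1)}$ at $k=3$. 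Finally, if you take $a_h\equiv 1$ (the only genuinely constant choice compatible with $a_2=1$), your final step $a_k^2(k-1)\le 1$ fails by a factor $k-1\le\Delta-1$, and the bound you obtain is $\frac{8\gamma(k-1)^2}{\beta-1}(1+o(1))$, which exceeds the claimed bound by a factor of roughly $k-1$.

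What the paper actually does is quite different and cannot be replaced by "cancel, then bound the quadratic.'' It keeps both the linear term $-(k-2)(a_{k-1}-a_k)\!{Adj}_\tau$ and the quadratic term $4\gamma a_k^2(\!{Adj}_\tau)^2$, diagonalizes $\!{Adj}_\tau$ (spectrum $\{-1,k-1\}$), and observes that at the large eigenvalue $k-1$ the two contributions are $-(k-1)(k-2)(a_{k-1}-a_k)$ and $+4\gamma(k-1)^2a_k^2$, which \emph{nearly cancel} if $a_{k-1}-a_k\approx 4\gamma a_k a_{k-1}$. This motivates the nonconstant choice $a_k=\frac{1}{1+4\gamma(k-2)}$ (which indeed has $a_2=1$), yielding $\rho(\tilde A_k)\le 4\gamma(k-1)$. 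In other words, the $(k-1)^2$ blow-up that kills your estimate is precisely what the paper's choice of $\{a_k\}$ is designed to cancel, via the sign of $\!{Adj}_\tau$ at its top eigenvalue; a nearly-constant sequence forgoes this cancellation and cannot recover the $(k-1)$-linear bound. Your concern about compatibility with the $B_\tau$ analysis is secondary; the real obstruction is the $a_2=1$ pin and the resulting loss of the cancellation you need.
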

\begin{proof}
	Applying \Cref{lem:A-tau-i} and \Cref{lem:bound-A-tau-i}, we obtain
\[
	\!{LHS} \mle 
    \Pi_\tau^{-\frac{1}{2}}
    \tp{
    \frac{2(k-2)}{k}
    (a_{k-1}-a_k)
    \sum_{\omega\in\+C'_{\tau,k}}
      A_\tau^{\omega}
    +
    \frac{8 a_k^2}{k^2}\cdot 
    \sum_{\omega\in \+C'_{\tau,k}}
      A_\tau^{\omega}\Pi^{-1}_\tau A_\tau^{\omega}
    }\Pi_\tau^{-\frac{1}{2}}.
\]
Then by \Cref{lem:A-tau-i-omega} and \Cref{lem:bound-A-tau-i-omega-square}, we can bound above by
\begin{align}
	\label{eq:lem-bound-A-tau-1}
	\!{LHS} \mle\frac{2}{k}
    \Pi_\tau^{-\frac{1}{2}}
    \Xi_\tau^{\omega}
    \Bigg(
    &(k-2)
    (a_{k-1}-a_k)
    \sum_{\omega\in\+C'_{\tau,k}}
      \pi_\tau(\omega)
      (-\!{Adj}_\tau + \frac{2(k-1)}{\beta-1}\!{Id}_\tau)
    \\+&
    4\gamma a_k^2 \cdot
    \sum_{\omega\in \+C'_{\tau,k}}
      \pi_\tau(\omega)
      \Big(
      \!{Adj}^2_\tau + \frac{4(k-1)^2}{(\beta-1)^2}\!{Id}_\tau
      \Big)
    \Bigg)
    \Xi_{\tau}^{\omega}
    \Pi_\tau^{-\frac{1}{2}},
\end{align}
where $\gamma = \bigg( 1+ \frac{\Delta-1}{\beta-1} \bigg)^3\frac{1}{\beta-1}$.

We want to find a sequence of $\set{a_k}$
so that the spectral radius of the following matrices
$\tilde A_k$ appearing in the non-remainder terms in \Cref{eq:lem-bound-A-tau-1} is small:
\[
	\tilde A_h\defeq -(k-2)(a_{k}-a_{k-1})\!{Adj}_\tau + 4a_k^2\gamma\tp{\!{Adj}_\tau}^2.
\]
Since the spectrum of $\!{Adj}_\tau$ is $\set{-1, (k-1)}$, the spectrum of $\tilde A_k$ is
\[
	\set{
		(k-2)(a_{k-1}-a_k) + 4\gamma a_k^2, \;
		-(k-1)(k-2)(a_{k-1}-a_k) + 4\gamma (k-1)^2 a_k^2
	}.
\]
Define
\[
	a_k = \frac{1}{1+4\gamma(k-2)} (2\leq k \leq \Delta).
\]
Then we have
\begin{equation}\label{eqn:spectral-radius-of-A}
	\rho(\tilde A_k)\le \frac{4 \gamma  (1+4\gamma(k-2)) h}{(4 \gamma  (k-3)+1) (4 \gamma  (k-2)+1)^2}\le 4\gamma (k-1)
\end{equation}
for $k \geq 3$. In particular, when $k=2$, $\rho(\tilde A_h)\leq 4a_k^2 \gamma (k-1)^2 = 4\gamma$,
which is consistent with the above bound.
So we have $\rho(\tilde A_k)\leq 4\gamma (k-1)$ for $h\geq 1$.
Note that $\Xi_\tau^{\omega}\mle \frac{1}{\beta-1}\cdot \!{Id}_{\tau}$,
it then follows from \Cref{eqn:spectral-radius-of-A} and \Cref{lem:bound-Xi-Pi} that 
\begin{align}\label{eqn:bound-adj}
	&\phantom{{}={}}
	\frac{2}{k}
    \Pi_\tau^{-\frac{1}{2}}
    \Xi_\tau^{\omega}
    \Bigg(
    -(k-2)
    (a_{k-1}-a_k)
    \sum_{\omega\in\+C'_{\tau,k}}
      \pi_\tau(\omega)
      \!{Adj}_\tau
    +
    4\gamma a_k^2
    \sum_{\omega\in \+C'_{\tau,k}}
      \!{Adj}^2_\tau
    \Bigg)
    \Xi_{\tau}^{\omega}
    \Pi_\tau^{-\frac{1}{2}}
    \notag
    \\&\le \frac{8\gamma (k-1)}{\beta-1} \!{Id}_\tau.
\end{align}
A direct calculation yields that
\begin{align}\label{eqn:bound-remainder}
    &\phantom{{}={}}
	\frac{2}{k}
    \Pi_\tau^{-\frac{1}{2}}
    \Xi_\tau^{\omega}
    \Bigg(
    (k-2)
    (a_{k-1}-a_k)
    \sum_{\omega\in\+C'_{\tau,k}}
      \pi_\tau(\omega)
      \frac{2(k-1)}{\beta-1}\!{Id}_\tau
    +
    4\gamma a_k^2 \cdot
    \sum_{\omega\in \+C'_{\tau,k}}
      \pi_\tau(\omega)
      \frac{4(k-1)^2}{(\beta-1)^2}\!{Id}_\tau
    \Bigg)
    \Xi_{\tau}^{\omega}
    \Pi_\tau^{-\frac{1}{2}}
    \notag
    \\&\mle
    \frac{8\gamma(k-1)^2}{(\beta-1)^2}\Big(\frac{2}{\beta-1} + 1\Big)\!{Id}_\tau
    \mle
    \frac{8\gamma(k-1)\Delta}{(\beta-1)^2}\Big(\frac{2}{\beta-1} + 1\Big)\!{Id}_\tau.
\end{align}
Combining \Cref{eqn:bound-adj} and \Cref{eqn:bound-remainder} finishes the proof.
\end{proof}

\subsection{Construction of $B_\tau$}
\label{sss:Btau}
For $\tau$ of co-dimension $k > 2$,
we introduce coefficients $\set{b_k}_{3 \leq h \leq \Delta}$
whose values will be determined later, and define $B_\tau$ as follows:
\begin{equation} \label{eqn:B-def}
	B_\tau(ec,ec) = b_{k}
\end{equation}
for any $e \in K_\tau$ and all other entries are $0$
When $k=2$, this is exactly the base case considered in \Cref{sec:base-case}.
According to~\eqref{eqn:base-case-def-ol}, we have $b_1=\frac1{\tp{\beta-1}^2}$.

Notice that if $\!{codim}(\tau) = k\ge 3$, then
\begin{align}
\E[x\sim\pi_\tau]{\Pi_{\tau\cup\set{x}} b_k}
=
\pi_\tau(ec)^{-1}\sum_{x\in \+C_\tau}{\pi_\tau(x)\pi_{\tau\cup \set{x}}(ec)} b_{k-1}
=
\sum_{x\in \+C_\tau}{\pi_{\tau\cup \set{ec}}(x) b_{k-1}} = b_{k-1}.
\label{eqn:contraint-B-entry}
\end{align}
where the second equality follows from the fact that $\pi_\tau(x)\pi_{\tau\cup \set{x}}(vc)=\pi_{\tau\cup\set{vc}}(x)\pi_\tau(vc)$. 

By the discussion in the last section and the above definition, now \cref{eqn:condition-main} 
becomes
\[
(k-2)b_k - (k-1)b_{k-1} - 2b_k^2 \ge 
    \frac{8\gamma(k-1)\Delta}{(\beta-1)^2}\Big(\frac{2}{\beta-1} + 1\Big)
\]
for all $3\le k \le \Delta$.

Assume $\beta \geq 11$. Since 
\begin{align}\label{eqn:A-tau-i-upper-bound}
\begin{split}
		\Pi_\tau^{-1/2}A_\tau\Pi_\tau^{-1/2}
		&=  \frac{2a_{k}}{k}\cdot \Pi_\tau^{-1/2} \sum_{\omega\in \+C'_{\tau,k}}\pi_\tau(\omega) A^{\omega}_\tau \Pi_\tau^{-1/2} \\
		&=  \frac{2a_{k}}{k}\cdot \Pi_\tau^{-1/2} \sum_{\omega\in \+C_{\tau,k}'}\pi_\tau(\omega) \Xi_\tau^{\omega}(-\!{Adj}_\tau +\+R_\tau^{\omega})\Xi_\tau^{\omega}\Pi_\tau^{-1/2}\\
		&\mle \frac{2a_{k}}{(\beta-1)}\tp{1+\frac{2(k-1)}{\beta-1}}\!{Id}_\tau\\
		&\mle \frac{1}{(\beta-1)}\frac{1+\frac{2(k-1)}{\beta-1}}{1+\frac{4(k-2) \tp{1+\frac{\Delta-1}{\beta-1}}^3} {\beta-1}}\!{Id}_\tau\\
        &\mle \frac{1}{\beta-1} \!{Id}_\tau\\
		&\mle \frac{1}{10}\!{Id}_\tau,
\end{split}
\end{align}
we have $M_\tau=\frac{\sum_i A_\tau + \Pi_\tau B_\tau}{k-1} \mle \frac{k-1}{3k-1}\Pi_\tau$ as long as $B_\tau \mle \tp{\frac{(k-1)^2}{3k-1}-\frac{1}{10}}\!{Id}_\tau$.
We strengthen this constraint to $B_\tau \mle \tp{\frac{1}{5}-\frac{1}{10}}\!{Id}_\tau=\frac{1}{10}\!{Id}_\tau$.

For brevity, we denote ${8\gamma\Delta}\Big(\frac{2}{\beta-1} + 1\Big)$ by $C(\Delta)$ in the following calculation. Therefore, our constraints for $\set{b_k}_{1\leq k\leq \Delta}$ are
\begin{equation}
	\label{eqn:constraint-B}\tag{$\blacktriangle$}
	\begin{cases}
		(k-2) b_k - (k-1)b_{k-2} \ge 2b_k^2+\frac{C(\Delta)(k-1)}{(\beta-1)^2},& 3\le k\le \Delta; \\
		b_k \le \frac1{10}, & 2\le k\le \Delta.
	\end{cases}
\end{equation}
It follows from Lemma 26 in~\cite{WZZ24} that there is a feasible solution of \cref{eqn:constraint-B}
as long as $\beta\ge c\sqrt\Delta\log^2\Delta + 2c$, where $c = \sqrt{20(1+2C(\Delta))}$. And the solution $b_k \leq \frac{1}{(\beta-1)^2}\tp{1+(6+16 C(\Delta))\Delta \log^2 \Delta}$.
Notice that if $\beta - 1\ge \max\set{\Delta, 10}$, then $C(\Delta)\le \frac{10}{\Delta-1}$,
and the solution exists if $\beta - 1\ge 20\log^2\Delta+2\sqrt{\frac{200}{\Delta-1}}$.

Putting all constraints to $\beta$ together, we have
\[\beta-1\ge \max\set{\Delta, 10, 20\log^2\Delta+2\sqrt{\frac{200}{\Delta-1}}},\]
which can be unified to a single bound that $\beta\ge \Delta + 50$.

\subsection{Proof of \Cref{lem:PiP-bound}}
\begin{proof}[Proof of \Cref{lem:PiP-bound}]
For any $\tau \in \+C_{d-k}$ with $k \geq 2$, we construct the matrices $A_\tau$ and $B_\tau$ as \Cref{sss:Atau} and \Cref{sss:Btau}. Then we have
    \[ \rho(\Pi_\tau^{-1/2}A_\tau\Pi_\tau^{-1/2})+ \rho(B_\tau) \le \frac{1}{\beta-1}+\frac{1}{(\beta-1)^2} + \frac{(6+16 C(\Delta))\Delta \log^2\Delta}{(\beta-1)^2}.
    \]
When $\beta \geq \Delta + 50$, the above term is upper bounded by $\eta_\Delta \defeq \frac{1+(6+\frac{160}{\Delta-1})\log^2 \Delta}{\Delta}+\frac{1}{\Delta^2}$.
Applying \Cref{thm:mtd-inductive}, we have
\[
\rho( P_\tau -\frac{k}{k-1}\*{1} \pi_\tau^\top) \le \rho(\Pi_\tau^{-1} M_\tau) \le \frac{\eta_{\Delta}}{k-1}.
\]
Taking $\tau = \emptyset$, we obtain that
\[
\Pi P - \frac{d}{d-1} \pi_\tau \pi_\tau^\top \mle \frac{\eta_{\Delta}}{d-1}\Pi.
\qedhere
\]
\end{proof}

\end{document}